\documentclass[letterpaper,12pt]{article}

\usepackage[english]{babel}
\usepackage[utf8]{inputenc}
\usepackage[T1]{fontenc}
\usepackage[letterpaper,margin=1.25in]{geometry}
\usepackage{amsmath,amssymb,amsthm,amsfonts,dsfont}
\usepackage{array}
\usepackage{booktabs}
\usepackage{enumitem}
\usepackage{graphicx}
\usepackage[round]{natbib}
\usepackage[colorlinks=true,allcolors=blue]{hyperref}
\usepackage{setspace}
\usepackage{etoolbox}
\usepackage{needspace}
\usepackage{placeins}
\usepackage[format=plain,font={small,stretch=1},labelfont=bf,textfont=up]{caption}

\newtheorem{thm}{Theorem}[section]
\newtheorem{lem}[thm]{Lemma}
\newtheorem{assumption}[thm]{Assumption}
\newtheorem{prop}[thm]{Proposition}
\newtheorem{cor}[thm]{Corollary}
\theoremstyle{definition}
\newtheorem{defn}[thm]{Definition}
\newtheorem{ex}[thm]{Example}
\newtheorem{remark}[thm]{Remark}
\numberwithin{equation}{section}

\newcommand{\mr}{\mathbb{R}}
\newcommand{\mc}{\mathcal}

\newcommand{\ylevel}{\bar Y}
\newcommand{\yleveli}{\bar Y_i}
\newcommand{\SATE}{\mathrm{SATE}}

\newcommand{\one}{\mathds{1}}

\newcommand{\norm}[1]{\lVert#1\rVert}
\newcommand{\indep}{\mathrel{\raisebox{0.05em}{\rotatebox[origin=c]{90}{$\models$}}}}
\newcommand{\simiid}{\stackrel{\mathrm{iid}}{\sim}}
\newcommand{\convp}{\overset{p}{\to}}
\newcommand{\E}{E}

\newcommand{\Unif}{\mathrm{Unif}}

\DeclareMathOperator{\var}{Var}

\DeclareMathOperator{\cov}{Cov}

\DeclareMathOperator*{\argmin}{argmin}

\makeatletter
\renewcommand{\paragraph}{\@startsection{paragraph}{4}{\z@}%
  {\medskipamount}%
  {-\fontdimen2\font}%
  {\normalfont\normalsize\bfseries}}
\makeatother
\AtBeginEnvironment{abstract}{\singlespacing}
\AtBeginEnvironment{thebibliography}{\singlespacing}

\hypersetup{
pdftitle={Experimental Designs for Nonparametrically Balancing Many Covariates},
pdfauthor={Max Cytrynbaum},
pdfsubject={Finite-sample efficiency, covariate-adaptive randomization, matched pairs, and Gram-Schmidt walk},
pdfkeywords={experimental design, matched pairs, semiparametric efficiency, Gram-Schmidt walk}
}

\title{The Limits of Experimental Design: \\ Covariate Balance Beyond Low Dimension}
\author{Max Cytrynbaum\thanks{Department of Economics, Yale University.}}
\date{July 2026}

\begin{document}

\begingroup
\renewcommand{\thefootnote}{\fnsymbol{footnote}}
\maketitle
\endgroup
\setcounter{footnote}{0}

\begin{abstract}
We study how fast experimental designs can approach the semiparametric efficiency bound in finite samples, as measured by the excess variance of unadjusted treatment effect estimation.
We prove an impossibility theorem: under weak conditions, no design can approach the variance bound uniformly over smooth outcome models unless covariate dimension $d \ll \log n$.
Even in experiments with thousands of units, this permits only a handful of covariates.
Motivated by this, we propose new designs based on discrepancy minimization that instead attempt to control imbalances over restricted-complexity nonparametric function classes.
Such designs achieve fast rates to their corresponding restricted efficiency targets, permitting $d \ll n$ covariates in an additive nonparametric specification.
They can also be combined with matching to protect against unmodeled outcome variation.
In simulations calibrated to 12 published experiments, our designs reduce variance relative to matched pairs randomization in every empirical setting.
\end{abstract}

\noindent{\emph{Keywords}: Semiparametric Efficiency, Kernel Methods, Discrepancy Minimization.}

\noindent{\emph{JEL Codes}: C14, C21, C90.}

\clearpage

\section{Introduction}\label{section:introduction}

Balancing covariates is a central task of experimental design.
Available methods range from stratified randomization, in use at least since \citet{fisher1926}, to rerandomization and newer discrepancy minimization methods \citep{li2018asymptotic,harshaw2024gsw}.
Effective covariate balance can improve precision, reducing the size and cost of the experiment required to study a given causal effect.
This is well known to practitioners. 
For example, a survey of 200 randomized experiments in recent NBER working papers reported in \citet{cytrynbaum2022local} finds that 43\% used some form of stratification.

Recent theoretical work provides a strong justification for finely stratified designs like matched-pairs randomization.
Under appropriate conditions, such designs make simple unadjusted estimators semiparametrically efficient, automatically attaining the \citet{hahn1998} variance bound for estimating treatment effect parameters \citep{bai2021inference,bai2023efficiency}.
However, these guarantees rest on strong assumptions, requiring covariate differences within matched groups to vanish asymptotically. 
This may not be possible with the number of covariates encountered in practice.
Note that nearest-neighbor distances for $n$ samples in $d$ dimensions typically decrease at rate $n^{-1/d}$, so halving the average distance between matches requires $2^d$ times as many observations.

To see the practical implications of this curse of dimensionality, consider the OpenResearch Unconditional Income Study, a prominent experiment that randomized 3,000 participants to a universal basic income program, distributing \$40 million in unconditional cash transfers \citep{broockman2024income}.
To randomize treatments, researchers first formed matched triples using several dozen covariates, then assigned one participant in each triple to receive \$1,000 per month for three years.
For the efficiency results above to be relevant here, we would need to obtain near-perfect matches across several dozen covariates, using only $n=3000$ samples.

Motivated by this gap between theory and practice, we develop a new finite-sample efficiency theory for experimental design.
For an unadjusted estimator $\widehat{\theta}$ of the ATE and efficiency bound $V^*$, we study the \emph{excess variance} $\mc V_n\equiv n\var(\widehat\theta)-V^*$.
Let $\psi_i$ denote the covariates of unit $i$ and define the outcome model $m(\psi)\equiv E[(Y(1)+Y(0))/2|\psi]$.
Also let $Z_i\in\{-1,1\}$ be the centered treatment assignment with $E[Z] = 0$.
A calculation shows that excess variance is determined by in-sample imbalances in the outcome model:
\begin{equation}\label{equation:introduction-excess-variance}
\mc V_n = 4n \cdot E\bigg[\bigg(\frac{1}{n}\sum_{i=1}^n Z_i m(\psi_i)\bigg)^2\bigg].
\end{equation}

By studying such imbalances, we derive finite sample upper bounds on the excess variance $\mc V_n$, showing how rapidly different designs can approach $V^*$ for rough, smooth, and restricted-complexity nonparametric outcome models $m(\psi)$.
In several cases, we also obtain lower bounds establishing that these rates are minimax optimal.

More fundamentally, we prove an impossibility theorem showing that the curse of dimensionality illustrated above affects not only matching but any experimental design seeking to attain the classic variance bound $V^*$.
Under mild conditions, our results imply that no design can guarantee uniform asymptotic efficiency, even over smooth outcome models, beyond the regime $d=o(\log n)$, i.e.\ $d / \log n \to 0$ as $n \to \infty$.
For experiments with thousands of units, this permits only a few covariates.

This suggests an alternative design principle.
Rather than pursuing the classical efficiency bound $V^*$, we propose to balance lower-complexity nonparametric classes $\mc G$ that may still approximate the true outcome model well.
We develop new experimental designs based on this principle using nonparametric versions of the Gram-Schmidt walk (GSW) of \citet{bansal2019gram} and \citet{harshaw2024gsw}.

To measure their performance, we define the efficiency target $V^*+\Delta_{\mc G}$, where $\Delta_{\mc G}$ records outcome variation not captured by a restricted function class $\mc G$.
The lower complexity of such classes allows our designs to approach this target at much faster uniform rates and balance many more covariates than stratification can accommodate.
Designs targeting an additive nonparametric specification can attain the restricted efficiency target with $d=o(n)$ covariates, up to log factors.

We show these designs can also be combined with matching to preserve the fast rates above, while providing some protection against unmodeled outcome variation.
In an empirical application based on 12 randomized experiments recently published in economics, our preferred designs reduce estimator variance relative to classical matched pairs in every simulated empirical setting.

We make the following main contributions: 
\begin{enumerate}
	\item In Section~\ref{section:matching}, we study the finite-sample properties of matched pairs.
	We show that no matching scheme can make excess variance $\mc V_n$ vanish faster than $n^{-2/d}$, even over linear, hence infinitely smooth, outcome models.
	This extremely slow rate shows that matched pairs cannot realistically approach the efficiency bound in finite samples beyond the very low dimensional regime. 
	Despite this, we show matched pairs is minimax optimal in fixed dimension for rough outcome models, attaining $\mc V_n\asymp n^{-2\beta/d}$ over $\beta$-H\"older classes for $0<\beta\leq1$.
	Its slow convergence is therefore the price of robustness under very weak assumptions.

	\item Section~\ref{section:structure} introduces new randomization methods designed to exploit smoothness. 
	To do so, we develop a nonparametric Gram-Schmidt Walk that controls the worst-case imbalances over a chosen reproducing kernel Hilbert space.
	We exhibit a design that adapts without prior knowledge to every Sobolev smoothness order $s>0$ and attains rate $\mc V_n \asymp n^{-(2s/d)\wedge1}$, up to logarithmic factors. 
	This rate is minimax optimal when $2s\leq d$ and nearly parametric when $2s>d$.

	\item Section~\ref{section:structure} also establishes our main impossibility result: under weak conditions, no design can guarantee excess variance uniformly smaller than order $n^{-2s/d}$ over full-dimensional outcome models for any fixed smoothness $s>0$.
	This precludes uniform asymptotic efficiency for any experimental design outside the very-low dimensional regime $d=o(\log n)$, even for smooth outcome models.

	\item In view of this impossibility result, Section~\ref{section:restricted-efficiency} develops GSW designs using kernels tailored to lower-complexity nonparametric classes, including additive models of the form $g(\psi)=a+\sum_{j=1}^d g_j(\psi_j)$ and richer specifications with nonparametric bivariate interactions.
	The additive and bivariate designs approach their restricted efficiency targets at fast rates, allowing $d=o(n)$ and $d=o(\sqrt n)$ covariates, respectively, up to logarithmic factors.

	\item Finally, Section~\ref{section:restricted-efficiency} shows how to combine structured GSW with matching, retaining the fast rates above for low-complexity outcome components while also providing protection against unmodeled variation.
	We develop a conservative design-based variance estimator for such methods that performs well in our simulations and empirical application.
\end{enumerate}

\subsection{Related Literature}

This paper is motivated by a recent literature showing that finely stratified randomization makes simple estimators attain the \citet{hahn1998} variance bound for the ATE.
Early contributions include \citet{bai2021inference}, \citet{bai2020pairs}, and \citet{cytrynbaum2022local}.
Building on \citet{armstrong2022}, \citet{bai2023efficiency} further show that this classical efficiency bound remains valid for general experimental designs and is attained by fine stratification in a GMM setting.
Related work includes \citet{imai2008}, \citet{fogarty2018}, \citet{pashley2021}, and \citet{kapelner2022}, among others.

Our asymptotic inefficiency result for matched pairs is an experimental analogue of the classical result of \citet{abadie2006large}, who show that slow nearest-neighbor convergence rates prevent matching from effectively debiasing observational estimates.
A related result is \citet{wang2022rerandomization}, who show a $d=o(\log n)$ threshold for balancing linear functions using rerandomization.
For comparison, our designs can balance additive nonparametric function spaces with $d = o(n)$ up to logarithms.  

Our finite-sample perspective is related to that in \citet{kallus2018}, who also notes an analogue of Equation~\eqref{equation:introduction-excess-variance}. 
His analysis leads to semi-deterministic designs, some of which are NP-hard to compute. 
By contrast, we develop a full efficiency theory with explicit convergence rates of finite sample variances to both classical and newly introduced variance bounds.
We show that fast convergence rates can be achieved over restricted-complexity nonparametric spaces by computationally tractable randomized designs. 
See the remarks in the text for a detailed comparison. 

Our new designs in Section \ref{section:structure} fundamentally rely on the Gram-Schmidt walk vector balancing algorithm of \citet{bansal2019gram}.
\citet{harshaw2024gsw} first studied this algorithm in an experimental setting, proving a design-based oracle inequality for the mean squared error of an unadjusted estimator and using the algorithm to balance linear functions.
\citet{harshaw2021dissertation} first observed that the design can be kernelized, extending the original oracle inequality to this setting.
\citet{chen2026nonlinear} also suggest using GSW to balance nonlinear functions of the covariates.

To the best of our knowledge, we are the first to propose a finite sample efficiency theory for experimental design, using minimax analysis of the excess variance $\mc V_n$ as the benchmark.
The behavior of minimax rates over smoothness and restricted-complexity classes is a central organizing theme in nonparametric statistics \citep{stone1982optimal,yang1999information,gine2016mathematical}.
Our use of Sobolev norms to quantify smoothness also follows a long tradition in this literature.
For example, \citet{vdv2011information} show that convergence rates for Gaussian process regression with Mat\'ern priors are driven by a match between prior regularity and the Sobolev smoothness of the truth. 
\citet{fischer2020sobolev} derive learning rates for kernel ridge regression in Sobolev norms.
See also \citet{kennedy2024minimax} for minimax rates for heterogeneous treatment effect estimation in observational studies.

The new designs in Section~\ref{section:restricted-efficiency} target balance over restricted function spaces. 
Our main examples involve additive and bivariate nonparametric models motivated by functional ANOVA \citep{stone1985additive,hastie1986gam,sobol1993}.

Other prominent approaches to covariate balance include rerandomization \citep{morgan2012,li2018asymptotic} and optimization-based designs \citep{kasy2016,krieger2019}.
A complementary literature studies variance reduction by regression adjustment with moderately high-dimensional covariates \citep{bloniarz2016lasso,wager2016highdimensional,lei2021regression,lu2025debiased}.
Our work can be viewed as a randomization-based analogue of such results.

Finally, our variance estimator combines the design-based quadratic bound approach of \citet{harshaw2026optimized} with the collapsed-strata and pairs-of-pairs methods of \citet{hansen1953}, \citet{abadie2008}, and \cite{bai2021inference}.

\section{Setup}\label{section:setup}

Consider an experiment with $n$ units.
Denote potential outcomes $Y_i(a)$ for $a\in\{0,1\}$ and $i\in[n]$, letting $[r]=\{1,\ldots,r\}$ for any positive integer $r$.

\begin{assumption}\label{assumption:sampling}
Observations $\bigl(\psi_i,Y_i(0),Y_i(1)\bigr)\simiid P$ for $i\in[n]$.
The covariate $\psi$ is supported on $[0,1]^d$ and has a density $p$ satisfying $0<\underline p\leq p(\psi)\leq\overline p<\infty$ for every $\psi\in[0,1]^d$.
For $a\in\{0,1\}$, $E_P[Y(a)^2]<\infty$.
\end{assumption}

Unless stated otherwise, every data-generating law $P$ in every model class $\mc P$ satisfies Assumption~\ref{assumption:sampling}.
We are interested in estimating the average treatment effect $\theta(P) = E_P[Y(1)-Y(0)]$.
Write $\psi_{1:n}=(\psi_1,\ldots,\psi_n)$ for the realized covariates.
For a realized treatment assignment $A_i\in\{0,1\}$, the observed outcome is $Y_i=Y_i(A_i)$.
For convenience, in what follows, we work with the coding $Z_i=2A_i-1\in\{-1,1\}$ and let $Z=(Z_1,\ldots,Z_n)$ be the full assignment vector.
\begin{defn}[Admissible Designs]\label{definition:admissible-designs}
A conditional assignment law $\sigma = Z | \psi_{1:n}$ is admissible $\sigma \in \mathcal D_n$ if $Z\indep(Y_i(0),Y_i(1))_{i=1}^n|\psi_{1:n}$ and $E[Z_i|\psi_{1:n}]=0$ for every $i\in[n]$.
\end{defn}

For example, admissible designs include iid assignment, matched-pairs and other forms of stratified randomization, rerandomization with symmetric acceptance rules, as well as the Gram-Schmidt walk designs introduced in Sections~\ref{section:structure} and~\ref{section:restricted-efficiency}.
We use Horvitz-Thompson estimation 
\begin{equation}\label{equation:ht-estimator}
\widehat\theta = \frac{2}{n}\sum_{i=1}^nZ_iY_i.
\end{equation}
When exactly $n/2$ units are treated, as in matched pairs with even $n$, Equation~\eqref{equation:ht-estimator} is the usual difference in means.
Denote $\tau=Y(1)-Y(0)$ for the individual treatment effect and let $v_a^2(\psi)=\var(Y(a)|\psi)$ for $a\in\{0,1\}$.
Let $\tau(\psi)=E[\tau|\psi]$. 
At treatment probability one half, the semiparametric efficiency bound of \citet{hahn1998} for $\theta(P)$ is 
\begin{equation}\label{equation:efficiency-bound}
V^*(P) = \var\bigl(\tau(\psi)\bigr) +2E\bigl[v_1^2(\psi)\bigr] +2E\bigl[v_0^2(\psi)\bigr].
\end{equation}
This bound was originally derived for superpopulation settings with iid data.
Building on \citet{armstrong2022}, \citet{bai2023efficiency} show it holds over a broad class of experimental designs with dependent treatment assignments.

In what follows, we study the gap between the finite sample variance $n\var_{P,\sigma}(\widehat\theta)$ for $\sigma \in \mc D_n$ and the variance bound $V^*(P)$.
To that end, define the \emph{excess variance} 
\begin{equation}\label{equation:excess-variance}
\mc V_n(\sigma,P)\equiv n\var_{P,\sigma}(\widehat\theta)-V^*(P).
\end{equation}
A sequence of designs $\sigma_n \in \mc D_n$ is pointwise asymptotically efficient at $P$ if the excess variance $\mc V_n(\sigma_n,P)\to0$ as $n \to \infty$.
Previous results have established this property for matched pairs designs and more general forms of fine stratification \citep{bai2021inference,bai2023efficiency,cytrynbaum2022local}.
Our results in Section~\ref{section:structure} also add to this list, providing a new family of pointwise efficient designs based on nonparametric GSW.

This classical formulation, however, holds $P$, and hence $d$, fixed as $n$ grows and does not characterize how quickly the variance bound is approached uniformly over a model class.
We show in what follows that this can lead to a very poor approximation to finite sample performance. 
Instead, we propose to study finite-sample upper and lower bounds on excess variance $\mc V_n$ that make the dependence on sample size, covariate dimension, and outcome-model complexity explicit.
The following simple result identifies excess variance exactly with imbalances in $m(\psi) = E[(Y(1) + Y(0))/2 | \psi]$.

% PROOF-ID: variance-decomposition
\begin{prop}[Excess Variance]\label{proposition:variance-decomposition}
For every $\sigma\in\mathcal D_n$, the excess variance is 
\begin{equation}\label{equation:variance-decomposition}
\mc V_n(\sigma,P) = 4n \cdot E \bigg[  \Big( n^{-1} \sum_{i=1}^nZ_im(\psi_i) \Big)^2  \bigg].
\end{equation}
\end{prop}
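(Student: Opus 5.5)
The plan is a direct second-moment computation. I will split $\widehat\theta$ into a term that does not depend on the design and a term that is linear in $Z$. Write $\mu_i=(Y_i(1)+Y_i(0))/2$ and $\tau_i=Y_i(1)-Y_i(0)$. Since $Z_i\in\{-1,1\}$, the observed outcome is $Y_i=\mu_i+Z_i\tau_i/2$. Using $Z_i^2=1$, this gives
\begin{equation*}
\widehat\theta=\frac1n\sum_{i=1}^n\tau_i+\frac2n\sum_{i=1}^n Z_i\mu_i\equiv T_1+T_2 .
\end{equation*}
I will then compute $n\var(T_1)$, show that $\cov(T_1,T_2)=0$, and decompose $n\var(T_2)$. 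Finally I will check that the pieces not involving $m$ add up exactly to $V^*(P)$ in \eqref{equation:efficiency-bound}.

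\textbf{Step 1: $T_2$ has mean zero and is uncorrelated with $T_1$.} Both facts come from the two conditions in Definition~\ref{definition:admissible-designs}. Conditional independence of $Z$ from the potential outcomes given $\psi_{1:n}$, together with $E[Z_j|\psi_{1:n}]=0$, gives
\begin{equation*}
E[Z_j\mid\psi_{1:n},(Y_i(0),Y_i(1))_{i=1}^n]=0 .
\end{equation*}
By the tower property, $E[Z_j\mu_j]=0$ and $E[\tau_iZ_j\mu_j]=0$ for all $i,j$. This needs $E|\tau_i\mu_j|<\infty$, which holds by the second moments in Assumption~\ref{assumption:sampling}. Hence $E[T_2]=0$ and $\cov(T_1,T_2)=0$. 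Since the data are iid, $n\var(T_1)=\var(\tau)$.

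\textbf{Step 2: decompose $n\var(T_2)$.} Write $\mu_i=m(\psi_i)+\varepsilon_i$, where $E[\varepsilon_i|\psi_i]=0$. Because the units are iid, the $\varepsilon_i$ are conditionally independent given $\psi_{1:n}$, with $E[\varepsilon_i|\psi_{1:n}]=0$. They are also conditionally independent of $Z$ given $\psi_{1:n}$. Expanding $E[(\sum_i Z_i\mu_i)^2]$ gives three groups of terms:
\begin{itemize}
\item The pure $m$ part is exactly $n^2E[(n^{-1}\sum_iZ_im(\psi_i))^2]$.
\item The cross terms $E[Z_iZ_jm(\psi_i)\varepsilon_j]$ vanish after conditioning on $(\psi_{1:n},Z)$.
\item For the $\varepsilon$ part, the off-diagonal terms $E[Z_iZ_j\varepsilon_i\varepsilon_j]$ with $i\neq j$ vanish by conditional independence. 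The diagonal terms equal $E[\varepsilon_i^2]$ because $Z_i^2=1$.
\end{itemize}
Therefore
\begin{equation*}
n\var(T_2)=4E[\var(\mu|\psi)]+4n\,E\Big[\Big(n^{-1}\sum_iZ_im(\psi_i)\Big)^2\Big].
\end{equation*}

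\textbf{Step 3: identify $V^*$.} Combining the steps, $n\var(\widehat\theta)=\var(\tau)+4E[\var(\mu|\psi)]+4nE[(\cdot)^2]$. By the law of total variance, $\var(\tau)=\var(\tau(\psi))+E[\var(\tau|\psi)]$. Next, $\var(\tau|\psi)+4\var(\mu|\psi)$ equals $\var(Y(1)-Y(0)|\psi)+\var(Y(1)+Y(0)|\psi)$. The conditional covariances cancel, leaving $2v_1^2(\psi)+2v_0^2(\psi)$. Taking expectations recovers \eqref{equation:efficiency-bound}, so subtracting $V^*(P)$ yields \eqref{equation:variance-decomposition}.

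The main point needing care is Step 2. There one must justify that $\varepsilon_i$ and $\varepsilon_j$ are conditionally uncorrelated given $(\psi_{1:n},Z)$, and not merely given $\psi_{1:n}$. This follows because $Z$ is independent of all potential outcomes given $\psi_{1:n}$, so conditioning additionally on $Z$ leaves the joint conditional law of $(\varepsilon_i)$ unchanged. That law is a product measure by the iid sampling in Assumption~\ref{assumption:sampling}. The rest is bookkeeping.
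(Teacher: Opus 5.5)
Your proposal is correct and follows the paper's proof essentially step for step. Both split $\widehat\theta$ into the design-free average $n^{-1}\sum_i\tau_i$ plus the $Z$-linear term $2n^{-1}\sum_i Z_i\bar Y_i$ and kill the cross term using $E[Z_i\mid\psi_{1:n},(Y_j(0),Y_j(1))_{j=1}^n]=0$. Both then write $\bar Y_i=m(\psi_i)+\varepsilon_i$, use that the $\varepsilon_i$ have a product law conditional on $(\psi_{1:n},Z)$, and recover $V^*(P)$ through the cancellation of $\cov(Y(1),Y(0)\mid\psi)$.
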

This result recasts finite-sample efficiency as a function-balancing problem, with performance determined by the mean squared imbalance $n^{-1}\sum_{i=1}^nZ_im(\psi_i)$.
This equivalence was also previously noted in a slightly different form by \citet{kallus2018}.

Since the distribution $P$ is unknown at design time, a design cannot target $m(\psi)$ directly and must instead seek to control imbalance over a class of plausible models $P \in \mc P$.
Protecting against a very broad class may cause excess variance $\mc V_n(\sigma,P)$ to vanish too slowly for meaningful efficiency gains in finite samples.
Next, we show that matched pairs exhibits just such an extreme case of this tradeoff.

\section{The Limits of Matched Pairs Designs}\label{section:matching}

Matched pairs protects against broad classes of weakly regular outcome models, but this robustness comes at a cost.
In particular, no matched-pairs design can guarantee uniform asymptotic efficiency, even over linear outcome models, beyond the very-low dimensional regime $d=o(\log n)$.

\subsection{A Lower Bound for Matching}\label{subsection:matching-lower-bound}

Recall that a matched-pairs design first partitions the $n$ units into $n/2$ unordered pairs $(i_p,j_p)$, then randomly assigns one unit in each pair to treatment and the other to control with equal probability, independently across pairs.
The pairing itself may be any measurable function of the realized covariates $\psi_{1:n}$ and exogenous randomness $U_M$.
We write $\mathcal M_n$ for the resulting class of designs, with a given instance $M \in \mc M_n$. 
Matched pairs is clearly admissible, $\mc M_n \subseteq \mc D_n$.
The definition implies $Z_{i_p}=-Z_{j_p}$ within each pair. 
Independence between pairs and Proposition~\ref{proposition:variance-decomposition} then imply
\begin{equation}
\mc V_n(M,P)=\frac{4}{n}E \bigg[\sum_{p=1}^{n/2}\bigl(m(\psi_{i_p})-m(\psi_{j_p})\bigr)^2\bigg].
\end{equation}
Under Lipschitz continuity of $m(\psi)$, the within-pair differences $|m(\psi_{i_p})-m(\psi_{j_p})|$ are bounded by a constant multiple of the corresponding covariate distances $\|\psi_{i_p}-\psi_{j_p}\|_2$.
Pointwise asymptotic efficiency results therefore typically show $\mc V_n(M_n,P)\to0$ for each fixed law $P$ under a tight-matching condition of the form \citep{bai2021inference}:
\begin{equation}\label{equation:tight-matching}
\frac{1}{n}\sum_{p=1}^{n/2} \|\psi_{i_p}-\psi_{j_p}\|_2^2 \convp0.
\end{equation}
See also \citet{cytrynbaum2022local} for algorithms guaranteeing such tight-matching conditions for general forms of fine stratification in fixed dimension.
By holding $d$ fixed, these asymptotic guarantees can obscure how quickly matching deteriorates with increasing covariate dimension in finite samples: excess variance may vanish asymptotically while remaining large at empirically relevant sample sizes.
The next result makes this limitation precise, even for linear outcome models.
To state the result, let $\mathcal P_{\mathrm{lin}}$ contain the laws satisfying Assumption~\ref{assumption:sampling} whose outcome model has the form $m(\psi)=a+\gamma'\psi$, with $\|\gamma\|_2\leq1$.

% PROOF-ID: matching-linear-lower-bound
\begin{thm}[Matching Slow Rate]\label{theorem:matching-lower-bound}
For every even $n$ and every $d\geq1$,
\begin{equation}\label{equation:matching-lower-bound}
\inf_{M\in\mathcal M_n} \sup_{P\in\mathcal P_{\mathrm{lin}}}\mc V_n(M,P) \geq \frac{1}{2\pi e}n^{-2/d}.
\end{equation}
\end{thm}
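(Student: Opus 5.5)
The plan is to lower bound the supremum over $\mathcal P_{\mathrm{lin}}$ by an average over a randomly drawn slope, for one fixed convenient covariate law. This reduces the problem to a lower bound on nearest-neighbor distances, which no pairing can beat.

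\textbf{Step 1 (reduction to squared pair distances).} Fix $\psi\sim\Unif[0,1]^d$, which satisfies Assumption~\ref{assumption:sampling}, and outcomes with $m(\psi)=\gamma'\psi$ for $\|\gamma\|_2=1$. Fix any $M\in\mathcal M_n$. By the display following Proposition~\ref{proposition:variance-decomposition},
\[
\mc V_n(M,P_\gamma)=\frac4n E\Big[\sum_{p=1}^{n/2}\big(\gamma'(\psi_{i_p}-\psi_{j_p})\big)^2\Big].
\]
The pairing depends only on $\psi_{1:n}$ and $U_M$, never on $\gamma$. So the supremum over $\gamma\in S^{d-1}$ is at least the average over $\gamma\sim\Unif(S^{d-1})$ drawn independently of everything else. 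Since $E_\gamma[(\gamma'v)^2]=\|v\|_2^2/d$, Fubini gives
\[
\sup_{P\in\mathcal P_{\mathrm{lin}}}\mc V_n(M,P)\;\ge\;\frac{4}{nd}\,E\Big[\sum_{p}\|\psi_{i_p}-\psi_{j_p}\|_2^2\Big].
\]

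\textbf{Step 2 (no pairing beats nearest neighbors).} Let $R_i=\min_{j\ne i}\|\psi_i-\psi_j\|_2$. Each unit's partner is at distance at least $R_i$, so
\[
\sum_p\|\psi_{i_p}-\psi_{j_p}\|_2^2=\tfrac12\sum_i\|\psi_i-\psi_{\mathrm{partner}(i)}\|_2^2\ge\tfrac12\sum_i R_i^2 .
\]
This bound holds uniformly over all matchings, and hence for the infimum over $\mathcal M_n$. By exchangeability the bound becomes $\frac{4}{nd}\cdot\frac n2 E[R_1^2]=\frac{2}{d}E[R_1^2]$.

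\textbf{Step 3 (nearest-neighbor moment).} Conditional on $\psi_1$, each other point lands in $B_r(\psi_1)$ with probability at most $v_d r^d$, where $v_d=\pi^{d/2}/\Gamma(d/2+1)$. A union bound gives
\[
P(R_1>r)\ge 1-(n-1)v_dr^d,
\qquad\text{so}\qquad
E[R_1^2]\ge r^2\big(1-(n-1)v_dr^d\big)\quad\text{for every } r>0 .
\]
Next, Stirling's bound $\Gamma(d/2+1)\ge (d/(2e))^{d/2}$ gives $v_d\le(2\pi e/d)^{d/2}$, hence $v_d^{-2/d}\ge d/(2\pi e)$. This factor $d$ cancels the $1/d$ from Step 1, leaving a bound of order $(2\pi e)^{-1}n^{-2/d}$.

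\textbf{Main obstacle.} The only delicate part is the constant bookkeeping needed to land exactly on $\frac1{2\pi e}n^{-2/d}$. The naive choice $(n-1)v_dr^d=1/2$ loses a factor $2^{-2/d}/\!2$. To fix this I would optimize over $r$: the maximizer of $r^2(1-cr^d)$ with $c=(n-1)v_d$ is $r^d=2/((d+2)c)$, with value $\frac{d}{d+2}\big(\frac{2}{(d+2)(n-1)v_d}\big)^{2/d}$. I would then check that
\[
\frac2d\cdot\frac{d}{d+2}\Big(\frac{2}{d+2}\Big)^{2/d}\frac{d}{2\pi e}\,(n-1)^{-2/d}\;\ge\;\frac{1}{2\pi e}n^{-2/d}
\]
for all $d\ge1$, using $(n-1)^{-2/d}\ge n^{-2/d}$ together with the sharper inequality $\Gamma(d/2+1)\ge\sqrt{\pi d}\,(d/(2e))^{d/2}$ to absorb the remaining $d$-dependent factors. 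If a small residual gap remains, I would handle the finitely many small $d$ directly or refine the union bound via $P(R_1>r)=E[(1-\mathrm{vol}(B_r(\psi_1)\cap[0,1]^d))^{n-1}]$.
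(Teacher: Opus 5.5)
Your proposal is correct and is essentially the paper's proof: a uniform random slope on $\mathbb S^{d-1}$ with the isotropy identity, the pathwise bound $\sum_p\|\psi_{i_p}-\psi_{j_p}\|_2^2\geq\frac12\sum_i R_i^2$ valid for every pairing, and a union bound on the nearest-neighbor distance. The optimization over $r$ is unnecessary, because the paper uses essentially your ``naive'' radius ($n v_d r^d=1/2$) and absorbs the lost factor $2^{-2/d}$ via $\Gamma(d/2+1)\geq\sqrt{\pi d}\,(d/(2e))^{d/2}$ with $\sqrt{\pi d}\geq 2$ for $d\geq 2$, treating $d=1$ directly with $v_1=2$; your optimized Stirling check gives only $8\pi/27<1$ at $d=1$, so the direct small-$d$ computation you mention is genuinely needed there.
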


Theorem~\ref{theorem:matching-lower-bound} applies to every pairing rule, not only optimal squared Euclidean distance matching as studied in \cite{bai2021inference}.
Changing the matching criterion therefore cannot improve this very slow excess variance rate.

Next, we use this finite-sample result to obtain alternative asymptotics for matched-pairs randomization in which $d=d_n$ may vary with $n$.
This allows us to formalize the sense in which matching is asymptotically inefficient beyond the very low-dimensional regime $d_n\ll\log n$.

% PROOF-ID: matching-asymptotic-inefficiency
\begin{cor}[Asymptotic Inefficiency]\label{corollary:matching-inefficiency}
If $d_n\geq c\log n$ for some fixed $c>0$, then 
\begin{equation}\label{equation:matching-inefficiency}
\liminf_{n\to\infty} \inf_{M\in\mathcal M_n} \sup_{P\in\mathcal P_{\mathrm{lin}}}\mc V_n(M,P) \geq \frac{1}{2\pi e}e^{-2/c} >0.
\end{equation}
\end{cor}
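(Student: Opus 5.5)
This follows directly from Theorem~\ref{theorem:matching-lower-bound}, which holds for every even $n$ and every $d\geq1$. So the plan is to plug in $d=d_n$ and study the limit of the explicit lower bound; no new probabilistic argument is needed.

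\textbf{Step 1: evaluate the bound at $d=d_n$.} For each even $n$ with $d_n\geq1$, Theorem~\ref{theorem:matching-lower-bound} gives
\[
\inf_{M\in\mathcal M_n}\sup_{P\in\mathcal P_{\mathrm{lin}}}\mc V_n(M,P)\geq \frac{1}{2\pi e}\,n^{-2/d_n} = \frac{1}{2\pi e}\exp\Bigl(-\frac{2\log n}{d_n}\Bigr).
\]
Since $d_n$ is a covariate dimension, it is a positive integer, so the theorem applies.

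\textbf{Step 2: use $d_n\geq c\log n$.} For $n\geq2$ we have $\log n>0$, so $2\log n/d_n\leq 2/c$. Because $x\mapsto e^{-x}$ is decreasing, the right-hand side is at least $\frac{1}{2\pi e}e^{-2/c}$ for every such $n$. Taking $\liminf$ preserves this uniform lower bound, which gives \eqref{equation:matching-inefficiency}. Strict positivity is immediate because $c>0$ is fixed.

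\textbf{Main point of care.} The only subtlety is bookkeeping. Matched pairs, and hence Theorem~\ref{theorem:matching-lower-bound}, is defined only for even $n$, so the $\liminf$ should be read along even $n$, which I would note explicitly. I would also remark that the class $\mathcal P_{\mathrm{lin}}$ changes with $n$ through $d_n$; this is harmless because the theorem's bound holds separately for each fixed pair $(n,d)$.
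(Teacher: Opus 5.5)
Your proposal is correct and follows the paper's own proof: both apply Theorem~\ref{theorem:matching-lower-bound} at $d=d_n$, rewrite $n^{-2/d_n}=\exp(-2\log n/d_n)$, and use $d_n\geq c\log n$ to bound this below by $e^{-2/c}$. Your bookkeeping remarks, that the $\liminf$ runs over even $n$ and that $\mathcal P_{\mathrm{lin}}$ depends on $n$ through $d_n$, are harmless clarifications the paper leaves implicit.
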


Equivalently, sample size $n$ may be exponentially larger than covariate dimension $d_n$, yet the worst-case excess variance still remains bounded away from zero.
The same curse of dimensionality drives the non-vanishing bias of matching estimators in observational studies \citep{abadie2006large} and slow convergence rates for nearest-neighbor methods in nonparametric regression \citep{gyorfi2002}.

Figure~\ref{figure:matching-reversal-simulation} illustrates the finite-sample inefficiency suggested by both previous results.
Although recording more covariates steadily lowers the corresponding efficiency bound, the finite sample variance of matched-pairs eventually rises in both panels.
Linear GSW as in \cite{harshaw2024gsw} avoids this reversal under a linear outcome model and rerandomization delays it, but both offer almost no gain under an adversarial nonlinear additive model.
The additive nonparametric GSW design developed in Section~\ref{section:restricted-efficiency} below performs well in both settings.

\begin{figure}[!htbp]
\centering
\includegraphics[width=\textwidth]{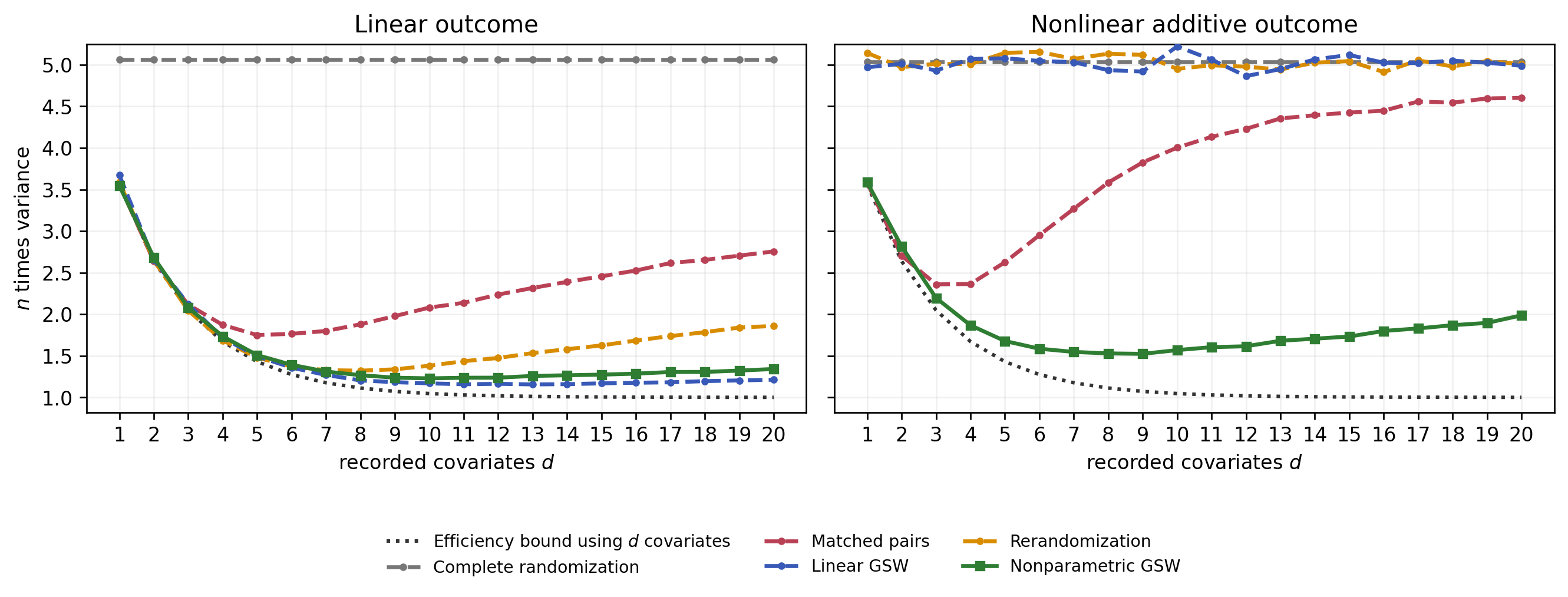}
\caption{Variance for $n=240$ units with independent uniform covariates. The left panel uses a normalized linear outcome model. The right panel replaces each linear coordinate with a nonlinear main effect while preserving the same coordinate-level variance profile. In both panels, the coordinate coefficients are proportional to $0.8^{j-1}$ and the dotted line is the efficiency bound using the first $d$ covariates. The secondary online appendix gives the exact models and simulation protocol.}
\label{figure:matching-reversal-simulation}
\end{figure}
\FloatBarrier

\subsection{Matching Targets Rough Outcome Classes}\label{subsection:rough-classes}

For $0<\beta\leq1$, define the H\"older coefficient $[m]_\beta=\sup_{x\neq y}|m(x)-m(y)|/\norm{x-y}_2^\beta$.
In this section, we show that slow excess variance rates of form $n^{-2\beta/d}$ are minimax optimal for fixed $d < \infty$ when the outcome model is $\beta$-H\"older continuous.
We prove that stable matchings generically achieve these rates, even adaptively over $\beta\in(0,1]$, but cannot exploit additional smoothness beyond the Lipschitz endpoint $\beta=1$.

\medskip

\emph{Optimal Matching.} We describe a broad class of pairing rules that attain the excess variance rate $n^{-2\beta/d}$ for fixed $d$.
Optimal Euclidean matching, as in \citet{bai2021inference}, minimizes $\sum_{p=1}^{n/2}\|\psi_{i_p}-\psi_{j_p}\|_2^2$ over all pairings.
More generally, an experimenter may specify any symmetric matching cost $c(x,y)$ and use \citet{derigs1988} algorithm to compute optimal pairs 
\begin{equation}
\min_{(i_p,j_p)_{p=1}^{n/2}}\sum_{p=1}^{n/2}c(\psi_{i_p},\psi_{j_p}).
\end{equation}

\emph{Stable Matching.}
In fact, global optimality is stronger than necessary.
A weaker notion is 2-swap stability, which requires that no two pairs can be rematched to reduce their combined cost.
For any two matched pairs $\{i,j\}$ and $\{k,l\}$, require that
\begin{equation}\label{equation:swap-stability}
c(\psi_i,\psi_j)+c(\psi_k,\psi_l) \leq \min\left\{ c(\psi_i,\psi_k)+c(\psi_j,\psi_l), c(\psi_i,\psi_l)+c(\psi_j,\psi_k) \right\}.
\end{equation}

We show that every 2-swap-stable pairing attains the excess variance rate $n^{-2\beta/d}$ for fixed $d$ whenever the matching cost satisfies a weak geometric condition, requiring $c(x,y)$ to be comparable above and below to Euclidean distance.

\begin{assumption}[Matching Cost]\label{assumption:matching-cost}
For some $q>0$ and constants $0<\underline\lambda\leq\overline\lambda<\infty$, the symmetric cost $c$ satisfies $\underline\lambda\|x-y\|_2^q\leq c(x,y)\leq\overline\lambda\|x-y\|_2^q$ for every $x,y\in[0,1]^d$.
\end{assumption}

We define $\Lambda=(\overline\lambda/\underline\lambda)^{1/q}$ as the condition number of the matching cost, which may depend on $d$.
It measures the loss from translating control of the matching cost $c(x,y)$ into control of Euclidean distance.
Thus $\Lambda=1$ for any power of Euclidean distance, while $\Lambda=\sqrt d$ for $c(x,y)=\|x-y\|_1^q$ or $c(x,y)=\|x-y\|_\infty^q$ for any $q>0$, for example.

Let $\mathcal P_\beta$ contain the laws satisfying Assumption~\ref{assumption:sampling} whose outcome model has H\"older coefficient $[m]_\beta\leq1$.
At $\beta=1$, this is the ordinary Lipschitz class, while smaller $\beta$ allow progressively rougher outcome functions.

% PROOF-ID: stable-matching-upper-bound
\begin{thm}[Stable Matching]\label{theorem:stable-matching}
Fix an even $n\geq2$ and $0<\beta\leq1$, and suppose $d\geq3$.
Impose Assumption~\ref{assumption:matching-cost} and let $M_c$ be the matched-pairs design for any pairing that is 2-swap-stable for $c(x,y)$.
There is $C > 0$ depending only on $\beta$ such that
\begin{equation}\label{equation:stable-matching-high-d}
\sup_{P\in\mathcal P_\beta}\mc V_n(M_c,P) \leq C(d\Lambda^2)^{2\beta} \cdot n^{-2\beta/d}.
\end{equation}
\end{thm}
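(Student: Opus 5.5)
Combining Proposition~\ref{proposition:variance-decomposition} with the pair-difference formula and $[m]_\beta\le1$ gives
\[
\mc V_n(M_c,P)\le \frac4n E\Big[\sum_{p=1}^{n/2}\|\psi_{i_p}-\psi_{j_p}\|_2^{2\beta}\Big].
\]
So it suffices to bound, deterministically over every configuration of $n$ points in $[0,1]^d$, the quantity $S_\beta=\sum_p \|\psi_{i_p}-\psi_{j_p}\|_2^{2\beta}$ for any 2-swap-stable pairing. The target is $S_\beta\lesssim n^{1-2\beta/d}(d\Lambda^2)^{2\beta}$. The density bounds in Assumption~\ref{assumption:sampling} should not be needed. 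By concavity (Jensen, since $\beta\le1$),
\[
\frac{2}{n}S_\beta\le\Big(\frac2n\sum_p D_p^2\Big)^{\beta},\qquad D_p=\|\psi_{i_p}-\psi_{j_p}\|_2 .
\]
The main case is therefore $\beta=1$: show $\sum_p D_p^2\lesssim (d\Lambda^2)^2 n^{1-2/d}$.

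\textbf{Stability implies a packing property.} Take two pairs $\{i,j\}$ and $\{k,l\}$. Suppose $\|\psi_i-\psi_k\|\le r$ and $\|\psi_j-\psi_l\|\le r$. Stability together with Assumption~\ref{assumption:matching-cost} gives
\[
\underline\lambda(D_{ij}^q+D_{kl}^q)\le 2\overline\lambda r^q ,
\]
so $\min(D_{ij},D_{kl})\le 2^{1/q}\Lambda r$. Equivalently, long pairs cannot be "parallel": take all pairs with $D_p\ge t$ and map each to the point $(\psi_{i_p},\psi_{j_p})\in[0,1]^{2d}$ (for both orientations). Any two such points must be at sup-distance (per block) greater than $r=t/(2^{1/q}\Lambda)$. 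A volume argument in $[0,1]^{2d}$ then gives
\[
N(t):=\#\{p:D_p\ge t\}\lesssim (C\Lambda\sqrt d/t)^{2d}.
\]
This bound is too weak, however, since it gives $n^{-1/d}$-type rates for $D$. The plan is to sharpen it by using the other swap as well, and by noting that a long pair $\{i,j\}$ and any pair $\{k,l\}$ with $k$ near $i$ force $D_{kl}\lesssim \Lambda\cdot$ (distance scale) $+\,D_{ij}$-comparisons.

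A better route, which I will try first, is a dyadic/multiscale counting argument. Partition $[0,1]^d$ into cubes of side $h$, giving $h^{-d}$ cells. Within a cell, at most one endpoint can belong to a pair of length $\ge C'\Lambda\sqrt d\,h$. Indeed, if $i,k$ lie in the same cell and both pairs are long, swapping to $\{i,k\},\{j,l\}$ gives costs
\[
c(\psi_i,\psi_k)\le\overline\lambda(\sqrt d h)^q,\qquad c(\psi_j,\psi_l)\le \overline\lambda(D_{ij}+D_{kl}+\sqrt d h)^q ,
\]
which does not obviously beat $D_{ij}^q+D_{kl}^q$ when $q\le1$. So I expect to compare both swaps and pick the favorable one. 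For $q\ge1$ I would instead use a triangle-type argument with the constant absorbed into $\Lambda$. Granting the claim, $N(C'\Lambda\sqrt d\,h)\le h^{-d}$, i.e.
\[
N(t)\le (C'\Lambda\sqrt d/t)^d .
\]
Then, since $D_p\le\sqrt d$,
\[
\sum_p D_p^2=\int_0^{\sqrt d}2t\,N(t)\,dt\le \int 2t\min\!\big(n/2,(C'\Lambda\sqrt d/t)^d\big)\,dt .
\]
Splitting at $t_*=C'\Lambda\sqrt d\,n^{-1/d}$ gives $\lesssim d\Lambda^2 n^{1-2/d}\cdot\frac{d}{d-2}$. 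This is where $d\ge3$ enters, since the tail integral $\int t^{1-d}$ must converge. The resulting $d\Lambda^2$ factor is better than stated, so there is room to spare for a lossy version of the cell claim (e.g. a bounded number of long endpoints per cell, or a $\Lambda^2$ loss from swap comparisons), consistent with the $(d\Lambda^2)^{2\beta}$ in \eqref{equation:stable-matching-high-d}.

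\textbf{Main obstacle.} The crux is the one-long-endpoint-per-cell claim for general $q$ under mere 2-swap stability. Two long pairs with nearby left endpoints but far-apart right endpoints need not be swappable profitably. The fix I would attempt is to bucket pairs jointly by the cells of both endpoints at a coarser scale, and to use that among pairs of length in $[t,2t)$, directions can be grouped. This costs extra factors, which I would try to keep within $d\Lambda^2$ per scale. Summing over dyadic scales $t=2^{-k}\sqrt d$ with the convergent geometric tail (using $d\ge3$) and applying Jensen to pass to general $\beta$ then yields \eqref{equation:stable-matching-high-d}.
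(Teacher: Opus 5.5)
Your reduction steps are sound. The Jensen step $\frac2n S_\beta\le(\frac2n\sum_pD_p^2)^\beta$ is valid, and the layer-cake integral truncated at $n/2$ (which is where $d\ge3$ enters) is how the paper finishes. The gap is the counting bound on $N(t)$, which is the real content of the theorem.

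Your per-cell claim is not merely unproven but false. Put four points on a coordinate axis at $\psi_j=0$, $\psi_i=1/2$, $\psi_k=1/2+\epsilon$, $\psi_l=1$, and take $c(x,y)=\|x-y\|_2^2$, so $\Lambda=1$. The pairing $\{i,j\},\{k,l\}$ costs $1/2-\epsilon+\epsilon^2$, while the two swaps cost $1+\epsilon^2$ and $1/2+\epsilon+\epsilon^2$. So the pairing is 2-swap-stable (indeed optimal), yet $i$ and $k$ share a cell of side $h>\epsilon$ and both pairs have length about $1/2$. More generally, a single cell can emit several long edges in well-separated directions. The bound $N(t)\le(C'\Lambda\sqrt d/t)^d$ that you ``grant'' therefore has no supporting argument, and the claimed room to spare relative to $(d\Lambda^2)^{2\beta}$ is not available on this route.

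The missing idea is the product-space packing you set aside, applied within a single dyadic length band. For pairs with $D_p\in[r,2r)$, use cells of side $h=r/(2\sqrt d\Lambda)$, hence diameter $r/(2\Lambda)$.

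\begin{itemize}
\item Your own stability inequality shows two such pairs cannot have their endpoints in the same two cells. Rematching within cells would cost at most $2\overline\lambda(r/(2\Lambda))^q=2^{1-q}\underline\lambda r^q<2\underline\lambda r^q$, below the original cost.
\item Because the partner endpoint lies within $2r$, each cell has at most $(11\sqrt d\Lambda)^d$ admissible partner cells.
\item There are at most $(2/h)^d$ cells, so the band contains at most $\tfrac12(44d\Lambda^2)^dr^{-d}$ pairs.
\item Summing over bands $r=2^jt$ gives $N(t)\le2(44d\Lambda^2)^dt^{-d}$.
\end{itemize}

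Put differently, the points $(\psi_{i_p},\psi_{j_p})$ of one band lie in a region of volume $O((Cr)^d)$ in $[0,1]^{2d}$, not volume one. That is what converts your exponent $2d$ into $d$. The partner-cell factor $(C\sqrt d\Lambda)^d$ is genuinely lost in this argument, and it is where the second power of $d\Lambda^2$ in the theorem comes from. With this bound in hand, your Jensen reduction and layer-cake integral complete the proof; the paper simply runs the layer-cake directly at exponent $2\beta$.
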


This is a finite-sample bound, but we can also interpret it asymptotically. 
For example, in the low-dimensional regime with $d<\infty$ fixed while $n\to\infty$, any stable matching as above attains the excess variance rate $n^{-2\beta/d}$.
This holds simultaneously for every $\beta\in(0,1]$, showing that matching adapts to the unknown H\"older smoothness of $m(\psi)$.
Next, we establish a corresponding finite sample lower bound over all admissible designs, showing that no design can improve this rate uniformly over the H\"older class in the regime with low fixed dimension $d < \infty$.

% PROOF-ID: design-agnostic-lower-bound
\begin{thm}[H\"older Lower Bound]\label{theorem:design-agnostic-lower-bound}
For every $0<\beta\leq1$, there is a constant $C>0$ depending only on $(\beta, d)$ such that, for every $n\geq2$ and every $d\geq1$,
\begin{equation}\label{equation:design-agnostic-lower-bound}
\inf_{\sigma\in\mathcal D_n} \sup_{P\in\mathcal P_\beta}\mc V_n(\sigma,P) \geq C n^{-2\beta/d}.
\end{equation}
\end{thm}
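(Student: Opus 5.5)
By Proposition~\ref{proposition:variance-decomposition}, it suffices to exhibit, for each admissible design $\sigma$, a law $P\in\mathcal P_\beta$ whose outcome model $m$ has large expected squared imbalance. A single $m$ will not do, so we argue by averaging over a random family of bump functions, the standard Assouad-type construction from nonparametric minimax theory. The covariate law is fixed throughout as uniform on $[0,1]^d$, which satisfies Assumption~\ref{assumption:sampling}. The design $\sigma$ may depend on $\psi_{1:n}$ but not on $m$, because $m$ is a feature of $P$ unknown at design time. Formally, for fixed $\sigma$ the map $P\mapsto\mc V_n(\sigma,P)$ is taken over laws sharing the covariate distribution, and the assignment law $Z\mid\psi_{1:n}$ is the same for all of them.

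\emph{Construction.} Set $h=(2n)^{-1/d}$, or a constant multiple of it, and partition $[0,1]^d$ into $K\asymp h^{-d}\asymp n$ cubes $Q_k$ of side $h$. Fix a smooth bump $\phi$ supported in the unit cube with $\phi\geq0$, $\int\phi^2>0$, and $[\phi]_\beta\leq1$. For signs $\epsilon\in\{-1,1\}^K$ define
\[
m_\epsilon(\psi)=\sum_{k=1}^K\epsilon_k h^\beta\phi\bigl((\psi-x_k)/h\bigr),
\]
with $x_k$ the corner of $Q_k$. Disjoint supports together with the scaling give $[m_\epsilon]_\beta\leq C_0$, where $C_0$ is an absolute constant at most $2$, using the standard argument that splits the pair $x,y$ into the same cube or different cubes. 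Rescaling by $1/C_0$ places every $m_\epsilon$ in the class. Take $Y(1)=Y(0)=m_\epsilon(\psi)$, so the law $P_\epsilon$ is in $\mathcal P_\beta$.

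\emph{Averaging.} Bound the supremum below by the average over $\epsilon$ drawn uniformly at random, independent of everything else. Conditioning on $(\psi_{1:n},Z)$ and using $E[\epsilon_k\epsilon_l]=\one\{k=l\}$ gives
\[
\frac1{2^K}\sum_\epsilon E\Bigl[\Bigl(\sum_i Z_i m_\epsilon(\psi_i)\Bigr)^2\Bigr]=h^{2\beta}E\Bigl[\sum_k\Bigl(\sum_{i:\psi_i\in Q_k}Z_i\phi_k(\psi_i)\Bigr)^2\Bigr],
\]
where $\phi_k$ is the rescaled bump. The key point is that a cell containing exactly one unit contributes $\phi_k(\psi_i)^2$ regardless of $Z_i$, because $Z_i^2=1$. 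This is the step where the design is powerless: no assignment can cancel an isolated unit. Since every term is nonnegative,
\[
\text{average}\ \geq\ h^{2\beta}\sum_k E\bigl[\one\{N_k=1\}\,\phi_k(\psi_{i(k)})^2\bigr],
\]
where $N_k$ is the cell count. For uniform $\psi$, $P(N_k=1)=n h^d(1-h^d)^{n-1}$, and with $h^d=1/(2n)$ this is at least $\tfrac12 e^{-1}$, up to a constant. Conditional on $N_k=1$, the point is uniform on $Q_k$, so $E\phi_k^2=\int_{[0,1]^d}\phi^2$. Summing over $K\asymp n$ cells gives $E[(\sum_i Z_im)^2]\gtrsim n h^{2\beta}\int\phi^2$. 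Hence $\mc V_n=(4/n)E[(\sum_i Z_im)^2]\gtrsim h^{2\beta}\asymp n^{-2\beta/d}$.

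\emph{Main obstacle.} The delicate part is bookkeeping rather than ideas. First, the cells must tile $[0,1]^d$ exactly, which is handled by taking $h=1/\lceil(2n)^{1/d}\rceil$, which changes $h$ by at most a factor $2$. Second, the constant must be tracked in terms of $(\beta,d)$, since $\int\phi^2$ and the H\"older normalization of $\phi$ both depend on $d$; the statement allows this. Third, the interchange of $\inf_\sigma$ with the average over $\epsilon$ must be justified. This holds because admissibility forces $Z\indep$ outcomes given $\psi_{1:n}$, so for every $\sigma$ the averaged bound is free of $\sigma$. The small case $n=2$ follows from the same computation with $K\geq1$.
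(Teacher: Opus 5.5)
Your proposal is correct and is essentially the paper's argument: uniform covariates, signed bumps on a grid of $\asymp n$ cells of side $\asymp n^{-1/d}$, Rademacher averaging to reduce to $\sum_C G_C^2$, and the observation that a cell holding a single unit contributes a $Z$-independent amount, with an expected $\gtrsim n$ such cells. The only difference is cosmetic: the paper uses an explicit Lipschitz ``tabletop'' $F_C$ equal to one on a core of at least half the cell's volume, giving the explicit constant $2^{-1-6\beta}d^{-2\beta}/e$, whereas your generic smooth bump with $[\phi]_\beta\leq1$ leaves the $d$-dependence implicit in $\int\phi^2$, which the statement permits.
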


The lower bound in Theorem~\ref{theorem:design-agnostic-lower-bound} is completely design agnostic.
It applies not only to existing procedures, such as stratified randomization, symmetric rerandomization \citep{li2018asymptotic}, and optimization-based balancing \citep{kallus2018}, but to every possible experimental design under the mild admissibility requirements above.
Combining Theorems~\ref{theorem:stable-matching} and~\ref{theorem:design-agnostic-lower-bound} immediately shows stable matched pairs is minimax rate optimal over rough H\"older classes in fixed low dimension $d < \infty$.

% PROOF-ID: rough-class-minimax
\begin{cor}[Minimaxity Over Rough Spaces]\label{corollary:minimax-matching}
Let $0<\beta\leq1$ and fix dimension $3\leq d<\infty$. 
Then as $n\to\infty$, stable matching achieves minimax rate
\begin{equation}\label{equation:minimax-matching}
\inf_{\sigma\in\mathcal D_n} \sup_{P\in\mathcal P_\beta}\mc V_n(\sigma,P) \asymp n^{-2\beta/d}.
\end{equation}
\end{cor}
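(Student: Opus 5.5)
The plan is to sandwich the minimax quantity between the two finite-sample bounds already established, with $d$ and $\beta$ held fixed and all constants allowed to depend on them.

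\emph{Upper bound.} Since every matched-pairs design is admissible, $\mathcal M_n\subseteq\mathcal D_n$, the minimax value is at most $\sup_{P\in\mathcal P_\beta}\mc V_n(M_c,P)$ for any 2-swap-stable pairing $M_c$.

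\begin{itemize}
\item Take the cost $c(x,y)=\|x-y\|_2^2$. It satisfies Assumption~\ref{assumption:matching-cost} with $q=2$ and $\underline\lambda=\overline\lambda=1$, so $\Lambda=1$.
\item A 2-swap-stable pairing always exists for even $n$. For instance, a globally optimal pairing is stable, since any improving 2-swap would strictly lower the total cost. Alternatively, one can start from any pairing and apply improving swaps; the total cost strictly decreases and only finitely many pairings exist, so this terminates.
\item With $d\ge 3$ fixed, Theorem~\ref{theorem:stable-matching} then gives $\sup_{P\in\mathcal P_\beta}\mc V_n(M_c,P)\le C(\beta)\,d^{2\beta}\,n^{-2\beta/d}$ for every even $n$.
\item The factor $d^{2\beta}$ is a constant when $d$ is fixed. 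Hence the minimax value is $O(n^{-2\beta/d})$ along even $n$, and it is attained by stable matching.
\end{itemize}

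\emph{Lower bound.} Theorem~\ref{theorem:design-agnostic-lower-bound} gives, for every $n\ge2$,
\[
\inf_{\sigma\in\mathcal D_n}\sup_{P\in\mathcal P_\beta}\mc V_n(\sigma,P)\ge C(\beta,d)\,n^{-2\beta/d},
\]
with $C(\beta,d)>0$. Since $d$ is fixed, this constant does not change with $n$.

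Combining the two bounds yields $\inf_{\sigma\in\mathcal D_n}\sup_{P\in\mathcal P_\beta}\mc V_n(\sigma,P)\asymp n^{-2\beta/d}$. The upper bound is achieved by stable matching, which gives both the rate claim and the optimality claim.

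The main point needing care is a bookkeeping issue. Matched pairs, and hence Theorem~\ref{theorem:stable-matching}, is defined only for even $n$, so I would read the asymptotic statement along even $n$. This is the natural domain of the matched-pairs designs the corollary refers to. If odd $n$ must also be covered, the upper bound can be extended by leaving one unit to an independent fair coin. That unit adds at most $\tfrac{4}{n}E[m(\psi)^2]$ to the excess variance, which is $O(n^{-1})$. It does not dominate $n^{-2\beta/d}$, because $2\beta/d\le 2/3<1$. One should also confirm that this extra term stays bounded uniformly over $\mathcal P_\beta$, which is not automatic because $\mathcal P_\beta$ does not restrict the level of $m$. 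If it does not, I would simply restrict attention to even $n$, as the matched-pairs setting already does.
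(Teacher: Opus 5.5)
Your proposal is correct and is essentially the paper's argument: the paper obtains the corollary by directly combining the upper bound of Theorem~\ref{theorem:stable-matching}, here with $\Lambda=1$ so the constant $C(\beta)d^{2\beta}$ is fixed once $d$ is, with the design-agnostic lower bound of Theorem~\ref{theorem:design-agnostic-lower-bound}. Your hedge about odd $n$ resolves in favor of restricting to even $n$: for odd $n$ every assignment has $\sum_i Z_i\neq 0$, and the constant models $m\equiv a$ lie in $\mathcal P_\beta$ for every $a\in\mathbb R$, so every admissible $\sigma$ satisfies $\mc V_n(\sigma,P)\geq 4a^2/n$ for these models and the minimax value is infinite; the even-$n$ reading is therefore forced, not merely natural.
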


Matching therefore provides minimax protection adaptively over the rough H\"older scale $\beta \in (0, 1]$.
Despite this, Theorem~\ref{theorem:matching-lower-bound} also shows that no matched-pairs design can improve upon the rate $n^{-2/d}$ attained at $\beta=1$ even over the infinitely smooth linear class $\mathcal P_{\mathrm{lin}}$.
Thus matching cannot exploit smoothness beyond Lipschitz order.
Motivated by this, the next section develops new GSW designs that can adaptively exploit such higher order smoothness conditions, while also retaining good performance for rough outcome models.

\begin{remark}[Design Agnostic Bounds]
To the best of our knowledge, Corollary~\ref{corollary:minimax-matching} is the first design-agnostic minimax rate for convergence of finite sample variance to the \cite{hahn1998} variance bound under covariate-balancing randomization. 
To prove the lower bound, we place a prior on an appropriate hard subfamily of $\mathcal P_\beta$.
We show that, for every treatment allocation $Z$, the squared imbalance $(n^{-1}\sum_{i=1}^nZ_im(\psi_i))^2$, averaged over this family, is bounded below by unavoidable local contributions for a constant fraction of sampled units.
Since this bound holds pointwise in $Z$, it remains valid after averaging over any admissible assignment distribution $\sigma\in\mathcal D_n$, providing a design-agnostic lower bound.
\end{remark}

\begin{remark}
\citet{kallus2018} proves that optimal matched pairs randomization minimizes a related conditional imbalance criterion uniformly over Lipschitz outcome models.
In his framework, the least-favorable outcome model $m$ may depend on both realized covariates $\psi_{1:n}$ and the designer's candidate treatment allocation $Z$.
By contrast, here we study the usual statistical minimax risk over a fixed superpopulation law.
He does not study upper or lower bounds on the convergence rate of the resulting excess variance, as we do here.
\end{remark}

% !TeX root = ../main.tex
\section{Balancing Smooth Functions}\label{section:structure}

Motivated by the inefficiency results for matched pairs above, we develop nonparametric discrepancy minimization designs that can exploit higher-order smoothness, building on the Gram-Schmidt walk of \cite{bansal2019gram} and \cite{harshaw2024gsw}.
We show that a certain kernelized GSW specification can achieve excess variance rate $n^{-2s/d}$ adaptively over smoothness parameters $0 < s \leq d / 2$ and $n^{-1}$ otherwise, up to logarithmic factors.
Unlike matched pairs, such global balancing schemes continue to improve under additional orders of Sobolev smoothness.

Despite the attractive properties of nonparametric GSW, we also establish our main impossibility result, showing that no admissible design can guarantee uniform convergence to the efficiency bound $V^*(P)$, even over smooth outcome models, outside the very-low dimensional regime $d=o(\log n)$.

%However, these designs remain subject to the curse of dimensionality. 
%When $d_n \ge c \log n$, the $n^{-2s/d}$ rate above no longer guarantees that excess variance vanishes asymptotically.
%Indeed, we prove a design-agnostic lower bound showing that this barrier is unavoidable over full-dimensional Sobolev classes.
%Motivated by this, Section~\ref{section:restricted-efficiency} extends the designs in this section to balance nonparametric function classes with additive and other low-order interaction structure beyond low dimensions.

\subsection{Sobolev Spaces and RKHS}\label{subsection:sobolev-models}

For nonparametric GSW, it will be convenient to parameterize regularity using Sobolev spaces.
For a multi-index $\alpha=(\alpha_1,\ldots,\alpha_d)$, write $|\alpha|=\sum_{j=1}^d\alpha_j$ and let $D^\alpha F$ denote the weak derivative of a function $F$ on $\mr^d$.
Weak derivatives extend ordinary differentiation to functions whose rates of change exist only in an integrated sense, agreeing with ordinary derivatives whenever they exist.
Recall that $L^2(\mr^d)=\{F:\mr^d\to\mr:\int_{\mr^d}|F(\psi)|^2d\psi<\infty\}$ is the space of square-integrable functions.
For integers $k\geq1$, define the Sobolev space
\begin{equation}\label{equation:sobolev-whole-space}
H^k(\mr^d) \equiv \left\{ F\in L^2(\mr^d): D^\alpha F\in L^2(\mr^d) \text{ for every }|\alpha|\leq k \right\}.
\end{equation}
The norm on $H^k(\mr^d)$ can be expressed in terms of integrated derivatives 
\begin{equation}\label{equation:sobolev-whole-space-norm}
\norm{F}_{H^k(\mr^d)}^2 \asymp \sum_{|\alpha|\leq k} \int_{\mr^d}|D^\alpha F(\psi)|^2d\psi.
\end{equation}
In contrast to the H\"older classes in Section~\ref{section:matching}, which bound local changes uniformly across the covariate space, the Sobolev norm controls the aggregate size of these derivatives.
It therefore imposes an integrated global budget on local fluctuations.
Increasing the Sobolev order $k$ requires more weak derivatives to exist and to be square integrable, producing a progressively more regular class.
The definition extends to smoothness of every real order $s>0$.
Recall the Fourier transform $\widehat F(\omega)=(2\pi)^{-d/2}\int_{\mr^d}e^{-i\omega'x}F(x)dx$.
Then $H^s(\mr^d)$ consists of functions $F\in L^2(\mr^d)$ for which
\begin{equation}\label{equation:sobolev-spectral-norm}
\norm{F}_{H^s(\mr^d)}^2 \equiv \int_{\mr^d}|\widehat F(\omega)|^2(1+\norm{\omega}_2^2)^sd\omega<\infty.
\end{equation}
At integer orders $s=k$, this yields the same space as the derivative-based definition above.
Increasing the smoothness parameter $s$ requires the high-frequency components of $F$ to decay faster.
The use of Sobolev spaces is convenient rather than essential: the supplementary online appendix shows how the corresponding rates for nonparametric GSW over Sobolev spaces extend to H\"older classes $0<\beta\leq1$ studied above.

\medskip

\emph{RKHSs.} We use reproducing kernel Hilbert space (RKHS) methods to construct a feasible balancing criterion that automatically adapts to the unknown Sobolev order of the outcome model $m(\psi)$.
Let $W$ be a continuous positive semidefinite kernel, so that for every $r\geq1$ and any $(x_i)_{i=1}^r \subseteq \mathbb{R}^d$, the Gram matrix $[W(x_i,x_j)]_{i,j=1}^r$ is positive semidefinite.
Its RKHS $\mc H_W$ is a Hilbert space of functions with the reproducing property $h(\psi)=\langle h,W(\psi,\cdot)\rangle_{\mc H_W}$ for any $h\in\mc H_W$, where $\langle\cdot,\cdot\rangle_{\mc H_W}$ denotes its inner product.
By Cauchy-Schwarz, the reproducing property implies $|h(\psi)|\leq W(\psi,\psi)^{1/2}\norm{h}_{\mc H_W}$, so point evaluation is a continuous linear functional on $\mc H_W$.

The Mat\'ern family $W_d^{\mathrm{Mat},\nu}$ of kernels is indexed by the smoothness parameter $\nu>0$.
Its simplest member, obtained at $\nu=1/2$, is the exponential kernel $W_d^{\mathrm{Mat},1/2}(\psi,\psi')=\exp(-\norm{\psi-\psi'}_2)$.
A formula for general $\nu$ is given in the secondary online appendix.
Write $\mc H_d^{\mathrm{Mat},\nu}\equiv\mc H_{W_d^{\mathrm{Mat},\nu}}$ for the associated RKHS.
The following standard correspondence links these RKHSs to the Sobolev spaces defined above.

\begin{prop}[Mat\'ern-Sobolev Correspondence]\label{proposition:matern-sobolev-equivalence}
For every $d\geq1$ and $\nu>0$,
\begin{equation}\label{equation:matern-sobolev-equivalence}
\mc H_d^{\mathrm{Mat},\nu} = H^{d/2+\nu}(\mr^d), \qquad \norm{f}_{\mc H_d^{\mathrm{Mat},\nu}} \asymp_{d,\nu} \norm{f}_{H^{d/2+\nu}(\mr^d)}.
\end{equation}
\end{prop}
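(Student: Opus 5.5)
The plan is to compute the Fourier transform of the Matérn kernel and use the standard characterization of an RKHS generated by a translation-invariant kernel. Write $W_d^{\mathrm{Mat},\nu}(\psi,\psi')=\Phi(\psi-\psi')$. The Matérn function $\Phi$ is, up to a positive constant $c_{d,\nu}$, the inverse Fourier transform of $(1+\norm{\omega}_2^2)^{-(\nu+d/2)}$. In the normalization of the secondary appendix, where the length scale is one, this gives
\begin{equation*}
\widehat\Phi(\omega)=c_{d,\nu}(1+\norm{\omega}_2^2)^{-(\nu+d/2)} .
\end{equation*}
This is checked either from the explicit Bessel-function formula or, more directly, by verifying that $(1+\norm{\omega}_2^2)^{-s}$ with $s=\nu+d/2>d/2$ is integrable. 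Its inverse transform is then a continuous, bounded, radial function. The classical subordination identity
\begin{equation*}
(1+r^2)^{-s}=\Gamma(s)^{-1}\int_0^\infty t^{s-1}e^{-t}e^{-tr^2}\,dt
\end{equation*}
expresses it as a mixture of Gaussians, and integrating in $t$ gives the modified Bessel $K_\nu$ form. At $\nu=1/2$ this recovers $e^{-\norm{x}_2}$, which matches the exponential kernel quoted above.

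Next I would invoke the standard result that, for a continuous integrable positive definite $\Phi$ with $\widehat\Phi>0$,
\begin{equation*}
\mc H_\Phi=\Big\{F\in L^2\cap C:\ \int|\widehat F(\omega)|^2/\widehat\Phi(\omega)\,d\omega<\infty\Big\},
\end{equation*}
with squared norm equal to $(2\pi)^{-d/2}$ times that integral (Wendland, \emph{Scattered Data Approximation}, Thm.~10.12). If I wanted this self-contained, I would do the following.
\begin{enumerate}[label=(\roman*)]
\item Define the Hilbert space $\mc G$ by the weighted Fourier norm above.
\item Check that $\Phi(\cdot-x)\in\mc G$. This holds because its transform is $e^{-i\omega'x}\widehat\Phi(\omega)$, and $\int\widehat\Phi<\infty$.
\item Verify the reproducing property $\langle F,\Phi(\cdot-x)\rangle_{\mc G}=F(x)$ by Fourier inversion, which is valid since $\widehat F\in L^1$ by Cauchy--Schwarz against $\widehat\Phi\in L^1$.
\item Conclude by uniqueness of the RKHS attached to a kernel.
\end{enumerate}

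Substituting $1/\widehat\Phi(\omega)=c_{d,\nu}^{-1}(1+\norm{\omega}_2^2)^{\nu+d/2}$ gives
\begin{equation*}
\norm{F}_{\mc H_d^{\mathrm{Mat},\nu}}^2=C_{d,\nu}\,\norm{F}_{H^{d/2+\nu}(\mr^d)}^2 ,
\end{equation*}
which is in fact an equality up to a constant. The set equality $\mc H_d^{\mathrm{Mat},\nu}=H^{d/2+\nu}(\mr^d)$ follows because every $H^s$ function with $s>d/2$ has an integrable Fourier transform. Such a function therefore has a continuous representative (Sobolev embedding), so the continuity requirement in $\mc G$ imposes no extra restriction.

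The main obstacle is the Fourier transform computation together with the normalization constants. If the paper's Matérn formula includes a length-scale $\ell$ or a $\sqrt{2\nu}$ scaling, the transform becomes $\propto(2\nu/\ell^2+\norm{\omega}_2^2)^{-(\nu+d/2)}$. The weight $(2\nu/\ell^2+\norm{\omega}_2^2)^{s}$ is comparable to $(1+\norm{\omega}_2^2)^s$ with constants depending on $(\nu,\ell,d)$, so the norm equivalence survives and only the constants change. This is why the statement uses $\asymp_{d,\nu}$ rather than equality. I would handle this by keeping the general scale throughout and absorbing the constants at the end.
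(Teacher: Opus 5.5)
Your proposal is correct and follows essentially the paper's route. The paper defines the Mat\'ern kernel through the spectral density $\rho_{d,\nu}(\omega)\propto(1+\norm{\omega}_2^2)^{-(d/2+\nu)}$, the $\ell=\sqrt{2\nu}$ normalization, which gives $e^{-\norm{x}_2}$ at $\nu=1/2$. It checks $\kappa\in C(\mr^d)\cap L^1(\mr^d)$ and $\widehat\kappa>0$, applies Wendland's Theorem~10.12 to get the exact identity $\norm{h}_{\mc H_W}^2=\Lambda_{r,d}\int|\widehat h(\omega)|^2(1+\norm{\omega}_2^2)^{r}d\omega$ with $r=d/2+\nu$, and uses the embedding $H^r(\mr^d)\subseteq C(\mr^d)$ to drop the continuity requirement. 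Your self-contained verification of the Wendland characterization and your Gaussian-mixture representation are valid but not needed; the mixture also supplies the $\Phi\in L^1$ hypothesis for citing Wendland, which the paper gets from the exponential decay of the Bessel form.
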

In particular, every Sobolev space $H^s(\mr^d)$ with $s>d/2$ is a Mat\'ern RKHS, obtained by taking $\nu=s-d/2$.
This correspondence is standard \citep{wendland2005scattered,kanagawa2018gaussian}.
We rederive it in the online appendix and record a uniform bound on the norm-equivalence as $\nu$ approaches zero, which is used in our later results.

%When $s\leq d/2$, $H^{s}(\mr^d)$ is not itself an RKHS under the Sobolev norm \citep{adams2003sobolev}.
%We refer to the cases $s>d/2$, $s=d/2$, and $s<d/2$ as the supercritical, critical, and subcritical regimes, respectively.
%Below, we choose Mat\'ern RKHSs whose Sobolev orders approach $d/2$ from above.
%Although each design RKHS is supercritical, the resulting designs can exploit outcome regularity in all three regimes by balancing suitable RKHS approximations to the outcome function $m(\psi)$.

%By classical Sobolev embedding, $\mc H_d^{\mathrm{Mat},\nu}\subset C^q([0,1]^d)$ for every nonnegative integer $q<\nu$ \citep{adams2003sobolev}.
%The RKHS also imposes a square-summability budget on the derivatives, imposing additional regularity through a global budget on local fluctuations.

\subsection{Controlling Imbalances for Smooth Functions}\label{subsection:unit-gsw}

Proposition~\ref{proposition:variance-decomposition} shows that excess variance is determined by the expected square of the outcome-model imbalance $n^{-1}\sum_{i=1}^nZ_im(\psi_i)$.
Since $m(\psi)$ is unknown at design time, the design cannot balance it directly.
We can form a feasible design criterion by instead controlling the worst-case imbalance over the unit ball of a chosen RKHS $\mc H_W$.
Consider an allocation $Z\in\{-1,1\}^n$ and let $W_n=[W(\psi_i,\psi_j)]_{i,j=1}^n$ be the corresponding kernel matrix.
By the reproducing property, one calculates imbalance
\begin{equation}\label{equation:rkhs-uniform-imbalance}
\mc I_W(Z) \equiv \sup_{\norm{g}_{\mc H_W}\leq1} \left(\frac1n\sum_{i=1}^nZ_ig(\psi_i)\right)^2 = \frac1{n^2}Z'W_nZ.
\end{equation}
For $W=W_d^{\mathrm{Mat},\nu}$, Proposition~\ref{proposition:matern-sobolev-equivalence} identifies this criterion with the worst-case squared imbalance over a norm ball in $H^{d/2+\nu}(\mr^d)$.
We therefore seek an admissible assignment law $\sigma \in \mc D_n$ under which the quadratic form $\mc I_W(Z)$ is small.
%For intuition, note that under exact treatment balance $\sum_i Z_i = 0$, $\mc I_W(Z)$ is proportional to the empirical squared maximum mean discrepancy (MMD$^2$) between the treatment and control covariate distributions \citep{gretton2012kernel}.
%Thus, minimizing RKHS imbalance makes these empirical distributions difficult to distinguish using functions in $\mc H_W$.

One can obtain such a design using the Gram-Schmidt walk (GSW) of \citet{bansal2019gram}, as adapted to experimental design by \citet{harshaw2024gsw}. This provides a computationally tractable assignment law that keeps $\mc I_W(Z)$ small while retaining the randomization needed for robustness and inference.

\emph{GSW Algorithm.} The Gram-Schmidt walk of \citet{bansal2019gram} takes as input an $n\times n$ positive semidefinite matrix $\Gamma_n$ with diagonal entries at most one.
It begins with the infeasible fractional treatment allocation $z^{(0)}=0\in[-1,1]^n$.
At each step $t\geq1$, it selects a pivot index $p_t$, which it retains until $z_{p_t}$ reaches $-1$ or $1$, and chooses an update direction $u$ minimizing $u'\Gamma_nu$, subject to $u_{p_t}=1$ and $u_j=0$ for coordinates that are already integral.
It randomizes between the largest feasible positive and negative steps in direction $u$, with probabilities chosen so the update has conditional mean zero.
At least one fractional coordinate reaches the boundary after each update, so the procedure terminates in finitely many steps in a feasible treatment allocation $Z \in \{\pm 1\}^n$.
See the secondary online appendix for a more detailed discussion.

\emph{Kernelization.}
\citet{harshaw2021dissertation} originally observed that GSW can be kernelized by constructing its input matrix $\Gamma_n$ from a kernel matrix.
We use the \emph{design kernel} $K=1+W$, where $W$ is a continuous positive semidefinite kernel satisfying $\sup_\psi W(\psi,\psi)\leq1$.
The added constant ensures that the associated RKHS can represent arbitrary constant levels of the outcome model $m$.
By the sums-of-kernels theorem \citep{paulsen2016rkhs}, $\mc H_K$ consists of functions $g=a+w$, where $a\in\mr$ and $w\in\mc H_W$, with norm $\norm{g}_{\mc H_K}^2=\min_{g=a+w}(a^2+\norm{w}_{\mc H_W}^2)$.
Given covariates $\psi_{1:n}$ and a robustness parameter $\varphi\in(0,1)$, define the positive definite matrix
\begin{equation}\label{equation:kernel-gsw-gram}
\Gamma_n = \varphi I_n + \frac{1-\varphi}{2} \bigl[K(\psi_i,\psi_j)\bigr]_{i,j=1}^n.
\end{equation}
Because $K(\psi,\psi)\leq2$, the matrix $\Gamma_n$ is positive definite with diagonal entries at most one and therefore satisfies the GSW input conditions above.
The second term in Equation~\eqref{equation:kernel-gsw-gram} rewards balance over $\mc H_K$, while the identity term supplies robustness in outcome directions that are not represented by this RKHS.

Write $\mathrm{GSW}(K,\varphi)$ for the resulting assignment law, suppressing its dependence on $n$.
The mean-zero updates imply $E[Z_i|\psi_{1:n}]=0$ for every unit \citep{harshaw2024gsw}.
The update directions and step sizes depend only on the covariates and auxiliary randomness independent of the potential outcomes, so $Z \indep (Y_i(0),Y_i(1))_{i=1}^n | \psi_{1:n}$.
Thus, $\mathrm{GSW}(K,\varphi)$ is admissible.

\begin{remark}[Kernel Allocation]
The RKHS imbalance objective in Equation~\eqref{equation:rkhs-uniform-imbalance} also underlies the kernel allocation design of \citet{kallus2018}, which randomizes between $z^*$ and $-z^*$ for the optimal vector $z^*\in\arg\min z'W_nz$ with $z\in\{-1,1\}^n$.
Computing $z^*$ requires solving an NP-hard mixed integer program, and the resulting design has support of size two.
This limited randomization can have poor robustness properties and preclude standard inference \citep{kallus2021}.
Kernelized GSW avoids these issues through computationally tractable updates and broad randomization support.
\end{remark}

\subsection{Oracle Inequality and Smoothness Gains}\label{subsection:kernel-oracle-rates}

\citet{harshaw2024gsw} show that under GSW, the MSE of the Horvitz-Thompson estimator is bounded above by the loss of an implicit ridge regression of the outcome levels on the covariates.
\citet{harshaw2021dissertation} extends this inequality to kernelized GSW, yielding an implicit kernel-ridge objective.
Extending these results to our current superpopulation setting and combining with Proposition~\ref{proposition:variance-decomposition} yields the following oracle inequality for excess variance.

% PROOF-ID: kernel-gsw-oracle
\begin{prop}[Oracle Inequality]\label{proposition:kernel-oracle}
Let $\sigma=\mathrm{GSW}(K,\varphi)$ as above.
Then
\begin{equation}\label{equation:kernel-oracle}
\mc V_n(\sigma,P) \leq \inf_{g\in\mc H_{K}} \left\{ \frac4\varphi E\left[(m(\psi)-g(\psi))^2\right] + \frac8{(1-\varphi)n} \norm{g}_{\mc H_{K}}^2 \right\}.
\end{equation}
\end{prop}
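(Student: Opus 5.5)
Proposition~\ref{proposition:variance-decomposition} reduces the claim to bounding $4n\,E[(n^{-1}\sum_i Z_i m(\psi_i))^2]$. We argue conditionally on $\psi_{1:n}$ and average at the end. Write $\mu=(m(\psi_1),\ldots,m(\psi_n))'$, so that the conditional target is $\frac4n\,\mu'\cov(Z|\psi_{1:n})\mu$.

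The main input is the covariance bound for the Gram--Schmidt walk of \citet{bansal2019gram}, in the form used by \citet{harshaw2024gsw}. For input $\Gamma_n=\varphi I_n+\frac{1-\varphi}{2}K_n$, with $K_n=[K(\psi_i,\psi_j)]$, the walk satisfies $\cov(Z|\psi_{1:n})\preceq \Gamma_n^{-1}$ when $\Gamma_n=B'B$ is written with $B=\bigl(\sqrt\varphi I_n;\ \sqrt{(1-\varphi)/2}\,K_n^{1/2}\bigr)$, so that the column norms are at most one. I will take this matrix inequality as known; it is also the basis of the kernelized extension in \citet{harshaw2021dissertation}. The variational form of $\mu'\Gamma_n^{-1}\mu$ then gives a ridge-type bound. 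For any $v\in\mr^n$,
\[
\mu'\Gamma_n^{-1}\mu\le \frac1\varphi\norm{\mu-\Gamma_n^{\prime}v}^2_{\ast}+\ldots,
\]
but it is cleaner to use the identity $\mu'(B'B)^{-1}\mu=\min\{\norm{\eta}_2^2: B'\eta=\mu\}$. Split $\eta=(\eta_1,\eta_2)$ with $\eta_1=\varphi^{-1/2}(\mu-\mathbf g)$ and $\eta_2=\sqrt{2/(1-\varphi)}\,K_n^{-1/2}\mathbf g$ (pseudo-inverse), where $\mathbf g=K_n\alpha$ ranges over $\mathrm{range}(K_n)$. This yields
\[
\mu'\Gamma_n^{-1}\mu\le \inf_{\alpha}\Bigl\{\tfrac1\varphi\norm{\mu-K_n\alpha}_2^2+\tfrac{2}{1-\varphi}\alpha'K_n\alpha\Bigr\}.
\]
Now fix any $g\in\mc H_K$ and let $\mathbf g=(g(\psi_i))_i$. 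By the representer argument, projecting $g$ onto $\mathrm{span}\{K(\psi_i,\cdot)\}$ gives some $\alpha$ with $K_n\alpha=\mathbf g$ and $\alpha'K_n\alpha\le\norm{g}_{\mc H_K}^2$. Hence
\[
\frac4n\mu'\cov(Z|\psi_{1:n})\mu\le \frac{4}{\varphi n}\sum_i(m(\psi_i)-g(\psi_i))^2+\frac{8}{(1-\varphi)n}\norm{g}_{\mc H_K}^2 .
\]

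Taking expectations over the iid $\psi_i$ turns the first term into $\frac4\varphi E[(m(\psi)-g(\psi))^2]$, while the second term is deterministic. Because $g$ is fixed before the expectation is taken, the infimum can be moved outside, which yields the stated inequality.

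One point needs care. The excess variance in Proposition~\ref{proposition:variance-decomposition} involves $E[(\sum Z_i m(\psi_i))^2]$ and not a conditional covariance, but since $E[Z|\psi_{1:n}]=0$ this second moment equals $\mu'E[ZZ'|\psi_{1:n}]\mu=\mu'\cov(Z|\psi_{1:n})\mu$.

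The main obstacle is establishing the exact form of the GSW covariance bound, $\cov(Z)\preceq(B'B)^{-1}$, with the right normalization. The constants $4/\varphi$ and $8/(1-\varphi)$ depend on it: the factor $2$ comes from $K\le 2$ being rescaled by $1/2$. I would first try to cite Theorem~6.6 of \citet{harshaw2024gsw} directly. If its hypotheses are stated for augmented covariate matrices, I would instead verify them via a feature map $K_n=\Phi\Phi'$, which handles the kernelized case.
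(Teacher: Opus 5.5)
Your proposal is correct and follows the paper's proof. Both arguments use the same three steps: the covariance bound $\cov(Z|\psi_{1:n})\preceq\Gamma_n^{-1}$ from \citet{harshaw2024gsw}, then a ridge-type variational bound on $\mu'\Gamma_n^{-1}\mu$ evaluated at an arbitrary $g\in\mc H_K$, then expectation and the infimum over $g$. For the covariance bound, the paper applies their Theorem~6.3 to the square factor $\Gamma_n^{1/2}$, whose column span is all of $\mr^n$, so no extra normalization is needed.

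The only real difference is how the ridge bound is obtained. You use the finite-dimensional min-norm identity and then project $g$ onto $\mathrm{span}\{K(\psi_i,\cdot)\}$. The paper's Lemma~\ref{lemma:proof-gsw-ridge} instead proves $y'(aI+bF_nF_n^*)^{-1}y=\min_{h\in\mc H_K}\{a^{-1}\norm{y-F_nh}_2^2+b^{-1}\norm{h}_{\mc H_K}^2\}$ directly with the evaluation operator $F_n$, so it needs no representer step.
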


The first term controls the residual $m-g$ not captured by the RKHS, while the second reflects the difficulty of balancing $g$, as measured by its RKHS norm.
The guarantee is strongest when $m$ can be approximated accurately by functions of moderate norm, which we show below occurs under sufficient smoothness restrictions.

Recall that a sequence of designs $\sigma_n$ is pointwise asymptotically efficient at $P$ if $\mc V_n(\sigma_n,P)\to0$.
Previously, this has only been established for matched pairs and other finely stratified designs \citep{bai2021inference,bai2023efficiency}.
Equation~\eqref{equation:kernel-oracle} shows asymptotic efficiency also holds for any kernelized GSW design whose RKHS is dense in $L^2(P_\psi)$.

% PROOF-ID: kernel-gsw-pointwise
\begin{cor}[Pointwise Efficiency]\label{corollary:kernel-pointwise}
Fix a design kernel $K$ and $\varphi\in(0,1)$, and, for each $n$, let $\sigma_n=\mathrm{GSW}(K,\varphi)$.
Suppose $\mc H_K$ is dense in $L^2(P_\psi)$.
As $n \to \infty$, excess variance 
\[
\mc V_n(\sigma_n,P) \longrightarrow 0.
\]
\end{cor}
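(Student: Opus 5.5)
The plan is to derive the corollary directly from Proposition~\ref{proposition:kernel-oracle} with a standard $\epsilon/2$ argument. The design kernel $K$ and $\varphi$ are held fixed as $n$ grows, so the constants $4/\varphi$ and $8/(1-\varphi)$ do not depend on $n$. First, $m\in L^2(P_\psi)$: by Jensen, $E[m(\psi)^2]\le E[(Y(1)+Y(0))^2/4]<\infty$ under Assumption~\ref{assumption:sampling}. Also, $\mc V_n(\sigma_n,P)\ge 0$ by Proposition~\ref{proposition:variance-decomposition}, so it suffices to show $\limsup_n \mc V_n(\sigma_n,P)\le\epsilon$ for every $\epsilon>0$.

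Fix $\epsilon>0$. Density of $\mc H_K$ in $L^2(P_\psi)$ gives some $g_\epsilon\in\mc H_K$ with $E[(m(\psi)-g_\epsilon(\psi))^2]\le \varphi\epsilon/8$. We need $g_\epsilon(\psi)$ to be a well-defined element of $L^2(P_\psi)$. This holds because $K=1+W$ with $\sup_\psi W(\psi,\psi)\le 1$ is bounded, so the reproducing property gives $|g(\psi)|\le \sqrt2\,\norm{g}_{\mc H_K}$. Plugging this fixed $g_\epsilon$ into the infimum in Equation~\eqref{equation:kernel-oracle} yields
\[
\mc V_n(\sigma_n,P)\le \frac{\epsilon}{2} + \frac{8}{(1-\varphi)n}\norm{g_\epsilon}_{\mc H_K}^2 .
\]
The second term is a fixed finite constant divided by $n$, so it is at most $\epsilon/2$ for all $n\ge N_\epsilon$. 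Hence $\limsup_n\mc V_n\le\epsilon$. Since $\epsilon$ is arbitrary and $\mc V_n\ge0$, it follows that $\mc V_n(\sigma_n,P)\to0$.

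There is no substantive obstacle here, since all the work is done by the oracle inequality. The only point needing care is that $\norm{g_\epsilon}_{\mc H_K}$ may be very large, because density gives approximation but no control of the norm. That is harmless for a pointwise statement, because $g_\epsilon$ is chosen before $n\to\infty$. The same issue is exactly why no uniform rate follows without further smoothness assumptions.
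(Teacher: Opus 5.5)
Your proposal is correct and follows essentially the same route as the paper. Both arguments plug a fixed density approximant $g_\epsilon$ into the oracle inequality of Proposition~\ref{proposition:kernel-oracle}, let $n\to\infty$, and use $\mc V_n\ge 0$ to conclude; the only difference is that the paper bounds $\limsup_n \mc V_n$ by $4\epsilon/\varphi$ rather than pre-scaling $\epsilon$.
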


For every $\nu>0$, the Mat\'ern design RKHS is dense in $L^2(P_\psi)$ under every Borel law on $[0,1]^d$.
The Gaussian radial basis function kernel with length scale $\ell>0$ and $W_{\mathrm{RBF},\ell}(\psi,\psi')=\exp\{-\norm{\psi-\psi'}_2^2/(2\ell^2)\}$ has the same density property \citep{scholkopf2002learning}.
Thus, Corollary~\ref{corollary:kernel-pointwise} applies to either kernel family.

However, as we argued in Section~\ref{section:matching}, such pointwise results can fail to yield meaningful guarantees on finite-sample efficiency.
Instead, we use the oracle inequality to derive uniform rates on  the excess variance over suitable regularity classes.

The simplest example of such a bound follows from the inequality above when the outcome model belongs to a bounded ball in the design RKHS.
For a design kernel $K$ and $B>0$, define this class by $\mc P_K(B)=\left\{P:m\in\mc H_K,\ \norm{m}_{\mc H_K}\leq B\right\}$.

% PROOF-ID: kernel-gsw-ball
\begin{cor}[RKHS Fast Rate]\label{corollary:kernel-ball}
Let $\sigma=\mathrm{GSW}(K,\varphi)$ as above.
Then
\begin{equation}\label{equation:kernel-ball}
\sup_{P\in\mc P_K(B)}\mc V_n(\sigma,P) \leq \frac{8B^2}{(1-\varphi)n}.
\end{equation}
\end{cor}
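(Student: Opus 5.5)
The bound follows directly from the oracle inequality in Proposition~\ref{proposition:kernel-oracle}, by evaluating its right-hand side at a single well-chosen candidate rather than taking the infimum. Fix any $P\in\mc P_K(B)$. By definition of this class, the outcome model $m(\psi)=E_P[(Y(1)+Y(0))/2\mid\psi]$ itself lies in $\mc H_K$ with $\norm{m}_{\mc H_K}\leq B$. It is therefore an admissible choice $g=m$ in the infimum of Equation~\eqref{equation:kernel-oracle}.

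With $g=m$, the approximation term $\frac4\varphi E[(m(\psi)-g(\psi))^2]$ vanishes identically. Only the balancing term survives:
\[
\mc V_n(\sigma,P)\leq \frac{8}{(1-\varphi)n}\norm{m}_{\mc H_K}^2\leq \frac{8B^2}{(1-\varphi)n}.
\]
The right-hand side depends on $P$ only through the norm bound $B$. Taking the supremum over $P\in\mc P_K(B)$ therefore yields Equation~\eqref{equation:kernel-ball}.

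There is no genuine obstacle here. The one point to check is that the infimum in Proposition~\ref{proposition:kernel-oracle} ranges over all of $\mc H_K$ with no further constraint, so that $g=m$ is permitted. One should also confirm that $E[m(\psi)^2]<\infty$, which holds by Assumption~\ref{assumption:sampling}, so that all quantities are finite and the oracle inequality applies to every law in the class. All substantive work, including the GSW covariance bound and its transfer to the superpopulation excess variance through Proposition~\ref{proposition:variance-decomposition}, is contained in the oracle inequality, which we take as given.
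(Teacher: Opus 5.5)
Your proposal is correct and matches the paper's proof essentially verbatim: take $g=m$ in Proposition~\ref{proposition:kernel-oracle}, which kills the approximation term, bound $\norm{m}_{\mc H_K}^2\leq B^2$, and take the supremum over $P\in\mc P_K(B)$ since the bound does not depend on $P$.
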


This shows kernelized GSW attains the fast parametric rate $n^{-1}$ uniformly over any fixed RKHS ball.
The next result is a direct Sobolev specialization: when smoothness $s>d/2$, the Mat\'ern-Sobolev correspondence places a bounded $H^s$ ball inside a bounded Mat\'ern RKHS ball, so the rate in Corollary~\ref{corollary:kernel-ball} transfers immediately.
To state this result on a scale that remains comparable across covariate dimensions, let $H_{\mathrm{av}}^s(\mr^d)$ denote $H^s(\mr^d)$ equipped with the dimension-normalized norm
\begin{equation}\label{equation:average-sobolev-norm}
\norm{F}_{H_{\mathrm{av}}^s(\mr^d)}^2\equiv\int_{\mr^d}|\widehat F(\omega)|^2\left(1+\frac{\norm{\omega}_2^2}{d}\right)^sd\omega.
\end{equation}
For fixed $d$, this is equivalent to the ordinary Sobolev norm and hence defines the same function space.
Define the probability model 
\begin{equation}\label{equation:sobolev-outcome-class}
\mc P_d^s \equiv \left\{ P: m\in H_{\mathrm{av}}^s(\mr^d),\ \norm{m}_{H_{\mathrm{av}}^s(\mr^d)} \leq 1 \right\}.
\end{equation}

% PROOF-ID: matern-supercritical
\begin{cor}[Non-adaptive Fast Rate]\label{corollary:matern-supercritical}
Let $s>d/2$ and $\varphi\in(0,1)$.
Choose $\nu>0$ such that $d/2+\nu\leq s$, and let $K=1+W_d^{\mathrm{Mat},\nu}$.
For design $\sigma=\mathrm{GSW}(K,\varphi)$, there is a constant $C$ depending only on $(d,\nu,\varphi)$ such that
\begin{equation}\label{equation:matern-supercritical}
\sup_{P\in\mc P_d^s}\mc V_n(\sigma,P) \leq C n^{-1}.
\end{equation}
\end{cor}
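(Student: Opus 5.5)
The plan is to reduce to Corollary~\ref{corollary:kernel-ball}: show that every $P\in\mc P_d^s$ has $m$ in a ball of $\mc H_K$ whose radius $B$ depends only on $(d,\nu)$. Then
\[
\sup_{P\in\mc P_d^s}\mc V_n(\sigma,P)\le \frac{8B^2}{(1-\varphi)n},
\]
and we take $C=8B^2/(1-\varphi)$.

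First, fix $P\in\mc P_d^s$, so $\norm{m}_{H_{\mathrm{av}}^s(\mr^d)}\le1$. We pass from the averaged norm to the ordinary Sobolev norm of order $d/2+\nu$. Pointwise in $\omega$,
\[
1+\norm{\omega}_2^2\le d\left(1+\frac{\norm{\omega}_2^2}{d}\right),
\]
since $d\ge1$. Raising both sides to the power $t\equiv d/2+\nu$ gives $(1+\norm{\omega}_2^2)^{t}\le d^{t}(1+\norm{\omega}_2^2/d)^{t}$. Because $t\le s$ and $1+\norm{\omega}_2^2/d\ge1$, we also have $(1+\norm{\omega}_2^2/d)^{t}\le(1+\norm{\omega}_2^2/d)^{s}$. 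Integrating against $|\widehat m(\omega)|^2$ in the spectral definition \eqref{equation:sobolev-spectral-norm} then yields
\[
\norm{m}_{H^{d/2+\nu}(\mr^d)}^2\le d^{d/2+\nu}\norm{m}_{H_{\mathrm{av}}^s(\mr^d)}^2\le d^{d/2+\nu}.
\]

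Second, we apply Proposition~\ref{proposition:matern-sobolev-equivalence}. It gives $m\in\mc H_d^{\mathrm{Mat},\nu}$ with $\norm{m}_{\mc H_d^{\mathrm{Mat},\nu}}\le c_{d,\nu}\norm{m}_{H^{d/2+\nu}}\le c_{d,\nu}d^{(d/2+\nu)/2}$. To meet the design-kernel requirement $\sup_\psi W(\psi,\psi)\le1$, note that the Matérn kernel is normalized so that $W(\psi,\psi)=1$; if a different normalization were used, it would only rescale the constant.

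Third, we use the sums-of-kernels description of $\mc H_K$ with the decomposition $m=0+m$. This gives
\[
\norm{m}_{\mc H_K}\le\norm{m}_{\mc H_W}\le B\equiv c_{d,\nu}d^{(d/2+\nu)/2}.
\]
Hence $P\in\mc P_K(B)$, and Corollary~\ref{corollary:kernel-ball} finishes the proof.

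A technical point is that $m$ is defined on $[0,1]^d$ while the Sobolev norm is on $\mr^d$. The class definition \eqref{equation:sobolev-outcome-class} already takes $m$ to be (the restriction of) an element of $H^s_{\mathrm{av}}(\mr^d)$. Restriction does not increase the RKHS norm, so the bound carries over to the restricted kernel on $[0,1]^d$. The only nontrivial input is the norm-equivalence constant in Proposition~\ref{proposition:matern-sobolev-equivalence}, and this is why $C$ depends on $(d,\nu,\varphi)$.
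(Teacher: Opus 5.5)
Your proof is correct and follows essentially the same route as the paper: you compare spectral weights via $1+\norm{\omega}_2^2\le d(1+\norm{\omega}_2^2/d)$ and $d/2+\nu\le s$, bound $\norm{m}_{\mc H_W}$, pass to $\mc H_K$ through the sums-of-kernels decomposition $m=0+m$, and invoke Corollary~\ref{corollary:kernel-ball}. The only cosmetic difference is that the paper uses the exact Mat\'ern norm formula $\norm{h}_{\mc H_W}^2=\Lambda_{r,d}\int|\widehat h(\omega)|^2(1+\norm{\omega}_2^2)^r d\omega$, which gives the explicit radius $B^2=\Theta_{r,d}=d^r\Lambda_{r,d}$, whereas you use the generic equivalence constant $c_{d,\nu}$ from Proposition~\ref{proposition:matern-sobolev-equivalence}.
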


A fixed Mat\'ern kernelized GSW design thus attains the parametric excess-variance rate $n^{-1}$ uniformly over Sobolev outcome models with smoothness $s>d/2$.
However, this simple result requires both enough smoothness $s>d/2$ and enough prior knowledge of $s$ to correctly choose the Mat\'ern tuning parameter $\nu$.
Also, for fixed $s$ the condition $s>d/2$ eventually fails along any sequence $d=d_n\to\infty$, so this analysis is unsuitable for studying asymptotics beyond low, fixed dimension.

Next, we remove both restrictions by constructing a single sequence of kernelized GSW designs that adapts to every $s>0$ without prior knowledge.

\subsection{Adaptation to Unknown Smoothness}\label{subsection:sobolev-adaptation}

For any fixed $c>0$, set $\nu_n=c/\log n$, so the Sobolev order $d/2+\nu_n$ of the Mat\'ern design RKHS approaches the critical boundary $d/2$ from above, independent of the unknown smoothness $s$.
This yields a design sequence that adapts to every $s>0$.

% PROOF-ID: sobolev-gsw-upper
\begin{thm}[Adaptive Kernel GSW]\label{theorem:sobolev-gsw}
Fix $c>0$ and $\varphi\in(0,1)$, and set $\nu_n=c/\log n$, $K_n=1+W_d^{\mathrm{Mat},\nu_n}$, and $\sigma_n=\mathrm{GSW}(K_n,\varphi)$.
For every $s>0$, there is a constant $C$ depending only on $(s,c,\varphi,\overline p)$ such that, for every $d\geq1$ and $n\geq2$,
\begin{equation}\label{equation:sobolev-gsw}
\sup_{P\in\mc P_d^s}\mc V_n(\sigma_n,P) \leq C\left(\frac{\log n}{n}\right)^{(2s/d)\wedge 1}.
\end{equation}
\end{thm}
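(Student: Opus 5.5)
Apply the oracle inequality (Proposition~\ref{proposition:kernel-oracle}) with a Fourier-truncated approximant $g$ built from $m$, then balance the approximation and RKHS-norm terms by choosing the frequency cutoff.

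\textbf{Constructing the approximant.} Fix $P\in\mc P_d^s$ and write $m\in H^s_{\mathrm{av}}(\mr^d)$ with $\norm{m}_{H^s_{\mathrm{av}}}\leq1$. For a cutoff $R\geq1$, define the low-pass approximant $g_R$ by $\widehat g_R=\widehat m\,\one\{\norm{\omega}_2\leq \sqrt d R\}$. Then $g_R$ is band-limited, so it lies in every Sobolev space. By Proposition~\ref{proposition:matern-sobolev-equivalence} it therefore lies in $\mc H_d^{\mathrm{Mat},\nu_n}\subseteq\mc H_{K_n}$.

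\textbf{Approximation term.} Since $\psi$ has density at most $\overline p$ on $[0,1]^d$, Plancherel gives
\[
E[(m-g_R)^2]\leq \overline p\int_{\norm{\omega}>\sqrt dR}|\widehat m|^2\,d\omega\leq \overline p(1+R^2)^{-s}\leq \overline p R^{-2s}.
\]
This uses that $(1+\norm{\omega}^2/d)^s\geq(1+R^2)^s$ on the tail.

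\textbf{RKHS-norm term.} Here I need the Mat\'ern norm written spectrally, with the same $d$-normalization. I want
\[
\norm{g}_{\mc H^{\mathrm{Mat},\nu}}^2\asymp\int|\widehat g|^2(1+\norm{\omega}^2/d)^{d/2+\nu}\,d\omega,
\]
with constants uniform in $d$ and in $\nu\to0$. This relies on the Mat\'ern length scale being $\sqrt d$-normalized and on the uniform-in-$\nu$ norm equivalence recorded in the appendix. Granting this, in the case $s<d/2+\nu_n$ we get
\[
\norm{g_R}_{\mc H_{K_n}}^2\lesssim \sup_{\norm{\omega}\leq\sqrt dR}(1+\norm{\omega}^2/d)^{d/2+\nu_n-s}\leq (1+R^2)^{d/2+\nu_n-s}\lesssim R^{d-2s}R^{2\nu_n}.
\]

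\textbf{Balancing the two terms.} Plugging into \eqref{equation:kernel-oracle} gives
\[
\mc V_n\lesssim R^{-2s}+R^{d-2s+2\nu_n}/n.
\]
Choose $R=n^{1/d}$, capping it so that $R\leq n$. Then $R^{2\nu_n}\leq n^{2c/\log n}=e^{2c}$ is a constant; this is exactly why $\nu_n=c/\log n$ is chosen. The result is $\mc V_n\lesssim n^{-2s/d}$ when $2s<d$.

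When $s\geq d/2+\nu_n$, take $g=m$ itself. Its Mat\'ern norm is at most its $H^s_{\mathrm{av}}$ norm, which is at most $1$, so Corollary~\ref{corollary:kernel-ball} gives $\mc V_n\lesssim n^{-1}$.

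The remaining window is $d/2\leq s<d/2+\nu_n$. There the exponent $d/2+\nu_n-s\leq\nu_n$, so the norm bound is at most $(1+R^2)^{\nu_n}$. Choosing $R=n$ gives a constant norm and an approximation error of $n^{-2s}\leq n^{-1}$. The $\log n$ factor in the statement gives slack for whatever log losses appear in the uniform norm-equivalence constant; I expect that constant may grow like $1/\nu_n\asymp\log n$.

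\textbf{Main obstacle.} The hard step is the uniform-in-$(d,\nu)$ Mat\'ern--Sobolev norm comparison in the $H_{\mathrm{av}}$ scaling, as $\nu_n\downarrow0$. The Mat\'ern spectral density carries normalizing constants like $\Gamma(\nu)$, which blow up as $\nu\to0$. To handle this, I would first compute that spectral density explicitly. Then I would check that the ratio to $(1+\norm{\omega}^2/d)^{-(d/2+\nu)}$ is bounded by $C/\nu$ with $C$ free of $d$. That bound yields the $\log n$ factor, which is then absorbed into $(\log n/n)^{(2s/d)\wedge1}$.
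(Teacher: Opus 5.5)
Your architecture is the paper's: the oracle inequality with a witness in $\mc H_{W_n}$ (whose $\mc H_{K_n}$ norm is no larger), spectral truncation of $\widehat m$ in the $H^s_{\mathrm{av}}$ geometry, the dimension-uniform bound $\Theta_{d/2+\nu,d}=d^{d/2+\nu}\Lambda_{d/2+\nu,d}\leq C/\nu$, and $\nu_n=c/\log n$ to neutralize the excess exponent. Two side remarks. The paper's Mat\'ern kernel has unit length scale, not a $\sqrt d$-normalized one. So only the one-sided comparison $\norm{g}^2_{\mc H_W}\leq\Theta_{r,d}\int|\widehat g|^2(1+\norm{\omega}_2^2/d)^r d\omega$ holds uniformly in $d$; it follows from $1+\norm{\omega}_2^2\leq d(1+\norm{\omega}_2^2/d)$, its constant grows like $1/\nu$ rather than being uniform as $\nu\to0$, and this one-sided bound is all you need. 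Likewise, in the supercritical case $\norm{m}^2_{\mc H_{W_n}}\lesssim\log n$, not $\leq1$; this still gives $\log n/n$.

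The gap is in the balancing step, which fails in two ways as written.

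First, the log factor. Once $\Theta_n\asymp1/\nu_n\asymp\log n$ sits in the penalty, your choice $R=n^{1/d}$ only gives $\mc V_n\lesssim\log n\cdot n^{-2s/d}$. This exceeds $(\log n/n)^{2s/d}$ by $(\log n)^{1-2s/d}\to\infty$ when $2s<d$. The logarithm in the statement is therefore not slack; it is exactly what re-balancing produces, with the cutoff at scale $(n/\log n)^{1/d}$.

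Second, and more seriously, uniformity in $d$. The constant must not depend on $d$, but your step $(1+R^2)^{d/2+\nu_n-s}\lesssim R^{d-2s+2\nu_n}$ hides a factor as large as $2^{d/2}$. Even without that step, $R=n^{1/d}$ gives $(1+n^{2/d})^{d/2}=n(1+n^{-2/d})^{d/2}$. For $d=\log n/t$ the extra factor is $n^{\log(1+e^{-2t})/(2t)}$, which diverges in $n$ for each fixed $t$, while the target there is only of order $e^{-2st}$. The trivial bound $\mc V_n\leq4\overline p/\varphi$ (take $g=0$) covers this regime only with a $t$-dependent constant. Your argument therefore yields constants depending on $d$.

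The repair, which is what the paper's penalized-approximation lemma does, is to parametrize the cutoff directly as $\{1+\norm{\omega}_2^2/d\leq\tilde R\}$ and never convert to $R^2$.

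With this cutoff the approximation term is at most $\overline p\tilde R^{-s}$, and the penalty is at most $\lambda_n\Theta_n\tilde R^{d/2+\nu_n-s}$ with $\lambda_n\asymp_\varphi n^{-1}$. Choose $\tilde R=(\lambda_n\Theta_n)^{-1/(d/2+\nu_n)}$, which is at least one once $\lambda_n\Theta_n\leq1$. Both terms then equal $(\lambda_n\Theta_n)^{s/(d/2+\nu_n)}\lesssim(\lambda_n\log n)^{s/(d/2+\nu_n)}$. Since $2s/d-s/(d/2+\nu_n)\leq4sc/\log n$ and $\log(1/(\lambda_n\log n))\leq\log n$, replacing the exponent by $2s/d$ costs at most $e^{4sc}$, uniformly in $d$. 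The finitely many $n$ with $\lambda_n\Theta_n>1$ are covered by the trivial bound.

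Your handling of $s\geq d/2$ is a valid alternative to the paper's argument, once the $\Theta_n$ factor is included. You use the direct witness $g=m$, or the cutoff $R=n$ in the window $d/2\leq s<d/2+\nu_n$. The paper instead notes that only finitely many $d<2s$ occur, and that $\nu_n\leq\min_{d<2s}(s-d/2)$ for large $n$.
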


For fixed $d$, the exponent $2s/d$ increases continuously with smoothness until the rate reaches $\log n/n$ at $s=d/2$.
When $s>d/2$, this is only a logarithmic factor slower than the $n^{-1}$ rate in Corollary~\ref{corollary:matern-supercritical}.
Thus, a single design sequence adapts to every $s>0$ without prior knowledge of $s$, paying only a logarithmic factor for adaptivity.

%The meaning of the boundary can be seen from a narrowing bump.
%Let $\eta$ be a smooth function supported on the unit ball with $\eta(0)=1$.
%For an interior point $x_0\in(0,1)^d$ and sufficiently small $h>0$, define $f_h(x)=\eta\{(x-x_0)/h\}$.
%For fixed $(d,s)$, its Sobolev norm satisfies
%\begin{equation}\label{equation:sobolev-spike-scaling}
%\norm{f_h}_{H^s([0,1]^d)}
%\asymp
%h^{d/2-s}.
%\end{equation}
%When $s<d/2$, a fixed-height spike can become arbitrarily narrow while its Sobolev norm converges to zero.
%At $s=d/2$, its norm remains of constant order.
%When $s>d/2$, its norm diverges.
%Subcritical Sobolev balls therefore permit highly localized fluctuations at low cost, while such fluctuations become increasingly costly above the boundary.

\begin{remark}[Comparison to Matched Pairs]
In the supplementary online appendix, we show that, for fixed $d\geq3$, the same design attains the minimax rate $n^{-2\beta/d}$, up to logarithmic factors, uniformly over $\beta$-H\"older outcome models for every $0<\beta\leq1$.
Unlike matching, however, its rate continues to improve under Sobolev smoothness above order one and becomes nearly parametric once $s\geq d/2$.
\end{remark}

Figure~\ref{figure:smoothness-simulation} illustrates the finite-sample efficiency gains from smoothness for kernelized GSW vs.\ matched pairs.
The relative efficiency improvement grows as the outcome model becomes smoother.
The figure also shows that a design using the fixed parameter $\nu_0=1$ captures much of the gain achieved by Oracle Mat\'ern GSW, which uses the true smoothness parameter, and the advantage increases with $n$.

For comparison, the figure also includes a nonparametric rerandomization design. 
In particular, we implement a kernelized version of the linear best-of-$m$ rerandomization in \cite{wang2025bestchoice}.
It selects, out of $m=10{,}000$ independent allocations $Z$, the one minimizing the worst-case imbalance $\mc I_W(Z)$ in Equation~\eqref{equation:rkhs-uniform-imbalance}, for the Mat\'ern kernel $W$ with the oracle value of $\nu$.
In contrast to its strong performance for the linear outcome model in Figure~\ref{figure:matching-reversal-simulation}, here rerandomization offers essentially no improvement over matched pairs, reflecting the difficulty of balancing a nonparametric function space through random search.

\begin{figure}[t]
\centering
\includegraphics[width=\textwidth]{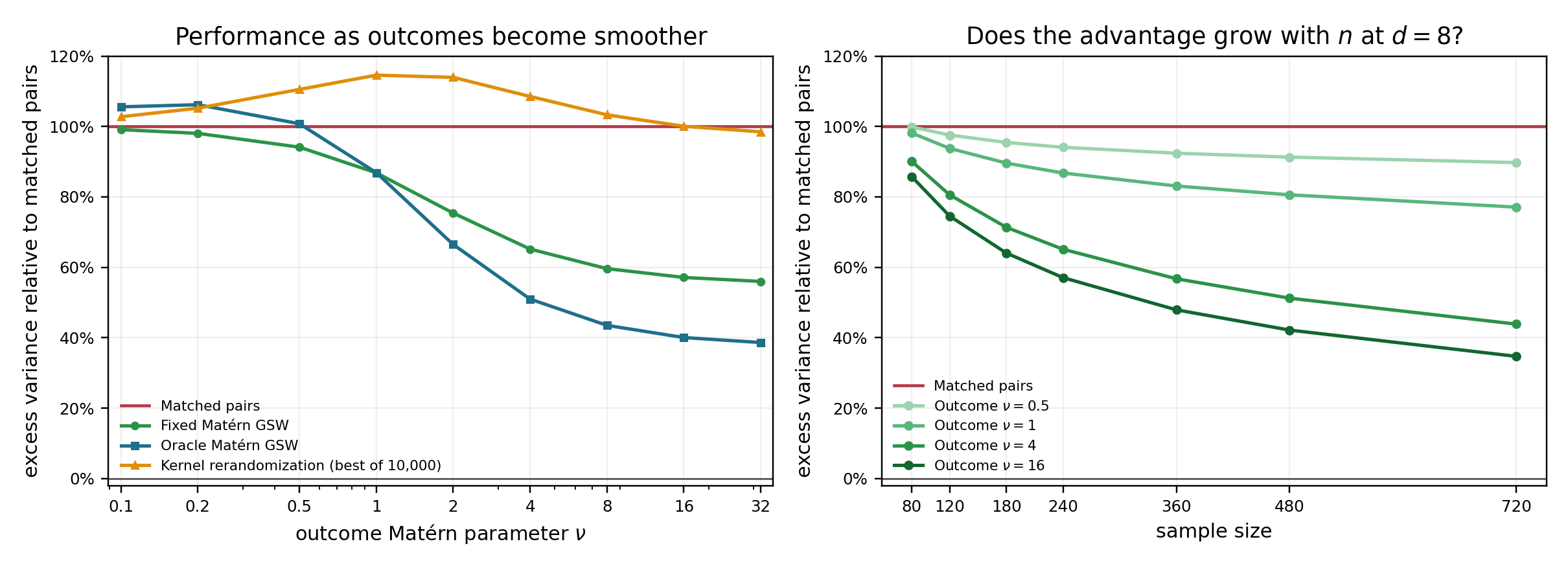}
\caption{Prior-averaged excess variance for outcome models $m(\psi)$ drawn from a Mat\'ern Gaussian process with parameter $\nu$. Covariates are uniform on $[0,1]^d$. The left panel fixes $n=240$ and $d=8$. Fixed Mat\'ern GSW uses $\nu_0=1$ independently of the true $\nu$, while Oracle Mat\'ern GSW uses $\nu_0=\nu$. Kernel rerandomization selects, among 10,000 draws, the allocation minimizing $Z'W_nZ$, where $W_n$ is the Mat\'ern kernel matrix constructed using the true $\nu$. The right panel fixes $d=8$ and varies $n$ for $\nu\in\{0.5,1,4,16\}$. The secondary online appendix gives the exact model and simulation protocol.}
\label{figure:smoothness-simulation}
\end{figure}

\subsection{Impossibility Result}\label{subsection:design-agnostic-lower-bound}

The adaptive rate in Theorem~\ref{theorem:sobolev-gsw} still has the basic form $n^{-2s/d}$, which becomes vacuous when $s$ is fixed and $d=d_n\geq a\log n$ for some $a>0$.
Similar to the H\"older lower bound in Theorem~\ref{theorem:design-agnostic-lower-bound}, the next theorem shows that this is not a limitation of kernelized GSW: every admissible design faces the same dimensionality barrier.

% PROOF-ID: sobolev-design-lower
\begin{thm}[Design-Agnostic Lower Bound]\label{theorem:sobolev-lower}
Fix $s>0$.
There is a constant $c_0>0$ depending only on $s$ such that, for every $d\geq1$ and every $n\geq2$,
\begin{equation}\label{equation:sobolev-lower}
\inf_{\sigma\in\mc D_n} \sup_{P\in\mc P_d^s}\mc V_n(\sigma,P) \geq c_0n^{-2s/d}.
\end{equation}
\end{thm}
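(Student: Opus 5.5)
The plan is a Bayesian (Assouad-type) lower bound that holds pointwise in the allocation $Z$. This follows the remark after Corollary~\ref{corollary:minimax-matching}, now carried out in the dimension-normalized Sobolev scale.

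\emph{Hard subfamily.} Fix $\psi\sim\Unif[0,1]^d$ and $Y(1)=Y(0)=m(\psi)$, so that Assumption~\ref{assumption:sampling} holds and the outcome model is $m$. Choose $h$ with $h^{-d}\asymp n$ (say $h=\lceil n^{1/d}\rceil^{-1}$) and partition $[0,1]^d$ into $N=h^{-d}$ congruent cells $Q_k$ with centers $x_k$. Fix a smooth bump $\phi$ supported in the unit cube $[-1/2,1/2]^d$, and set
\[
m_\epsilon(\psi)=a\sum_{k=1}^N\epsilon_k\,\phi\bigl((\psi-x_k)/h\bigr),\qquad \epsilon_k\simiid\Unif\{\pm1\}.
\]
The supremum over $\mc P_d^s$ dominates the average over $\epsilon$ provided every $m_\epsilon$ lies in the unit ball. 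Since the bumps have disjoint supports, I will bound $\norm{m_\epsilon}^2_{H^s_{\mathrm{av}}}$ by $N$ times the norm of a single rescaled bump, uniformly in $\epsilon$. This can be done via the derivative form at integer $s$ and interpolation otherwise, or directly through the Fourier side after localizing. The target is $\norm{m_\epsilon}^2_{H_{\mathrm{av}}^s}\lesssim a^2h^{-2s}\norm{\phi}^2_{H^s_{\mathrm{av}}}$. The factor $h^{d}$ from the $L^2$ scaling cancels $N=h^{-d}$, and rescaling frequencies by $1/h$ turns $(1+\norm{\omega}^2/d)^s$ into at most $h^{-2s}(1+\norm{\omega}^2/d)^s$ for $h\le1$. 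Taking $a^2\asymp h^{2s}\asymp n^{-2s/d}$ then places the whole family in $\mc P_d^s$.

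\emph{Pointwise-in-$Z$ bound.} Using Proposition~\ref{proposition:variance-decomposition} and averaging over $\epsilon$ first (legitimate because $\sigma$ depends only on $\psi_{1:n}$ and exogenous randomness, not on $\epsilon$), the cross terms between cells vanish. For every $Z\in\{\pm1\}^n$,
\[
E_\epsilon\Bigl(\tfrac1n\textstyle\sum_iZ_im_\epsilon(\psi_i)\Bigr)^2=\tfrac{a^2}{n^2}\sum_k\Bigl(\sum_{i:\psi_i\in Q_k}Z_i\phi\bigl((\psi_i-x_k)/h\bigr)\Bigr)^2\ \ge\ \tfrac{a^2}{n^2}\sum_{k:\,|Q_k\cap\psi_{1:n}|=1}\phi\bigl((\psi_{i(k)}-x_k)/h\bigr)^2 .
\]
This bound is free of $Z$, so it survives averaging over any $\sigma\in\mc D_n$. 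Taking expectation over $\psi_{1:n}$, each cell contains exactly one sample with probability $n h^d(1-h^d)^{n-1}\ge c$, since $nh^d\asymp1$. Conditional on that event the sample is uniform in the cell, which contributes $\int_{[-1/2,1/2]^d}\phi^2$. Hence $\mc V_n\ge 4n\cdot a^2n^{-2}\cdot cN\int\phi^2\gtrsim a^2\norm{\phi}_{L^2}^2\asymp n^{-2s/d}\norm{\phi}^2_{L^2}/\norm{\phi}^2_{H^s_{\mathrm{av}}}$.

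\emph{Main obstacle: uniformity in $d$.} The constant $c_0$ may depend only on $s$, so I need bumps $\phi=\phi^{(d)}$ on the unit cube with $\norm{\phi}^2_{L^2}/\norm{\phi}^2_{H^s_{\mathrm{av}}}$ bounded below independently of $d$. The cell-occupancy probability $(1-1/n)^{n-1}\ge e^{-1}$ is already dimension-free. My first attempt is a tensor product $\phi(x)=\prod_j\phi_1(x_j)$ with $\phi_1$ a fixed smooth one-dimensional bump, normalized so that $\norm{\phi_1}_{L^2}=1$ and hence $\norm{\phi}_{L^2}=1$. Then $|\widehat\phi|^2$ is a product probability density, and $\norm{\omega}^2/d$ is an average of $d$ i.i.d.\ copies of $\omega_1^2$ under $|\widehat\phi_1|^2$. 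When $s\le1$, Jensen gives $\norm{\phi}^2_{H^s_{\mathrm{av}}}\le(1+E\omega_1^2)^s$, which is dimension-free. When $s>1$, I will bound $E(1+\bar X_d)^s$ by a constant depending on $s$ alone, using moment bounds for averages of i.i.d.\ variables with all moments finite (for example $E\bar X_d^s\le E X_1^s$ by convexity). This makes the $L^2$ and Sobolev norms of the bump comparable with constants depending only on $s$, which is exactly what the normalization $\norm{\omega}^2/d$ is designed to allow. The one remaining point to verify is that the disjoint-support additivity of the $H^s_{\mathrm{av}}$ norm holds up to $s$-only constants for non-integer $s$. If that is delicate, I will instead choose $\phi_1$ with $\widehat\phi_1$ supported in a compact set (Paley--Wiener) and replace disjoint supports by a periodic, orthogonal-in-frequency construction.
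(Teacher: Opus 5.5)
Your argument is correct, but it takes a different route from the paper. You carry over the Assouad-style construction the paper uses for the H\"older bound (Theorem~\ref{theorem:design-agnostic-lower-bound}): $N=h^{-d}$ disjoint bumps of height $a\asymp h^s$, a uniform sign prior to remove cross-cell terms, and a $Z$-free lower bound from singly occupied cells. The paper needs none of this. It uses a \emph{single} tensor-product bump $g_h$ in one cell, scaled to height $A=C_s^{-1/2}h^{s-d/2}$ so that $\norm{Ag_h}_{H^s_{\mathrm{av}}}\leq1$. The norm bound uses the same Fourier scaling and Jensen moment argument you use for uniformity in $d$. On the event that exactly one unit falls in that cell, the squared imbalance equals $A^2g_h(\psi_{i(C)})^2$ whatever $Z$ is. The $n$ ways this event can occur offset the $1/n$ in $\mc V_n=\frac4nE[(\sum_iZ_im(\psi_i))^2]$, giving $\mc V_n\geq4A^2h^d/e=4h^{2s}/(eC_s)$. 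This works because the integrated Sobolev constraint, unlike the H\"older one, permits a tall narrow spike.

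The single spike buys brevity and avoids any additivity of the $H^s_{\mathrm{av}}$ norm over disjoint supports. It also gives the stronger ordering $\sup_P\inf_\sigma$, since one fixed law defeats every design. Your family only yields $\inf_\sigma\sup_P$: a design tailored to a known sign pattern $\epsilon$ can cancel imbalance across cells, e.g.\ by pairing units with similar values of $m_\epsilon$. What your construction buys is hard instances of much smaller amplitude. For fixed $d$, $\norm{m_\epsilon}_\infty\asymp n^{-s/d}$, whereas the paper's spike has $\norm{m}_\infty\asymp n^{1/2-s/d}$, which diverges when $s<d/2$. Your argument would therefore survive an added sup-norm restriction on the outcome model.

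The point you leave open closes, with constant one, by the interpolation you suggest. At integer $k$, $(1+\norm{\omega}_2^2/d)^k$ expands into nonnegative multiples of $\omega^{2\alpha}$. Hence $\norm{F}_{H^k_{\mathrm{av}}}^2$ is a nonnegative combination of the $\norm{D^\alpha F}_{L^2}^2$ and is exactly additive over smooth bumps with disjoint supports. For $s=(1-\theta)k+\theta(k+1)$, H\"older's inequality in $\omega$ gives $\norm{F}_{H^s_{\mathrm{av}}}^2\leq\norm{F}_{H^k_{\mathrm{av}}}^{2(1-\theta)}\norm{F}_{H^{k+1}_{\mathrm{av}}}^{2\theta}$. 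Your tensor-product bound then controls the two endpoint norms of a rescaled bump by $s$-only constants times $h^{d-2k}$ and $h^{d-2k-2}$. The Paley--Wiener fallback is therefore unnecessary, and it would in fact break the cell-isolation step.

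One correction: $nh^d\asymp1$ is false uniformly in $d$ for $h=\lceil n^{1/d}\rceil^{-1}$. Here $nh^d$ can be as small as about $2^{-d}$, with $N$ correspondingly as large as $2^dn$. Only the product enters, $N\cdot nh^d(1-h^d)^{n-1}=n(1-h^d)^{n-1}\geq n/e$ using $h^{-d}\geq n$, and it is dimension-free. Together with $h\geq\tfrac12n^{-1/d}$, your final bound stands.
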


The central implication is the dimensionality barrier.
For fixed $s$ and any sequence $d_n\geq a\log n$, Theorem~\ref{theorem:sobolev-lower} gives design-agnostic asymptotic inefficiency 
\begin{equation}\label{equation:sobolev-dimensionality-barrier}
\liminf_{n\to\infty}\inf_{\sigma\in\mc D_n}\sup_{P\in\mc P_{d_n}^s}\mc V_n(\sigma,P)>0.
\end{equation}
Hence, no admissible design can guarantee convergence to the full efficiency bound beyond the low-dimensional regime $d_n=o(\log n)$.
Motivated by this impossibility result, the next section relaxes the pursuit of full efficiency by balancing lower-complexity nonparametric working models.
When such a working model closely approximates the true outcome model, its variance benchmark remains close to the Hahn bound but can be approached much more rapidly in finite samples.

\begin{remark}[Fixed-Dimensional Minimaxity]
For fixed $d < \infty$ and $s\leq d/2$, combining Theorems~\ref{theorem:sobolev-gsw} and~\ref{theorem:sobolev-lower} gives the following finite-sample minimax comparison.
There are constants $0<c_1<C_1<\infty$, depending on $(d,s,c,\varphi,\overline p)$, such that, for every $n\geq2$,
\begin{equation}\label{equation:sobolev-minimax}
c_1 n^{-2s/d} \leq \inf_{\sigma\in\mc D_n} \sup_{P\in\mc P_d^s}\mc V_n(\sigma,P) \leq C_1\left(\frac n{\log n}\right)^{-2s/d}.
\end{equation}
In low-dimensional asymptotics with $d<\infty$ fixed while $n\to\infty$, kernelized GSW is thus minimax rate optimal for every $s\leq d/2$, up to logarithmic factors.
\end{remark}

% !TeX root = ../main.tex
\section{Efficient Designs Beyond Low Dimensions}\label{section:restricted-efficiency}
In this section, we construct kernelized GSW designs that balance low-complexity nonparametric working models chosen to approximate the outcome model $m(\psi)$, rather than pursuing uniform convergence to the full efficiency bound.
By imposing additive or low-order interaction structure, these designs can accommodate many more covariates than matching or full-dimensional kernelized GSW, while still providing substantial nonparametric efficiency gains.

\subsection{Structured Nonparametric Balance}\label{subsection:structured-balance}

Kernelized GSW controls the worst-case imbalance $n^{-1}\sum_{i=1}^nZ_ig(\psi_i)$ over the unit ball of the RKHS supplied to the algorithm.
Section~\ref{subsection:unit-gsw} used a full-dimensional Mat\'ern space.
Here we instead use structured sum RKHSs, beginning with two examples.

\begin{ex}[Nonparametric Main Effects]\label{example:structured-priority}
We can control each covariate nonparametrically by targeting additive functions of the form $g(\psi)=a+\sum_{j=1}^dg_j(\psi_j)$.
\end{ex}

\begin{ex}[Bivariate Nonparametric Effects]\label{example:structured-bivariate}
Alternatively, we can control every bivariate interaction nonparametrically by balancing functions
\begin{equation}\label{equation:bivariate-nonparametric-class}
g(\psi)=a+\sum_{j=1}^dg_j(\psi_j)+\sum_{j<\ell}g_{j\ell}(\psi_j,\psi_\ell).
\end{equation}
\end{ex}

Both classes can be implemented directly with kernelized GSW through kernel addition.
Let $W_1,\ldots,W_B$ be continuous positive semidefinite kernels satisfying $W_b(\psi,\psi)\leq1$ and set $K(\psi,\psi')=1+B^{-1}\sum_{b=1}^BW_b(\psi,\psi')$.
Then the RKHS $\mc H_K$ consists of functions of the form $g=a+\sum_{b=1}^Bf_b$ with $f_b\in\mc H_{W_b}$.
For instance, one kernel corresponding to Example~\ref{example:structured-priority} is $K_1(\psi,\psi')=1+d^{-1}\sum_{j=1}^dW_1^{\mathrm{Mat},\nu}(\psi_j,\psi_j')$.

% PROOF-SOURCE: impossibility/dva/section5_restructure/weighted_structured_kernel_master_theorem.md, Propositions 1--2.
% PROOF-SOURCE: impossibility/dva/section5_restructure/raw_bivariate_adaptive_master_theorem.md.
% PROOF-AUDIT: impossibility/dva/section5_restructure/raw_bivariate_matern_adversarial_audit.md (PASS for the canonical Lebesgue-ANOVA class).
% PROOF-AUDIT: impossibility/dva/section5_restructure/raw_bivariate_matern_centering_in_proof.md (raw uncentered kernel PASS; arbitrary raw-representation radius FAIL).

\paragraph{Variance Gap.}
The sum RKHS provides a general formula for low-complexity nonparametric approximations to $m$.
Let $\mc G$ denote the $L^2(P_\psi)$ closure of $\mc H_K$ and let $m_{\mc G}$ be the $L^2(P_\psi)$ projection of $m$ onto $\mc G$. We view this as a working model for $m$, with variance gap 
\begin{equation}\label{equation:working-model-approximation-gap}
\Delta_{\mc G}(P)\equiv4E\bigl((m(\psi)-m_{\mc G}(\psi))^2\bigr).
\end{equation}
The corresponding variance target is $V^*(P)+\Delta_{\mc G}(P)$.
When $\mc G=L^2(P_\psi)$, we have $m_{\mc G}=m$ and $\Delta_{\mc G}(P)=0$, recovering the full efficiency bound.
More generally, when $\mc G$ closely approximates $m$, this target remains close to $V^*(P)$, but can often be approached much more rapidly in finite samples.

\emph{Regularity Conditions.}
As in Section~\ref{section:structure}, obtaining uniform rates requires regularity of the structured approximation $m_{\mc G}$.
The component representations in Examples~\ref{example:structured-priority} and~\ref{example:structured-bivariate} are not unique.
For example, main effects can be absorbed into the bivariate components.
Because of this, we impose regularity conditions on the canonical functional ANOVA decomposition \citep{sobol1993}.
In particular, expand 
\begin{equation}\label{equation:canonical-low-order-decomposition}
m_{\mc G}(\psi)=a+\sum_{j=1}^dm_j(\psi_j)+\sum_{j<\ell}m_{j\ell}(\psi_j,\psi_\ell).
\end{equation}
We require $\int_0^1m_j(u)du=0$ as well as $\int_0^1m_{j\ell}(u,v)du=0$ for every $v$ and vice-versa.
These centering conditions make the corresponding component subspaces mutually orthogonal in $L^2([0,1]^d)$, so the decomposition is unique.
For smoothness $s>0$, we require the canonical components in Equation~\eqref{equation:canonical-low-order-decomposition} to satisfy the pooled Sobolev budget
\begin{equation}\label{equation:low-order-sobolev-budget}
\sum_{j=1}^d\norm{m_j}_{H_{\mathrm{av}}^s(\mr)}^2+\sum_{j<\ell}\norm{m_{j\ell}}_{H_{\mathrm{av}}^s(\mr^2)}^2\leq1.
\end{equation}
The pooled bound prevents the aggregate magnitude and roughness of the structured projection from growing mechanically with the number of recorded covariates.
For the main-effects working model, the bivariate terms are omitted.

We state the main result for the richer bivariate design in Example~\ref{example:structured-bivariate}, leaving the main-effects case to Theorem~\ref{theorem:proof-general-low-order-gsw} in the appendix.
For this result, take $\mc G_2$ to be the $L^2(P_\psi)$ closure of the class in Equation~\eqref{equation:bivariate-nonparametric-class} and let $\mc P_{d,2}^{s}$ contain the laws satisfying Assumption~\ref{assumption:sampling}, $E[m(\psi)^2]\leq1$, and Equation~\eqref{equation:low-order-sobolev-budget}.
Fix $c>0$ and set $\nu_n=c/\log n$, so each component RKHS lies $c/\log n$ above its critical Sobolev order.
Writing $\psi_{j\ell}=(\psi_j,\psi_\ell)$ and similarly for $\psi'$, define
\begin{equation*}
K_n(\psi,\psi')=1+\binom d2^{-1}\sum_{j<\ell}W_2^{\mathrm{Mat},\nu_n}(\psi_{j\ell},\psi_{j\ell}').
\end{equation*}
For every $n$, the $L^2(P_\psi)$ closure of $\mc H_{K_n}$ is $\mc G_2$.

% PROOF-SOURCE: impossibility/dva/section5_restructure/raw_bivariate_adaptive_master_theorem.md.
% PROOF-SOURCE: impossibility/dva/section4_matern_rebuild/v3_followup/adaptive_additive_sobolev_restricted_efficiency.md.
% PROOF-AUDIT: impossibility/dva/section5_restructure/raw_bivariate_matern_adversarial_audit.md (PASS for k=2 canonical class).
% PROOF-AUDIT: impossibility/dva/section4_matern_rebuild/v3_followup/adaptive_additive_interaction_sobolev_adversarial_audit.md (PASS for k=1 additive class).
\Needspace{9\baselineskip}
% PROOF-ID: adaptive-low-order-kernel-gsw
\begin{thm}[Bivariate Effects]\label{theorem:low-order-gsw}
Fix $n\geq2$, $d\geq2$, and smoothness $s>0$.
Let $\sigma_n=\mathrm{GSW}(K_n,\varphi_n)$, where $1-\varphi_n=1/2\wedge(n^{-1}d^2\log n)^{1/2}$.
Then, for a constant $C>0$ independent of $n$ and $d$, uniformly over $P\in\mc P_{d,2}^{s}$,
\begin{equation}\label{equation:low-order-gsw-approximation}
\mc V_n(\sigma_n,P)\leq\Delta_{\mc G_2}(P)+C\left(\frac{d^2\log n}{n}\right)^{(s\wedge1)/2}.
\end{equation}
\end{thm}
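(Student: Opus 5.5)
Start from the oracle inequality of Proposition~\ref{proposition:kernel-oracle} with $K=K_n$ and $\varphi=\varphi_n$. Because of the $1/\varphi$ factor on the approximation term, we cannot plug in $g=m$ directly. Instead, first split $m=m_{\mc G_2}+r$, where $r\perp\mc G_2$ in $L^2(P_\psi)$, and then take $g\in\mc H_{K_n}$ close to $m_{\mc G_2}$. Then
\[
E[(m-g)^2]=E[r^2]+E[(m_{\mc G_2}-g)^2].
\]
So the first term in the oracle bound is $(4/\varphi_n)E[r^2]+(4/\varphi_n)E[(m_{\mc G_2}-g)^2]$, where $4E[r^2]=\Delta_{\mc G_2}(P)$. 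To get $\Delta_{\mc G_2}$ with coefficient exactly one, the plan is to rerun the oracle argument directly from Proposition~\ref{proposition:variance-decomposition}. Write $m=g+(m-g)$ and use the GSW covariance bound $\mathrm{Cov}(Z|\psi_{1:n})\preceq\Gamma_n^{-1}$ from \citet{harshaw2024gsw}. Applied to the orthogonal residual $r$, this should yield a contribution $4E[r^2]$ plus cross terms; these are controlled since $\Gamma_n^{-1}\preceq\varphi_n^{-1}I$ and $1/\varphi_n-1\lesssim 1-\varphi_n$. The resulting excess $(1/\varphi_n-1)\Delta_{\mc G_2}\lesssim(1-\varphi_n)E[m^2]\leq(1-\varphi_n)=(d^2\log n/n)^{1/2}\wedge\tfrac12$ is no larger than the target rate since $(s\wedge1)/2\leq1/2$. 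The budget $E[m^2]\le 1$ in $\mc P_{d,2}^s$ is used exactly here. It is therefore enough to bound
\[
\inf_{g\in\mc H_{K_n}}\Big\{C\,E[(m_{\mc G_2}-g)^2]+\frac{8}{(1-\varphi_n)n}\norm{g}_{\mc H_{K_n}}^2\Big\}.
\]

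\emph{Norm of a sum.} By the sums-of-kernels theorem with weights $\binom d2^{-1}$, a function $g=a+\sum_{j<\ell}g_{j\ell}$ has $\norm{g}_{\mc H_{K_n}}^2\leq a^2+\binom d2\sum_{j<\ell}\norm{g_{j\ell}}_{\mc H_2^{\mathrm{Mat},\nu_n}}^2$. Main effects are absorbed into bivariate blocks by assigning each $m_j$ to a single pair $(j,\ell)$; the factor $d-1$ is harmless. Also, $a^2\le E[m^2]\le 1$. We approximate each canonical component $m_{j\ell}\in H_{\mathrm{av}}^s(\mr^2)$ by a Fourier-truncated (low-pass) version $g_{j\ell}$ at frequency radius $R$. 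Standard bounds give
\[
\norm{m_{j\ell}-g_{j\ell}}_{L^2}^2\lesssim R^{-2s}\norm{m_{j\ell}}_{H^s}^2,
\qquad
\norm{g_{j\ell}}_{H^{1+\nu_n}}^2\lesssim R^{2(1+\nu_n-s)_+}\norm{m_{j\ell}}_{H^s}^2 .
\]
Since $R^{2\nu_n}=e^{2c\log R/\log n}=O(1)$ for $R\leq n$, the bound $\mc H_2^{\mathrm{Mat},\nu_n}\asymp H^{1+\nu_n}$ from Proposition~\ref{proposition:matern-sobolev-equivalence}, uniformly as $\nu\to0$ (the appendix version), applies with constants free of $n$. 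The $L^2(P_\psi)$ error of the sum is at most $\overline p$ times the Lebesgue error. Using Cauchy--Schwarz over the $\binom d2$ blocks is too lossy, so we use orthogonality of the canonical ANOVA components. Low-pass truncation in each block need not preserve centering, so we first re-center, or alternatively bound the cross terms using $\overline p$ and the pooled budget~\eqref{equation:low-order-sobolev-budget}. Either way the total error is $\lesssim R^{-2s}$, while the total norm is $\lesssim 1+d^2R^{2(1-s)_+}$.

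\emph{Balancing.} The resulting bound is $R^{-2s}+d^2R^{2(1-s)_+}/((1-\varphi_n)n)$, with $(1-\varphi_n)n\asymp(nd^2\log n)^{1/2}$, so the second term becomes $d^2R^{2(1-s)_+}(d^2\log n/n)^{1/2}/(d^2\log n)$. For $s\geq1$, take $R$ large (polynomial in $n$); the bound is then $\lesssim (d^2\log n/n)^{1/2}$. For $s<1$, choose $R^{2}\asymp(n/d^2\log n)^{1/2}$; both terms are then of order $(d^2\log n/n)^{s/2}$, giving the exponent $(s\wedge1)/2$. When $1-\varphi_n=1/2$, the target rate is of constant order and the trivial bound $\mc V_n\lesssim E[m^2]\le1$ suffices.

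\emph{Main obstacle.} I expect the hardest step to be the first: extracting $\Delta_{\mc G_2}$ with coefficient one, rather than $1/\varphi_n$, from the GSW covariance bound. The second difficulty is controlling the aggregate approximation error across $\binom d2$ blocks under a non-uniform density without losing powers of $d$. For the first, I would try working conditionally on $\psi_{1:n}$ and showing that the empirical projection error onto the span of $\mc H_{K_n}$ concentrates around $E[r^2]$. The cross term $E[n^{-1}\sum_i r(\psi_i)g(\psi_i)\cdot\ldots]$ vanishes in expectation by orthogonality.
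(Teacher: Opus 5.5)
Your overall route is the paper's. The Pythagorean split you write, fed into Proposition~\ref{proposition:kernel-oracle}, is exactly Proposition~\ref{proposition:proof-restricted-kernel-comparison}. The paper gets coefficient one on $\Delta_{\mc G_2}(P)$ just as your first paragraph does: $\Delta_{\mc G_2}(P)(1-\varphi_n)/\varphi_n\leq 8(1-\varphi_n)$, since $\Delta_{\mc G_2}(P)\leq4E[m(\psi)^2]\leq4$ and $\varphi_n\geq1/2$. It then builds componentwise Fourier-truncated Mat\'ern witnesses, re-centers them, lifts main effects into assigned bivariate blocks, assembles with the sum-kernel norm and ANOVA orthogonality, and takes penalty $(n(1-\varphi_n))^{-1}$. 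This matches your choice $R^2\asymp(n/(d^2\log n))^{1/2}$.

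So the step you call the main obstacle is already finished, and no rerun or concentration argument is needed. You should drop that digression, because its cross-term claim is false. Under GSW, $E[Z_iZ_j\mid\psi_{1:n}]\neq0$ for $i\neq j$, so $E[(\sum_iZ_ir(\psi_i))(\sum_iZ_ig(\psi_i))]$ does not vanish merely because $r\perp g$ in $L^2(P_\psi)$.

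There is a real, though not fatal, error in the approximation step. The Mat\'ern--Sobolev equivalence does not hold with $n$-free constants. By Lemma~\ref{lemma:proof-matern-native-norm},
$\norm{v}_{\mc H_{W_2^{\mathrm{Mat},\nu_n}}}^2=\Lambda_{1+\nu_n,2}\int_{\mr^2}|\widehat v(\omega)|^2(1+\norm{\omega}_2^2)^{1+\nu_n}d\omega$, with $\Lambda_{1+\nu_n,2}=(4\pi)^{-1}\Gamma(\nu_n)/\Gamma(1+\nu_n)=\log n/(4\pi c)$. The appendix's uniform statement is $\Theta_{d/2+\nu,d}\leq C/\nu$, which degrades as $\nu\to0$. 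Your observation $R^{2\nu_n}=O(1)$ controls only the spectral weight, not this prefactor.

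Your total norm is therefore $\lesssim1+d^2\log n\,R^{2(1-s)_+}$, and this factor is exactly where the $\log n$ in the rate comes from. The fix is harmless. The penalty term becomes $R^{2(1-s)_+}(d^2\log n/n)^{1/2}$, and your same $R$ (or any $R\leq n$ when $s\geq1$) still gives $(d^2\log n/n)^{(s\wedge1)/2}$.

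Two steps also need explicit constructions rather than assertions.

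\emph{Lifting main effects.} A main effect $m_j(\psi_j)$ viewed on $\mr^2$ is constant in the other coordinate, so it is not in $L^2(\mr^2)$, let alone $H^{1+\nu_n}(\mr^2)$. The paper instead uses $m_j(x)\chi(y)$, where $\chi$ is smooth, compactly supported, and equal to one on $[0,1]$. Its $H_{\mathrm{av}}^s(\mr^2)$ norm is at most a constant times $\norm{m_j}_{H_{\mathrm{av}}^s(\mr)}$.

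\emph{Re-centering.} Subtracting marginal means has the same defect. The paper uses $(\widetilde Qv)(x,y)=v(x,y)-A_1v(x)\chi(y)-\chi(x)A_2v(y)+(A_0v)\chi(x)\chi(y)$, where $A_1,A_2,A_0$ average over $[0,1]$ in one or both coordinates. This agrees with the $L^2([0,1]^2)$ ANOVA projection on the square, so the error contracts. It is also bounded on $H^t(\mr^2)$ uniformly for $t\in[1,2]$, so the penalty grows only by a constant. Lifted main effects use the one-coordinate analogue.

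Your alternative of keeping uncentered witnesses and bounding cross terms via $\overline p$ and the pooled budget does not work. The block errors are then not orthogonal, and their constant and main-effect parts can add coherently across $\binom d2$ blocks, so Cauchy--Schwarz costs powers of $d$.

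Finally, $a$ is a Lebesgue mean, so $a^2\leq\underline p^{-1}E[m(\psi)^2]$ rather than $E[m(\psi)^2]$; this changes only constants.
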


Theorem~\ref{theorem:proof-general-low-order-gsw} in the appendix extends this result to allow a fixed block of priority covariates, such as baseline outcomes, to enter jointly and have a different smoothness order.
The extension adds the corresponding fixed-dimensional approximation rate while retaining the bivariate rate above for the remaining covariates.

When the working model is accurate, $\Delta_{\mc G_2}(P)$ is small, so the design approaches a variance target close to the Hahn bound $V^*(P)$ much more rapidly than the $n^{-2s/d}$ rate obtained by the full-dimensional Mat\'ern designs in Section~\ref{section:structure}, up to logarithmic factors.
The lower-complexity specification also accommodates many more covariates than those designs can.
Whereas all of the preceding designs, including matching and full-dimensional GSW, require the very low-dimensional regime $d_n=o(\log n)$, Equation~\eqref{equation:low-order-gsw-approximation} allows the number of covariates to grow nearly as fast as $\sqrt n$:
\begin{equation*}
\mc V_n(\sigma_n,P)\leq\Delta_{\mc G_2}(P)+o(1)\quad\text{if}\quad d_n=o(\sqrt{n/\log n}).
\end{equation*}
Note also that, as in Section~\ref{section:structure}, the design does not require prior knowledge of smoothness since the near-critical kernel sequence adapts automatically.
The calibration above uses $\varphi_n\to1$, but the theorem does not suggest that this is necessary.
In simulations, small fixed values of $\varphi$ continue to perform well, and we conjecture that this technical requirement is an artifact of the current analysis.

\emph{Well Specification.} In Theorem~\ref{theorem:proof-general-low-order-gsw} in the appendix, we provide a sharper rate under correct specification of the working model $m_{\mc G}$.
We show that if $m=m_{\mc G_2}$, then any fixed $\varphi\in(0,1)$ gives $\sigma_n=\mathrm{GSW}(K_n,\varphi)$ satisfying $\mc V_n(\sigma_n,P)\lesssim(d^2\log n/n)^{s\wedge1}$.

\smallskip

\emph{Main effects.}
The additive nonparametric design in Example~\ref{example:structured-priority} permits still faster dimension growth.
Let $\mc G_1$ denote the $L^2(P_\psi)$ closure of the main-effects models in Example~\ref{example:structured-priority}.
Writing $\sigma_{n,1}$ for the corresponding main-effects GSW design, Theorem~\ref{theorem:proof-general-low-order-gsw} shows that $\mc V_n(\sigma_{n,1},P)\leq\Delta_{\mc G_1}(P)+o(1)$ whenever $d_n=o(n/\log n)$.
Thus, replacing bivariate interactions by additive main effects increases the allowable dimension from $d = o(\sqrt{n/\log n})$ to $d = o(n/\log n)$, at the cost of the potentially larger variance gap $\Delta_{\mc G_1}(P)$.

\subsection{Robustifying Structured Designs with Matching}\label{subsection:matching-structured-balance}

Section~\ref{subsection:structured-balance} showed that targeting a lower-complexity nonparametric working model can deliver strong variance reductions even when $d$ grows far beyond $o(\log n)$.
These gains come with a residual variance gap $\Delta_{\mc G}(P)$ from predictable outcome variation outside the working model.
We now robustify structured GSW by combining it with matching.
Structured GSW retains the fast rates above for balancing the modeled component $m_{\mc G}$ globally, while matching provides some protection against unmodeled components $m - m_{\mc G}$ using local comparisons.

\paragraph{Matched GSW.} Let $M=\{(i_p,j_p):p\in[n/2]\}$ be a matching as in Section~\ref{section:matching}.
Any assignment treating exactly one unit in each pair can be written as $Z_{i_p}=T_p$ and $Z_{j_p}=-T_p$ for a \emph{pair orientation} $T_p\in\{-1,1\}$.
In classic matched-pairs, the $T_p$ are iid random signs.
Here, we propose to use kernelized GSW to generate them.

The imbalance can be written $n^{-1}\sum_{i=1}^nZ_ig(\psi_i)=n^{-1}\sum_{p=1}^{n/2}T_p(g(\psi_{i_p})-g(\psi_{j_p}))$.
Using this formula and a reproducing-property calculation as in Section~\ref{subsection:unit-gsw}, one can show that in this case the squared worst-case imbalance over the RKHS unit ball is $n^{-2}T'G_n^MT$, where $G_n^M\in\mr^{(n/2)\times(n/2)}$ is the pair-difference matrix
\begin{equation}\label{equation:matched-pair-gram}
(G_n^M)_{pq}=K(\psi_{i_p},\psi_{i_q})-K(\psi_{i_p},\psi_{j_q})-K(\psi_{j_p},\psi_{i_q})+K(\psi_{j_p},\psi_{j_q}).
\end{equation}
As before, we seek to randomize in a way that makes $T'G_n^MT$ small.
This suggests normalizing $G_n^M$ and using it as the input to the Gram-Schmidt walk to generate the pair orientations $T$.
Let $\widehat\kappa_n=\max_p(G_n^M)_{pp}=\max_p\norm{K(\psi_{i_p},\cdot)-K(\psi_{j_p},\cdot)}_{\mc H_K}^2$ be the largest squared RKHS distance within a matched pair and apply GSW to
\begin{equation}\label{equation:matched-gsw-gram}
\Gamma_n^M=\varphi I_{n/2}+\frac{1-\varphi}{\widehat\kappa_n}G_n^M.
\end{equation}
If $\widehat\kappa_n=0$, set $\Gamma_n^M=\varphi I_{n/2}$.
The matrix $G_n^M$ is positive semidefinite, and the diagonal entries of $\Gamma_n^M$ are at most one, as required by the algorithm.
Write $\mathrm{GSW}_M(K,\varphi)$ for the resulting admissible assignment law, which treats exactly half the units.

Let $\kappa_n=E[\widehat\kappa_n]$ be the expected largest squared RKHS distance within a matched pair.
Impose the Section~\ref{subsection:unit-gsw} normalization $K=1+W$, with $\sup_\psi W(\psi,\psi)\leq1$, so that $0\leq\kappa_n\leq4$.
For a square-integrable function $r(\psi)$, define its expected pair energy by
\begin{equation}\label{equation:matched-pair-energy}
Q_n^M(r)=n^{-1}E\bigg[\sum_{(i,j)\in M}\bigl(r(\psi_i)-r(\psi_j)\bigr)^2\bigg].
\end{equation}

The following paired analogue of Proposition~\ref{proposition:kernel-oracle} formalizes the local-global division of labor: matching controls within-pair variation locally, while kernelized GSW controls structured imbalance globally.

% PROOF-SOURCE: impossibility/dva/section6_rebuild/paired_oracle_and_admissibility.md.
% PROOF-SOURCE: impossibility/dva/section6_rebuild/improvement_search/smooth_plus_rough_pair_rates.md, Sections 1--2.
% PROOF-ID: matched-kernel-gsw
\begin{thm}[Matched Oracle Inequality]\label{thm:matched-gsw}
Fix $\varphi\in(0,1)$ and let $\sigma=\mathrm{GSW}_M(K,\varphi)$.
\begin{equation}\label{equation:matched-gsw-oracle}
\mc V_n(\sigma,P)\leq\inf_{f\in\mc H_K}\left\{\frac4\varphi Q_n^M(m-f)+\frac{4\kappa_n}{n(1-\varphi)}\norm{f}_{\mc H_K}^2\right\}.
\end{equation}
\end{thm}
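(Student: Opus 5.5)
We start from Proposition~\ref{proposition:variance-decomposition}, which gives $\mc V_n(\sigma,P)=4n\,E[(n^{-1}\sum_iZ_im(\psi_i))^2]$. Under $\mathrm{GSW}_M(K,\varphi)$ we have $Z_{i_p}=T_p$ and $Z_{j_p}=-T_p$, so the imbalance equals $n^{-1}T'\delta$, where $\delta\in\mr^{n/2}$ has entries $\delta_p=m(\psi_{i_p})-m(\psi_{j_p})$. We then condition on $\psi_{1:n}$ and on the matching randomness $U_M$. Conditionally, $T$ is a GSW draw with input $\Gamma_n^M$. The problem therefore reduces to bounding $E[(T'\delta)^2\mid\psi_{1:n},U_M]$.

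For this we would apply the design-based GSW covariance bound of \citet{harshaw2024gsw} and \citet{bansal2019gram}: $\cov(T\mid\cdot)\preceq(\Gamma_n^M)^{-1}$. This is the same ingredient that underlies Proposition~\ref{proposition:kernel-oracle}, and we take it as established there. It gives
\[
E[(T'\delta)^2\mid\cdot]\le\delta'(\Gamma_n^M)^{-1}\delta .
\]
Write $\Gamma=\varphi I+(1-\varphi)\widehat\kappa_n^{-1}G$ with $G=G_n^M$. The standard ridge variational identity then gives
\[
\delta'\Gamma^{-1}\delta=\min_{\beta\in\mr^{n/2}}\Big\{\varphi^{-1}\|\delta-G^{1/2}\beta\|^2+\frac{\widehat\kappa_n}{1-\varphi}\|\beta\|^2\Big\}.
\]
This follows by writing $\Gamma=\varphi I+AA'$ with $A=\sqrt{(1-\varphi)/\widehat\kappa_n}\,G^{1/2}$ and minimizing in closed form. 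The degenerate case $\widehat\kappa_n=0$ is handled separately: then $G=0$, $f(\psi_{i_p})=f(\psi_{j_p})$ for every $f\in\mc H_K$, and $\Gamma=\varphi I$ gives $\varphi^{-1}\|\delta\|^2$ directly.

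The next step is to link the minimization over $\beta$ to functions $f\in\mc H_K$. Fix $f$. Its difference vector $\delta^f_p=f(\psi_{i_p})-f(\psi_{j_p})=\langle f,\phi_p\rangle_{\mc H_K}$, with $\phi_p=K(\psi_{i_p},\cdot)-K(\psi_{j_p},\cdot)$, lies in the range of $G$, since $G$ is the Gram matrix of the $\phi_p$. Projecting $f$ onto $\mathrm{span}\{\phi_p\}$, write $P_\phi f=\sum_p\alpha_p\phi_p$, so that $\delta^f=G\alpha$ and $\alpha'G\alpha\le\|f\|_{\mc H_K}^2$. Choosing $\beta=G^{1/2}\alpha$ gives $G^{1/2}\beta=\delta^f$ and $\|\beta\|^2=\alpha'G\alpha\le\|f\|^2_{\mc H_K}$. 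Hence, pointwise in $(\psi_{1:n},U_M)$,
\[
E[(T'\delta)^2\mid\cdot]\le\varphi^{-1}\sum_p\bigl((m-f)(\psi_{i_p})-(m-f)(\psi_{j_p})\bigr)^2+\frac{\widehat\kappa_n}{1-\varphi}\|f\|_{\mc H_K}^2 .
\]

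Finally, multiplying by $4n\cdot n^{-2}=4/n$ and taking the outer expectation, the first term becomes $\frac4\varphi Q_n^M(m-f)$ and the second becomes $\frac{4\kappa_n}{n(1-\varphi)}\|f\|^2_{\mc H_K}$, because $f$ is fixed and deterministic. Taking the infimum over $f$ then yields the statement. The main obstacle is making sure the GSW covariance bound $\cov(T)\preceq\Gamma^{-1}$ applies verbatim to the pair-level input matrix, including its normalization by the data-dependent $\widehat\kappa_n$, so that the diagonal entries are at most one. Since $\widehat\kappa_n$ is measurable with respect to the conditioning variables, the conditional argument should go through unchanged, as will the reduction $E[T\mid\cdot]=0$ that supports admissibility.
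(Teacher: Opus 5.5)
Your proposal is correct and follows essentially the paper's proof. You condition on $(\psi_{1:n},U_M)$ and invoke the GSW bounds $E[T\mid\cdot]=0$ and $\cov(T\mid\cdot)\preceq(\Gamma_n^M)^{-1}$, which apply since $(\Gamma_n^M)_{pp}\le 1$ and $\Gamma_n^M\succeq\varphi I$; you then write $\delta'(\Gamma_n^M)^{-1}\delta$ as a ridge minimum, evaluate it at a fixed $f\in\mc H_K$, treat $\widehat\kappa_n=0$ separately, and take expectations. The only difference is cosmetic: you use the finite-dimensional factor $G^{1/2}$ and project $f$ onto $\mathrm{span}\{\phi_p\}$ to produce $\beta$, whereas the paper applies Lemma~\ref{lemma:proof-gsw-ridge} directly to the pair-difference evaluation operator $F_n^M:\mc H_K\to\mr^{n/2}$ via $F_n^M(F_n^M)^*=G_n^M$, so no projection step is needed.
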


Relative to the unit-level oracle in Equation~\eqref{equation:kernel-oracle}, matching replaces the global approximation error $E[(m(\psi)-f(\psi))^2]$ by the within-pair energy $Q_n^M(m-f)$ and replaces the fixed RKHS penalty coefficient $8$ by $4\kappa_n$.
This can significantly attenuate both terms when the residual varies little within pairs and matched units are close in the kernel geometry.

\begin{ex}[Smooth Plus Rough Decomposition]
Suppose $m=a+g+r$ for some $g\in\mc H_K$ with $\norm{g}_{\mc H_K}\leq B$ and a residual $r$.
Taking $f=g$ in Theorem~\ref{thm:matched-gsw} gives
\begin{equation}\label{equation:matched-smooth-rough-specialization}
\mc V_n(\sigma,P)\leq\frac4\varphi Q_n^M(r)+\frac{4\kappa_nB^2}{n(1-\varphi)}.
\end{equation}
The first term uses matching to control the residual $r$, while the second asks GSW to balance the structured component $g$.
One can show $\kappa_n=o(1)$ under suitable regularity conditions, providing a further rate enhancement.
\end{ex}

We now instantiate this oracle inequality using the main-effects design in Example~\ref{example:structured-priority}.
Recall that $\mc G_1$ is the $L^2(P_\psi)$ closure of the class in Example~\ref{example:structured-priority}.
Let $\mc P_{d,1}^{s}$ be defined as $\mc P_{d,2}^{s}$ with $\mc G_2$ replaced by $\mc G_1$ and the bivariate components omitted from Equations~\eqref{equation:canonical-low-order-decomposition} and~\eqref{equation:low-order-sobolev-budget}.
Using the same $\nu_n=c/\log n$ as above, define $K_{n,1}(\psi,\psi')=1+d^{-1}\sum_{j=1}^dW_1^{\mathrm{Mat},\nu_n}(\psi_j,\psi_j')$.

% PROOF-SOURCE: impossibility/dva/section6_rebuild/paired_oracle_and_admissibility.md.
% PROOF-SOURCE: impossibility/dva/section4_matern_rebuild/v3_followup/adaptive_additive_sobolev_restricted_efficiency.md.
% PROOF-AUDIT: impossibility/dva/conference_draft_v3/edit_log/17-section5-second-review/matched_low_order_theorem_audit.md (PASS WITH REPAIRS for k=1 and k=2; the appendix proof implements the required penalty truncation).
\Needspace{8\baselineskip}
% PROOF-ID: matched-low-order-nonparametric-balance
\begin{thm}[Matched Main-Effects]\label{theorem:matched-low-order-gsw}
Fix $n\geq2$, $d$, smoothness $s>0$, and $\varphi\in(0,1)$.
Let $\sigma_n=\mathrm{GSW}_M(K_{n,1},\varphi)$.
For each $P\in\mc P_{d,1}^{s}$, define the residual $r=m-m_{\mc G_1}$.
Then, uniformly over $P\in\mc P_{d,1}^{s}$, for a constant independent of $n$ and $d$,
\begin{equation}\label{equation:matched-low-order-gsw}
\mc V_n(\sigma_n,P)\lesssim Q_n^M(r)+\left(\frac{d\log n}{n}\right)^{(2s)\wedge1}.
\end{equation}
\end{thm}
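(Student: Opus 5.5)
The plan is to start from the Matched Oracle Inequality (Theorem~\ref{thm:matched-gsw}) with $K=K_{n,1}$ and fixed $\varphi$. Write $m=m_{\mc G_1}+r$ and $m_{\mc G_1}=a+\sum_j m_j$ in its canonical decomposition. For any $f\in\mc H_{K_{n,1}}$, Cauchy--Schwarz in the pair differences gives
\[
Q_n^M(m-f)\leq 2Q_n^M(r)+2Q_n^M(m_{\mc G_1}-f).
\]
The constant $a$ cancels inside pair differences. We also have the crude bound $Q_n^M(h)\leq n^{-1}E\sum_{(i,j)\in M}2(h(\psi_i)^2+h(\psi_j)^2)=2E[h(\psi)^2]$, valid for any pairing since each unit appears in exactly one pair. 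Together with $\kappa_n\leq 4$, this gives
\[
\mc V_n\lesssim Q_n^M(r)+\inf_{f}\Bigl\{E\bigl[(m_{\mc G_1}-a-f)^2\bigr]+\tfrac1n\norm{f}_{\mc H_{K_{n,1}}}^2\Bigr\}.
\]
The residual term $Q_n^M(r)$ is simply kept, and the rest reduces to a deterministic approximation problem for the additive component.

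\emph{Approximation step.} I would choose $f=a+\sum_j f_j$ with $f_j\in\mc H_1^{\mathrm{Mat},\nu_n}$. By the sum-of-kernels norm for $K_{n,1}=1+d^{-1}\sum_jW_j$, we have $\norm{f}^2_{\mc H_K}\leq a^2+d\sum_j\norm{f_j}^2_{\mc H_1^{\mathrm{Mat},\nu_n}}$, and $a^2\leq E[m^2]\leq1$. Each $f_j$ is a one-dimensional Fourier truncation (low-pass filter) of $m_j$ at frequency cutoff $R$. Proposition~\ref{proposition:matern-sobolev-equivalence}, together with the uniform-in-$\nu$ norm equivalence from the online appendix, gives $\norm{f_j}^2_{\mc H^{\mathrm{Mat},\nu_n}_1}\lesssim\int_{|\omega|\le R}|\widehat m_j|^2(1+\omega^2)^{1/2+\nu_n}$. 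For $s\geq1/2$ this is at most $C\norm{m_j}^2_{H^s}$ without any truncation, since $\nu_n\to0$. For $s<1/2$ it is at most $R^{1-2s+2\nu_n}\norm{m_j}^2_{H^s}$. The $L^2$ error is at most $R^{-2s}\norm{m_j}^2_{H^s}$, and $\overline p$ converts Lebesgue $L^2$ into $L^2(P_\psi)$. Because the errors $m_j-f_j$ need not be orthogonal under $P_\psi$, I would bound $E[(\sum_j e_j)^2]$ by $\overline p$ times the Lebesgue norm. The functions $e_j$ are no longer exactly centered after truncation, so I would recenter them, which only moves their means into the constant $a$, and then use Lebesgue orthogonality of centered coordinate functions.

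Summing over $j$ with the pooled budget (note $H^s_{\mathrm{av}}(\mr)=H^s(\mr)$ when $d=1$) gives the bound
\[
R^{-2s}+\frac{d}{n}R^{1-2s+2\nu_n}.
\]
Optimizing at $R\asymp n/d$ yields $(d/n)^{2s}$ times $R^{2\nu_n}\leq n^{2c/\log n}=e^{2c}$. For $s\geq1/2$, taking $R=\infty$ gives $d/n$. The $\log n$ factor absorbs the near-critical constant: the Matérn norm-equivalence constant blows up like $1/\nu_n\asymp\log n$ as $\nu\to0$. This is exactly the source of the stated $(d\log n/n)^{(2s)\wedge1}$.

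\emph{Main obstacle.} The delicate step is the uniform control of the Matérn--Sobolev norm equivalence as $\nu_n\to0$. Specifically, I need the explicit $1/\nu$ growth of the Matérn spectral density normalization, so that the constant is $O(\log n)$ and does not depend on $d$. I also need to transfer components defined on $[0,1]$ to $\mr$ via the $H^s_{\mathrm{av}}(\mr)$ representatives already assumed in the budget. A secondary worry is the non-orthogonality of additive components under a non-product $P_\psi$. The plan handles this with the density bounds $\underline p,\overline p$, which keeps the constant free of $d$.
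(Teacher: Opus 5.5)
Your proposal is correct and follows the paper's route: the matched oracle inequality, the split $Q_n^M(m-f)\leq 2Q_n^M(r)+2Q_n^M(m_{\mc G_1}-f)$ with the structured pair energy bounded by $2\var(\cdot)$ (constants cancel), and a per-coordinate near-critical Mat\'ern Fourier truncation summed against the pooled budget and transferred to $P_\psi$ via $\overline p$. The only real difference is that the paper, through its low-order approximation lemma, centers each witness as $v_j-\mu_j\chi$, whereas you recenter the errors and push their means into the constant, which is free under $Q_n^M$ and makes the intercept bound unnecessary.

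Two small fixes are needed. First, the factor $\Lambda_{1/2+\nu_n,1}\asymp\log n$ belongs in the penalty term, so the cutoff should be $R\asymp n/(d\log n)$; with $R\asymp n/d$ you only get $(d/n)^{2s}\log n$, which is worse than the target when $s<1/2$. Second, the boundary case $s=1/2$, and indeed any $s<1/2+\nu_n$, must go through the truncated branch, since $H^{1/2}(\mr)\not\subseteq H^{1/2+\nu_n}(\mr)$.
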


\smallskip
Unlike Theorem~\ref{theorem:low-order-gsw}, this bound contains no fixed variance gap: the residual enters through the within-pair energy $Q_n^M(r)$.
If $d_n=o(n/\log n)$, the structured term vanishes and
\begin{equation*}
\mc V_n(\sigma_n,P)\lesssim Q_n^M(r)+o(1).
\end{equation*}
If $Q_n^M(r)\to0$ as well, then $\mc V_n(\sigma_n,P)\to0$, and the design approaches the full Hahn bound.
If $d\geq3$, the matching $M$ is 2-swap-stable for a cost satisfying Assumption~\ref{assumption:matching-cost}, and $r$ is $\beta$-H\"older with coefficient $L_r$ for some $0<\beta\leq1$, then also
\begin{equation*}
\mc V_n(\sigma_n,P)\lesssim L_r^2(d\Lambda^2)^{2\beta}n^{-2\beta/d}+\left(\frac{d\log n}{n}\right)^{(2s)\wedge1}.
\end{equation*}
Thus only the residual pays the rough-class matching rate from Section~\ref{section:matching}, while GSW attains the faster main-effects rate for the structured component.

Figure~\ref{figure:structured-balance-simulation} illustrates the finite-sample gains from targeting a lower-complexity nonparametric function class when many covariates are recorded.
When the nonparametric main-effects working model is well specified, so that $m=m_{\mc G_1}$, both main-effects designs strongly outperform matched pairs and full-dimensional GSW.
When the true outcome model additionally contains interactions, $m\neq m_{\mc G_1}$ then matched pairs and full-dimensional GSW initially outperform main-effects GSW but deteriorate as the number of covariates increases and are eventually overtaken by it.

The version of main-effects GSW robustified with matching performs the best overall. 
It interpolates between the two regimes, attenuating the omitted interaction terms when the dimension is small but also preserving global balance when the dimension is large.

\begin{figure}[!htbp]
\centering
\includegraphics[width=0.88\textwidth]{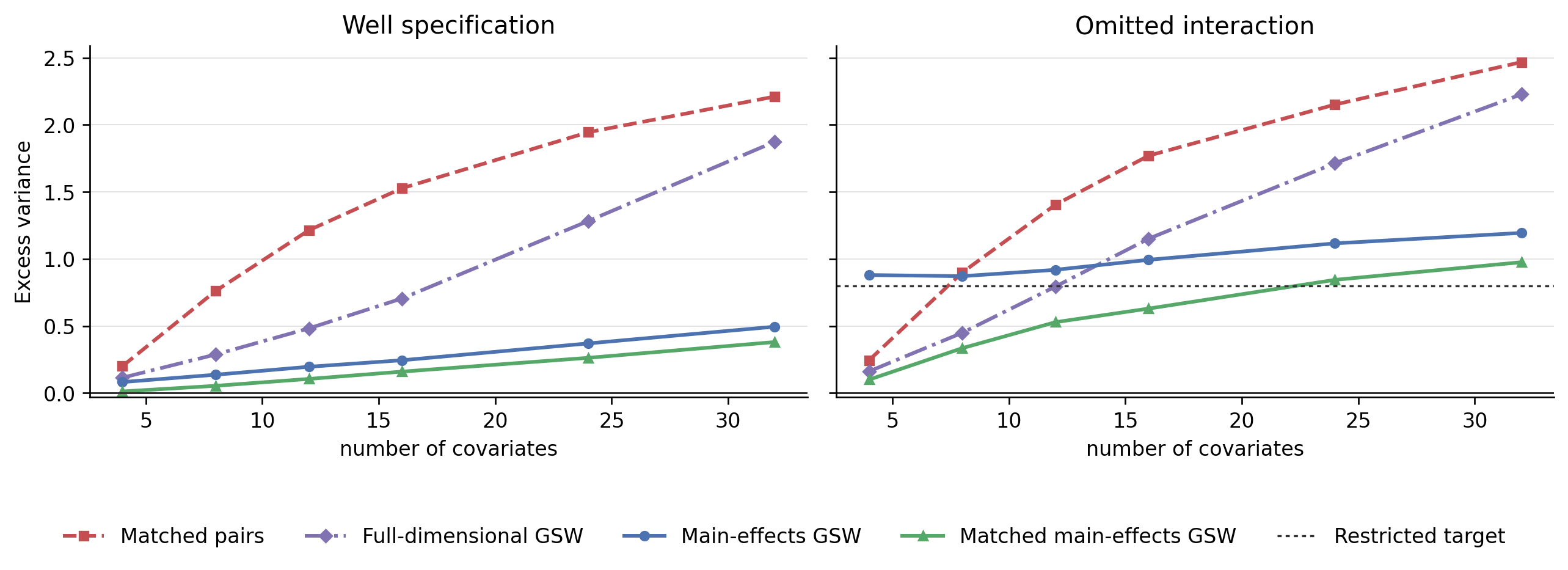}
\caption{Excess variance at $n=240$ for the main-effects design in Example~\ref{example:structured-priority}. The panels correspond to well specification and additional interactions between the covariates, respectively. The dotted line is the working-model excess-variance target $\Delta_{\mc G_1}(P)$. The main-effects designs use the additive kernel $K_1$ from Example~\ref{example:structured-priority}, an average of univariate Mat\'ern kernels. All GSW designs use $\varphi=0.03$ and $\nu=1$. The secondary online appendix gives the exact model and simulation protocol.}
\label{figure:structured-balance-simulation}
\end{figure}
\FloatBarrier

% !TeX root = ../main.tex

\section{Empirical Application}\label{section:empirical-application}

In this section, we evaluate our new designs using Monte Carlo simulations calibrated to 12 randomized experiments in recently published or forthcoming economics papers.
We compare classical matched-pairs randomization with the robust GSW designs from Section~\ref{section:restricted-efficiency}, with kernels ranging from the full-dimensional Mat\'ern to the reduced complexity nonparametric main effects specification.

The nonparametric additive and bivariate GSW designs in Section~\ref{subsection:structured-balance} reduce estimator variance relative to classical matched pairs in every experiment, while full-dimensional GSW as in Section~\ref{section:structure} performs similarly to matched pairs on average.
The matched versions of GSW in Section~\ref{subsection:matching-structured-balance} provide further robustness: every design reduces estimator variance relative to classical matched pairs in every experiment.
The main-effects specification in Example~\ref{example:structured-priority} delivers the largest variance reduction overall, which is reflected in shorter confidence intervals using our inference methods.

We selected 12 experiments from public replication data made available through the AEA, J-PAL, the World Bank, the \emph{Journal of Development Economics}, and the \emph{Quarterly Journal of Economics}.
The screening considered data availability, sample size, and the existence of meaningful pretreatment covariates and was fixed before any design comparisons.
Appendix~\ref{appendix:empirical-application} describes the selection procedure in detail.
The resulting experiments have sizes $n\in[96,1{,}200]$ and covariate dimensions $d\in[7,51]$.

\emph{Data Calibration.} For each experiment, we calibrate the Monte Carlo population to its empirical covariate distribution and estimated conditional potential-outcome laws.
For continuous outcomes, we model $Y_i(a)=m_a(\psi_i)+\sigma_a(\psi_i)\epsilon_{ia}$, $a\in\{0,1\}$, where each $\epsilon_{ia}$ has a normal distribution.
For binary outcomes, the conditional law is Bernoulli with success probability $m_a(\psi_i)$.
We use cross-validation to select a regression model from linear regression, elastic net with interactions, random forests, and gradient boosting, while a shallow random forest estimates $\sigma_a^2$ from squared cross-fitted residuals.
Clustered experiments are aggregated to the level of randomization using study-specific rules.
We then draw covariates $\psi_{1:n}$ from a smoothed bootstrap and potential outcomes from the fitted conditional laws.
See Appendix~\ref{appendix:empirical-application} for details.

\smallskip

\textbf{Designs and Estimators.}
We compare classical matched-pairs randomization with both unit-level and matched versions of the full-dimensional, bivariate, and additive GSW designs from Section~\ref{section:restricted-efficiency}.
All GSW designs set $\varphi=0.03$ and use the adaptive near-critical Mat\'ern calibration $\nu_n=1/\log n$.
We form a common set of pairs using the matching algorithm in \citet{cytrynbaum2022local}.
Under classical matched-pairs, the pair orientations $T_p\in\{-1,1\}$ are independent, whereas matched GSW chooses them jointly using the Gram-Schmidt walk.
Thus, in the simulation the matched designs differ only in how they jointly orient a common set of pairs.
All designs use difference in means as the point estimator.

\smallskip

\textbf{Inference.} Matched GSW makes the pair orientations dependent, so the usual variance estimators available for classical matched pairs are not valid.
Because of this, in Appendix \ref{appendix:matched-gsw-inference} we develop a new covariance corrected estimator for these designs.
Our main result in Theorem~\ref{theorem:matched-gsw-inference} establishes design-based finite-sample conservativeness for inference on the sample average treatment effect (SATE), while Proposition~\ref{theorem:matched-gsw-ate-calibration} calibrates it for asymptotically non-conservative inference on the ATE.
For each matched GSW design, we report the stabilized estimator $\widehat{\mc V}_{\mathrm{stab}}$, implemented using an allocation bank of size $2J = 1024$.
See the appendix for details.

For each simulated empirical environment, we draw 400 independent covariate populations and, within each population, 120 fresh potential outcome schedules and assignments.
We report estimator variance reduction relative to matched pairs, coverage of nominal 95\% ATE confidence intervals, and reduction in CI width.

\begin{figure}[t]
\centering
\includegraphics[width=\textwidth]{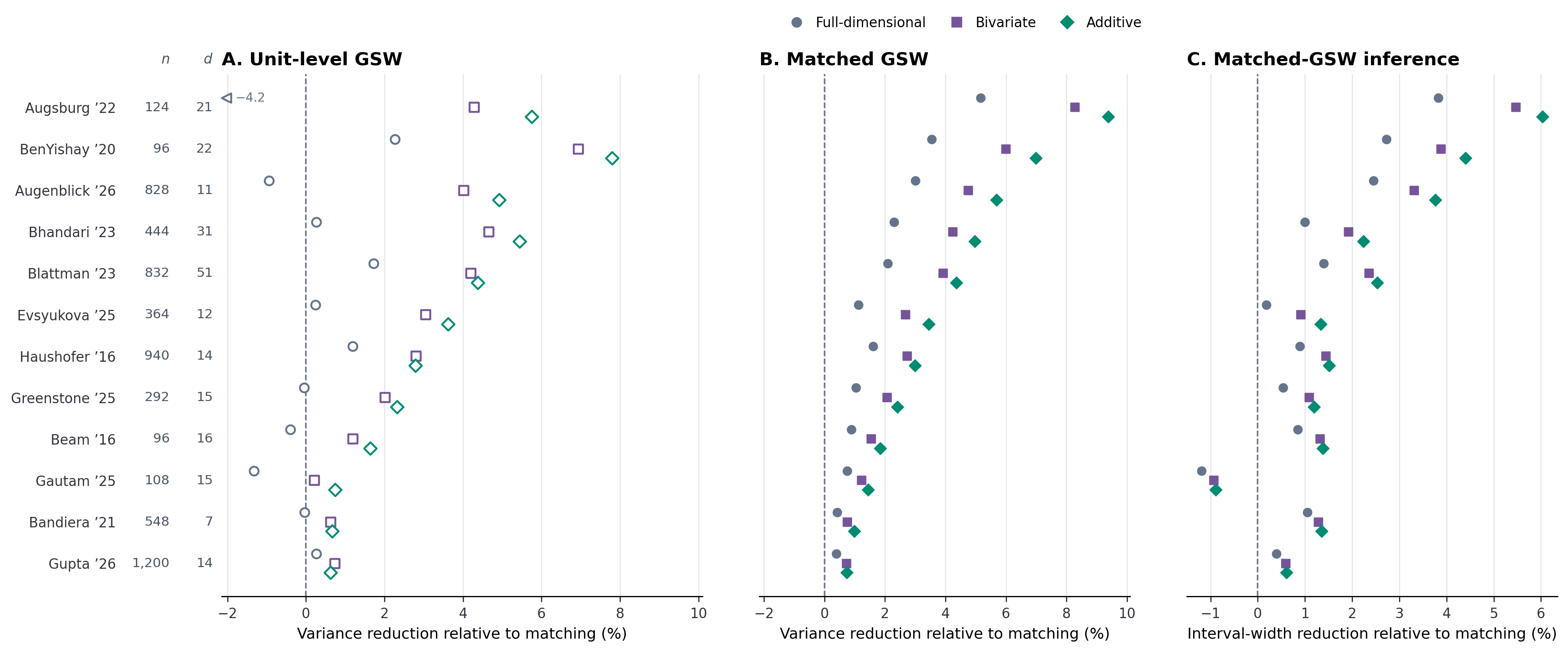}
\caption{GSW relative to matched pairs across 12 simulated experiments, with sample size and covariate dimension shown at left. Panels A and B report reductions in estimator variance for unit-level and matched GSW, respectively. Panel C reports the reduction in mean 95\% ATE confidence-interval width for matched GSW.}
\label{figure:empirical-application}
\end{figure}

\paragraph{Results.}
Figure~\ref{figure:empirical-application} documents broad efficiency gains: aside from unit-level full-dimensional GSW, every GSW design reduces estimator variance relative to classical matched pairs in every experiment, and all three matched designs do so uniformly.

The ranking among the unit-level designs illustrates the benefits from balancing lower-complexity nonparametric function spaces.
Recall that full-dimensional GSW targets the true semiparametric efficiency bound $V^*(P)$, whereas the bivariate and additive designs target the potentially larger efficiency bounds $V^*(P)+\Delta_{\mc G}(P)$.

These lower-complexity spaces are much easier to balance in finite samples.
As suggested by our rate theory, this advantage more than offsets the higher asymptotic variance target.
In particular, the nonparametric main effects design performs best in all settings, despite having the largest asymptotic variance target.

Panel C focuses on inference for matched GSW.
We develop new inference methods for these designs in Appendix~\ref{appendix:matched-gsw-inference}, which exploit the variance reduction to provide shorter confidence intervals.
Average coverage is 94.8\% for each matched design, with experiment-level coverage between 94.2\% and 95.2\%.

Taken together, these simulations show how the theoretical efficiency gains studied in the previous sections persist at the sample sizes and under the covariate distributions and potential-outcome relationships in actual experiments.
They support matched additive GSW as our preferred empirical design, since it combines the strongest efficiency gains with shorter confidence intervals and coverage close to nominal across all 12 experiments.

\section{Discussion and Recommendations for Practice}\label{section:discussion}

Both our theoretical and empirical results support a clear practical recommendation.
GSW designs targeting lower-complexity nonparametric function classes reduce estimator variance relative to classical matched pairs in every setting in our empirical application.
This reflects a basic difference between the designs.
Matching is subject to a severe curse of dimensionality, whereas discrepancy minimization designs like GSW can trade off covariate imbalances globally.
Combining the additive design with matching retains fast rates for nonparametric main effects while adding protection against misspecification, making matched additive GSW our preferred design overall.
Our covariance-corrected inference method also translates these efficiency gains into shorter confidence intervals, with coverage close to nominal.

Several directions merit further study.
The Gram-Schmidt walk is only one algorithmic implementation of global discrepancy minimization, and other algorithms different structured function classes may yield further improvements.
A more detailed investigation of inference for nonparametric GSW is beyond the scope of this paper but would be valuable, paralleling recent detailed investigations of inference for classical matched pairs designs \citep{fogarty2018,bai2026graph}.
Finally, it would be interesting to extend these methods beyond binary treatments to settings with multiple treatments or more complex experiments, as in the coupling-design approach of \citet{cytrynbaum2026coupling}.

\bibliographystyle{apalike}
\bibliography{references}

\appendix
\section{Appendix}

\subsection{Inference for Matched Gram-Schmidt Walk Designs}\label{appendix:matched-gsw-inference}

Matched GSW makes the orientations of different pairs dependent, so the usual matched-pairs variance estimator does not apply.
In this section, we construct a design-based variance estimator for matched GSW that is exactly conservative in finite samples for inference on the sample average treatment effect.
We also adapt the construction for inference on the population average treatment effect.

Suppose that $n=2N$, where $N$ is even, and write $M=\{(i_p,j_p):p\in[N]\}$ for the original matching.
Let $T_p=1$ when $i_p$ is treated and $T_p=-1$ otherwise.
For each unit, let $\tau_i=Y_i(1)-Y_i(0)$ and $\ylevel_i=(Y_i(1)+Y_i(0))/2$.
Define $A_p=(\tau_{i_p}+\tau_{j_p})/2$ and $B_p=\ylevel_{i_p}-\ylevel_{j_p}$.
The two potential treated-minus-control contrasts in pair $p$ are $C_p(1)=Y_{i_p}(1)-Y_{j_p}(0)$ and $C_p(-1)=Y_{j_p}(1)-Y_{i_p}(0)$.
Only the contrast selected by $T_p$ is observed, and it satisfies $C_p=C_p(T_p)=T_p(Y_{i_p}-Y_{j_p})=A_p+T_pB_p$.
Then
\begin{equation}\label{equation:matched-inference-estimator-decomposition}
\widehat\theta= \frac1N\sum_{p=1}^NA_p+\frac1NT'B.
\end{equation}

\emph{Orientation bank.}
Conditional on $\psi_{1:n}$ and the matching $M$, we draw independent matched-GSW orientation vectors $T^{(1)},\ldots,T^{(J)}$ and form the sign-symmetric bank of allocations $\mc B_J=\bigl(T^{(1)},-T^{(1)},\ldots,T^{(J)},-T^{(J)}\bigr)$.
We select the implemented orientation uniformly from the $2J$ indexed entries of $\mc B_J$.
Because the matched-GSW law is sign symmetric, this leaves its marginal assignment law unchanged.
Recall $\mathcal W_n=\sigma((\psi_i,Y_i(0),Y_i(1))_{i=1}^n)$.
Let $U_n$ be any exogenous randomness used during matching and set $\mathcal W_{n,J}=\sigma(\mathcal W_n,U_n,\mc B_J)$.
By sign symmetry, $\E[T|\mathcal W_{n,J}]=0$.
The conditional covariance matrix is $R\equiv\E[TT'|\mathcal W_{n,J}]=J^{-1}\sum_{j=1}^JT^{(j)}T^{(j)'}$.
Then
\begin{equation}\label{equation:matched-inference-bank}
n\var(\widehat\theta|\mathcal W_{n,J})=\frac2NB'RB.
\end{equation}

Following the quadratic-bound approach to design-based variance estimation of \citet{harshaw2026optimized}, let $L$ be any outcome-blind positive semidefinite $N\times N$ matrix with $L_{pp}=1$ that is measurable with respect to the realized covariates, matchings, and bank.
If $|R_{pq}|<1$ for every $p\neq q$, define variance estimator
\begin{equation}\label{equation:matched-inference-completion}
\widehat{\mc V}(L)\equiv\frac2N\left(\sum_{p=1}^NC_p^2+\sum_{p\neq q}\frac{T_pT_qR_{pq}+L_{pq}}{1+T_pT_qR_{pq}}C_pC_q\right).
\end{equation}
Define the sample average treatment effect $\SATE_n\equiv n^{-1}\sum_{i=1}^n\tau_i=N^{-1}\sum_{p=1}^NA_p$.
Sign symmetry gives $\E[\widehat\theta|\mathcal W_{n,J}]=\SATE_n$ and hence $\E[\widehat\theta|\mathcal W_n]=\SATE_n$.
The proof of the following result is given in Appendix~\ref{appendix:proofs-matched-gsw-inference}.

% PROOF-ID: matched-gsw-conditional-conservatism
\begin{thm}[Design-Based Conservativeness for SATE]\label{theorem:matched-gsw-inference}
For every matrix $L$ as above, $\E[\widehat{\mc V}(L)|\mathcal W_{n,J}]=n\var(\widehat\theta|\mathcal W_{n,J})+2A'LA/N$.
In particular, $\widehat{\mc V}(L)$ is conservative for design-based variance of $\widehat\theta$ about the $\SATE_n$:
\begin{equation}\label{equation:matched-inference-expectation}
\E[\widehat{\mc V}(L)|\mathcal W_n]\geq n\var(\widehat\theta|\mathcal W_n).
\end{equation}
\end{thm}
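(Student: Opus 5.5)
The approach is a direct computation conditional on $\mathcal W_{n,J}$. Conditioning fixes $A$, $B$, the bank, and hence $R$ and $L$; the only remaining randomness is the uniform draw of $T$ from the $2J$ bank entries. For this conditional law, $\E[T|\mathcal W_{n,J}]=0$ and $\E[T_pT_q|\mathcal W_{n,J}]=R_{pq}$, with $R_{pp}=1$. Since $C_p=A_p+T_pB_p$ and $T_p^2=1$, we have $C_p^2=A_p^2+B_p^2+2T_pA_pB_p$. Therefore $\E[C_p^2|\mathcal W_{n,J}]=A_p^2+B_p^2$.

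The cross terms are the main step. For $p\neq q$, write $s=T_pT_q\in\{\pm1\}$. Expanding gives
\[
C_pC_q=A_pA_q+T_pA_qB_p+T_qA_pB_q+sB_pB_q .
\]
The weight $w(s)=(sR_{pq}+L_{pq})/(1+sR_{pq})$ depends on $T$ only through $s$. So I would compute the conditional expectation by splitting on $s=\pm1$. Since $\E[s]=R_{pq}$, we get $P(s=1)=(1+R_{pq})/2$ and $P(s=-1)=(1-R_{pq})/2$. Here the hypothesis $|R_{pq}|<1$ guarantees both denominators are positive.

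The odd terms $T_pA_qB_p$ and $T_qA_pB_q$ must vanish after weighting. To see this, note that the bank is sign-symmetric, so $(T,s)$ and $(-T,s)$ have the same conditional law. Hence $\E[T_p\one\{s=\pm1\}]=0$, and therefore $\E[w(s)T_p]=0$.

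For the even terms, weighting cancels the denominators:
\[
\E[w(s)]=\tfrac12(R_{pq}+L_{pq})+\tfrac12(-R_{pq}+L_{pq})=L_{pq}.
\]
Similarly, $\E[w(s)s]=\tfrac12(R_{pq}+L_{pq})-\tfrac12(-R_{pq}+L_{pq})=R_{pq}$. So each cross term contributes $L_{pq}A_pA_q+R_{pq}B_pB_q$.

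Summing, and using $L_{pp}=1=R_{pp}$ to absorb the diagonal $A_p^2+B_p^2$, gives
\[
\E[\widehat{\mc V}(L)|\mathcal W_{n,J}]=\frac2N\bigl(A'LA+B'RB\bigr).
\]
By \eqref{equation:matched-inference-bank} this equals $n\var(\widehat\theta|\mathcal W_{n,J})+2A'LA/N$, which is the first claim.

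For the inequality, $L\succeq0$ gives $\E[\widehat{\mc V}(L)|\mathcal W_{n,J}]\geq n\var(\widehat\theta|\mathcal W_{n,J})$. I then take expectations over the bank given $\mathcal W_n$ (the tower property, since $\mathcal W_n\subseteq\mathcal W_{n,J}$) and apply the law of total variance:
\[
\var(\widehat\theta|\mathcal W_n)=\E[\var(\widehat\theta|\mathcal W_{n,J})|\mathcal W_n]+\var(\E[\widehat\theta|\mathcal W_{n,J}]|\mathcal W_n).
\]
The last term is zero because $\E[\widehat\theta|\mathcal W_{n,J}]=\SATE_n$ is $\mathcal W_n$-measurable. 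This gives \eqref{equation:matched-inference-expectation}.

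The main obstacle is the weighted cross-term calculation. That is where the particular form of the weights, $|R_{pq}|<1$, and sign symmetry all enter.
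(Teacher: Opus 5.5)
Your proposal is correct and follows essentially the same route as the paper. Both arguments condition on $\mathcal W_{n,J}$ and split each cross term on $T_pT_q=\pm1$, which has probability $(1\pm R_{pq})/2$, so the weights cancel the denominators. Both then remove the odd terms by global sign symmetry (you via $\E[T_p\one\{T_pT_q=s\}|\mathcal W_{n,J}]=0$, the paper via a conditional mean given that event), and finish with $L\succeq0$ and the conditional law of total variance.
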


\emph{Inference on the SATE.}
Theorem~\ref{theorem:matched-gsw-inference} establishes finite-sample conservatism for inference on $\SATE_n$, with the choice of $L$ determining its degree.
To limit this conservativeness, apply the fine-stratification matcher of \citet{cytrynbaum2022local} to the covariate centroids of the $N$ original pairs, and let $\Pi$ denote the resulting outer matching.
For each $p\in[N]$, write $\pi(p)$ for its outer mate and define the swap matrix $S_\Pi$ by $(S_\Pi x)_p=x_{\pi(p)}$.
The matrix $L_\Pi=I_N-S_\Pi$ is PSD with unit diagonal, and we have
\begin{equation}\label{equation:matched-inference-sate-slack}
\E[\widehat{\mc V}(L_\Pi)|\mathcal W_{n,J}]-n\var(\widehat\theta|\mathcal W_{n,J})=\frac2NA'L_\Pi A=\frac2N\sum_{\{p,q\}\in\Pi}(A_p-A_q)^2.
\end{equation}
Thus, $\widehat{\mc V}(L_\Pi)$ is exactly conservative for $\SATE_n$, with smaller slack when the outer matching groups pairs with similar pair-average treatment effects.

\medskip
\emph{Inference on the ATE.}
The law of total variance gives $n\var(\widehat\theta)=\E[n\var(\widehat\theta|\mathcal W_{n,J})]+\var(\tau)$.
Then for inference on the ATE, we choose $L$ so the term $2A'LA/N$ converges to $\var(\tau)$.
Set $L_0=N(I_N-11'/N)/(N-1)$ and $L_{\mathrm{pop}}=(L_\Pi+L_0)/2$, which are PSD with unit diagonal.
Writing $\bar A=N^{-1}\sum_{p=1}^NA_p$, the new component satisfies $N^{-1}A'L_0A=(N-1)^{-1}\sum_{p=1}^N(A_p-\bar A)^2$ and measures the global dispersion of pair-average treatment effects.
The following result formalizes this calibration.
Its proof is given in Appendix~\ref{appendix:proofs-matched-gsw-inference}.

% PROOF-ID: matched-gsw-ate-calibration
\begin{prop}[ATE Calibration]\label{theorem:matched-gsw-ate-calibration}
Under Assumption~\ref{assumption:sampling}, suppose the matching satisfies Equation~\eqref{equation:tight-matching} and construct $\Pi$ as above.
Then $2A'L_{\mathrm{pop}}A/N\convp\var(\tau)$.
\end{prop}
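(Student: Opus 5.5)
The plan is to reduce the claim to two laws of large numbers, one for the global dispersion term and one for the outer-pair term. Expanding the definitions gives
\[
\frac2NA'L_{\mathrm{pop}}A=\frac1N\sum_{\{p,q\}\in\Pi}(A_p-A_q)^2+\frac1{N-1}\sum_{p=1}^N(A_p-\bar A)^2 .
\]
Decompose each unit's effect as $\tau_i=\tau(\psi_i)+\epsilon_i$, where $E[\epsilon_i|\psi_{1:n}]=0$ and $E[\epsilon_i^2|\psi_{1:n}]=v_\tau^2(\psi_i)\equiv\var(\tau|\psi_i)$. Then $A_p=\bar\tau_p+\bar\epsilon_p$, with $\bar\tau_p=(\tau(\psi_{i_p})+\tau(\psi_{j_p}))/2$ and $\bar\epsilon_p=(\epsilon_{i_p}+\epsilon_{j_p})/2$.

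The target is the decomposition $\var(\tau)=\var(\tau(\psi))+E[v_\tau^2(\psi)]$, and the two terms should split it as follows.
\begin{itemize}
\item The global term should converge to $\var(\tau(\psi))+E[v_\tau^2]/2$. Under the tight-matching condition \eqref{equation:tight-matching}, $\bar\tau_p$ is close to $\tau(\psi_{i_p})$ in mean square, so the empirical variance of the $\bar\tau_p$ is asymptotically $\var(\tau(\psi))$. Since $\var(\bar\epsilon_p|\psi_{1:n})=(v_\tau^2(\psi_{i_p})+v_\tau^2(\psi_{j_p}))/4$, averaging over $p$ gives $E[v_\tau^2]/2$.
\item The outer term should converge to $E[v_\tau^2]/2$, with no $\var(\tau(\psi))$ component. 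Each outer pair contributes $(\bar\epsilon_p-\bar\epsilon_q)^2$, whose conditional mean is one quarter of the sum of four conditional variances. Summing over the $N/2$ outer pairs and dividing by $N$ gives $E[v_\tau^2]/2$.
\end{itemize}
Adding the two limits yields $\var(\tau)$.

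The key steps, in order, are as follows.

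First, handle the smooth parts. Since the density $p$ is bounded, Lipschitz functions are dense in $L^2(P_\psi)$. Approximate $\tau(\psi)$ by a Lipschitz $g$ with $E[(\tau-g)^2]<\delta$. Because each unit appears exactly once in $M$, the empirical average $n^{-1}\sum_i(\tau-g)^2(\psi_i)$ is $O_p(\delta)$ by the LLN.
\begin{itemize}
\item For $g$, the within-pair differences are controlled by \eqref{equation:tight-matching}.
\item The outer differences $|g(\psi_{i_p})-g(\psi_{i_q})|$ are bounded by the centroid distance plus the two inner distances.
\item The average squared centroid distance within outer pairs vanishes, because the \citet{cytrynbaum2022local} matcher applied to the centroids yields a tight matching. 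The centroids have the same $[0,1]^d$ support and fixed $d$. If needed, I would invoke the matcher's generic guarantee that the average squared match distance among $N$ points in $[0,1]^d$ is $O(N^{-2/d})$.
\end{itemize}
It follows that $N^{-1}\sum_\Pi(\bar\tau_p-\bar\tau_q)^2\convp0$, and that the empirical variance of the $\bar\tau_p$ converges to $\var(\tau(\psi))$.

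Second, handle the noise parts. Conditional on $\psi_{1:n}$ and $U_n$, the $\epsilon_i$ are independent and mean zero, while $M$ and $\Pi$ are fixed. The terms $\bar\epsilon_p^2$ and $(\bar\epsilon_p-\bar\epsilon_q)^2$ are then sums of $\epsilon_i^2$ plus cross products $\epsilon_i\epsilon_j$ over distinct units.
\begin{itemize}
\item The squared terms converge to their conditional means, which average to $E[v_\tau^2]$-type limits by the ordinary LLN. The reason is that each unit's $\epsilon_i^2$ appears with a fixed weight ($1/4$ in each term), so the average reduces to $c\, n^{-1}\sum_i\epsilon_i^2$.
\item The cross products have conditional mean zero. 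With only second moments available, I would show they are $o_p(1)$ by truncating $\epsilon_i$ at a level $\lambda$. The bounded part has conditional variance $O(\lambda^4/N)$ because each cross product involves a fixed partner structure. The tail part is controlled in $L^1$ by $E[\epsilon^2\one\{|\epsilon|>\lambda\}]$, which is small.
\end{itemize}
The term $N\bar A^2/(N-1)$ in the global variance is handled by the same argument together with the LLN for $\bar A$.

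Third, control the mixed terms $\bar\tau$-differences times $\bar\epsilon$ by Cauchy–Schwarz, since the first factor is $o_p(1)$ in mean square and the second is $O_p(1)$.

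The main obstacle is the second step's truncation argument. It must hold uniformly over the data-dependent pairings $M$ and $\Pi$ with only second moments on $Y(a)$. The key point is that the pairings are measurable with respect to $\psi_{1:n}$ and $U_n$, so conditioning makes them deterministic and leaves independent errors.
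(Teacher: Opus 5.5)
\paragraph{Comparison with the paper's proof.}
Your proposal is correct and takes a different route from the paper, though both arrive at the same split of the limit: $N^{-1}A'L_\Pi A\convp E[\var(\tau|\psi)]/2$, and the global dispersion term converges to $\var(\tau(\psi))+E[\var(\tau|\psi)]/2$.

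The paper never separates $\tau_i$ into signal and noise. It writes $N^{-1}\sum_pA_p^2$ and $2N^{-1}\sum_{\Pi}A_pA_q$ as unit-level averages plus within-group bilinear forms $n^{-1}\sum_{G}\sum_{k\neq\ell\in G}\tau_k\tau_\ell$. These forms range over the original pairs and over the four-unit groups induced by $\Pi$. It then cites the bilinear-form lemma of \citet[Lemma C.11]{cytrynbaum2022local}, which gives limits $E[\tau(\psi)^2]$ and $3E[\tau(\psi)^2]$. \citet[Lemma D.23]{cytrynbaum2022local} supplies tightness of the unions. Beyond that, only the ordinary LLN for $n^{-1}\sum_i\tau_i^2$ and $\bar A$ is used.

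Your argument is in effect a hand proof of the needed case of Lemma C.11:
\begin{itemize}
\item Lipschitz approximation handles the $\tau(\psi)$ part, which is where tight matching enters.
\item Conditional independence plus truncation handles the noise. Here nothing about the matching is used except that each unit appears exactly once.
\item The triangle inequality through pair centroids replaces Lemma D.23.
\end{itemize}
This is longer but more transparent: it shows exactly why the $L_\Pi$ piece isolates half the conditional variance of $\tau$. The paper's version is a few lines because one black box covers both group sizes.

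Both routes still rely on an external tightness guarantee for the centroid matcher. Incidentally, a worst-case $O(N^{-2/d})$ rate is right only for $d\geq3$. For $d\leq2$ it is of order $N^{-1}$ up to logs: an odd cluster plus one far point forces one long pair. This does not matter, since you only need convergence to zero.

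\paragraph{A step that needs a patch.}
In the global term, the signal--noise cross term $2N^{-1}\sum_p\bar\tau_p\bar\epsilon_p$ is not a difference times noise. Your Cauchy--Schwarz step therefore only shows it is $O_p(1)$, not that it vanishes.

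The fix uses the identity $\bar\tau_p\bar\epsilon_p=\tfrac12\bigl(\tau(\psi_{i_p})\epsilon_{i_p}+\tau(\psi_{j_p})\epsilon_{j_p}\bigr)+\tfrac14\bigl(\tau(\psi_{j_p})-\tau(\psi_{i_p})\bigr)\bigl(\epsilon_{i_p}-\epsilon_{j_p}\bigr)$. It splits the cross term into two pieces:
\begin{itemize}
\item The first is $2n^{-1}\sum_i\tau(\psi_i)\epsilon_i$. It vanishes by the iid weak LLN, because $E|\tau(\psi)\epsilon|\leq E[\tau^2]<\infty$ and $E[\tau(\psi)\epsilon]=0$.
\item The second is a genuine within-pair difference times noise, which your Cauchy--Schwarz bound does handle.
\end{itemize}
Separately, in the truncation step you should recenter the truncated errors conditionally on $\psi_i$, so that both the bounded part and the tail part remain conditionally mean zero.
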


The covariance-corrected estimator can be noisy.
By the Schur product theorem, $L_S=R\circ R$ is positive semidefinite with unit diagonal, and substituting it into Equation~\eqref{equation:matched-inference-completion} gives the always-nonnegative estimator $\widehat{\mc V}_S=2(D_TC)'R(D_TC)/N$, where $D_T$ is diagonal with entries $T_p$.
Take $\lambda_N=N^{-1/2}$, using $L=L_\Pi$ for SATE inference and $L=L_{\mathrm{pop}}$ for ATE inference.
When $|R_{pq}|<1$ for every $p\neq q$, we report the nominal interval $\widehat\theta\pm z_{1-\alpha/2}(\widehat{\mc V}_{\mathrm{stab}}(L)/n)^{1/2}$, where
\begin{equation}\label{equation:matched-inference-stabilized}
\widehat{\mc V}_{\mathrm{stab}}(L)\equiv\max\bigl((1-\lambda_N)\widehat{\mc V}(L)+\lambda_N\widehat{\mc V}_S,\lambda_N^2\widehat{\mc V}_S\bigr).
\end{equation}
Otherwise, we use $\widehat{\mc V}_S$ in place of $\widehat{\mc V}_{\mathrm{stab}}(L)$.

\subsection{Empirical Application Details}\label{appendix:empirical-application}

\paragraph{Study selection.}
We constructed a fixed candidate frame of 100 randomized studies from 247 records in five public replication repositories.
We sourced 30 studies from AEA/openICPSR, 20 each from the J-PAL Dataverse, World Bank Impact Evaluation Surveys, and the \emph{Journal of Development Economics} Mendeley collection, and 10 from the \emph{Quarterly Journal of Economics} Dataverse.
Within each source, we verified randomized assignment and selected the most recent qualifying studies.

We screened the replication packages for data containing a treatment, outcome, at least five meaningful pretreatment covariates, and a recoverable randomization unit.
We also required an adequate number of randomization units and legally usable files.
Thirty-five studies passed this initial screen.
We ranked these studies by their baseline-covariate counts, subject to a cap of six studies from any source, and retained 20 for a detailed file-level audit.

Neither treatment-effect estimates, balance statistics, estimator variances, nor the performance of any design entered the screening or ranking decisions.
One selected package could not be fitted because its raw microdata remained behind an access agreement, leaving 19 fitted experimental populations.

\begin{table}[!t]
\centering
\caption{Calibrated experimental populations in Figure~\ref{figure:empirical-application}. RF denotes a shallow random forest and ENet-2 denotes an elastic net with degree-two terms.}
\label{table:empirical-populations}
\small
\singlespacing
\begin{tabular}{@{}lll@{}}
\toprule
Study & Outcome & Mean model \\
\midrule
\citet{gupta2026paperwork} & Administrative pension receipt & RF \\
\citet{blattman2023cbt} & Antisocial-behavior index & RF \\
\citet{gautam2025sanitation} & Sanitation ownership rate & RF \\
\citet{bhandari2023able} & Evaluation of the incumbent & RF \\
\citet{augsburg2022nature} & Open-defecation rate & ENet-2 \\
\citet{benyishay2020gender} & Farmer knowledge score & ENet-2 \\
\citet{beam2016job} & Employment rate & RF \\
\citet{evsyukova2025linkedout} & LinkedIn connections & RF \\
\citet{augenblick2026retrieval} & Savings in maize-equivalent kilograms & OLS \\
\citet{greenstone2025pollution} & Log particulate emissions & RF \\
\citet{bandiera2021authority} & Item-adjusted log purchase price & RF \\
\citet{haushofer2016transfers} & Nondurable consumption & RF \\
\bottomrule
\end{tabular}
\end{table}

The final design-independent screen removed populations with insufficient covariate signal.
We retained fitted populations for which $4\var(m(\psi))/(V^*(P)+4\var(m(\psi)))\geq0.05$.
This ratio is the share of complete-randomization variance that any covariate-adaptive design could potentially remove.
When it is small, there is too little scope for evaluating the relative performance of different covariate-balancing designs.
The rule was fixed before any design comparisons and yielded 13 eligible fitted populations, of which Figure~\ref{figure:empirical-application} reports 12, with one excluded due to Monte Carlo time considerations.

\FloatBarrier

\paragraph{Fitted populations.}
For each study, we reproduce the paper's treatment contrast and outcome sample, applying documented study-specific sample restrictions and missing-data rules.
When treatment is assigned by cluster, we average outcomes and baseline covariates within each randomization cluster.
We verify the resulting sample by reproducing a treatment-effect estimate reported in the original paper.

For each arm $a\in\{0,1\}$, let $\widehat m_a(\psi)$ and $\widehat v_a(\psi)$ denote the fitted conditional mean and variance.
Five-fold cross-validation selects one mean-model family for each study from OLS, an elastic net with degree-two terms, a shallow random forest, and gradient-boosted trees.
For continuous outcomes, the score is out-of-sample $R^2$.
For binary outcomes, it is the improvement in Brier loss over the arm mean.
The family with the highest score averaged across the two arms is then fit separately within each arm on all observed units.
For continuous outcomes, a shallow random forest estimates the conditional variance from squared five-fold cross-fitted residuals.
For binary outcomes, the conditional variance is $\widehat m_a(\psi)(1-\widehat m_a(\psi))$.

To simulate covariates, we draw an observed covariate vector uniformly and add independent Gaussian jitter to continuous coordinates having at least 15 distinct values.
The jitter standard deviation is the mean nearest-neighbor spacing, capped at one quarter of the coordinate's sample standard deviation, and perturbed values are clipped to the observed range.
The fitted outcome models use these perturbed covariates on their original scale.
After perturbation, a separate copy of the design covariates is mapped to $[0,1]^d$ before constructing the kernels, using the empirical CDF for higher-cardinality coordinates and min-max scaling otherwise, with unordered categories represented by normalized indicator blocks.

At each Monte Carlo iteration, we evaluate the fitted conditional laws at the simulated covariates and draw both potential outcomes.
For continuous outcomes, $Y(a)=\widehat m_a(\psi)+\widehat v_a(\psi)^{1/2}\epsilon_a$ with Gaussian innovations, followed by clipping to any declared outcome bounds.
For binary outcomes, we draw Bernoulli potential outcomes with success probability $\widehat m_a(\psi)$.
The main specification takes the two innovations to be conditionally independent.
The sensitivity specifications use a Gaussian copula with correlation $0.5$ or a common latent rank, including the analogous latent-Gaussian construction for binary outcomes.
These alternatives change the across-study average variance reduction by at most $0.04$ percentage points for any GSW design.
The fitted models are held fixed, so the reported Monte Carlo uncertainty does not include model-estimation uncertainty.

For each study, we draw 400 independent covariate populations.
Within each population, we construct the matching and a sign-symmetric assignment bank, as in Appendix~\ref{appendix:matched-gsw-inference}, for each of the three unit-level and three matched GSW designs.
Each bank contains 512 independently generated walk allocations and their sign reversals, giving 1,024 possible allocations.
We then draw 120 potential-outcome schedules and one implemented allocation from each design for every schedule.

\section{Proofs}\label{appendix:proofs}

\paragraph{Notation.}
For a positive integer $r$, write $[r]=\{1,\ldots,r\}$ and $x_{1:r}=(x_1,\ldots,x_r)$.
The marginal law of $\psi$ under $P$ is denoted by $P_\psi$.
For vectors and matrices, $x'$ denotes transpose, $\norm{x}_2$ is the Euclidean norm, $I_r$ is the $r\times r$ identity matrix, $e_j$ is the $j$th coordinate vector, and $\mathds{1}_r$ is the vector of ones.
We write $\mathds{1}\{\mathcal E\}$ for the indicator of an event $\mathcal E$.
For symmetric matrices $A$ and $B$, $A\preceq B$ means that $B-A$ is positive semidefinite.
We use $a\wedge b=\min(a,b)$ and $a\vee b=\max(a,b)$.
For nonnegative quantities, $a\lesssim_\eta b$ means $a\leq C_\eta b$, while $a\asymp_\eta b$ means that both inequalities hold.
Generic positive constants $C$ and $c$ may change from line to line, with subscripts indicating permitted dependence.

\subsection{Setup}\label{appendix:proofs-setup}

For each unit, write $W_i=(\psi_i,Y_i(0),Y_i(1))$.
Let $\ylevel=(Y(1)+Y(0))/2$ and, for each unit, $\ylevel_i=(Y_i(1)+Y_i(0))/2$.

\begin{proof}[Proof of Proposition~\ref{proposition:variance-decomposition}]
Fix $\sigma\in\mathcal D_n$.
Let $\mathcal F_n=\sigma(\psi_{1:n})$ and $\mathcal W_n=\sigma(W_{1:n})$ denote the corresponding sigma-algebras, with $\mathcal F_n\subseteq\mathcal W_n$.
All expectations and variances below are taken under the joint law $(P,\sigma)$ unless stated otherwise.

Using Equation~\eqref{equation:ht-estimator}, write $\widehat\theta-\theta(P)=A_n+B_n$, where $A_n=n^{-1}\sum_{i=1}^n\tau_i-\theta(P)$ and $B_n=2n^{-1}\sum_{i=1}^nZ_i\yleveli$.
Definition~\ref{definition:admissible-designs} gives $Z\indep W_{1:n}|\mathcal F_n$ and $E[Z_i|\mathcal F_n]=0$.
It follows that $E[Z_i|\mathcal W_n]=E[Z_i|\mathcal F_n]=0$.
Hence $E[B_n|\mathcal W_n]=0$, while iid sampling gives $E[A_n]=0$.
It follows that $\widehat\theta$ is unbiased.
Moreover, $A_n$ is $\mathcal W_n$-measurable, so tower law gives $E[A_nB_n]=E[A_nE[B_n|\mathcal W_n]]=0$.
Then we have
\begin{equation}\label{equation:proof-ht-variance-split}
n\var(\widehat\theta)=nE[(A_n+B_n)^2]=\var(\tau)+\frac{4}{n}E\bigg[\bigg(\sum_{i=1}^nZ_i\yleveli\bigg)^2\bigg].
\end{equation}

Write $\varepsilon_i=\yleveli-m(\psi_i)$ and $\bar\sigma^2(x)=\var(\ylevel|\psi=x)$.
For a random element $A$, write $\mathcal L(A|B)$ for its conditional law given $B$.
By iid sampling, the conditional law factorizes as $\mathcal L(\varepsilon_{1:n}|\mathcal F_n)=\Pi_{i=1}^n\mathcal L(\varepsilon_i|\psi_i)$.
The conditional independence in Definition~\ref{definition:admissible-designs} gives $\varepsilon_{1:n}\indep Z|\mathcal F_n$, and therefore $\mathcal L(\varepsilon_{1:n}|\mathcal F_n,Z)=\mathcal L(\varepsilon_{1:n}|\mathcal F_n)$.
Taking the $i$th marginal of this law gives $E[\varepsilon_i|\mathcal F_n,Z]=E[\varepsilon_i|\mathcal F_n]=E[\varepsilon_i|\psi_i]=0$.
For $i\neq j$, the product law gives $E[\varepsilon_i\varepsilon_j|\mathcal F_n,Z]=E[\varepsilon_i|\psi_i]E[\varepsilon_j|\psi_j]=0$, while its $i$th marginal gives $E[\varepsilon_i^2|\mathcal F_n,Z]=E[\varepsilon_i^2|\psi_i]=\bar\sigma^2(\psi_i)$.
Set $C_n=\sum_{i=1}^nZ_im(\psi_i)$ and $R_n=\sum_{i=1}^nZ_i\varepsilon_i$.
The preceding moment identities give $E[R_n|\mathcal F_n,Z]=\sum_{i=1}^nZ_iE[\varepsilon_i|\mathcal F_n,Z]=0$.
They also give
\begin{equation*}
E[R_n^2|\mathcal F_n,Z]=\sum_{i,j=1}^nZ_iZ_jE[\varepsilon_i\varepsilon_j|\mathcal F_n,Z]=\sum_{i=1}^n\bar\sigma^2(\psi_i).
\end{equation*}
The last equality uses $Z_i^2=1$.
Since $C_n$ is measurable with respect to $\sigma(\mathcal F_n,Z)$,
\begin{equation*}
E\bigg[\bigg(\sum_{i=1}^nZ_i\yleveli\bigg)^2\bigg|\mathcal F_n,Z\bigg]=C_n^2+2C_nE[R_n|\mathcal F_n,Z]+E[R_n^2|\mathcal F_n,Z]=C_n^2+\sum_{i=1}^n\bar\sigma^2(\psi_i).
\end{equation*}
Taking expectations and using identical sampling gives
\begin{equation}\label{equation:proof-marginal-noise-split}
E\bigg[\bigg(\sum_{i=1}^nZ_i\yleveli\bigg)^2\bigg]=E\bigg[\bigg(\sum_{i=1}^nZ_im(\psi_i)\bigg)^2\bigg]+nE[\bar\sigma^2(\psi)].
\end{equation}

It remains to identify the design-invariant terms.
Let $\rho(x)=\cov(Y(1),Y(0)|\psi=x)$.
The law of total variance gives $\var(\tau)=\var\bigl(\tau(\psi)\bigr)+E[v_1^2(\psi)+v_0^2(\psi)-2\rho(\psi)]$.
The conditional variance formula for $\ylevel$ gives $4E[\bar\sigma^2(\psi)]=E[v_1^2(\psi)+v_0^2(\psi)+2\rho(\psi)]$.
Adding these identities cancels $\rho(\psi)$ and gives $\var(\tau)+4E[\bar\sigma^2(\psi)]=V^*(P)$.
Substituting Equation~\eqref{equation:proof-marginal-noise-split} and this identity into Equation~\eqref{equation:proof-ht-variance-split} proves Equation~\eqref{equation:variance-decomposition}.
\end{proof}

\subsection{Matching as a Conservative Design}\label{appendix:proofs-matching}

We first record the exact form of the excess variance under matched-pairs randomization.

\begin{lem}[Matched-Pairs Excess Variance]\label{lemma:proof-matched-pairs-excess}
Let $n\geq2$ be even and let $M$ be a matching with pairs $(i_p,j_p)$ for $p\in[n/2]$.
Then
\begin{equation}
\mc V_n(M,P)=\frac{4}{n}E\bigg[\sum_{p=1}^{n/2}\bigg(m(\psi_{i_p})-m(\psi_{j_p})\bigg)^2\bigg].
\end{equation}
\end{lem}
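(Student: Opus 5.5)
The plan is to specialize Proposition~\ref{proposition:variance-decomposition} to the matched-pairs structure. That proposition gives
\[
\mc V_n(M,P)=\frac4n E\Big[\Big(\sum_{i=1}^n Z_i m(\psi_i)\Big)^2\Big],
\]
so it suffices to compute the conditional second moment of $\sum_i Z_i m(\psi_i)$ given the covariates and the matching randomness $U_M$. This is legitimate because matched pairs is admissible, $M\in\mc D_n$. Within each pair we have $Z_{j_p}=-Z_{i_p}$. Writing $T_p=Z_{i_p}$ and $\Delta_p=m(\psi_{i_p})-m(\psi_{j_p})$, the pairs partition $[n]$, so the imbalance rewrites as
\[
\sum_{i=1}^n Z_i m(\psi_i)=\sum_{p=1}^{n/2}T_p\Delta_p .
\]

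Next we condition on $\mathcal G=\sigma(\psi_{1:n},U_M)$. Given $\mathcal G$, the pairing is fixed, so each $\Delta_p$ is $\mathcal G$-measurable. By the definition of the design, the $T_p$ are iid Rademacher signs independent of $\mathcal G$. Expanding the square gives
\[
E\Big[\Big(\sum_p T_p\Delta_p\Big)^2\Big|\mathcal G\Big]=\sum_{p,q}\Delta_p\Delta_q E[T_pT_q|\mathcal G].
\]
The cross terms vanish since $E[T_pT_q|\mathcal G]=E[T_p]E[T_q]=0$ for $p\neq q$, and the diagonal terms use $T_p^2=1$. The conditional second moment is therefore $\sum_p\Delta_p^2$. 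Applying the tower law and substituting into Proposition~\ref{proposition:variance-decomposition} gives the stated formula.

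There is no real obstacle here. The only point needing care is that the pairing may depend on exogenous randomness $U_M$, so we must condition on $U_M$ as well as on $\psi_{1:n}$ before treating the $\Delta_p$ as fixed. Integrability is automatic, since $E[m(\psi)^2]\le E[\ylevel^2]<\infty$ under Assumption~\ref{assumption:sampling}.
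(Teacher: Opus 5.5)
Your proposal is correct and matches the paper's proof essentially step for step. Both condition on $\sigma(\psi_{1:n},U_M)$, rewrite the imbalance as $\sum_p T_p\bigl(m(\psi_{i_p})-m(\psi_{j_p})\bigr)$ with iid Rademacher orientations, kill the cross terms, and then apply the tower law together with Proposition~\ref{proposition:variance-decomposition}; both also note that $E[m(\psi)^2]<\infty$ for integrability.
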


\begin{proof}
By Jensen's inequality and Assumption~\ref{assumption:sampling}, $E[m(\psi)^2] \leq E[\ylevel^2] <\infty$, so all second moments below are finite.
Let $U\indep W_{1:n}$ denote exogenous randomness that may be used to break ties in the pairing and set $\mathcal F_n=\sigma(\psi_{1:n},U)$.
Choose an $\mathcal F_n$-measurable enumeration $(i_p,j_p)_{p=1}^{n/2}$ of the pairs.
Then the within-pair treatments are $Z_{i_p}=-Z_{j_p} = T_p$ for iid Rademacher pair orientations $(T_p)_{p=1}^{n/2} \indep \mc W_n$.
Set within-pair gap $d_p=m(\psi_{i_p})-m(\psi_{j_p})$.
Then $\sum_{i=1}^nZ_im(\psi_i)=\sum_{p=1}^{n/2}T_pd_p$, and we may calculate
\begin{align*}
E\bigg[\bigg(\sum_{i=1}^nZ_im(\psi_i)\bigg)^2\bigg|\mathcal F_n\bigg]&=E\bigg[\bigg(\sum_{p=1}^{n/2}T_pd_p\bigg)^2\bigg|\mathcal F_n\bigg] =\sum_{p,q=1}^{n/2}d_pd_qE[T_pT_q|\mathcal F_n]=\sum_{p=1}^{n/2}d_p^2.
\end{align*}
Taking expectations and applying Proposition~\ref{proposition:variance-decomposition} proves the result.
\end{proof}

\subsubsection{A Lower Bound for Matching}

The next lemma gives a geometric lower bound that holds for every pairing rule.

\begin{lem}[Uniform Pair-Distance Bound]\label{lemma:proof-uniform-pair-distance}
Let $n\geq2$ be even, let $\psi_1,\ldots,\psi_n$ be iid uniform on $[0,1]^d$, let $U\indep\psi_{1:n}$, and let $(i_p,j_p)_{p=1}^{n/2}$ be any perfect pairing measurable with respect to $\sigma(\psi_{1:n},U)$.
If $\omega_d$ is the volume of the unit Euclidean ball in $\mathbb R^d$, then
\begin{equation}\label{equation:proof-uniform-pair-distance}
E\bigg[\frac{2}{n}\sum_{p=1}^{n/2}\|\psi_{i_p}-\psi_{j_p}\|_2^2\bigg]\geq\frac12\bigg(2\omega_dn\bigg)^{-2/d}.
\end{equation}
\end{lem}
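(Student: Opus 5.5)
The plan is to lower bound each unit's squared distance to its partner by its squared nearest-neighbor distance, then lower bound nearest-neighbor distances with a volume argument. Since $(i_p,j_p)$ is a perfect pairing, every unit $i$ has a unique mate $\mu(i)$. Hence
\[
\frac{2}{n}\sum_{p=1}^{n/2}\|\psi_{i_p}-\psi_{j_p}\|_2^2=\frac1n\sum_{i=1}^n\|\psi_i-\psi_{\mu(i)}\|_2^2\geq\frac1n\sum_{i=1}^n D_i^2,
\qquad D_i\equiv\min_{j\neq i}\|\psi_i-\psi_j\|_2 .
\]
This bound holds pointwise for every pairing, regardless of how it depends on $\psi_{1:n}$ and $U$. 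So the measurability condition plays no role beyond making the left side a well-defined random variable, and $U$ disappears. By exchangeability, the expectation of the right side is $E[D_1^2]$.

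Next I bound $E[D_1^2]$ via its tail. Condition on $\psi_1$. For each $j\neq1$, $P(\|\psi_j-\psi_1\|_2\leq r\mid\psi_1)\leq \mathrm{vol}(B(\psi_1,r)\cap[0,1]^d)\leq\omega_dr^d$. A union bound over the $n-1$ other points gives
\[
P(D_1\leq r)\leq (n-1)\omega_dr^d\leq n\omega_d r^d .
\]
Now pick the threshold $r_0=(2\omega_dn)^{-1/d}$, so that $P(D_1\leq r_0)\leq 1/2$ and therefore $P(D_1>r_0)\geq1/2$. Markov's inequality in reverse form then gives
\[
E[D_1^2]\geq r_0^2P(D_1>r_0)\geq\tfrac12(2\omega_dn)^{-2/d},
\]
which is exactly Equation~\eqref{equation:proof-uniform-pair-distance}.

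There is no serious obstacle. The one step to state carefully is the reduction from pair distances to nearest-neighbor distances: the pairing is an involution without fixed points, so the sum over pairs counts each unit's partner distance exactly once after the factor $2$ normalization. Also, the ball volume bound needs no boundary correction, since intersecting with the cube can only shrink the volume and the density is exactly one.
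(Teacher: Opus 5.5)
Your proposal is correct and follows essentially the same route as the paper: a pathwise reduction to nearest-neighbor distances, then a union bound at the threshold $r_0=(2\omega_d n)^{-1/d}$ giving $E[D_1^2]\geq r_0^2/2$. The only difference is cosmetic: you rewrite the pair sum using the partner involution, whereas the paper uses the per-pair inequality $\|\psi_i-\psi_j\|_2^2\geq (N_i^2+N_j^2)/2$, and both yield the identical bound.
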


\begin{proof}
For each unit, let $N_i=\min_{j\neq i}\|\psi_i-\psi_j\|_2$ denote its nearest-neighbor distance.
For every realization of $\psi_{1:n}$ and every pairing, a pair $\{i,j\}$ satisfies $\|\psi_i-\psi_j\|_2^2\geq\max\{N_i^2,N_j^2\}\geq(N_i^2+N_j^2)/2$.
Since $\sqcup_{p=1}^{n/2}\{i_p,j_p\}=[n]$, summing this inequality over the pairs gives
\begin{equation*}
\sum_{p=1}^{n/2}\|\psi_{i_p}-\psi_{j_p}\|_2^2\geq\sum_{p=1}^{n/2}\frac{N_{i_p}^2+N_{j_p}^2}{2}=\frac12\sum_{i=1}^nN_i^2.
\end{equation*}

Let $B(x,r)$ denote the Euclidean ball of radius $r$ centered at $x$.
Conditional on $\psi_i=x$, the remaining covariates are independent and uniform on $[0,1]^d$.
Hence, for every $r>0$, the union bound gives
\begin{align*}
P(N_i\leq r|\psi_i=x)&\leq\sum_{j\neq i}P\bigl(\psi_j\in B(x,r)\cap[0,1]^d|\psi_i=x\bigr)\\
&=(n-1)\mathrm{vol}\bigl(B(x,r)\cap[0,1]^d\bigr)\leq(n-1)\omega_dr^d<n\omega_dr^d.
\end{align*}
Set $r_0=(2\omega_d n)^{-1/d}$.
The preceding inequality gives $P(N_i>r_0)\geq1/2$.
Since $N_i^2\geq r_0^2\one\{N_i>r_0\}$, it follows that $E[N_i^2]\geq r_0^2P(N_i>r_0)\geq r_0^2/2$.
Taking expectations in the pathwise bound and using identical distributions gives
\begin{equation*}
E\bigg[\frac2n\sum_{p=1}^{n/2}\|\psi_{i_p}-\psi_{j_p}\|_2^2\bigg]\geq\frac1n\sum_{i=1}^nE[N_i^2]\geq\frac{r_0^2}{2}.
\end{equation*}
Substituting the definition of $r_0$ proves Equation~\eqref{equation:proof-uniform-pair-distance}.
\end{proof}

\begin{proof}[Proof of Theorem~\ref{theorem:matching-lower-bound}]
Fix $M\in\mathcal M_n$.
Let $U\indep W_{1:n}$ be its exogenous tie-breaking seed and set $\mathcal F_n=\sigma(\psi_{1:n},U)$, so the resulting pairs are $\mathcal F_n$-measurable.
Draw linear coefficient $\gamma\sim\Unif(\mathbb S^{d-1})$ with $\gamma \indep (\psi_{1:n},U)$.
For each $\gamma$, let $P_\gamma$ be the data law under which $\psi\sim\Unif([0,1]^d)$ and $Y(1)=Y(0)=m_\gamma(\psi)=\gamma'(\psi-\one_d/2)$.
Since $\|\gamma\|_2=1$, we have $P_\gamma\in\mathcal P_{\mathrm{lin}}$.
Under $P_\gamma$, $\tau=0$ and $v_a^2(\psi)=0$ for $a\in\{0,1\}$, so $V^*(P_\gamma)=\var\bigl(\tau(\psi)\bigr)+2E[v_1^2(\psi)]+2E[v_0^2(\psi)]=0$.

Write $A=E_\gamma[\gamma\gamma']$.
The standard isotropy calculation for the uniform law on $\mathbb S^{d-1}$ gives $A=d^{-1}I_d$, and hence $E_\gamma[(\gamma'u)^2]=u'Au=d^{-1}\|u\|_2^2$ for every $u\in\mathbb R^d$.

Set $\Delta_p=\psi_{i_p}-\psi_{j_p}$.
Since $P_\gamma\in\mathcal P_{\mathrm{lin}}$ for every $\gamma\in\mathbb S^{d-1}$,
\begin{align}
\sup_{P\in\mathcal P_{\mathrm{lin}}}\mc V_n(M,P)&\geq E_\gamma[\mc V_n(M,P_\gamma)]=\frac4nE\bigg[\sum_{p=1}^{n/2}E_\gamma\bigl[(\gamma'\Delta_p)^2|\mathcal F_n\bigr]\bigg] \notag\\
&=\frac{4}{nd}E\bigg[\sum_{p=1}^{n/2}\|\Delta_p\|_2^2\bigg]\geq\frac1d\bigl(2\omega_d\bigr)^{-2/d}n^{-2/d}. \label{equation:proof-linear-prior-bound}
\end{align}
The first inequality uses $P_\gamma\in\mathcal P_{\mathrm{lin}}$ for every $\gamma\in\mathbb S^{d-1}$.
Lemma~\ref{lemma:proof-matched-pairs-excess} and Tonelli's theorem give the first equality.
Independence of $\gamma$ from $\mathcal F_n$ and the preceding isotropy identity give the equality on the second line.
The final inequality is Lemma~\ref{lemma:proof-uniform-pair-distance}.

It remains lower bound $d^{-1} (2\omega_d)^{-2/d}$ uniformly in $d$.
For $d=1$, $\omega_1=2$, so this term equals $1/16>1/(2\pi e)$.
For $d\geq2$, let $r=d/2$.
Using $\omega_d=\pi^{d/2}/\Gamma(d/2+1)$ gives $d^{-1}(2\omega_d)^{-2/d}=(2r)^{-1}\Gamma(r+1)^{1/r}\pi^{-1}2^{-1/r}$.
Since $r\geq1$, the Stirling lower bound gives $\Gamma(r+1)\geq\sqrt{2\pi r}(r/e)^r\geq2(r/e)^r$, so that $\Gamma(r+1)^{1/r}\geq2^{1/r}r/e$.
Substitution into the preceding expression gives $(2r)^{-1}\Gamma(r+1)^{1/r}\pi^{-1}2^{-1/r}\geq(2r)^{-1}(2^{1/r}r/e)\pi^{-1}2^{-1/r}=1/(2\pi e)$.
The bound holds for the arbitrary design $M$ fixed at the start of the proof.
Taking the infimum over $M\in\mathcal M_n$ completes the proof.
\end{proof}

\begin{proof}[Proof of Corollary~\ref{corollary:matching-inefficiency}]
Let $R_n\equiv\inf_{M\in\mathcal M_n}\sup_{P\in\mathcal P_{\mathrm{lin}}}\mc V_n(M,P)$.
By Theorem~\ref{theorem:matching-lower-bound}, $R_n\geq(2\pi e)^{-1}n^{-2/d_n}$.
If $d_n\geq c\log n$, then $n^{-2/d_n}=\exp(-2\log n/d_n)\geq e^{-2/c}$.
This proves Equation~\eqref{equation:matching-inefficiency}.
\end{proof}

\subsubsection{Matching over Rough Outcome Classes}

The following geometric lemma turns 2-swap stability into a bound on the total within-pair distance.

\begin{lem}[Pair Distances]\label{lemma:proof-stable-pair-distance}
Suppose $n\geq2$ is even, $d\geq3$, $0<\beta\leq1$, and Assumption~\ref{assumption:matching-cost} holds.
For any $\psi_{1:n} \subseteq [0,1]^d$ and every 2-swap-stable pairing,
\begin{equation}
\frac{4}{n} \sum_{p=1}^{n/2} \|\psi_{i_p}-\psi_{j_p}\|_2^{2\beta} \leq 24\bigl(44d\Lambda^2\bigr)^{2\beta} n^{-2\beta/d}.
\end{equation}
\end{lem}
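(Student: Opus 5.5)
The plan is to convert 2-swap stability into a packing statement in the product space $[0,1]^d\times[0,1]^d$ of pair configurations $(\psi_{i_p},\psi_{j_p})$. From that packing we bound the number of pairs whose Euclidean length exceeds any threshold $r$, and then sum over pairs ordered by length. The argument is entirely deterministic in $\psi_{1:n}$.

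\emph{Step 1 (separation of long pairs).} Take two pairs $\{i,j\}$ and $\{k,l\}$ with Euclidean lengths $a,b\geq r$. Suppose, after possibly relabeling $k\leftrightarrow l$, that $\|\psi_i-\psi_k\|_2\leq\delta$ and $\|\psi_j-\psi_l\|_2\leq\delta$ with $\delta=r/(4\Lambda)$. Assumption~\ref{assumption:matching-cost} then gives
\[
c(\psi_i,\psi_k)+c(\psi_j,\psi_l)\leq2\overline\lambda\delta^q=2\underline\lambda(r/4)^q<\underline\lambda(a^q+b^q)\leq c(\psi_i,\psi_j)+c(\psi_k,\psi_l),
\]
which contradicts Equation~\eqref{equation:swap-stability}. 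Because the stability condition takes the minimum over both swaps, the relabeling is harmless. Hence the configurations of distinct long pairs are $\delta$-separated in the metric $\max(\|x-x'\|_2,\|y-y'\|_2)$, even after quotienting by orientation. The $q$-dependence cancels exactly through the definition $\Lambda=(\overline\lambda/\underline\lambda)^{1/q}$, so no constant depends on $q$.

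\emph{Step 2 (counting).} Restrict to pairs with length in the dyadic band $[r,2r]$. Their configurations lie in the set $\{(x,y):x\in[0,1]^d,\ \|y-x\|_2\leq2r\}$, whose volume is at most $\omega_d(2r)^d$. The balls of radius $\delta/2$ (in the product metric) around these configurations are disjoint and have volume $\omega_d^2(\delta/2)^{2d}$, while their union stays inside a slightly enlarged region of volume at most $\omega_d(3r)^d\cdot(1+\delta)^d$. This yields
\[
N_{[r,2r]}\lesssim\frac{(3r)^d(1+\delta)^d}{\omega_d(r/(8\Lambda))^{2d}}\leq\bigl(C\sqrt d\,\Lambda\bigr)^{2d}r^{-d}\cdot(\text{const})^d,
\]
where $\omega_d^{-1/d}\asymp\sqrt d$ by Stirling. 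Summing over dyadic bands gives $\#\{p:\|\psi_{i_p}-\psi_{j_p}\|_2\geq r\}\leq(C d\Lambda^2/r)^d$. The main obstacle is this step: the boundary effects, the dyadic summation, and the ball-volume constants must all be tracked carefully enough to land on the explicit constant $44$. My first attempt would use cube packing with side $\delta/\sqrt d$ in place of balls, so that all volumes are explicit.

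\emph{Step 3 (summation).} Order the pairs by decreasing length $\ell_1\geq\ell_2\geq\cdots\geq\ell_{n/2}$. Step 2 implies $\ell_k\leq Cd\Lambda^2k^{-1/d}$. Therefore
\[
\sum_{k=1}^{n/2}\ell_k^{2\beta}\leq(Cd\Lambda^2)^{2\beta}\sum_{k\leq n/2}k^{-2\beta/d}\leq(Cd\Lambda^2)^{2\beta}\frac{(n/2)^{1-2\beta/d}}{1-2\beta/d}.
\]
Since $d\geq3$ and $\beta\leq1$, we have $1-2\beta/d\geq1/3$, which bounds the factor $1/(1-2\beta/d)$ by $3$; this is where the hypothesis $d\geq3$ enters and is the source of the factor $24$. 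Multiplying by $4/n$ gives the stated bound $24(44d\Lambda^2)^{2\beta}n^{-2\beta/d}$, once the constant from Step 2 is fixed.
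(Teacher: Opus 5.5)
Your Step 1 is the same consequence of 2-swap stability that drives the paper's proof. There it appears as simplicity of a cell multigraph: two band pairs with endpoints in the same two grid cells of diameter $r/(2\Lambda)$ would admit a cost-reducing swap. Your Step 3 is equivalent to the paper's layer-cake integral $\sum_pL_p^{2\beta}=2\beta\int_0^{\sqrt d}t^{2\beta-1}N(t)\,dt$ with $N(t)\le\min\{n/2,Ct^{-d}\}$. If you keep the factor $2$ from the dyadic sum, $N(t)\le 2A^dt^{-d}$ gives $\ell_k\le 2^{1/d}Ak^{-1/d}$, and then exactly the paper's $6\cdot4^{2\beta/d}A^{2\beta}n^{-2\beta/d}\le 24A^{2\beta}n^{-2\beta/d}$.

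The real difference is Step 2. The paper discretizes into cells of side $h=r/(2\sqrt d\Lambda)$, of which at most $(2/h)^d$ meet $[0,1]^d$. It puts one edge per band pair, shows the graph is simple, and bounds every degree by $(11\sqrt d\Lambda)^d$ with a coordinatewise lattice count. This gives $|\mathcal B_j|\le\tfrac12(4\sqrt d\Lambda/r)^d(11\sqrt d\Lambda)^d$, and the product $4\cdot 11$ is where $44$ comes from. Your volume packing in $\mathbb R^{2d}$ replaces the graph and degree count with a single measure comparison.

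As written, however, the proposal does not prove the inequality as stated. The constant in Step 2 is left open, and the choices you display cannot reach $44$. The packed sets have volume of order $\delta^{2d}$, so taking $\delta=r/(4\Lambda)$ costs a factor $16^d$ relative to $\delta=r/\Lambda$:
\begin{itemize}
\item cubes of side $\delta/\sqrt d$ with box-bounded volumes give a per-dimension constant of at least $64\,d\Lambda^2$;
\item your ball estimate with enlargement $2r+\delta\le 3r$ reaches about $170\Lambda^2$ at $d=3$, against $44\cdot 3\,\Lambda^2=132\Lambda^2$.
\end{itemize}

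The remedy is to use Step 1 at full strength. If both swap distances are $<r/\Lambda$, the swapped cost is $<2\overline\lambda(r/\Lambda)^q=2\underline\lambda r^q$, which is a contradiction. So take $s=r/(\Lambda\sqrt d)$, which satisfies $s\le 1$ because $r\le\sqrt d$ for a nonempty band, and place half-open product cubes $(x_p+[-s/2,s/2)^d)\times(y_p+[-s/2,s/2)^d)$ at arbitrarily oriented band configurations.
\begin{itemize}
\item They are disjoint: an intersection forces $\|x_p-x_{p'}\|_\infty<s$ and $\|y_p-y_{p'}\|_\infty<s$, hence both Euclidean distances are $<\sqrt d\,s=r/\Lambda$.
\item Their union lies in $\{(x',y'):x'\in[-s/2,1+s/2]^d,\ \|y'-x'\|_2\le 2r+\sqrt d\,s\le 3r\}$, of volume at most $(1+s)^d\omega_d(3r)^d$.
\end{itemize}
Hence a band contains at most $(1+s)^d\omega_d(3r)^d(d\Lambda^2)^dr^{-2d}\le(6\,d\,\omega_d^{1/d}\Lambda^2/r)^d\le(25\sqrt d\,\Lambda^2/r)^d$ pairs, using $\Gamma(d/2+1)\ge(d/(2e))^{d/2}$. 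This is below $(44d\Lambda^2/r)^d$ for every $d$, so your Step 3 then gives the lemma.

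In fact it gives the sharper bound $24(25\sqrt d\,\Lambda^2)^{2\beta}n^{-2\beta/d}$. That $\sqrt d$ saving is what your route buys. You measure the displacement set $\{y-x\}$ as a Euclidean ball, whereas the paper's degree bound counts lattice cells in a coordinate box around each cell.
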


\begin{proof}
For each pair $p$, write $L_p\equiv\|\psi_{i_p}-\psi_{j_p}\|_2$.
For $t>0$, let $N(t)\equiv|\{p:L_p\geq t\}|$.
We first show
\begin{equation}\label{equation:proof-long-edge-count}
N(t) \leq 2\bigl(44d\Lambda^2\bigr)^d t^{-d} \quad \text{for every }t>0.
\end{equation}
Every pair distance is at most $\sqrt d$, so $N(t) = 0$ and the claim is immediate when $t>\sqrt d$.
Then fix $t\leq\sqrt d$ and partition the pairs with distance at least $t$ into the dyadic bands $\mathcal B_j\equiv\{p:2^jt\leq\|\psi_{i_p}-\psi_{j_p}\|_2<2^{j+1}t\}$ for $j\geq0$.

Fix a nonempty band, write $r=2^jt$, and set $h=r/(2\sqrt d\Lambda)$.
The band's nonemptiness and the diameter of $[0,1]^d$ give $r\leq\sqrt d$, so $h\leq1/2$ because $\Lambda\geq1$.
For $v\in\mathbb Z^d$, let cell $C_v\equiv h(v+[0,1)^d)$.
These half-open cells form a pairwise disjoint partition of $\mathbb R^d$.
Let $\mathcal V_h\equiv\{v:C_v\cap[0,1]^d\neq\varnothing\}$.
Let $v(i)\in\mathcal V_h$ be the unique index satisfying $\psi_i\in C_{v(i)}$.
Define $G_j$ to be the multigraph with vertex set $\mathcal V_h$ and one edge $e_p\equiv\{v(i_p),v(j_p)\}$ for each pair $p\in\mathcal B_j$.
We will show that $G_j$ is simple and has maximum degree at most $D_d\equiv(11\sqrt d\Lambda)^d$.
The edges of $G_j$ are in one-to-one correspondence with pairs $p\in\mathcal B_j$, so the degree-sum identity will then give $|\mathcal B_j|\leq|\mathcal V_h|D_d/2$.

We begin with simplicity of $G_j$.
Each cell has diameter $h\sqrt d=r/(2\Lambda)<r$.
If $p\in\mathcal B_j$, then $\|\psi_{i_p}-\psi_{j_p}\|_2\geq r$, so $v(i_p)\neq v(j_p)$ by this cell-diameter bound.
Thus $G_j$ has no loops.
Next, suppose for contradiction that two edges join the same vertices, $e_p=e_q=\{u,v\}$ for distinct $u,v\in\mathcal V_h$.
Relabeling endpoints if necessary, $\psi_{i_p},\psi_{i_q}\in C_u$ and $\psi_{j_p},\psi_{j_q}\in C_v$.
The alternative pairs $\{i_p,i_q\}$ and $\{j_p,j_q\}$ have lengths at most $r/(2\Lambda)$.
By Assumption~\ref{assumption:matching-cost}, their combined cost satisfies $c(\psi_{i_p},\psi_{i_q})+c(\psi_{j_p},\psi_{j_q})\leq2\overline\lambda(r/(2\Lambda))^q=2^{1-q}\underline\lambda r^q<2\underline\lambda r^q$, where the equality uses $\Lambda^q=\overline\lambda/\underline\lambda$.
By Assumption~\ref{assumption:matching-cost} again, the original pairs satisfy $c(\psi_{i_p},\psi_{j_p})+c(\psi_{i_q},\psi_{j_q})\geq2\underline\lambda r^q$.
This 2-swap would then strictly reduce their combined cost, contradicting \eqref{equation:swap-stability}.
Thus distinct pairs in $\mathcal B_j$ induce distinct edges, so $G_j$ is simple.

We next bound the maximum degree of $G_j$.
Suppose an edge joins $C_u$ and $C_v$, and write its endpoints as $x\in C_u$ and $y\in C_v$.
Its length is less than $2r$ by the definition of $\mathcal B_j$.
Because $x_\ell\in[hu_\ell,h(u_\ell+1))$ and $y_\ell\in[hv_\ell,h(v_\ell+1))$ for every index $\ell\in[d]$, we have $2r>\|x-y\|_2\geq|x_\ell-y_\ell|\geq h(|u_\ell-v_\ell|-1)$.
Thus $|u_\ell-v_\ell|<2r/h+1=4\sqrt d\Lambda+1$.
By this bound, for fixed $u_\ell$ the number of possible integer values of $v_\ell$ is at most $2(4\sqrt d\Lambda+1)+1=8\sqrt d\Lambda+3\leq11\sqrt d\Lambda$.
For fixed $u\in\mathcal V_h$, the number of vertices $v\in\mathcal V_h$ satisfying these $d$ coordinatewise restrictions is therefore at most $\Pi_{\ell=1}^d(11\sqrt d\Lambda)=(11\sqrt d\Lambda)^d=D_d$.
$G_j$ is simple, so at most one edge joins $u$ to each such $v$.
Then $G_j$ has maximum degree at most $D_d$.

It remains to count the graph's vertices.
If $v\in\mathcal V_h$, then for each coordinate $\ell$ the interval $[hv_\ell,h(v_\ell+1))$ must intersect $[0,1]$ for $v_\ell \in \mathbb Z$.
This implies $0\leq v_\ell\leq\lfloor h^{-1}\rfloor$, and hence $|\mathcal V_h|\leq(\lfloor h^{-1}\rfloor+1)^d\leq(h^{-1}+1)^d\leq(2/h)^d$, since $h\leq1/2$.
Then the degree-sum identity shows the number of edges $|\mathcal B_j|$ of $G_j$ is
\begin{equation}\label{equation:proof-stable-band-count}
|\mathcal B_j|=\frac12\sum_{u\in\mathcal V_h}\mathrm{deg}_j(u)\leq\frac12|\mathcal V_h|D_d\leq\frac12\left(\frac2h\right)^d\bigl(11\sqrt d\Lambda\bigr)^d\leq\bigl(44d\Lambda^2\bigr)^d(2^jt)^{-d}.
\end{equation}
The bands partition the pairs counted by $N(t)$, so summing Equation~\eqref{equation:proof-stable-band-count} over $j\geq0$ proves Equation~\eqref{equation:proof-long-edge-count} because $\sum_{j\geq0}2^{-jd}=(1-2^{-d})^{-1}\leq2$.
Set $C\equiv2(44d\Lambda^2)^d$, so Equation~\eqref{equation:proof-long-edge-count} becomes $N(t)\leq Ct^{-d}$ for every $t>0$.
Since $L_p\leq\sqrt d$ for every pair,
\begin{align*}
\sum_{p=1}^{n/2}L_p^{2\beta}&=2\beta\sum_{p=1}^{n/2}\int_0^{\sqrt d}t^{2\beta-1}\one\{L_p\geq t\}dt\\
&=2\beta\int_0^{\sqrt d}t^{2\beta-1}\sum_{p=1}^{n/2}\one\{L_p\geq t\}dt=2\beta\int_0^{\sqrt d}t^{2\beta-1}N(t)dt.
\end{align*}
The bound $N(t) \le Ct^{-d}$ diverges as $t \to 0$, whereas the trivial count gives $N(t)\leq n/2$ for any $t$.
Let $t_0\equiv(2C/n)^{1/d}$ be the unique point at which $Ct_0^{-d}=n/2$.
Then $n/2$ is the sharper bound for $t\leq t_0$, while $Ct^{-d}$ is the sharper bound for $t\geq t_0$.
Suppose first that $t_0\leq\sqrt d$.
Because $d\geq3$ and $\beta\leq1$, we have $2\beta<d$.
Splitting the integral,
\begin{align}
\sum_{p=1}^{n/2}L_p^{2\beta}&\leq2\beta\int_0^{t_0}t^{2\beta-1}\frac n2dt+2\beta C\int_{t_0}^{\sqrt d}t^{2\beta-d-1}dt\notag\\
&\leq\frac n2t_0^{2\beta}+\frac{2\beta}{d-2\beta}Ct_0^{2\beta-d}\leq\frac{3n}{2}t_0^{2\beta}. \label{equation:proof-stable-moment}
\end{align}
The final inequality uses $2\beta/(d-2\beta)\leq2$ and $Ct_0^{-d}=n/2$.
If $t_0>\sqrt d$, then $L_p<t_0$ for every pair, so the final bound in Equation~\eqref{equation:proof-stable-moment} continues to hold.
Consequently,
\begin{align}
\frac{4}{n}\sum_{p=1}^{n/2}\|\psi_{i_p}-\psi_{j_p}\|_2^{2\beta}&\leq6t_0^{2\beta}=6\cdot4^{2\beta/d}\bigl(44d\Lambda^2\bigr)^{2\beta}n^{-2\beta/d}\leq24\bigl(44d\Lambda^2\bigr)^{2\beta}n^{-2\beta/d}. \notag
\end{align}
The final inequality follows from $2\beta/d\leq2/3<1$.
\end{proof}

\begin{proof}[Proof of Theorem~\ref{theorem:stable-matching}]
Fix $P\in\mathcal P_\beta$.
Then the excess variance $\mc V_n(M_c,P)$ is equal to
\begin{align*}
\frac4nE\bigg[\sum_{p=1}^{n/2}\big(m(\psi_{i_p})-m(\psi_{j_p})\big)^2\bigg] \leq\frac4nE\bigg[\sum_{p=1}^{n/2}\|\psi_{i_p}-\psi_{j_p}\|_2^{2\beta}\bigg]\leq24(44d\Lambda^2)^{2\beta}n^{-2\beta/d}.
\end{align*}
The first expression follows from Lemma~\ref{lemma:proof-matched-pairs-excess}, and the first inequality from the H\"older condition $[m]_\beta\leq1$.
The second inequality follows from Lemma~\ref{lemma:proof-stable-pair-distance}.
The right hand side is independent of $P$.
Taking the supremum over $P\in\mathcal P_\beta$ proves the result.
\end{proof}

\clearpage
\hypersetup{pageanchor=false}
\pagenumbering{arabic}\renewcommand{\thepage}{\arabic{page}}

\begin{center}
{\Large Supplementary Online Appendix to ``The Limits of Experimental Design: Covariate Balance Beyond Low Dimension''}
\vskip 24pt
{\large Max Cytrynbaum}
\end{center}

\subsection{Design-Agnostic Lower Bound}\label{appendix:proofs-design-agnostic}

The all-design lower bound uses signed functions that are constant on the interior of small cells and vanish near their boundaries.

\begin{lem}[Signed Tabletop Functions]\label{lemma:proof-signed-tabletops}
Fix $0<\beta\leq1$, integers $d,q\geq1$, and $h=q^{-1}$.
Partition $[0,1]^d$ into $q^d$ cells $C$ of side length $h$.
There exist functions $F_C:[0,1]^d\to[0,1]$ and tabletop subsets $B_C\subseteq C$ such that $\mathrm{supp}(F_C)\subseteq C^\circ$ and $F_C|_{B_C}=1$.
Then $|B_C|\geq|C|/2=1/(2q^d)$.
For $(\omega_C)_C\in\{-1,1\}^{q^d}$, $[m_\omega]_\beta\leq1$ for
\begin{equation}\label{equation:proof-tabletop-witness}
m_\omega(x) \equiv a\sum_C\omega_C F_C(x), \qquad a \equiv 2^{-1-2\beta}h^\beta d^{-\beta}.
\end{equation}
\end{lem}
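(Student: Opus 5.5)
We construct the functions explicitly and verify the H\"older bound by a two-case argument. Let $C=h(v+[0,1)^d)$ have center $c_C$, and use the sup-norm distance $\rho_C(x)=\norm{x-c_C}_\infty$. Set
\begin{equation*}
F_C(x)=\min\Bigl\{1,\ \max\Bigl\{0,\ \tfrac{4}{h}\bigl(\tfrac h2-\rho_C(x)\bigr)-\epsilon\Bigr\}\Bigr\}
\end{equation*}
for a small fixed $\epsilon$, or more simply a piecewise-linear ramp. The ramp equals $1$ on the concentric cube of side $h/2^{1/d}$ and decays linearly to $0$ strictly inside $C$. Choose the tabletop $B_C$ as the concentric closed cube of side $\ell=2^{-1/d}h$, so that $|B_C|=h^d/2=|C|/2$. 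Let $F_C$ drop from $1$ on $\partial B_C$ to $0$ at sup-distance $h/2-\delta$ from the center, with $\delta>0$ tiny. This keeps $\mathrm{supp}(F_C)\subseteq C^\circ$. The ramp width is $w=(h-\ell)/2-\delta$. Since $2^{-1/d}\le 1-\frac{\log 2}{d}\cdot\frac12$ for $d\ge1$, we have $(h-\ell)/2\ge h\log2/(4d)\ge h/(8d)$, so we can take $w\ge h/(16d)$ after fixing $\delta$. Hence $F_C$ is $w^{-1}$-Lipschitz in $\norm{\cdot}_\infty$, and therefore in $\norm{\cdot}_2$, with constant at most $16d/h$. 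I expect the exact constant bookkeeping to be the only delicate point. If the constant $a=2^{-1-2\beta}h^\beta d^{-\beta}$ turns out not to match my ramp width, I will narrow the tabletop slightly, since $|B_C|\ge|C|/2$ only needs a cube of side $2^{-1/d}h$, and choose the ramp so that the Lipschitz constant is $\le 4d/h$ or similar.

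To bound $[m_\omega]_\beta$, fix $x\ne y$ and write $r=\norm{x-y}_2$. Each $m_\omega$ takes values in $[-a,a]$. The supports of the $F_C$ are disjoint, so at most two terms are nonzero at $x,y$.

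\emph{Case $r\ge h/(Ld)$}, with $L$ the Lipschitz constant factor. Then $|m_\omega(x)-m_\omega(y)|\le 2a$, and $2a\,r^{-\beta}\le 2a(Ld/h)^\beta$.

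\emph{Case $r<h/(Ld)$.} If $x,y$ lie in the same cell's support, or one lies outside all supports, the Lipschitz bound gives $|m_\omega(x)-m_\omega(y)|\le a(Ld/h)r$. If they lie in different cells' supports, the triangle inequality through a boundary point $z$ on the segment, where both functions vanish, gives a bound of at most $a(Ld/h)(\norm{x-z}+\norm{z-y})=a(Ld/h)r$. Then
\begin{equation*}
a\frac{Ld}{h}r^{1-\beta}\le a\Bigl(\frac{Ld}{h}\Bigr)^{\beta},
\end{equation*}
using $r<h/(Ld)$ and $1-\beta\ge0$.

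Both cases give $[m_\omega]_\beta\le 2a(Ld/h)^\beta$. With $a=2^{-1-2\beta}h^\beta d^{-\beta}$ this equals $(L/4)^\beta$, which is $\le1$ provided the ramp is $(4d/h)$-Lipschitz in Euclidean norm. The main obstacle is therefore to arrange a ramp width of at least $h/(4d)$ while keeping $|B_C|\ge|C|/2$. This requires $(h-\ell)/2\ge h/(4d)$, i.e.\ $2^{-1/d}\le 1-\frac1{2d}$. That inequality holds for all $d\ge1$, because $2^{-x}\le 1-x/2$ for $x\in[0,1]$ by convexity, with equality at the endpoints. So I take $\ell=2^{-1/d}h$ and ramp width exactly $(h-\ell)/2\ge h/(4d)$.

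For the support requirement, the ramp reaches $0$ exactly at $\partial C$. Shrinking it by an arbitrarily small margin, or noting that $F_C$ vanishes on $\partial C$ so that its support closure meets $\partial C$, needs care. I will therefore define the ramp to vanish at sup-distance $h/2-\eta$ and absorb $\eta$ via the slack in $2^{-x}\le1-x/2$ for $d\ge2$, handling $d=1$ directly.
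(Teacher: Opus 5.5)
Your construction is the paper's up to parameters. On the cube, the paper's $\delta_C(x)$ equals $(h/2-\norm{x-c_C}_\infty)\vee 0$, so its $F_C=T(\delta_C/b)$ is also a sup-norm ramp. Your crossing argument through a zero of $m_\omega$ on the segment also matches the paper's.

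The gap is in the H\"older step. Splitting at $r=h/(Ld)$ yields only $[m_\omega]_\beta\le 2a(Ld/h)^\beta=(L/4)^\beta$, so you need ramp width $w\ge h/(4d)$. With a tabletop of side at least $2^{-1/d}h$, this leaves a margin of $(h-\ell)/2-h/(4d)$, which is zero when $d=1$ because $2^{-1}=1-\tfrac12$. So in $d=1$ the ramp must hit $0$ exactly on $\partial C$. Then the closure of $\{F_C\neq0\}$ is the closed cell and $\mathrm{supp}(F_C)\subseteq C^\circ$ fails. Any margin $\eta>0$ instead forces $w<h/4$, and your bound exceeds $1$. ``Handling $d=1$ directly'' is therefore not bookkeeping: your two-case bound cannot certify it. Separately, your first displayed formula gives a tabletop of side $h(1-\epsilon)/2$, far too small for $d\ge2$, so drop it.

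The fix is to split at the crossover $r=2w$, where $2a=(a/w)r$, or equivalently to use $\min\{u,v\}\le u^{1-\beta}v^\beta$ as the paper does. This gives $[m_\omega]_\beta\le(2a)^{1-\beta}(a/w)^\beta=2^{1-\beta}aw^{-\beta}=(L/8)^\beta$ with $L=h/(dw)$. So ramp width $h/(8d)$ suffices, leaving room for a positive margin in every dimension.

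With your tabletop of side $2^{-1/d}h$, take $\eta=h/(8d)$. Your convexity inequality then gives $w=(h-\ell)/2-\eta\ge h/(8d)$. The paper instead takes margin and ramp width both equal to $b=h/(8d)$. Its tabletop $\{\delta_C\ge2b\}$ then contains a cube of side $h(1-1/(2d))$, of volume at least $h^d/2$ by Bernoulli's inequality.

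If $\mathrm{supp}$ is read only as ``$F_C=0$ off $C^\circ$'', your zero-margin version already works. That is all the later lower-bound proofs use, but the lemma as stated requires the closed support to lie in $C^\circ$.
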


\begin{proof}
For a cell $C$, define $\delta_C(x)\equiv\inf_{y\in[0,1]^d\setminus C^\circ}\|x-y\|_2$.
Then $\delta_C(x)$ is the distance to the nearest face of $C$ when $x\in C^\circ$, and $\delta_C(x)=0$ otherwise.
Let $T(t)\equiv((t-1)\vee0)\wedge1$.
Define $F_C(x)=T(\delta_C(x)/b)$ for buffer $b=h/(8d)$.
Then $F_C(x)=0$ when $\delta_C(x)\leq b$, $F_C(x)=(\delta_C(x)-b)/b$ when $b<\delta_C(x)<2b$, and $F_C(x)=1$, the ``tabletop'', when $\delta_C(x)\geq2b$.
In particular, $F_C$ takes values in $[0,1]$ and is supported in $C^\circ$.

The triangle inequality gives $|\delta_C(x)-\delta_C(y)|\leq\|x-y\|_2$.
Each of the maps $t\mapsto t-1$, $t\mapsto t\vee0$, and $t\mapsto t\wedge1$ is one-Lipschitz, so $T$ is one-Lipschitz as their composition.
Therefore, $|F_C(x)-F_C(y)|\leq b^{-1}|\delta_C(x)-\delta_C(y)|\leq b^{-1}\|x-y\|_2$.

For a sign vector $\omega$, write $F_\omega\equiv\sum_C\omega_CF_C$.
We next show that $F_\omega$ is $b^{-1}$ Lipschitz on the full cube.
If $x,y\in C$, then $F_\omega(x)=\omega_CF_C(x)$ and $F_\omega(y)=\omega_CF_C(y)$, so the previous bound proves the claim.
Now let $x\in C$ and $y\in C'$ for distinct cells $C$ and $C'$.
Parameterize $\gamma(t)=(1-t)x+ty$ for $t\in[0,1]$.
Set $t_x=0$ when $x\in\partial C$, and otherwise set $t_x=\sup\{t\in[0,1]:\gamma(t)\in C^\circ\}$.
Set $t_y=1$ when $y\in\partial C'$, and otherwise set $t_y=\inf\{t\in[0,1]:\gamma(t)\in(C')^\circ\}$.
Because $\gamma$ is affine and the cell interiors are convex, the two preimages are intervals.
Their disjointness gives $0\leq t_x\leq t_y\leq1$, while continuity gives $z_x=\gamma(t_x)\in\partial C$ and $z_y=\gamma(t_y)\in\partial C'$.
Since every $F_C$ is supported in a cell interior, $F_\omega(z_x)=F_\omega(z_y)=0$.
Moreover $x$ and $z_x$ both lie in $C$, while $z_y$ and $y$ both lie in $C'$, so the same-cell bound above applies to each of these two pairs.
Then calculate
\begin{align*}
&|F_\omega(x)-F_\omega(y)|\leq|F_\omega(x)-F_\omega(z_x)|+|F_\omega(z_y)-F_\omega(y)|\\
&\leq b^{-1}\bigl(\|x-z_x\|_2+\|z_y-y\|_2\bigr)=b^{-1}(t_x+1-t_y)\|x-y\|_2\leq b^{-1}\|x-y\|_2.
\end{align*}
This proves the claim.
By disjoint support, $\|F_\omega\|_\infty\leq1$, so $\|m_\omega\|_\infty\leq a$ and $|m_\omega(x)-m_\omega(y)|\leq2a$.
The preceding Lipschitz bound is $|m_\omega(x)-m_\omega(y)|\leq(a/b)\|x-y\|_2$.
Interpolating between these two bounds using $\min\{u,v\}\leq u^{1-\beta}v^\beta$ for $u,v\geq0$,
\begin{align}
|m_\omega(x)-m_\omega(y)|&\leq\min\left\{2a,\frac ab\|x-y\|_2\right\}\notag\leq(2a)^{1-\beta}\left(\frac ab\|x-y\|_2\right)^\beta=\|x-y\|_2^\beta.
\end{align}
For the final equality, note that $b=h/(8d)$ and Equation~\eqref{equation:proof-tabletop-witness} give $a=2^{\beta-1}b^\beta$, so $(2a)^{1-\beta}(a/b)^\beta=2^{1-\beta}ab^{-\beta}=2^{1-\beta}2^{\beta-1}=1$.
Hence $[m_\omega]_\beta\leq1$.

Let $B_C\equiv\{x\in C:\delta_C(x)\geq2b\}$, so $F_C=1$ on $B_C$.
Write $C=\prod_{\ell=1}^d[c_\ell,c_\ell+h]$ and set $C^-=\prod_{\ell=1}^d[c_\ell+2b,c_\ell+h-2b]$.
If $x\in C^-$ and $y\in[0,1]^d\setminus C^\circ$, then some $\ell$ satisfies $y_\ell\leq c_\ell$ or $y_\ell\geq c_\ell+h$, and hence $\|x-y\|_2\geq|x_\ell-y_\ell|\geq2b$.
Taking the infimum over $y$ gives $\delta_C(x)\geq2b$, so $C^-\subseteq B_C$ and
\begin{equation*}
|B_C|\geq(h-4b)^d=h^d\bigg(1-\frac{1}{2d}\bigg)^d\geq\frac{h^d}{2}=\frac{|C|}{2}.
\end{equation*}
The final inequality is Bernoulli's inequality.
\end{proof}

\begin{proof}[Proof of Theorem~\ref{theorem:design-agnostic-lower-bound}]
Let $(F_C,B_C)_C$ be the functions and sets from Lemma~\ref{lemma:proof-signed-tabletops} for cells $C$ of side length $h=Q^{-1}$, where $Q=\lceil n^{1/d}\rceil$.
The number of cells is $K=Q^d$.
For each sign vector $\omega=(\omega_C)_C\in\{-1,1\}^K$, define a law $P_\omega$ by $\psi\sim\mathrm{Unif}([0,1]^d)$ and $Y(1)=Y(0)=m_\omega(\psi)$, as in Equation~\eqref{equation:proof-tabletop-witness}.
Then $P_\omega$ satisfies Assumption~\ref{assumption:sampling}, and Lemma~\ref{lemma:proof-signed-tabletops} gives $[m_\omega]_\beta\leq1$.
Thus $P_\omega\in\mathcal P_\beta$.
Moreover, $\tau(\psi)=0$ and $v_1^2(\psi)=v_0^2(\psi)=0$, so Equation~\eqref{equation:efficiency-bound} gives $V^*(P_\omega)=0$.

Now fix $\sigma\in\mathcal D_n$.
Let $E_\psi$ denote expectation under $\psi_i\simiid\mathrm{Unif}([0,1]^d)$, and let $E_{\psi,\sigma}$ denote expectation under the joint law of $(\psi_{1:n},Z)$.
For each cell, define imbalance $G_C=\sum_{i=1}^nZ_iF_C(\psi_i)$.
Let $\omega$ be distributed uniformly on $\{-1,1\}^K$ independently of $(\psi_{1:n},Z)$, and let $E_\omega$ denote expectation over $\omega$.
Then
\begin{align}
\sup_{P\in\mathcal P_\beta}\mc V_n(\sigma,P)&\geq E_\omega[\mc V_n(\sigma,P_\omega)]=\frac4nE_{\psi,\sigma} E_\omega\bigg[\bigg(\sum_{i=1}^nZ_im_\omega(\psi_i)\bigg)^2\bigg]\notag\\
&=\frac{4a^2}{n}E_{\psi,\sigma} E_\omega\bigg[\bigg(\sum_C\omega_CG_C\bigg)^2\bigg]=\frac{4a^2}{n}E_{\psi,\sigma}\bigg[\sum_CG_C^2\bigg]. \label{equation:proof-bump-prior-average}
\end{align}
The inequality follows since $P_\omega\in\mathcal P_\beta$ for every $\omega$.
The first equality follows from Proposition~\ref{proposition:variance-decomposition} and Tonelli's theorem applied to the independent law of $\omega$ and the joint law of $(\psi_{1:n},Z)$.
The second equality uses Equation~\eqref{equation:proof-tabletop-witness} and the definition of $G_C$.
The final equality follows from $E_\omega[\omega_C\omega_{C'}]=\one\{C=C'\}$.

For each cell, set $I_C=\{i\in[n]:\psi_i\in C\}$ and let $\mathcal S=\{C:I_C=\{i\}\text{ for some }i\in[n]\text{ with }\psi_i\in B_C\}$ be the set of cells containing exactly one sampled unit, with that unit on the tabletop $B_C$.
If $C\in\mathcal S$ and $I_C=\{i\}$, then $F_C(\psi_j)=0$ for $j\neq i$ and $F_C(\psi_i)=1$.
Therefore, $G_C^2=(Z_iF_C(\psi_i))^2=Z_i^2F_C(\psi_i)^2=1$, so $\sum_CG_C^2\geq|\mathcal S|$ for every realization of $Z$.
Taking expectations, since $\mathcal S$ is measurable with respect to $\psi_{1:n}$, we have $E_{\psi,\sigma}[\sum_CG_C^2]\geq E_{\psi,\sigma}[|\mathcal S|]=E_\psi[|\mathcal S|]$.
It remains to lower bound $E_\psi[|\mathcal S|]$.
\begin{align*}
P(C\in\mathcal S)&=\sum_{i=1}^nP\bigl(\psi_i\in B_C,\ \psi_j\notin C\text{ for every }j\neq i\bigr)\\
&=\sum_{i=1}^n|B_C|\bigg(1-\frac1K\bigg)^{n-1}=n|B_C|\bigg(1-\frac1K\bigg)^{n-1}\geq\frac{n}{2K}\bigg(1-\frac1K\bigg)^{n-1}.
\end{align*}
The first equality partitions the event $\{C\in\mathcal S\}$ according to the unique sampled unit in $C$.
The second equality follows by uniformity and independence, since $|C|=1/K$.
The final inequality follows from Lemma~\ref{lemma:proof-signed-tabletops}, which gives $|B_C|\geq1/(2K)$.
Therefore,
\begin{align*}
E_\psi[|\mathcal S|]&=E_\psi\bigg[\sum_C\one\{C\in\mathcal S\}\bigg]=\sum_CP(C\in\mathcal S) \geq\frac n2\bigg(1-\frac1K\bigg)^{n-1}\geq\frac{n}{2e}.
\end{align*}
The first inequality sums the preceding bound over the $K$ total cells.
For the final inequality, note that $K\geq n$, so $(1-1/K)^{n-1}\geq(1-1/K)^{K-1}$.
Then calculate $(1-1/K)^{K-1}=((K-1)/K)^{K-1}=(1+1/(K-1))^{-(K-1)}\geq e^{-1}$, where the final inequality uses $(1+1/m)^m\leq e$.
Equation~\eqref{equation:proof-bump-prior-average} and the work above give
\begin{align*}
\sup_{P\in\mathcal P_\beta}\mc V_n(\sigma,P)&\geq\frac{4a^2}{n}E_\psi[|\mathcal S|]\geq\frac{2a^2}{e} =\frac{2^{-1-4\beta}}{e}h^{2\beta}d^{-2\beta}\geq\frac{2^{-1-6\beta}}{e}d^{-2\beta}n^{-2\beta/d}.
\end{align*}
The equality is from the definition of $a$ in Equation~\eqref{equation:proof-tabletop-witness}.
For the final inequality, $Q=\lceil n^{1/d}\rceil\leq2n^{1/d}$, so $h=Q^{-1}\geq2^{-1}n^{-1/d}$.
Because $\sigma\in\mathcal D_n$ was arbitrary, taking the infimum over $\sigma$ proves the theorem with constant $2^{-1-6\beta}d^{-2\beta}/e$.
\end{proof}

\subsection{Proofs for Section~\ref{section:structure}}\label{appendix:proofs-section4}

Recall $\mathcal F_n=\sigma(\psi_{1:n})$. For a continuous positive semidefinite kernel $W$ satisfying $\sup_{\psi}W(\psi,\psi)\leq1$, write $W_n=[W(\psi_i,\psi_j)]_{i,j=1}^n$ and $K_n=[K(\psi_i,\psi_j)]_{i,j=1}^n$ for $K=1+W$, and set $\Gamma_n\equiv\varphi I_n+(1-\varphi)K_n/2$. Then $\Gamma_n\succeq\varphi I_n$ and $\Gamma_{n,ii}=\varphi+(1-\varphi)K(\psi_i,\psi_i)/2\leq1$.

Let $B_n=\Gamma_n^{1/2}$ and let $b_{n,i}$ be its $i$th column. Since $B_n'B_n=\Gamma_n$, we have $\norm{b_{n,i}}_2^2=e_i'B_n'B_ne_i=\Gamma_{n,ii}\leq1$. Theorem~6.3 of \citet{harshaw2024gsw} therefore gives $\cov(B_nZ|\mathcal F_n)\preceq B_n\Gamma_n^{-1}B_n'=I_n$. Moreover, $\Gamma_n\succeq\varphi I_n$ makes $B_n$ invertible, so premultiplying and postmultiplying by $B_n^{-1}$ gives $\cov(Z|\mathcal F_n)\preceq\Gamma_n^{-1}$. Lemma~6.1 of \citet{harshaw2024gsw} also gives $E[Z|\mathcal F_n]=0$, and hence
\begin{equation}\label{equation:proof-section4-gsw-covariance}
E[Z|\mathcal F_n]=0, \qquad \cov(Z|\mathcal F_n)\preceq \Gamma_n^{-1}.
\end{equation}

The next lemma records the algebraic form of the covariance bound.

\begin{lem}[GSW Ridge Bound]\label{lemma:proof-gsw-ridge}
Let $\mathcal H$ be a real Hilbert space and let $\mathcal A$ be a sigma-algebra. Let $x_1,\ldots,x_N\in\mathcal H$, $a,b>0$, and $y\in\mathbb R^N$ be $\mathcal A$-measurable, and define $F:\mathcal H\to\mathbb R^N$ by $(Fh)_i=\langle h,x_i\rangle_{\mathcal H}$. Suppose a random vector $S\in\mathbb R^N$ satisfies $E[S|\mathcal A]=0$ and $\cov(S|\mathcal A)\preceq(aI_N+bFF^*)^{-1}$. Then
\begin{equation}\label{equation:proof-gsw-ridge-bound}
E[(S'y)^2|\mathcal A]\leq\min_{h\in\mathcal H}\left\{\frac1a\norm{y-Fh}_2^2+\frac1b\norm{h}_{\mathcal H}^2\right\}.
\end{equation}
\end{lem}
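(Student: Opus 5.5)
The plan is to reduce the claim to a finite-dimensional ridge-regression identity. Everything is $\mathcal A$-measurable except $S$, so we can condition on $\mathcal A$ and treat $x_{1:N},a,b,y$ as fixed.

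First, since $E[S|\mathcal A]=0$,
\[
E[(S'y)^2|\mathcal A]=y'\cov(S|\mathcal A)y\leq y'(aI_N+bFF^*)^{-1}y.
\]
The inequality uses the hypothesis $\cov(S|\mathcal A)\preceq(aI_N+bFF^*)^{-1}$. Here $FF^*$ is the $N\times N$ Gram matrix $G=[\langle x_i,x_j\rangle_{\mathcal H}]$, which is positive semidefinite. Hence $aI_N+bG$ is positive definite and the inverse exists. It therefore suffices to show
\[
y'(aI_N+bG)^{-1}y\leq\frac1a\norm{y-Fh}_2^2+\frac1b\norm{h}_{\mathcal H}^2
\]
for every $h\in\mathcal H$. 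In fact equality should hold at the minimizer, which also justifies writing $\min$.

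The cleanest route is a variational (Fenchel/Cauchy–Schwarz) argument. Set $u=(aI_N+bG)^{-1}y$, so that $y=au+bFF^*u$. For any $h$, decompose $y=(y-Fh)+Fh$. Then
\[
y'u=\langle y-Fh,u\rangle+\langle h,F^*u\rangle_{\mathcal H}.
\]
Applying Cauchy–Schwarz with weights (i.e.\ $\langle p,q\rangle\leq \frac{1}{2a}\norm p^2+\frac a2\norm q^2$) to each term gives
\[
y'u\leq\frac1{2a}\norm{y-Fh}_2^2+\frac a2\norm u_2^2+\frac1{2b}\norm h_{\mathcal H}^2+\frac b2\norm{F^*u}_{\mathcal H}^2.
\]
The last two $u$-terms sum to $\tfrac12 u'(aI+bG)u=\tfrac12 y'u$. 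Rearranging yields
\[
y'u\leq\frac1a\norm{y-Fh}_2^2+\frac1b\norm{h}_{\mathcal H}^2,
\]
as needed.

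For attainment of the minimum, take $h^*=bF^*u$. Then $y-Fh^*=au$, and the right side equals $a\norm u^2+b\,u'Gu=y'u$. So the infimum is achieved and the bound is an equality at $h^*$.

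The only delicate point is infinite dimensionality of $\mathcal H$. All the objects involved are finite sums in the span of the $x_i$, and $F^*u=\sum_iu_ix_i$, so no compactness argument is needed. I also do not expect a measurability issue, since conditioning fixes the $\mathcal A$-measurable quantities. The main "obstacle" is simply choosing the right weighted Cauchy–Schwarz split, so that the $u$-terms recombine into $y'u$.
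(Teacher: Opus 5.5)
Your proof is correct, but it takes a different route from the paper's. The paper treats $L(h)=a^{-1}\norm{y-Fh}_2^2+b^{-1}\norm{h}_{\mathcal H}^2$ as a strictly convex functional on $\mathcal H$. It computes the directional derivative and shows by a coercivity bound that the operator $aI+bF^*F$ on $\mathcal H$ is invertible. Solving the first-order condition gives the unique minimizer $h_*=b(aI+bF^*F)^{-1}F^*y$. The push-through identity $(aI+bF^*F)^{-1}F^*=F^*(aI_N+bFF^*)^{-1}$ then rewrites this as $h_*=bF^*Ry$, and evaluating gives $L(h_*)=y'Ry$.

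You instead prove the lower bound $y'Ry\leq L(h)$ for every $h$ directly. You split $y'u=\langle y-Fh,u\rangle+\langle h,F^*u\rangle_{\mathcal H}$ and apply two weighted Young inequalities, whose $u$-terms recombine into $\tfrac12u'(aI_N+bFF^*)u=\tfrac12y'u$. You then certify attainment by checking $y-Fh^*=au$ at $h^*=bF^*u$. This is a weak-duality, or completing-the-square, certificate. It works entirely with the $N\times N$ positive definite matrix $aI_N+bG$. It never inverts an operator on the possibly infinite-dimensional $\mathcal H$, and it does not need the passage from stationarity to global minimality. It is therefore shorter and more elementary.

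In exchange, the paper's approach yields uniqueness of the minimizer, which the lemma does not require. Both arguments arrive at the same minimizer $bF^*Ry$ and the same identity $L(h_*)=y'Ry$; your direct verification $y-Fh^*=au$ plays the role of the paper's push-through identity.
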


\begin{proof}
The argument is pointwise in $\mathcal A$. Fix a realization, so $a$, $b$, $F$, and $y$ are deterministic.
Define $L(h)\equiv a^{-1}\norm{y-Fh}_2^2+b^{-1}\norm{h}_{\mathcal H}^2$. For any $\delta\in\mathcal H$, its directional derivative is $(d/dt)L(h+t\delta)|_{t=0}=-2a^{-1}\langle y-Fh,F\delta\rangle+2b^{-1}\langle h,\delta\rangle_{\mathcal H}$.
A second calculation gives the second directional derivative $2a^{-1}\norm{F\delta}_2^2+2b^{-1}\norm{\delta}_{\mathcal H}^2>0$ for $\delta\neq0$, so $L$ is strictly convex. Writing $\langle y-Fh,F\delta\rangle=\langle F^*(y-Fh),\delta\rangle_{\mathcal H}$ and setting the derivative to zero for every $\delta$ gives $bF^*(y-Fh)=ah$, so the first-order condition is $(aI+bF^*F)h=bF^*y$. For every $u\in\mathcal H$, $\langle u,(aI+bF^*F)u\rangle_{\mathcal H}=a\norm{u}_{\mathcal H}^2+b\norm{Fu}_2^2\geq a\norm{u}_{\mathcal H}^2$, so $aI+bF^*F$ is invertible. The first-order condition therefore has the unique solution $h_*=b(aI+bF^*F)^{-1}F^*y$, which is the unique minimizer of $L$.

To express the solution using the inverse in the covariance bound, write $R=(aI_N+bFF^*)^{-1}$. The push-through identity $(aI+bF^*F)^{-1}F^*=F^*R$ follows from $(aI+bF^*F)F^*=F^*(aI_N+bFF^*)$. Hence $h_*=bF^*Ry$. Moreover, $(aI_N+bFF^*)Ry=y$, so $y=aRy+bFF^*Ry$ and therefore $y-Fh_*=y-bFF^*Ry=aRy$.
Then
\begin{align*}
L(h_*)&=\frac1a\norm{aRy}_2^2+\frac1b\norm{bF^*Ry}_{\mathcal H}^2\\
&=ay'R^2y+by'RFF^*Ry=y'R(aI_N+bFF^*)Ry=y'Ry.
\end{align*}
By the covariance bound and $E[S|\mathcal A]=0$, $E[(S'y)^2|\mathcal A]\leq y'Ry=L(h_*)=\min_{h\in\mathcal H}L(h)$. This proves Equation~\eqref{equation:proof-gsw-ridge-bound}.
\end{proof}

The Fourier cutoff yields the following penalized approximation bound for near-critical Mat\'ern kernels.

\begin{lem}[Penalized Mat\'ern Approximation]\label{lemma:proof-penalized-matern-approximation}
Fix $c,s>0$.
There are constants $c_0,C>0$ and $n_0\geq3$, depending only on $(c,s)$, such that the following holds for every integer $d\geq1$ and $n\geq n_0$.
Set $\nu_n=c/\log n$ and $W_n=W_d^{\mathrm{Mat},\nu_n}$.
For every penalty $\lambda_n\in[n^{-1},c_0/\log n]$ and $h\in H_{\mathrm{av}}^s(\mathbb R^d)$,
\begin{equation}\label{equation:proof-penalized-matern-approximation}
\inf_{v\in\mc H_{W_n}}\left\{\norm{h-v}_{L^2([0,1]^d)}^2+\lambda_n\norm{v}_{\mc H_{W_n}}^2\right\}\leq C\norm{h}_{H_{\mathrm{av}}^s(\mathbb R^d)}^2(\lambda_n\log n)^{(2s/d)\wedge1}.
\end{equation}
If $X$ has a density on $[0,1]^d$ bounded above by $\overline p$, Equation \ref{equation:proof-penalized-matern-approximation} holds with the first term inside the braces replaced by $E[(h(X)-v(X))^2]$, with $C$ also depending on $\overline p$.
\end{lem}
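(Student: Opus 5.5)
The plan is to work in Fourier space and take $v$ to be a frequency truncation of $h$. Fix $h\in H_{\mathrm{av}}^s(\mr^d)$ and a cutoff $R>0$. Set $\widehat v=\widehat h\,\one\{\norm{\omega}_2\leq R\}$, so $v\in L^2(\mr^d)$ is band-limited. Because its spectrum is compactly supported, $v$ lies in every Sobolev space, and hence in $\mc H_{W_n}$ by Proposition~\ref{proposition:matern-sobolev-equivalence}.

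\textbf{Norm of $v$.} I will use the spectral form of the Mat\'ern RKHS norm rather than the abstract equivalence, since the equivalence constants must stay controlled as $\nu_n\to0$. The uniform bound recorded in the online appendix gives, for some normalization with scale parameter of order $\sqrt d$,
\[
\norm{v}_{\mc H_{W_n}}^2\lesssim\int|\widehat v(\omega)|^2\Bigl(1+\tfrac{\norm{\omega}_2^2}{d}\Bigr)^{d/2+\nu_n}d\omega .
\]
I expect the $d$-dependent prefactor of the Mat\'ern spectral density to be absorbed by the dimension-normalized scaling of $H_{\mathrm{av}}$; this is where the uniformity in $d$ has to be checked carefully.

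Write $\rho=1+R^2/d$. On the support of $\widehat v$ the weight $(1+\norm{\omega}_2^2/d)^{d/2+\nu_n-s}$ is bounded by $\rho^{(d/2+\nu_n-s)_+}$. Hence
\[
\norm{v}_{\mc H_{W_n}}^2\lesssim \rho^{(d/2+\nu_n-s)_+}\norm{h}_{H_{\mathrm{av}}^s}^2 .
\]

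\textbf{Approximation error.} Since $\norm{h-v}_{L^2([0,1]^d)}\leq\norm{h-v}_{L^2(\mr^d)}$ and Plancherel applies,
\[
\norm{h-v}^2\leq\int_{\norm{\omega}>R}|\widehat h|^2\leq\rho^{-s}\norm{h}_{H_{\mathrm{av}}^s}^2 .
\]
For a density on $[0,1]^d$ bounded by $\overline p$, the population term is at most $\overline p$ times the Lebesgue $L^2([0,1]^d)$ term, which gives the second statement of the lemma.

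\textbf{Balancing.} The two bounds combine to
\[
\text{objective}\lesssim\norm{h}_{H_{\mathrm{av}}^s}^2\Bigl(\rho^{-s}+\lambda_n\rho^{(d/2+\nu_n-s)_+}\Bigr).
\]
I split into two cases.
\begin{enumerate}
\item If $s\geq d/2+\nu_n$, take $v=h$. The objective is then at most $\lambda_n\norm{h}_{H_{\mathrm{av}}^s}^2\leq\lambda_n\log n\,\norm{h}_{H_{\mathrm{av}}^s}^2$, which is the target with exponent $1$.
\item Otherwise, choose $\rho=\lambda_n^{-1/(d/2+\nu_n)}$, which equates the two terms. This gives the bound $\lambda_n^{s/(d/2+\nu_n)}$. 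Since $\lambda_n\geq n^{-1}$ and $\nu_n=c/\log n$, we have
\[
\lambda_n^{s/(d/2+\nu_n)-2s/d}=\lambda_n^{-2s\nu_n/(d(d/2+\nu_n))}\leq n^{4sc/(d^2\log n)}\leq e^{4sc},
\]
a constant. This recovers $\lambda_n^{2s/d}\leq(\lambda_n\log n)^{2s/d}$.
\end{enumerate}
The boundary zone $d/2<s<d/2+\nu_n$ is where the $\log n$ factor is actually needed. There $\lambda_n^{s/(d/2+\nu_n)}\lesssim\lambda_n$ up to the same constant, and the extra $\log n$ slack covers any residual. The constraints $\lambda_n\leq c_0/\log n$ and $n\geq n_0$ are used to ensure $\rho\geq1$ and that the chosen cutoff is admissible. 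They also let me replace $\lambda_n^{\cdot}$ by $(\lambda_n\log n)^{\cdot}$ monotonically, since $\lambda_n\log n\leq c_0<1$.

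\textbf{Main obstacle.} The hardest step is the uniform-in-$(d,\nu_n)$ spectral characterization of $\norm{\cdot}_{\mc H_{W_n}}$. The Mat\'ern spectral density carries Gamma-function constants like $\Gamma(\nu)$, which blow up as $\nu\to0$. I would handle this by using the appendix's normalized Mat\'ern with the $\sqrt d$ length scale and checking that the constant degrades at most polynomially in $1/\nu_n$. If it does degrade, the extra $\log n$ in $(\lambda_n\log n)$ is presumably what absorbs a factor of $\Gamma(\nu_n)\asymp\log n$ in front of the RKHS norm, and I would build that into the choice of $\rho$.
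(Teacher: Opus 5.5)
Your route is the paper's: truncate $\widehat h$ in frequency, bound the tail by $\rho^{-s}\norm{h}^2_{H_{\mathrm{av}}^s}$, balance, then correct the exponent $s/(d/2+\nu_n)$ to $(2s/d)\wedge1$ using $\nu_n=c/\log n$. However, the norm comparison you build it on is false, and so is your account of where the $\log n$ is needed.

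\textbf{The false step.} For the paper's kernel, $\norm{v}^2_{\mc H_{W_n}}=\Lambda_{r,d}\int|\widehat v|^2(1+\norm{\omega}_2^2)^r d\omega$, with $r=d/2+\nu_n$ and $\Lambda_{r,d}=(4\pi)^{-d/2}\Gamma(\nu_n)/\Gamma(r)$. There is no $\sqrt d$ length scale. Using $1+\norm{\omega}_2^2\le d(1+\norm{\omega}_2^2/d)$ produces the constant $\Theta_n=d^r\Lambda_{r,d}$. By Stirling, $\Theta_n\le C_c/\nu_n=(C_c/c)\log n$ uniformly in $d$: the factor $d^r$ is beaten by $\Gamma(r)$.

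This $\log n$ growth is genuine. Since $\norm{v}^2_{\mc H_{W_n}}\ge\Lambda_{r,d}\norm{v}^2_{L^2(\mr^d)}$, minimizing $(h-v)^2+tv^2$ pointwise shows the infimum is at least $\frac{t}{1+t}\norm{h}^2_{L^2([0,1]^d)}$, where $t=\lambda_n\Lambda_{r,d}\asymp_d\lambda_n\log n$. So for fixed $d$ and a fixed $h$ nonzero on the cube, the objective is of order at least $\lambda_n\log n$ when $t\le1$, and of order one at $\lambda_n=c_0/\log n$. This has two consequences:
\begin{itemize}
\item It contradicts your Case 1 claim that the objective is at most $\lambda_n\norm{h}^2_{H_{\mathrm{av}}^s}$.
\item At $\lambda_n=c_0/\log n$ it also contradicts your Case 2 conclusion $C\lambda_n^{2s/d}$ for $s\le d/2$.
\end{itemize}
The $\log n$ in $(\lambda_n\log n)$ is therefore not slack for the boundary zone $d/2<s<d/2+\nu_n$. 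In every case it is exactly what pays for $\Theta_n$.

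\textbf{The repair.} It is what your last paragraph anticipates, and it is precisely the paper's proof. Carry $\Theta_n$ through, so the objective is at most $\norm{h}^2_{H_{\mathrm{av}}^s}\{\rho^{-s}+\lambda_n\Theta_n\rho^{(r-s)_+}\}$. Then choose $\rho=(\lambda_n\Theta_n)^{-1/r}$. This choice needs $\lambda_n\Theta_n\le(C_c/c)\lambda_n\log n\le1$, which is the real job of $\lambda_n\le c_0/\log n$ with $c_0$ small; with your choice $\rho=\lambda_n^{-1/r}$, the condition $\rho\ge1$ was automatic and $c_0$ played no role.

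The balanced value is $2(\lambda_n\Theta_n)^{s/r}\le Cy^{s/r}$, where $y=\lambda_n\log n\in[\log n/n,c_0]$. Since $\log(1/y)\le\log n$, your exponent computation applied to $y$ gives $y^{s/r}\le e^{4sc}y^{2s/d}$ when $s\le d/2$. In Case 1, $\lambda_n\norm{h}^2_{\mc H_{W_n}}\le\lambda_n\Theta_n\norm{h}^2_{H_{\mathrm{av}}^s}\le Cy\norm{h}^2_{H_{\mathrm{av}}^s}$.

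With this change, the only real difference from the paper is how you treat $d/2<s<d/2+\nu_n$. The paper enlarges $n_0$ so that $\nu_n\le\min_{d<2s}(s-d/2)$, which makes this zone empty. You instead balance directly and lose only the factor $y^{s/r-1}\le e^{2c/d}$. Both work.
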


\begin{proof}
First suppose that $s\leq d/2$.
Recall the inverse Fourier transform $\mathcal F^{-1}$.
For $R\geq1$, define $\widehat h_R(\omega)=\widehat h(\omega)\one\{1+\norm{\omega}_2^2/d\leq R\}$ and $h_R=\mathcal F^{-1}(\widehat h_R)$.
Since $h\in H_{\mathrm{av}}^s(\mathbb R^d)\subseteq L^2(\mathbb R^d)$, Plancherel's theorem gives $\widehat h\in L^2(\mathbb R^d)$.
Hence $\widehat h_R\in L^2(\mathbb R^d)$, and another application of Plancherel's theorem gives $h_R\in L^2(\mathbb R^d)$, so
\begin{align*}
&\norm{h-h_R}_{L^2([0,1]^d)}^2\leq\int_{\mathbb R^d}|h(x)-h_R(x)|^2dx=\int_{\mathbb R^d}|\widehat h(\omega)-\widehat h_R(\omega)|^2d\omega\\
&=\int_{1+\norm{\omega}_2^2/d>R}|\widehat h(\omega)|^2d\omega\leq R^{-s}\int_{\mathbb R^d}|\widehat h(\omega)|^2\left(1+\frac{\norm{\omega}_2^2}{d}\right)^sd\omega =R^{-s}\norm{h}_{H_{\mathrm{av}}^s(\mathbb R^d)}^2.
\end{align*}
The first equality is Plancherel's identity.
For the second inequality, note that $(1+\norm{\omega}_2^2/d)^{-s}\leq R^{-s}$ on the domain of integration.
Because $\widehat h_R$ is supported on $\{1+\norm{\omega}_2^2/d\leq R\}$, we have $h_R\in H^{d/2+\nu_n}(\mathbb R^d)=\mc H_{W_n}$.
Write $\Lambda_n=\Lambda_{d/2+\nu_n,d}$ and $\Theta_n=\Theta_{d/2+\nu_n,d}$.
Similarly, the RKHS norm of this approximation satisfies
\begin{align*}
\norm{h_R}_{\mc H_{W_n}}^2&=\Lambda_n\int_{1+\norm{\omega}_2^2/d\leq R}|\widehat h(\omega)|^2(1+\norm{\omega}_2^2)^{d/2+\nu_n}d\omega\\
&\leq\Theta_nR^{d/2+\nu_n-s}\int_{\mathbb R^d}|\widehat h(\omega)|^2\left(1+\frac{\norm{\omega}_2^2}{d}\right)^sd\omega=\Theta_nR^{d/2+\nu_n-s}\norm{h}_{H_{\mathrm{av}}^s(\mathbb R^d)}^2.
\end{align*}
The first equality follows from the Mat\'ern RKHS norm calculation in the secondary online appendix, noting that Fourier inversion gives $\mathcal Fh_R=\mathcal F(\mathcal F^{-1}(\widehat h_R))=\widehat h_R$ almost everywhere, and the definition of $\widehat h_R$ then restricts the integral to the displayed domain.
Since $1+\norm{\omega}_2^2\leq d(1+\norm{\omega}_2^2/d)$ and $d/2+\nu_n\geq d/2\geq s$, we have $(1+\norm{\omega}_2^2/d)^{d/2+\nu_n-s}\leq R^{d/2+\nu_n-s}$ on this domain, which gives the inequality.

Using $h_R$ as a witness in the infimum on the left hand side of Equation~\eqref{equation:proof-penalized-matern-approximation} gives the upper bound $\norm{h}_{H_{\mathrm{av}}^s(\mathbb R^d)}^2\{R^{-s}+\lambda_n\Theta_nR^{d/2+\nu_n-s}\}$.
The dimension-uniform bound from that calculation, applied with $\bar\nu=c$, gives $\Theta_n\leq C_c/\nu_n=(C_c/c)\log n$ uniformly over $d\geq1$.
Set $y=\lambda_n\log n$, so $\log n/n\leq y\leq c_0<1$ and $\lambda_n\Theta_n\leq(C_c/c)y$.
Decrease $c_0$ if necessary so that $\lambda_n\Theta_n\leq(C_c/c)c_0\leq1$, and take $R=(\lambda_n\Theta_n)^{-1/(d/2+\nu_n)}\geq1$.
Since $s/(d/2+\nu_n)\leq1$,
\begin{align*}
R^{-s}+\lambda_n\Theta_nR^{d/2+\nu_n-s}&=2(\lambda_n\Theta_n)^{s/(d/2+\nu_n)}\leq Cy^{s/(d/2+\nu_n)}\\
&=Cy^{2s/d}\exp\left(\left(\frac{2s}{d}-\frac{s}{d/2+\nu_n}\right)\log(1/y)\right)\leq Cy^{2s/d}.
\end{align*}
To see the final inequality, note that
\begin{equation*}
0\leq\frac{2s}{d}-\frac{s}{d/2+\nu_n}=\frac{4s\nu_n}{d(d+2\nu_n)}\leq\frac{4sc}{\log n}, \quad \log(1/y)\leq\log n.
\end{equation*}
Thus the exponential factor is at most $e^{4sc}$, which is absorbed into $C$.
This proves Equation~\eqref{equation:proof-penalized-matern-approximation} when $s\leq d/2$.

Now suppose $s>d/2$.
Only finitely many integer dimensions satisfy this inequality.
If this set is nonempty, define $\gamma_s=\min_{d\in\mathbb N:1\leq d<2s}(s-d/2)>0$.
For all sufficiently large $n$, $\nu_n\leq\gamma_s$, so $d/2+\nu_n\leq s$ simultaneously for these dimensions.
Taking $v=h$ and using the Mat\'ern RKHS norm formula in the secondary online appendix and monotonicity of the spectral weights gives zero approximation error and
\begin{equation*}
\lambda_n\norm{h}_{\mc H_{W_n}}^2\leq\lambda_n\Theta_n\norm{h}_{H_{\mathrm{av}}^s(\mathbb R^d)}^2\leq(C_c/c)y\norm{h}_{H_{\mathrm{av}}^s(\mathbb R^d)}^2.
\end{equation*}
This proves Equation~\eqref{equation:proof-penalized-matern-approximation} when $s>d/2$.

If $X$ has density bounded above by $\overline p$, then $E[(h(X)-v(X))^2]\leq\overline p\norm{h-v}_{L^2([0,1]^d)}^2$ for every $v$.
Since $\overline p\geq1$, the population criterion is at most $\overline p$ times the criterion on the left hand side of Equation~\eqref{equation:proof-penalized-matern-approximation}.
This proves the final statement.
\end{proof}

\begin{proof}[Proof of Corollary~\ref{corollary:kernel-pointwise}]
Recall $m(\psi)=E[\ylevel|\psi]$. By conditional Jensen's inequality, tower law, and Assumption~\ref{assumption:sampling}, $E[m(\psi)^2]\leq E[\ylevel^2]\leq\frac12\sum_{a=0}^1E[Y(a)^2]<\infty$, so $m\in L^2(P_\psi)$. Fix $\epsilon>0$. Density of $\mc H_K$ provides $g_\epsilon\in\mc H_K$ such that $E[(m(\psi)-g_\epsilon(\psi))^2]\leq\epsilon$. Using $g_\epsilon$ as a witness in Proposition~\ref{proposition:kernel-oracle} gives $\mc V_n(P)\leq4\epsilon/\varphi+8\norm{g_\epsilon}_{\mc H_K}^2/\{(1-\varphi)n\}$. Hence $\limsup_{n\to\infty}\mc V_n(P)\leq4\epsilon/\varphi$. Since $\epsilon$ was arbitrary and $\mc V_n(P)\geq0$ by Proposition~\ref{proposition:variance-decomposition}, the conclusion follows.
\end{proof}

\begin{proof}[Proof of Corollary~\ref{corollary:kernel-ball}]
For $P\in\mc P_K(B)$, take $g=m$ in Proposition~\ref{proposition:kernel-oracle}. The approximation term vanishes and $\norm{m}_{\mc H_K}\leq B$, so $\mc V_n(P)\leq8B^2/\{(1-\varphi)n\}$. The right hand side is independent of $P$, and taking the supremum proves the result.
\end{proof}

\begin{proof}[Proof of Corollary~\ref{corollary:matern-supercritical}]
Put $r=d/2+\nu\leq s$ and fix $P\in\mc P_d^s$.
Recall that $K=1+W$. By the sums-of-kernels theorem \citep[Theorem~5.4]{paulsen2016rkhs}, $\norm{f}_{\mc H_K}^2=\min_{f=a+w}\{a^2+\norm{w}_{\mc H_W}^2\}$, where $a\in\mathbb R$ and $w\in\mc H_W$. Taking $a=0$ and $w=g$ gives $\norm{g}_{\mc H_K}\leq\norm{g}_{\mc H_W}$ for every $g\in\mc H_W$. Since $r\leq s$, the Mat\'ern RKHS norm formula in the secondary online appendix and monotonicity of the spectral weights give
\begin{align*}
\norm{m}_{\mc H_W}^2&=\Lambda_{r,d}\int_{\mathbb R^d}|\widehat m(\omega)|^2(1+\norm{\omega}_2^2)^r d\omega\\
&\leq\Theta_{r,d}\int_{\mathbb R^d}|\widehat m(\omega)|^2\left(1+\frac{\norm{\omega}_2^2}{d}\right)^s d\omega=\Theta_{r,d}\norm{m}_{H_{\mathrm{av}}^s(\mathbb R^d)}^2\leq\Theta_{r,d}.
\end{align*}
Thus, $\norm{m}_{\mc H_K}^2\leq\Theta_{r,d}$. Apply Corollary~\ref{corollary:kernel-ball} with $B^2=\Theta_{r,d}$. The constant $8\Theta_{r,d}/(1-\varphi)$ depends only on $(d,\nu,\varphi)$, as claimed.
\end{proof}

\begin{proof}[Proof of Theorem~\ref{theorem:sobolev-gsw}]
Put $W_n=W_d^{\mathrm{Mat},\nu_n}$ and recall that $K_n=1+W_n$, where $\nu_n=c/\log n$.
Fix $P\in\mc P_d^s$ and set $\lambda_n=\max\{2\varphi/((1-\varphi)n),n^{-1}\}$.
By the sums-of-kernels theorem \citep[Theorem~5.4]{paulsen2016rkhs}, every $v\in\mc H_{W_n}$ belongs to $\mc H_{K_n}$ with $\norm{v}_{\mc H_{K_n}}\leq\norm{v}_{\mc H_{W_n}}$.
Proposition~\ref{proposition:kernel-oracle} therefore gives
\begin{align*}
\mc V_n(P)&\leq\frac4\varphi\inf_{v\in\mc H_{W_n}}\left\{E[(m(\psi)-v(\psi))^2]+\frac{2\varphi}{(1-\varphi)n}\norm{v}_{\mc H_{W_n}}^2\right\}\\
&\leq\frac4\varphi\inf_{v\in\mc H_{W_n}}\left\{E[(m(\psi)-v(\psi))^2]+\lambda_n\norm{v}_{\mc H_{W_n}}^2\right\}.
\end{align*}
For fixed $\varphi$, $n^{-1}\leq\lambda_n\leq C_\varphi/n$, so the penalty rate condition in Lemma~\ref{lemma:proof-penalized-matern-approximation} holds for all sufficiently large $n$.
Apply the population conclusion of that lemma in dimension $d$ at smoothness $s$, with $h=m$ and $X=\psi$.
Since $\norm{m}_{H_{\mathrm{av}}^s(\mathbb R^d)}\leq1$, we obtain
\begin{equation*}
\mc V_n(P)\leq C_{s,c,\varphi,\overline p}(\lambda_n\log n)^{(2s/d)\wedge1}\leq C_{s,c,\varphi,\overline p}\left(\frac{\log n}{n}\right)^{(2s/d)\wedge1}.
\end{equation*}
Thus, there is $n_0\geq3$ depending only on $(s,c,\varphi)$ such that the claimed bound holds for every $n\geq n_0$, uniformly over $d$ and $P$.
For $2\leq n<n_0$, Proposition~\ref{proposition:kernel-oracle} evaluated at $g=0$ and the density upper bound give $\mc V_n(P)\leq4E[m(\psi)^2]/\varphi\leq4\overline p/\varphi$.
Moreover, $(\log n/n)^{(2s/d)\wedge1}\geq\log n/n$, whose minimum over the finite set $2\leq n<n_0$ is strictly positive.
Enlarging $C_{s,c,\varphi,\overline p}$ therefore establishes the same bound for these remaining values of $n$.
Taking the supremum over $P\in\mc P_d^s$ proves the theorem.
\end{proof}

\begin{proof}[Proof of Theorem~\ref{theorem:sobolev-lower}]
Fix $\sigma\in\mc D_n$. Choose a smooth function $\eta:\mathbb R\to\mathbb R$ supported in a compact subinterval of $(0,1)$ and normalized so that $\int_0^1\eta(u)du=0$ and $\int_0^1\eta(u)^2du=1$. Set $q\equiv\lceil n^{1/d}\rceil$, $h\equiv q^{-1}$, and $K\equiv q^d$.
Choose one cell $C$ of the regular partition of $[0,1]^d$ into $K$ cubes of side length $h$, let $b$ be its lower-left corner, and define the whole-space bump
\begin{equation*}
g_h(x)\equiv\Pi_{\ell=1}^d\eta\left(\frac{x_\ell-b_\ell}{h}\right).
\end{equation*}
We claim that, for a constant $C_s<\infty$ depending only on $s$, $\norm{g_h}_{H_{\mathrm{av}}^s(\mathbb R^d)}^2\leq C_sh^{d-2s}$.
The function $g_h$ is smooth and supported in the interior of $C$. Under the Fourier convention above, $\widehat g_h(\omega)=(2\pi)^{-d/2}\int_{\mathbb R^d}e^{-i\omega'x}\Pi_{\ell=1}^d\eta((x_\ell-b_\ell)/h)dx$. Setting $x=b+hz$,
\begin{align*}
\widehat g_h(\omega)&=h^de^{-i\omega'b}(2\pi)^{-d/2}\int_{\mathbb R^d}e^{-ih\omega'z}\Pi_{\ell=1}^d\eta(z_\ell)dz =h^de^{-i\omega'b}\Pi_{\ell=1}^d\widehat\eta(h\omega_\ell).
\end{align*}
The second equality is from $e^{-ih\omega'z}=\Pi_{\ell=1}^de^{-ih\omega_\ell z_\ell}$ and Fubini. 
Since $|e^{-i\omega'b}|=1$,
\begin{align}
\norm{g_h}_{H_{\mathrm{av}}^s(\mathbb R^d)}^2&=h^{2d}\int_{\mathbb R^d}\Pi_{\ell=1}^d|\widehat\eta(h\omega_\ell)|^2\left(1+\frac{\norm{\omega}_2^2}{d}\right)^sd\omega\nonumber\\
&=h^d\int_{\mathbb R^d}\Pi_{\ell=1}^d|\widehat\eta(u_\ell)|^2\left(1+\frac{h^{-2}\norm{u}_2^2}{d}\right)^sdu.\label{equation:proof-sobolev-bump-scaling}
\end{align}

The final equality uses $u=h\omega$, so $d\omega=h^{-d}du$ and $\norm{\omega}_2^2=h^{-2}\norm{u}_2^2$.
By Plancherel's identity, $\int_{\mathbb R}|\widehat\eta(u)|^2du=\int_{\mathbb R}|\eta(u)|^2du=\int_0^1\eta(u)^2du=1$, where the second equality uses the support of $\eta$. Thus, $d\mu(u)=|\widehat\eta(u)|^2du$ defines a probability measure. Let $U_1,\ldots,U_d$ be iid with law $\mu$.
Since $h\leq1$, Equation~\eqref{equation:proof-sobolev-bump-scaling} gives
\begin{equation}\label{equation:proof-sobolev-bump-moment}
\norm{g_h}_{H_{\mathrm{av}}^s(\mathbb R^d)}^2\leq h^{d-2s}E\bigg[\bigg(1+\frac1d\sum_{\ell=1}^dU_\ell^2\bigg)^s\bigg].
\end{equation}

Suppose first that $0<s\leq1$. Jensen's inequality gives $E[(1+d^{-1}\sum_{\ell=1}^dU_\ell^2)^s]\leq(E[1+d^{-1}\sum_{\ell=1}^dU_\ell^2])^s=(1+E[U_1^2])^s$. Set $C_s=(1+E[U_1^2])^s$. Then the display above gives $\norm{g_h}_{H_{\mathrm{av}}^s(\mathbb R^d)}^2\leq C_sh^{d-2s}$. It remains to show $E[U_1^2]<\infty$. Integration by parts gives $\widehat{\eta'}(u)=iu\widehat\eta(u)$, and hence $E[U_1^2]=\int_{\mathbb R}u^2|\widehat\eta(u)|^2du=\int_{\mathbb R}|\widehat{\eta'}(u)|^2du=\norm{\eta'}_{L^2(\mathbb R)}^2<\infty$. The final equality is Plancherel's identity, and finiteness follows because $\eta$ is smooth and compactly supported.

Suppose next that $s\geq1$. By Jensen, $(1+d^{-1}\sum_{\ell=1}^dU_\ell^2)^s=d^{-s}(\sum_{\ell=1}^d(1+U_\ell^2))^s\leq d^{-1}\sum_{\ell=1}^d(1+U_\ell^2)^s$. Taking expectations and applying Equation~\eqref{equation:proof-sobolev-bump-moment} gives $\norm{g_h}_{H_{\mathrm{av}}^s(\mathbb R^d)}^2\leq E[(1+U_1^2)^s]h^{d-2s}$. Thus the same bound holds with $C_s=E[(1+U_1^2)^s]$, provided that $E[|U_1|^{2s}]<\infty$. To verify this, choose an integer $k\geq s$. Since $|u|^{2s}\leq1+|u|^{2k}$ and repeated integration by parts gives $\widehat{\eta^{(k)}}(u)=(iu)^k\widehat\eta(u)$, Plancherel's identity yields $E[|U_1|^{2s}]\leq1+\int_{\mathbb R}|u|^{2k}|\widehat\eta(u)|^2du=1+\norm{\eta^{(k)}}_{L^2(\mathbb R)}^2<\infty$. The final inequality follows because $\eta\in C^\infty(\mathbb R)$ is compactly supported.

Above we showed $\norm{g_h}_{H_{\mathrm{av}}^s(\mathbb R^d)}^2\leq C_sh^{d-2s}$. Then define $A=C_s^{-1/2}h^{s-d/2}$ and $m(x)=Ag_h(x)$ so that $\norm{m}_{H_{\mathrm{av}}^s(\mathbb R^d)}\leq1$. Let $\psi$ be uniform on $[0,1]^d$ and set $Y(1)=Y(0)=m(\psi)$. This law belongs to $\mc P_d^s$ and has $V^*(P)=0$.

Let $\mc E_C$ denote the event $\sum_{i=1}^n\one(\psi_i\in C)=1$, and let $i(C)$ denote the unique index in $C$ on this event. Since $m$ vanishes outside $C$, on $\mc E_C$ we have $\sum_{i=1}^nZ_im(\psi_i)=AZ_{i(C)}g_h(\psi_{i(C)})$ and hence $(\sum_{i=1}^nZ_im(\psi_i))^2=A^2g_h(\psi_{i(C)})^2$. 
Then pointwise we have $\one(\mc E_C)(\sum_{i=1}^nZ_im(\psi_i))^2=A^2\sum_{i=1}^n\one(\psi_i\in C)\prod_{j\neq i}\one(\psi_j\notin C)g_h(\psi_i)^2$. Proposition~\ref{proposition:variance-decomposition} therefore gives
\begin{equation*}
\mc V_n(\sigma,P)\geq\frac4nE[\one(\mc E_C)(\sum_{i=1}^nZ_im(\psi_i))^2] =\frac{4A^2}{n}\sum_{i=1}^n\int_Cg_h(x)^2dxP(\psi\notin C)^{n-1}.
\end{equation*}

The equality follows by taking expectations in the pointwise identity and using independence and uniformity of the sampled covariates. Uniformity gives $P(\psi\notin C)=1-1/K$. Since $K\geq n$, $(1-1/K)^{n-1}\geq(1-1/K)^{K-1}=(1+1/(K-1))^{-(K-1)}\geq e^{-1}$, where the final inequality uses $(1+1/r)^r\leq e$. Together with $\int_Cg_h(x)^2dx=h^d$, this shows $\mc V_n(\sigma,P)\geq4A^2h^d/e=4h^{2s}/(eC_s)$. Finally, $q\leq2n^{1/d}$, so $h^{2s}\geq2^{-2s}n^{-2s/d}$. Because the constructed law $P$ does not depend on $\sigma$, the argument in fact gives $\sup_{P\in\mc P_d^s}\inf_{\sigma\in\mc D_n}\mc V_n(\sigma,P)\geq4n^{-2s/d}/(eC_s2^{2s})$. The theorem follows from $\inf_\sigma\sup_P\mc V_n(\sigma,P)\geq\sup_P\inf_\sigma\mc V_n(\sigma,P)$, with $c_0=4/(eC_s2^{2s})$.
\end{proof}

\subsection{Proofs for Section~\ref{section:restricted-efficiency}}\label{appendix:proofs-restricted-efficiency}

We first record an oracle inequality separating working-model approximation from design error.

\begin{prop}[Approximation Oracle]\label{proposition:proof-restricted-kernel-comparison}
Let $K=1+W$ be a design kernel, where $W$ is PSD, let $\mc G$ be the $L^2(P_\psi)$ closure of $\mc H_K$, and let $m_{\mc G}$ be the $L^2(P_\psi)$ projection of $m$ onto $\mc G$.
Define $\Delta_{\mc G}(P)=4E[(m(\psi)-m_{\mc G}(\psi))^2]$.
For $\sigma=\mathrm{GSW}(K,\varphi)$ with $\varphi\in(0,1)$,
\begin{align*}
\mc V_n(\sigma,P)\leq\frac{\Delta_{\mc G}(P)}{\varphi}+\inf_{f\in\mc H_K}\left\{\frac4\varphi E[(m_{\mc G}(\psi)-f(\psi))^2]+\frac8{n(1-\varphi)}\norm{f}_{\mc H_K}^2\right\}.
\end{align*}
\end{prop}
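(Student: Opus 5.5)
Apply the oracle inequality of Proposition~\ref{proposition:kernel-oracle} and split the approximation error orthogonally. By Proposition~\ref{proposition:kernel-oracle}, for every $f\in\mc H_K$,
\begin{equation*}
\mc V_n(\sigma,P)\leq\frac4\varphi E\bigl[(m(\psi)-f(\psi))^2\bigr]+\frac8{(1-\varphi)n}\norm{f}_{\mc H_K}^2 .
\end{equation*}
The proof then rests on a Pythagorean decomposition of the first term in $L^2(P_\psi)$.

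First, we check that this decomposition is valid. We have $m\in L^2(P_\psi)$ by conditional Jensen, as in the proof of Corollary~\ref{corollary:kernel-pointwise}. Since $\mc G$ is the $L^2(P_\psi)$ closure of the linear space $\mc H_K$, it is a closed subspace. Hence the projection $m_{\mc G}$ exists, and the residual $m-m_{\mc G}$ is orthogonal to every element of $\mc G$. One point needs care: $\mc H_K$ embeds into $L^2(P_\psi)$. This holds because $K$ is bounded on $[0,1]^d$, being continuous on a compact set, so each $f\in\mc H_K$ is bounded and hence square integrable.

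Second, fix $f\in\mc H_K$. Then $f\in\mc G$ and $m_{\mc G}-f\in\mc G$, so the cross term vanishes and
\begin{equation*}
E\bigl[(m-f)^2\bigr]=E\bigl[(m-m_{\mc G})^2\bigr]+E\bigl[(m_{\mc G}-f)^2\bigr].
\end{equation*}
Multiplying by $4/\varphi$ gives
\begin{equation*}
\frac4\varphi E\bigl[(m-f)^2\bigr]=\frac{\Delta_{\mc G}(P)}{\varphi}+\frac4\varphi E\bigl[(m_{\mc G}-f)^2\bigr].
\end{equation*}

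Third, substitute this identity into the oracle bound. The term $\Delta_{\mc G}(P)/\varphi$ does not depend on $f$, so taking the infimum over $f\in\mc H_K$ yields the claimed inequality.

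There is no serious obstacle. The only step needing attention is that functions in $\mc H_K$ are genuine elements of $L^2(P_\psi)$, which justifies the orthogonality used above.
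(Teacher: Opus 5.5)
Your proposal is correct and follows the paper's proof: apply Proposition~\ref{proposition:kernel-oracle}, split $E[(m-f)^2]=E[(m-m_{\mc G})^2]+E[(m_{\mc G}-f)^2]$ by orthogonality of the projection residual to $\mc G\ni m_{\mc G}-f$, and pull the $f$-independent term $\Delta_{\mc G}(P)/\varphi$ out of the infimum. Your added check that $\mc H_K$ sits inside $L^2(P_\psi)$ is a harmless extra; it also follows directly from the normalization $\sup_\psi W(\psi,\psi)\leq1$, which bounds every $f\in\mc H_K$ by $\sqrt2\,\norm{f}_{\mc H_K}$.
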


\begin{proof}
Set $A=E[(m(\psi)-m_{\mc G}(\psi))^2]$, so that $\Delta_{\mc G}(P)=4A$.
Because $m-m_{\mc G}$ is orthogonal in $L^2(P_\psi)$ to $m_{\mc G}-f$ for every $f\in\mc H_K$, we have $E[(m(\psi)-f(\psi))^2]=A+E[(m_{\mc G}(\psi)-f(\psi))^2]$.
By Proposition~\ref{proposition:kernel-oracle},
\begin{align*}
\mc V_n(\sigma,P)&\leq\inf_{f\in\mc H_K}\left\{\frac4\varphi\left(A+E[(m_{\mc G}(\psi)-f(\psi))^2]\right)+\frac8{n(1-\varphi)}\norm{f}_{\mc H_K}^2\right\}\\
&=\frac{\Delta_{\mc G}(P)}{\varphi}+\inf_{f\in\mc H_K}\left\{\frac4\varphi E[(m_{\mc G}(\psi)-f(\psi))^2]+\frac8{n(1-\varphi)}\norm{f}_{\mc H_K}^2\right\}.
\end{align*}
\end{proof}

We next give the general result underlying the streamlined main-text theorem.
In addition to main effects and bivariate interactions, it allows a fixed block $J\subseteq[d]$ of priority covariates, such as baseline outcomes, to enter jointly and have a different smoothness order.
Write $d_0=|J|$ and $q=d-d_0$.
Let $\mc G_1$ be the $L^2(P_\psi)$ closure of functions $g(\psi)=a+g_J(\psi_J)+\sum_{j\notin J}g_j(\psi_j)$, and let $\mc G_2$ be the closure of functions $g(\psi)=a+g_J(\psi_J)+\sum_{\substack{j<\ell\\j,\ell\notin J}}g_{j\ell}(\psi_j,\psi_\ell)$.
For $k\in\{1,2\}$, write the canonical projection as
\begin{equation}\label{equation:appendix-canonical-low-order-decomposition}
m_{\mc G_k}(\psi)=a+m_J(\psi_J)+\sum_{j\notin J}m_j(\psi_j)+\sum_{\substack{j<\ell\\j,\ell\notin J}}m_{j\ell}(\psi_j,\psi_\ell).
\end{equation}
The bivariate terms are omitted for $k=1$.
The priority component and main effects have mean zero under Lebesgue measure, and each bivariate component integrates to zero in either argument.
For $S=(s_J,s)$ with $s_J,s>0$, let $\mc P_{d,k}^{S}$ contain the laws satisfying Assumption~\ref{assumption:sampling}, $E[m(\psi)^2]\leq1$, and
\begin{equation}\label{equation:appendix-low-order-sobolev-budget}
\norm{m_J}_{H_{\mathrm{av}}^{s_J}(\mr^{d_0})}^2+\sum_{j\notin J}\norm{m_j}_{H_{\mathrm{av}}^s(\mr)}^2+\sum_{\substack{j<\ell\\j,\ell\notin J}}\norm{m_{j\ell}}_{H_{\mathrm{av}}^s(\mr^2)}^2\leq1.
\end{equation}
Again, the bivariate terms are omitted for $k=1$.
When $J$ is empty, every priority-block term is omitted.
Fix $c>0$, set $\nu_n=c/\log n$, and write $W_{r,n}=W_r^{\mathrm{Mat},\nu_n}$.
When $J$ is nonempty, define
\begin{align*}
K_{n,1}(\psi,\psi')&=1+\frac12W_{d_0,n}(\psi_J,\psi_J')+\frac1{2q}\sum_{j\notin J}W_{1,n}(\psi_j,\psi_j'),\\
K_{n,2}(\psi,\psi')&=1+\frac12W_{d_0,n}(\psi_J,\psi_J')+\frac1{2\binom q2}\sum_{\substack{j<\ell\\j,\ell\notin J}}W_{2,n}(\psi_{j\ell},\psi_{j\ell}').
\end{align*}
When $J$ is empty, omit the priority term and normalize the remaining nonconstant kernel weight to one.
The canonical component spaces are mutually orthogonal and closed in Lebesgue $L^2$, and the density bounds make the Lebesgue and $L^2(P_\psi)$ norms equivalent.
Since each Mat\'ern RKHS is dense in its corresponding component $L^2$ space, $\mc G_k$ is exactly the corresponding structured function class written in canonical form.
For $k\in\{1,2\}$, let $m_{\mc G_k}$ be the $L^2(P_\psi)$ projection of $m$ onto $\mc G_k$ and write $\Delta_{\mc G_k}(P)=4E[(m(\psi)-m_{\mc G_k}(\psi))^2]$.

\begin{thm}[General Low-Order Nonparametric Balance]\label{theorem:proof-general-low-order-gsw}
Fix $n\geq2$, $d$, smoothness $S=(s_J,s)$ with $s_J,s>0$, $c>0$, and $k\in\{1,2\}$, where $d_0=|J|$ and $d-d_0\geq k$.
Define $\alpha_k=(2s/k)\wedge1$ and, when $J$ is nonempty, define $\alpha_J=(2s_J/d_0)\wedge1$.
When $J$ is empty, omit $\alpha_J$ and every priority-block term below.
\begin{enumerate}[label={\rm(\roman*)}]
\item Let $\sigma_n=\mathrm{GSW}(K_{n,k},\varphi_n)$, where $1-\varphi_n=1/2\wedge(n^{-1}(d-d_0)^k\log n)^{1/2}$.
Then, for some $C>0$ depending on $(d_0,s_J,s,c,\underline p,\overline p)$, uniformly over $P\in\mc P_{d,k}^{S}$,
\begin{equation}\label{equation:proof-general-low-order-gsw}
\mc V_n(\sigma_n,P)\leq\Delta_{\mc G_k}(P)+C\bigg(\left(\frac{\log n}{n}\right)^{\alpha_J}+\left(\frac{(d-d_0)^k\log n}{n}\right)^{\alpha_k}\bigg)^{1/2}.
\end{equation}
\item On the subclass of $\mc P_{d,k}^{S}$ with $m=m_{\mc G_k}$, any fixed $\varphi\in(0,1)$ yields the same bound for $\sigma_n=\mathrm{GSW}(K_{n,k},\varphi)$ with $\Delta_{\mc G_k}(P)=0$ and the outer square root removed. The constant may additionally depend on $\varphi$.
\end{enumerate}
\end{thm}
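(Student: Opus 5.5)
The plan is to derive both parts from the Approximation Oracle (Proposition~\ref{proposition:proof-restricted-kernel-comparison}) applied with $K=K_{n,k}$. Two further ingredients are needed: the calibration of $\varphi_n$, and a componentwise use of the Penalized Mat\'ern Approximation (Lemma~\ref{lemma:proof-penalized-matern-approximation}) through the sums-of-kernels norm formula.

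\emph{Removing the $1/\varphi_n$ factor.} The oracle gives $\Delta_{\mc G_k}(P)/\varphi_n$, but the theorem has coefficient one. Since $E[m(\psi)^2]\leq1$, we have $\Delta_{\mc G_k}(P)\leq4$. Using $\varphi_n\geq1/2$,
\[
\frac{\Delta_{\mc G_k}(P)}{\varphi_n}-\Delta_{\mc G_k}(P)=\Delta_{\mc G_k}(P)\,\frac{1-\varphi_n}{\varphi_n}\leq8(1-\varphi_n)\leq8\Big(\frac{q^k\log n}{n}\Big)^{1/2},
\]
with $q=d-d_0$. Because $\alpha_k\leq1$, this is at most the $q$-part of the square-rooted remainder whenever $q^k\log n/n\leq1$. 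When $q^k\log n/n>1$, or when $n$ is below the threshold $n_0$ of Lemma~\ref{lemma:proof-penalized-matern-approximation}, I take $f=0$ in Proposition~\ref{proposition:kernel-oracle}. This gives $\mc V_n\leq 8E[m(\psi)^2]\leq 8$, which is covered by enlarging $C$, since the displayed remainder is then bounded below by a constant.

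\emph{The penalized approximation term.} Up to the factor $4/\varphi_n\leq8$, the remaining infimum is
\[
E[(m_{\mc G_k}-f)^2]+\lambda\norm{f}_{\mc H_K}^2,\qquad \lambda=\frac{2\varphi_n}{n(1-\varphi_n)}\asymp(n\,q^k\log n)^{-1/2}.
\]
By the sums-of-kernels theorem, for $f=a+v_J+\sum_b v_b$ we have
\[
\norm{f}_{\mc H_K}^2\leq a^2+2\norm{v_J}^2_{\mc H_{W_{d_0,n}}}+2B\sum_b\norm{v_b}^2_{\mc H_{W_{k,n}}},
\]
where $B=q$ or $\binom q2$ is the number of structured blocks. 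So the $J$ block carries effective penalty $2\lambda$ and each structured block carries $\lambda_b=2B\lambda$.

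I then approximate each canonical component $m_J$, $m_j$, $m_{j\ell}$ separately, using Lemma~\ref{lemma:proof-penalized-matern-approximation} in dimension $d_0$ or $k$. This is where $\lambda\in[n^{-1},c_0/\log n]$ must be checked, which holds after the case reduction above. The resulting bounds are:
\begin{itemize}
\item for a structured component, $C\norm{m_b}^2_{H^s_{\mathrm{av}}}(\lambda_b\log n)^{\alpha_k}$;
\item for the priority block, $C\norm{m_J}^2_{H^{s_J}_{\mathrm{av}}}(\lambda\log n)^{\alpha_J}$.
\end{itemize}
Now $\lambda_b\log n\asymp(q^k\log n/n)^{1/2}$ and $\lambda\log n\lesssim(\log n/n)^{1/2}$. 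Summing with the pooled budget \eqref{equation:appendix-low-order-sobolev-budget} therefore yields exactly
\[
\Big((\log n/n)^{\alpha_J}+(q^k\log n/n)^{\alpha_k}\Big)^{1/2},
\]
using $x^{\alpha/2}\leq$ the square root of a sum. Finally, the density bound $\overline p$ converts Lebesgue $L^2$ error to $P_\psi$ error.

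\emph{Main obstacle: summing errors across components.} The errors $m_b-v_b$ must add in $L^2$ without picking up a factor of the number of components, which may be as large as $q^2$. The Fourier-truncation approximants $v_b$ are not ANOVA-centered, so cross terms do not vanish automatically. My first attempt is to replace each $v_b$ by its canonical centering, obtained by subtracting one-dimensional marginal integrals and absorbing them into the constant $a$ and the main effects. Centering is an $L^2(\text{Lebesgue})$ orthogonal projection, so the errors $m_b-\bar v_b$ lie in mutually orthogonal component spaces and their squared norms add. What must then be verified is that centering does not inflate the Mat\'ern norm beyond a constant.
\begin{itemize}
\item For $k=1$, the correction is a constant, and constants cost nothing extra in $\mc H_K$ via the $a^2$ term; its size is $\leq\norm{m_j-v_j}_{L^2}$, so $a^2$ stays controlled.
\item For $k=2$, the subtracted marginals are univariate functions of $\psi_j$. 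They must be represented inside the bivariate RKHS $\mc H_{W_{2,n}}$, because $K_{n,2}$ has no separate main-effect kernels. I would bound their norm via the Fourier description: the marginal integral restricts frequencies to a line, and the Mat\'ern spectral weight controls it.
\end{itemize}
The $k=2$ case is the step I expect to need the most care. As a fallback, I would bound the cross terms by Cauchy--Schwarz, using the fact that the $\lambda$-scaled penalty leaves a slack factor.

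\emph{Part (ii).} With $m=m_{\mc G_k}$ and fixed $\varphi$, the gap $\Delta_{\mc G_k}(P)$ vanishes and there is no $1-\varphi_n$ loss. The penalty becomes $\lambda\asymp1/n$, so $\lambda_b\log n\asymp q^k\log n/n$, and the same componentwise argument gives the rate without the outer square root.
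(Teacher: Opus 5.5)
Your skeleton matches the paper's proof, which delegates the componentwise work to Lemma~\ref{lemma:proof-low-order-matern-approximation}. The shared steps are the Approximation Oracle, the bound $\Delta_{\mc G_k}(P)(1-\varphi_n)/\varphi_n\le 8(1-\varphi_n)$, block penalties $\lambda_b\asymp q^k\lambda$ from the sum-kernel norm, Lemma~\ref{lemma:proof-penalized-matern-approximation} applied per component, Lebesgue-orthogonal assembly, and a trivial bound in the remaining cases. One small fix: split at $b_n=q^k\log n/n\le b_0$ with $b_0$ small relative to $c_0^2$, not at $b_n\le 1$. Only then do you get $1-\varphi_n=b_n^{1/2}$ and $\lambda_b\log n\le c_0$. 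For $k=1$ your argument is complete and slightly simpler than the paper's. The paper subtracts $\mu_j\chi$ inside each block, at cost $\norm{\chi}_{\mc H}^2\lesssim\log n$. You instead fold the means into the intercept, and $\lambda(\sum_j\mu_j)^2\le q\lambda\sum_j\mu_j^2\le\lambda_b\sum_j\norm{m_j-v_j}_2^2$ is harmless.

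The gap is at $k=2$, in two places.

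\emph{Main effects.} $\mc H_{K_{n,2}}$ has no univariate kernels, so the main effects $m_j$ of $m_{\mc G_2}$ have no block in your norm accounting. A function of $\psi_j$ alone, extended constantly in the other coordinate, is not in $L^2(\mathbb R^2)$, so it lies in no $\mc H_{W_{2,n}}$. The paper lifts $m_j$ to $m_j(x)\chi(y)$, with $\chi\in C_c^\infty$ equal to one on $[0,1]$; Fourier factorization bounds its $H^s_{\mathrm{av}}(\mathbb R^2)$ norm by $C\norm{m_j}_{H^s_{\mathrm{av}}(\mathbb R)}$. It then assigns each $j$ to one pair containing it and pools witnesses within that pair's block, losing only a constant.

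\emph{Canonicalization.} Absorbing the marginals of the $v_{j\ell}$ into the main-effect part fails. The up to $q-1$ centered marginals in coordinate $j$ can add coherently, which costs a factor $q$ in the $L^2$ error, or in the penalty if you move them into the main-effect witnesses. Your Cauchy--Schwarz fallback loses the same factor, since each interaction overlaps $2(q-2)$ others, and the calibrated $\varphi_n$ leaves no slack to absorb it. So the correction must be made inside each block. Here your Fourier heuristic points the wrong way: $A_1v(x)=\int_0^1 v(x,z)\,dz$, extended constantly in $y$, has spectrum on the line $\omega_2=0$, and that is exactly why its $\mc H_{W_{2,n}}$ norm is infinite. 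The missing idea is the cutoff-modified projection $\widetilde Qv=v-A_1v(x)\chi(y)-A_2v(y)\chi(x)+(A_0v)\chi(x)\chi(y)$. It agrees with the ANOVA projection on $[0,1]^2$, so errors contract and remain mutually orthogonal. It is also bounded on $H^t(\mathbb R^2)$ uniformly in $t\in[1,2]$, via $\norm{A_1v}_{H^t(\mathbb R)}\le\norm{v}_{H^t(\mathbb R^2)}$ (Jensen plus Plancherel in the second coordinate) and $(1+\omega_1^2+\omega_2^2)^t\le(1+\omega_1^2)^t(1+\omega_2^2)^t$. Uniformity in $t$ matters because the block order $1+c/\log n$ moves with $n$. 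An analogous $\widetilde Q_1$ centers the lifted main effects. Part (ii) inherits the same gap at $k=2$.
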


The next lemma collects the approximation calculation shared by the two structured designs.
When $J$ is empty, every priority-block term in the statement and proof is omitted.

\begin{lem}[Low-Order Mat\'ern Approximation]\label{lemma:proof-low-order-matern-approximation}
Fix $k\in\{1,2\}$ and smoothness $s_0,s>0$.
Let $g$ have the decomposition in Equation~\eqref{equation:appendix-canonical-low-order-decomposition}, with canonical components $g_J$, $g_j$, and $g_{j\ell}$. Suppose these satisfy the pooled smoothness budget in Equation~\eqref{equation:appendix-low-order-sobolev-budget} and also that $E[g(\psi)^2]\leq1$.
Set $\alpha_k=(2s/k)\wedge1$ and, when $d_0>0$, set $\alpha_J=(2s_0/d_0)\wedge1$.
There are constants $c_0,C>0$ and $n_0\geq1$ depending only on $(d_0,s_0,s,c,\underline p,\overline p)$ such that, for every $n\geq n_0$ satisfying $(d-d_0)^k\log n\leq c_0n$ and every $\lambda_n$ satisfying $n^{-1}\leq\lambda_n\leq c_0/((d-d_0)^k\log n)$,
\begin{equation*}
\inf_{f\in\mc H_{K_{n,k}}}\left\{E[(g(\psi)-f(\psi))^2]+\lambda_n\norm{f}_{\mc H_{K_{n,k}}}^2\right\}\leq C\bigl((\lambda_n\log n)^{\alpha_J}+((d-d_0)^k\lambda_n\log n)^{\alpha_k}\bigr).
\end{equation*}
\end{lem}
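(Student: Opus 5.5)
The approach is to build the witness $f$ componentwise, using the sums-of-kernels theorem to control its norm. Write the nonconstant part of $K_{n,k}$ as $\sum_b w_bW_b$, with weights $w_J=1/2$ for the priority block and $w=1/(2q_k)$ for each of the $q_k$ low-order components, where $q_1=q=d-d_0$ and $q_2=\binom q2\asymp q^2$. By \citet[Theorem~5.4]{paulsen2016rkhs}, the scaled kernel $w_bW_b$ has RKHS norm $w_b^{-1}\norm{\cdot}_{\mc H_{W_b}}^2$. Hence any $f=a+v_J+\sum_b v_b$ with $v_J\in\mc H_{W_{d_0,n}}$ and $v_b\in\mc H_{W_{k,n}}$ satisfies
\begin{equation*}
\norm{f}_{\mc H_{K_{n,k}}}^2\leq a^2+2\norm{v_J}^2_{\mc H_{W_{d_0,n}}}+2q_k\sum_b\norm{v_b}_{\mc H_{W_{k,n}}}^2.
\end{equation*}
I take $a$ equal to the constant of $g$. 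Because $E[g^2]\leq1$ and the components are Lebesgue orthogonal, with norms comparable under the density bounds, $a^2\lesssim1$, and the contribution $\lambda_na^2\leq\lambda_n\leq\lambda_n\log n$ is absorbed.

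For the approximation error, Cauchy--Schwarz over the $1+q_k$ pieces would lose a factor $q_k$. I plan instead to keep the Fourier-truncation witnesses from Lemma~\ref{lemma:proof-penalized-matern-approximation}. Their residuals $g_b-v_b=\mathcal F^{-1}(\widehat g_b\one\{\cdot>R\})$ are high-pass filters in disjoint variable sets. The mean-zero canonical components are nearly, though not exactly, Lebesgue orthogonal after filtering on the whole line, so the cross terms need care. As a fallback, I will bound $E[(\sum_b r_b)^2]$ via $\overline p$ and Plancherel on $\mr^d$ after extending each $g_b$ to $\mr^d$ as a tensor-type function. The cleaner route is to use the whole-space Sobolev representative of each component, which the budget in Equation~\eqref{equation:appendix-low-order-sobolev-budget} is stated in, and bound $\norm{\sum_b r_b}_{L^2([0,1]^d)}^2$ by
\begin{equation*}
(1+q_k)\,\max\text{-free}\ \sum_b\norm{r_b}^2\cdot(\text{overlap constant}).
\end{equation*}
The hard step is exactly showing that the cross terms between residuals of distinct components are controlled without a factor $q_k$. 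My first attempt is to retain exact $L^2$ orthogonality by projecting each truncated residual back onto its canonical centered subspace, since centering is a bounded operator of norm at most one that maps $\mc H_{W_{k,n}}$ into itself up to adding constants and lower-order pieces, which get absorbed into $a$ and into the main effects.

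Granting orthogonality up to constants depending on $\underline p,\overline p$, the objective splits as
\begin{equation*}
\big(\norm{g_J-v_J}^2+2\lambda_n\norm{v_J}^2\big)+\sum_b\big(\norm{g_b-v_b}^2+2q_k\lambda_n\norm{v_b}^2\big).
\end{equation*}
I then apply Lemma~\ref{lemma:proof-penalized-matern-approximation} to each piece. The priority block uses dimension $d_0$, smoothness $s_0$, and penalty $2\lambda_n$, which gives $\norm{g_J}^2_{H^{s_0}_{\mathrm{av}}}(\lambda_n\log n)^{\alpha_J}$. Each low-order component uses dimension $k$, smoothness $s$, and penalty $2q_k\lambda_n\lesssim q^k\lambda_n$, which gives $\norm{g_b}^2_{H^s_{\mathrm{av}}}(q^k\lambda_n\log n)^{\alpha_k}$. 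The lemma's admissibility condition $2q_k\lambda_n\leq c_0'/\log n$ is exactly the assumed upper bound on $\lambda_n$, after shrinking $c_0$. The lower bound $\geq n^{-1}$ holds trivially. Summing and using the pooled budget, which is at most $1$, yields the claim, with constants uniform in $d$ because the lemma's constants are uniform in dimension and here depend only on $(d_0,k)$.
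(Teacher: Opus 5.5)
There is a genuine gap, and it is the step you flag as hard and then grant. The lemma requires choosing the witness so that two things hold at once. Every component error must lie in its canonical centered subspace, which removes the Lebesgue cross terms on $[0,1]^d$. And each corrected witness must still lie in a single component RKHS, with norm comparable to the raw Fourier-truncation witness.

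Your mechanism is that centering maps $\mc H_{W_{k,n}}$ into itself up to constants and lower-order pieces, which are absorbed into $a$ and the main effects. This handles only the scalar centerings. For the priority block and for $k=1$, the correction is the constant $\mu_b=\int v_b$. Moving it into the intercept is legitimate, and a bit simpler than the paper's $\mu_J\chi_{d_0}$ device. Since $|\mu_j|\leq\norm{g_j-v_j}_2$ and $q\lambda_n\leq c_0/\log n$, the extra intercept penalty obeys $\lambda_n(\sum_j\mu_j)^2\leq q\lambda_n\sum_j\mu_j^2\leq c_0\sum_j\norm{g_j-v_j}_2^2$.

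For $k=2$ the mechanism breaks. Write $A_1v(x)=\int_0^1v(x,z)dz$, and define $A_2v$ and $A_0v$ similarly. The ANOVA projection $Qv=v-A_1v-A_2v+A_0v$ has marginal terms that are constant in the other coordinate. So in general $Qv$ is not in $L^2(\mr^2)$, let alone in $\mc H_{W_{2,n}}=H^{1+\nu_n}(\mr^2)$. Moreover, $K_{n,2}$ contains no univariate kernels, so there are no main-effect RKHSs to absorb these terms into. Suppose you instead collect the $q-1$ corrections $\sum_\ell A_1v_{j\ell}$ into one witness for coordinate $j$. The triangle inequality then gives $\norm{\sum_\ell A_1v_{j\ell}}^2\leq(q-1)\sum_\ell\norm{A_1v_{j\ell}}^2$. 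After the component weight $\asymp q^2\lambda_n$, this reintroduces exactly the factor $q$ you are trying to avoid. Your fallback via Plancherel on $\mr^d$ does not rescue this, because a truncation on $\mr^d$ is not an element of the structured sum RKHS. The displayed ``overlap constant'' expression is also not an inequality.

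\textbf{First missing idea: a cutoff that keeps corrections in the same pair's RKHS.} Take a smooth cutoff $\chi\in C_c^\infty(\mr)$ with $\chi=1$ on $[0,1]$, and set $(\widetilde Qv)(x,y)=v(x,y)-A_1v(x)\chi(y)-A_2v(y)\chi(x)+(A_0v)\chi(x)\chi(y)$.
\begin{itemize}
\item This agrees with $Qv$ on $[0,1]^2$. Since $Q$ is an orthogonal projection fixing $g_{j\ell}$, the $L^2([0,1]^2)$ error contracts.
\item It is bounded on $H^t(\mr^2)$ uniformly in $t\in[1,2]$. This follows from $\norm{A_1v}_{H^t(\mr)}\leq\norm{v}_{H^t(\mr^2)}$ (Jensen in the averaged variable, then Plancherel) and $\norm{u(x)\chi(y)}_{H^t(\mr^2)}\leq\norm{u}_{H^t(\mr)}\norm{\chi}_{H^t(\mr)}$.
\end{itemize}

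\textbf{Second missing idea: representing the main effects when $k=2$.} You never say how the canonical main effects $g_j$ are represented, although they appear in the decomposition and in the pooled budget.
\begin{itemize}
\item They must be lifted to $g_j(x)\chi(y)$, because the naive lift $(x,y)\mapsto g_j(x)$ is not square integrable on $\mr^2$.
\item By Fourier factorization, the lift has $H^s_{\mathrm{av}}(\mr^2)$ norm at most $C\norm{g_j}_{H^s_{\mathrm{av}}(\mr)}$.
\item Each lift is then approximated in a designated pair's RKHS and canonicalized via $A_1v(x)\chi(y)-(A_0v)\chi(x)\chi(y)$.
\item Main effects must be assigned to pairs so that each pair carries $O(1)$ of them. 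The paper's assignment puts at most three witnesses in any pair, which costs only a factor $3$ in the sum-kernel norm.
\end{itemize}

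With these two ingredients the component errors are exactly Lebesgue-orthogonal on the cube. The rest of your plan then goes through as written: the sum-kernel norm bound, the intercept, the per-component application of Lemma~\ref{lemma:proof-penalized-matern-approximation} with penalty $\asymp q^k\lambda_n$, and the pooled budget.
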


\begin{proof}
As in Equation~\eqref{equation:appendix-canonical-low-order-decomposition}, write $g=a+g_J+\sum_jg_j+\sum_{j<\ell}g_{j\ell}$, where every sum in the proof ranges over indices outside $J$.
We construct $f=a+f_J+\sum_jf_j+\sum_{j<\ell}f_{j\ell}\in\mc H_{K_{n,k}}$ as a witness for the inequality above.
The construction has three steps.
\begin{enumerate}
\item \emph{Componentwise approximation.} For each component $g_J$, $g_j$, and $g_{j\ell}$, we construct a raw Mat\'ern-RKHS approximation, denoted by $v_J$, $v_j$, or $v_{j\ell}$, satisfying the componentwise approximation profile in Lemma~\ref{lemma:proof-penalized-matern-approximation}.
\item \emph{Canonicalization.} The initial approximations may not satisfy the same canonical centering restrictions as their targets. For example, $\int_0^1g_{j\ell}(u,y)du=\int_0^1g_{j\ell}(x,u)du=0$ for every $x,y\in[0,1]$, whereas these equalities may fail for $v_{j\ell}$. We therefore transform the initial approximations using appropriate $L^2$ orthogonal projections to construct $f_J$, $f_j$, and $f_{j\ell}$ satisfying these restrictions, while preserving each approximation profile up to a constant factor.
\item \emph{Assembly.} We set $f=a+f_J+\sum_jf_j+\sum_{j<\ell}f_{j\ell}$. The component errors $g_J-f_J$, $g_j-f_j$, and $g_{j\ell}-f_{j\ell}$ lie in mutually orthogonal subspaces under Lebesgue measure. Hence the squared Lebesgue $L^2$ error is the sum of their squared component errors.
The assumed density upper bound in Assumption~\ref{assumption:sampling} transfers this to $E[(g(\psi)-f(\psi))^2]$. The RKHS norm of $f$ is also controlled by the corresponding component RKHS norms by the sum-kernel norm formula. Combining these bounds with the pooled Sobolev budget gives the required witness bound.
\end{enumerate}

\emph{Step 1: Componentwise approximation.}
For a penalty $\lambda_n$ and component kernel $W$, define the penalized approximation profile $\mathcal R_{\lambda_n,W}(h,v)=\norm{h-v}_2^2+\lambda_n\norm{v}_{\mc H_W}^2$, where the first norm is Lebesgue $L^2$ on the appropriate cube.
After decreasing $c_0$ if necessary, the restrictions on $\lambda_n$ imply that both $\lambda_n$ and $(d-d_0)^k\lambda_n$ belong to $[n^{-1},c_0/\log n]$.
Lemma~\ref{lemma:proof-penalized-matern-approximation} therefore allows us to choose a raw priority block witness $v_J$ and, when $k=2$, raw interaction witnesses $v_{j\ell}$ satisfying
\begin{align*}
\mathcal R_{\lambda_n,W_{d_0,n}}(g_J,v_J)&\leq C\norm{g_J}_{H_{\mathrm{av}}^{s_0}(\mr^{d_0})}^2(\lambda_n\log n)^{\alpha_J}\\
\mathcal R_{(d-d_0)^2\lambda_n,W_{2,n}}(g_{j\ell},v_{j\ell})&\leq C\norm{g_{j\ell}}_{H_{\mathrm{av}}^s(\mr^2)}^2((d-d_0)^2\lambda_n\log n)^{\alpha_k}.
\end{align*}
The first row is omitted when $J$ is empty, and the second is used only when $k=2$.
When $k=1$, applying the same lemma directly to each main effect gives
\begin{equation*}
\mathcal R_{(d-d_0)\lambda_n,W_{1,n}}(g_j,v_j)\leq C\norm{g_j}_{H_{\mathrm{av}}^s(\mr)}^2((d-d_0)\lambda_n\log n)^{\alpha_k}.
\end{equation*}
When $k=2$, the canonical decomposition of $g$ contains main effects $g_j\in H_{\mathrm{av}}^s(\mr)$, but the design RKHS only contains bivariate components $f_{j\ell}\in\mc H_{W_{2,n}}$.
List the coordinates outside $J$ as $j_1<\cdots<j_q$, where $q=d-d_0$, and set $\pi(j_1)=\pi(j_2)=(j_1,j_2)$ and $\pi(j_r)=(j_1,j_r)$ for $r=3,\ldots,q$.
For each $j\notin J$, we use the bivariate kernel indexed by $\pi(j)$ to approximate $g_j$.
Viewed as a function on $\mr^2$ that is constant in the other coordinate, the main effect need not be square-integrable.
We therefore fix a smooth compactly supported function $\chi$ equal to one on $[0,1]$ and define $g_j^\chi(x,y)=g_j(x)\chi(y)$ if $j$ is the first coordinate of $\pi(j)$ and $g_j^\chi(x,y)=\chi(x)g_j(y)$ if it is the second.
On the unit square, $g_j^\chi$ agrees with $g_j$ evaluated at coordinate $j$.
When $j$ is the first coordinate, Fourier factorization gives $\widehat{g_j^\chi}(\omega_1,\omega_2)=\widehat g_j(\omega_1)\widehat\chi(\omega_2)$, while $1+(\omega_1^2+\omega_2^2)/2\leq(1+\omega_1^2)(1+\omega_2^2)$.
Then
\begin{align*}
\norm{g_j^\chi}_{H_{\mathrm{av}}^s(\mr^2)}^2&=\int_{\mr^2}|\widehat g_j(\omega_1)|^2|\widehat\chi(\omega_2)|^2\left(1+\frac{\omega_1^2+\omega_2^2}{2}\right)^sd\omega_1d\omega_2\\
&\leq\norm{g_j}_{H_{\mathrm{av}}^s(\mr)}^2\norm{\chi}_{H^s(\mr)}^2\leq C\norm{g_j}_{H_{\mathrm{av}}^s(\mr)}^2.
\end{align*}
The display treats the case in which $j$ is the first coordinate, and the other case is identical after exchanging $\omega_1$ and $\omega_2$.
Then $g_j^\chi\in H_{\mathrm{av}}^s(\mr^2)$, so once again Lemma~\ref{lemma:proof-penalized-matern-approximation} supplies a raw witness $v_j$ satisfying approximation rate $\mathcal R_{(d-d_0)^2\lambda_n,W_{2,n}}(g_j^\chi,v_j)\leq C\norm{g_j}_{H_{\mathrm{av}}^s(\mr)}^2((d-d_0)^2\lambda_n\log n)^{\alpha_k}$.

\medskip

\emph{Step 2: Canonicalization.}
Fix one of the profiles displayed in Step 1, and let $h$, $v$, $W$, and $\lambda_{h,n}$ denote its target, raw witness, kernel, and penalty.
We transform $v$ into a witness $f$ satisfying the canonical restrictions with 
\begin{equation*}
\mathcal R_{\lambda_{h,n},W}(h,f)\leq C\mathcal R_{\lambda_{h,n},W}(h,v).
\end{equation*}
First, we impose the scalar centering restrictions.
The priority block and main effect witnesses $f_J$ and $f_j$ must satisfy $\int_{[0,1]^{d_0}}f_J=0$ and $\int_0^1f_j=0$ when $k=1$.
Consider the priority witness $v_J$.
Simply demeaning it by setting $f_J = v_J-\int_{[0,1]^{d_0}}v_J$ is not feasible, because nonzero constant functions do not belong to the RKHS $\mc H_{W_{d_0,n}}=H^{d_0/2+c/\log n}(\mr^{d_0})$.
To handle this technical complication, let $\chi_{d_0}$ be a fixed smooth compactly supported function equal to one on the cube, set $\mu_J=\int_{[0,1]^{d_0}}v_J$, and define $f_J=v_J-\mu_J\chi_{d_0}$.
Then $\int_{[0,1]^{d_0}}f_J=\int_{[0,1]^{d_0}}g_J=0$, and
\begin{align*}
\norm{g_J-v_J}_2^2&=\norm{g_J-f_J}_2^2+\mu_J^2+2\int_{[0,1]^{d_0}}(g_J-f_J)(f_J-v_J) =\norm{g_J-f_J}_2^2+\mu_J^2.
\end{align*}
The cross term vanishes because $f_J-v_J=-\mu_J$ on the cube and $g_J-f_J$ has mean zero.
In particular, $\mu_J^2\leq\norm{g_J-v_J}_2^2$ and the approximation error contracts: $\norm{g_J-f_J}_2^2\leq\norm{g_J-v_J}_2^2$.
The dimension-uniform Mat\'ern RKHS norm bound in the secondary online appendix gives $\norm{\chi_{d_0}}_{\mc H_{W_{d_0,n}}}^2\leq C\log n$.
Therefore,
\begin{align*}
\lambda_n\norm{f_J}_{\mc H_{W_{d_0,n}}}^2&=\lambda_n\norm{v_J-\mu_J\chi_{d_0}}_{\mc H_{W_{d_0,n}}}^2\leq2\lambda_n\norm{v_J}_{\mc H_{W_{d_0,n}}}^2+2\lambda_n\mu_J^2\norm{\chi_{d_0}}_{\mc H_{W_{d_0,n}}}^2\\
&\leq2\lambda_n\norm{v_J}_{\mc H_{W_{d_0,n}}}^2+C\lambda_n\log n\norm{g_J-v_J}_2^2.
\end{align*}
The first inequality is Young's and the second uses bounds above on $\mu_J$ and $\chi_{d_0}$.
Since $\lambda_n\log n\leq c_0$, the $L^2$ contraction and RKHS-norm bound give
\begin{align*}
\mathcal R_{\lambda_n,W_{d_0,n}}(g_J,f_J)&=\norm{g_J-f_J}_2^2+\lambda_n\norm{f_J}_{\mc H_{W_{d_0,n}}}^2\\
&\leq(1+C\lambda_n\log n)\norm{g_J-v_J}_2^2+2\lambda_n\norm{v_J}_{\mc H_{W_{d_0,n}}}^2\\
&\leq C\big(\norm{g_J-v_J}_2^2+\lambda_n\norm{v_J}_{\mc H_{W_{d_0,n}}}^2 \big)=C\mathcal R_{\lambda_n,W_{d_0,n}}(g_J,v_J).
\end{align*}
When $k=1$, the same construction with penalty $(d-d_0)\lambda_n$, produces centered $f_j$ from $v_j$ that preserve the approximation profile up to the same constant factor.

\emph{Interactions.}
We next canonicalize the bivariate interaction witnesses used when $k=2$.
Each $f_{j\ell}$ must satisfy $\int_0^1f_{j\ell}(x,z)dz=\int_0^1f_{j\ell}(z,y)dz=0$ for every $x,y\in[0,1]$.
To describe the required projection, let $u\in L^2([0,1]^2)$ and set $A_1u(x)=\int_0^1u(x,z)dz$, $A_2u(y)=\int_0^1u(z,y)dz$, and $A_0u=\int_0^1\int_0^1u(z,w)dzdw$.
The orthogonal projection of $u$ onto the subspace of $L^2([0,1]^2)$ satisfying the two canonical restrictions is
\begin{equation}\label{equation:proof-anova-projection}
(Qu)(x,y)=u(x,y)-A_1u(x)-A_2u(y)+A_0u.
\end{equation}
Now let $v\in\mc H_{W_{2,n}}$ be one of the raw witnesses from Step 1.
The same projection formula can be used to define an operator $(Qv)(x,y)=v(x,y)-A_1v(x)-A_2v(y)+A_0v$ for $v \in L^2(\mr^2)$.
However, $Qv\notin L^2(\mr^2)$ in general, let alone $\mc H_{W_{2,n}}$, because its marginal terms are constant in the unused coordinate.
To fix this, we use the compactly supported smooth function $\chi$ above and define the modified projection
\begin{equation*}
(\widetilde Qv)(x,y)=v(x,y)-A_1v(x)\chi(y)-A_2v(y)\chi(x)+(A_0v)\chi(x)\chi(y).
\end{equation*}
Because $\chi=1$ on $[0,1]$, the restrictions of $\widetilde Qv$ and $Qv$ to the unit square coincide.
We first show that $\widetilde Q$ is a bounded linear operator on $H^t(\mr^2)$, uniformly over $t\in[1,2]$.
Fourier factorization and $(1+\omega_1^2+\omega_2^2)^t\leq(1+\omega_1^2)^t(1+\omega_2^2)^t$ give $\norm{A_1v(x)\chi(y)}_{H^t(\mr^2)}\leq\norm{A_1v}_{H^t(\mr)}\norm{\chi}_{H^t(\mr)}$, and similarly for the other product terms above.
It therefore suffices to control the marginal Sobolev norms.
For $v\in H^t(\mr^2)$, we claim that $\norm{A_1v}_{H^t(\mr)}\leq\norm{v}_{H^t(\mr^2)}$.
We first verify the claim for $v\in C_c^\infty(\mr^2)$.
Writing $\mathcal F_x$ for the Fourier transform in the first coordinate, Fubini's theorem gives $\widehat{A_1v}(\omega_1)=\int_0^1\mathcal F_xv(\omega_1,z)dz$.
Hence by Jensen's inequality
\begin{equation*}
\norm{A_1v}_{H^t(\mr)}^2=\int_{\mr}(1+\omega_1^2)^t\bigg|\int_0^1\mathcal F_xv(\omega_1,z)dz\bigg|^2d\omega_1\leq\int_{\mr}\int_0^1(1+\omega_1^2)^t|\mathcal F_xv(\omega_1,z)|^2dzd\omega_1.
\end{equation*}
For each $\omega_1$, Plancherel's identity in the second coordinate gives $\int_{\mr}|\mathcal F_xv(\omega_1,z)|^2dz=\int_{\mr}|\mathcal F_z\mathcal F_xv(\omega_1,\omega_2)|^2d\omega_2=\int_{\mr}|\widehat v(\omega_1,\omega_2)|^2d\omega_2$.
Using this identity and extending the $z$-integral in the preceding display to $\mr$, we have
\begin{align*}
\int_{\mr}\int_0^1(1+\omega_1^2)^t &|\mathcal F_xv(\omega_1,z)|^2dzd\omega_1 \leq\int_{\mr^2}(1+\omega_1^2)^t|\widehat v(\omega_1,\omega_2)|^2d\omega_1d\omega_2\\
&\leq\int_{\mr^2}(1+\omega_1^2+\omega_2^2)^t|\widehat v(\omega_1,\omega_2)|^2d\omega_1d\omega_2=\norm{v}_{H^t(\mr^2)}^2.
\end{align*}
For general $v\in H^t(\mr^2)$, choose $v_m\in C_c^\infty(\mr^2)$ with $v_m\to v$ in $H^t(\mr^2)$.
The preceding bound makes $A_1v_m$ Cauchy in $H^t(\mr)$, while Cauchy--Schwarz gives $\norm{A_1(v_m-v)}_{L^2(\mr)}\leq\norm{v_m-v}_{L^2(\mr^2)}\to0$.
Its $H^t(\mr)$ limit is therefore $A_1v$, proving the claim by density.
The same bound holds for $A_2v$.
Moreover, $A_0v$ is the coefficient of the $L^2([0,1]^2)$ projection onto the constants, so $|A_0v|\leq\norm{v}_{L^2([0,1]^2)}\leq\norm{v}_{L^2(\mr^2)}\leq\norm{v}_{H^t(\mr^2)}$.
The final inequality follows from Plancherel's identity and $(1+\norm{\omega}_2^2)^t\geq1$.

More explicitly, the same factorization gives $\norm{A_2v(y)\chi(x)}_{H^t(\mr^2)}\leq\norm{A_2v}_{H^t(\mr)}\norm{\chi}_{H^t(\mr)}$ and $\norm{A_0v\chi(x)\chi(y)}_{H^t(\mr^2)}\leq|A_0v|\norm{\chi}_{H^t(\mr)}^2$.
Since $\norm{\chi}_{H^t(\mr)}\leq\norm{\chi}_{H^2(\mr)}$ for $t\in[1,2]$, the preceding bounds imply $\norm{\widetilde Qv}_{H^t(\mr^2)}\leq C\norm{v}_{H^t(\mr^2)}$ uniformly over $t\in[1,2]$.

Set $r_n=1+c/\log n$, the Sobolev order of $\mc H_{W_{2,n}}$.
For large enough $n$, $r_n\in[1,2]$.
Writing $\Lambda_n=\Lambda_{r_n,2}$, the Mat\'ern RKHS norm formula in the secondary online appendix and the preceding bound at $t=r_n$ give for $v\in\mc H_{W_{2,n}}$
\begin{align}
\norm{\widetilde Qv}_{\mc H_{W_{2,n}}}^2&=\Lambda_n\norm{\widetilde Qv}_{H^{r_n}(\mr^2)}^2 \leq C\Lambda_n\norm{v}_{H^{r_n}(\mr^2)}^2=C\norm{v}_{\mc H_{W_{2,n}}}^2<\infty.\label{equation:proof-whole-space-anova-bound}
\end{align}
Thus $\widetilde Qv$ is a valid witness in $\mc H_{W_{2,n}}$, and its RKHS penalty is controlled by that of $v$.
Because $Qg_{j\ell}=g_{j\ell}$, orthogonal projection gives the contraction property
\begin{equation*}
\norm{g_{j\ell}-\widetilde Qv_{j\ell}}_{L^2([0,1]^2)}=\norm{Q(g_{j\ell}-v_{j\ell})}_{L^2([0,1]^2)}\leq\norm{g_{j\ell}-v_{j\ell}}_{L^2([0,1]^2)}.
\end{equation*}
Combining the $L^2$ and Sobolev bounds above, $f_{j\ell}=\widetilde Qv_{j\ell}$ satisfies the canonical restrictions and has profile $\mathcal R_{(d-d_0)^2\lambda_n,W_{2,n}}(g_{j\ell},f_{j\ell})\leq C\mathcal R_{(d-d_0)^2\lambda_n,W_{2,n}}(g_{j\ell},v_{j\ell})$.

\medskip

\emph{Lifted main effects.}
Recall that Step 1 assigned $g_j$ to the pair $\pi(j)$, lifted it to $g_j^\chi$, and produced a raw bivariate witness $v_j\in\mc H_{W_{2,n}}$ for the kernel acting on that pair.
Suppose first that $j$ is the first coordinate of $\pi(j)$.
For $(x,y)\in[0,1]^2$, $g_j^\chi(x,y)=g_j(x)$ and $\int_0^1g_j(x)dx=0$, so the target is a centered function of the first coordinate alone.
The orthogonal projection onto this subspace and its smooth modification on $\mr^2$ are
\begin{equation*}
(Q_1v_j)(x,y)=A_1v_j(x)-A_0v_j, \qquad (\widetilde Q_1v_j)(x,y)=A_1v_j(x)\chi(y)-A_0v_j\chi(x)\chi(y).
\end{equation*}
Because $\chi=1$ on $[0,1]$, the restrictions of $\widetilde Q_1v_j$ and $Q_1v_j$ to the unit square coincide.
The calculations above show that $\widetilde Q_1v_j\in\mc H_{W_{2,n}}$ and $\norm{\widetilde Q_1v_j}_{\mc H_{W_{2,n}}}\leq C\norm{v_j}_{\mc H_{W_{2,n}}}$.
Moreover, $Q_1g_j^\chi=g_j^\chi$ on the unit square, so
\begin{equation*}
\norm{g_j^\chi-\widetilde Q_1v_j}_{L^2([0,1]^2)}=\norm{Q_1(g_j^\chi-v_j)}_{L^2([0,1]^2)}\leq\norm{g_j^\chi-v_j}_{L^2([0,1]^2)}.
\end{equation*}
If $j$ is the second coordinate of $\pi(j)$, use instead $(Q_2v_j)(x,y)=A_2v_j(y)-A_0v_j$ and $(\widetilde Q_2v_j)(x,y)=A_2v_j(y)\chi(x)-A_0v_j\chi(x)\chi(y)$.
Set $f_j=\widetilde Q_1v_j$ in the first case and $f_j=\widetilde Q_2v_j$ in the second.
Then $f_j$ satisfies the required centering restriction and $\mathcal R_{(d-d_0)^2\lambda_n,W_{2,n}}(g_j^\chi,f_j)\leq C\mathcal R_{(d-d_0)^2\lambda_n,W_{2,n}}(g_j^\chi,v_j)$.

\medskip

\emph{Step 3: Assembly.}
Recall that we must bound $E[(g(\psi)-f(\psi))^2]+\lambda_n\norm{f}_{\mc H_{K_{n,k}}}^2$ for the approximation $f=a+f_J+\sum_jf_j+\sum_{j<\ell}f_{j\ell}$ constructed above.
We first consider $k=2$ and bound its RKHS norm.
For each pair $(j,\ell)$, the interaction witness $f_{j\ell}$ belongs to $\mc H_{W_{2,n}}$ and approximates the component $g_{j\ell}$.
Whenever $\pi(s)=(j,\ell)$, the main-effect witness $f_s$ was constructed in the same bivariate component RKHS $\mc H_{W_{2,n}}$ and approximates $g_s$ through the lifted target $g_s^\chi$.
By the sums-of-kernels theorem \citep[Theorem~5.4]{paulsen2016rkhs}, $\norm{f}_{\mc H_{K_{n,2}}}^2$ is the minimum of the weighted component norms over all representations $f=a'+h_J+\sum_{j<\ell}h_{j\ell}$ with one $h_{j\ell}\in\mc H_{W_{2,n}}$ for each pair.

To obtain a representation that we can insert into this minimum, we collect all witnesses constructed in the same pair-indexed bivariate RKHS.
To that end, for every $j<\ell$, define the witness assigned to the pair $(j,\ell)$ by $F_{j\ell}=f_{j\ell}+\sum_{s\notin J}\one(\pi(s)=(j,\ell))f_s$.
Because every main effect is assigned once, $\sum_{j<\ell}F_{j\ell}=\sum_{j<\ell}f_{j\ell}+\sum_jf_j$.
The triangle inequality followed by Jensen's inequality gives $\norm{h_1+\cdots+h_r}^2\leq r^2(r^{-1}\sum_{b=1}^r\norm{h_b})^2\leq r\sum_{b=1}^r\norm{h_b}^2$.
Since each $F_{j\ell}$ contains at most three terms, $\sum_{j<\ell}\norm{F_{j\ell}}_{\mc H_{W_{2,n}}}^2\leq3(\sum_{j<\ell}\norm{f_{j\ell}}_{\mc H_{W_{2,n}}}^2+\sum_j\norm{f_j}_{\mc H_{W_{2,n}}}^2)$.

Writing $q=d-d_0$, in the defining sum for $K_{n,2}$ each bivariate kernel has coefficient $1/(2\binom q2)$ when $J$ is nonempty and $1/\binom q2$ when $J$ is empty.
Thus its reciprocal coefficient is at most $2\binom q2$.
Evaluating the minimum defining the sum-kernel RKHS norm at the approximation witnesses $a'=a$, $h_J=f_J$, and $h_{j\ell}=F_{j\ell}$ gives
\begin{equation*}
\begin{aligned}
\norm{f}_{\mc H_{K_{n,2}}}^2&\leq\min_{f=a'+h_J+\sum_{j<\ell}h_{j\ell}}\bigg\{(a')^2+2\norm{h_J}_{\mc H_{W_{d_0,n}}}^2+2\binom q2\sum_{j<\ell}\norm{h_{j\ell}}_{\mc H_{W_{2,n}}}^2\bigg\}\\
&\leq a^2+2\norm{f_J}_{\mc H_{W_{d_0,n}}}^2+6\binom q2\bigg(\sum_j\norm{f_j}_{\mc H_{W_{2,n}}}^2+\sum_{j<\ell}\norm{f_{j\ell}}_{\mc H_{W_{2,n}}}^2\bigg).
\end{aligned}
\end{equation*}
For $k=1$, the same theorem applied to the representation $f=a+f_J+\sum_jf_j$ gives $\norm{f}_{\mc H_{K_{n,1}}}^2\leq a^2+2\norm{f_J}_{\mc H_{W_{d_0,n}}}^2+2q\sum_j\norm{f_j}_{\mc H_{W_{1,n}}}^2$.
Since $6\binom q2\leq3q^2$ and $2q\leq3q$, in either case
\begin{equation}\label{equation:proof-low-order-quotient-norm}
\begin{aligned}
\norm{f}_{\mc H_{K_{n,k}}}^2&\leq C\bigg(a^2+\norm{f_J}_{\mc H_{W_{d_0,n}}}^2\\
&\quad+q^k\bigg(\sum_j\norm{f_j}_{\mc H_{W_{k,n}}}^2+\sum_{j<\ell}\norm{f_{j\ell}}_{\mc H_{W_{2,n}}}^2\bigg)\bigg).
\end{aligned}
\end{equation}
As throughout the proof, the priority term is omitted when $J$ is empty and the bivariate sum is omitted when $k=1$.
The density upper bound in Assumption~\ref{assumption:sampling} gives $E[(g(\psi)-f(\psi))^2]\leq\overline p\norm{g-f}_{L^2([0,1]^d)}^2$.
For $k=2$, canonical orthogonality and the identity of $g_j^\chi$ and $g_j$ on the assigned unit square give $\norm{g-f}_{L^2([0,1]^d)}^2=\norm{g_J-f_J}_2^2+\sum_j\norm{g_j^\chi-f_j}_2^2+\sum_{j<\ell}\norm{g_{j\ell}-f_{j\ell}}_2^2$.
Combining these facts with Equation~\eqref{equation:proof-low-order-quotient-norm} collects the component errors and RKHS penalties into the profiles constructed above:
\begin{align*}
&E[(g(\psi)-f(\psi))^2]+\lambda_n\norm{f}_{\mc H_{K_{n,2}}}^2\leq C\lambda_n a^2+C\big(\norm{g_J-f_J}_2^2+\lambda_n\norm{f_J}_{\mc H_{W_{d_0,n}}}^2\big)\\
&+C\sum_j\big(\norm{g_j^\chi-f_j}_2^2+q^2\lambda_n\norm{f_j}_{\mc H_{W_{2,n}}}^2\big)+C\sum_{j<\ell}\big(\norm{g_{j\ell}-f_{j\ell}}_2^2+q^2\lambda_n\norm{f_{j\ell}}_{\mc H_{W_{2,n}}}^2\big).
\end{align*}
The parentheticals above are, respectively, $\mathcal R_{\lambda_n,W_{d_0,n}}(g_J,f_J)$, $\mathcal R_{q^2\lambda_n,W_{2,n}}(g_j^\chi,f_j)$, and $\mathcal R_{q^2\lambda_n,W_{2,n}}(g_{j\ell},f_{j\ell})$.
Continuing, by step 1 this is bounded above by
\begin{equation*}
C\lambda_n a^2+C(\lambda_n\log n)^{\alpha_J}\norm{g_J}_{H_{\mathrm{av}}^{s_0}(\mr^{d_0})}^2+C(q^2\lambda_n\log n)^{\alpha_2}\bigg(\sum_j\norm{g_j}_{H_{\mathrm{av}}^s(\mr)}^2+\sum_{j<\ell}\norm{g_{j\ell}}_{H_{\mathrm{av}}^s(\mr^2)}^2\bigg).
\end{equation*}
For $k=1$, the same calculation omits the interaction sum and replaces $(g_j^\chi,W_{2,n},q^2,\alpha_2)$ by $(g_j,W_{1,n},q,\alpha_1)$.
The pooled Sobolev budget bounds the right hand side by $C\lambda_n a^2+C((\lambda_n\log n)^{\alpha_J}+((d-d_0)^k\lambda_n\log n)^{\alpha_k})$.

It remains to control the intercept.
Because the canonical components are orthogonal, the density lower bound gives $a^2\leq\norm{g}_{L^2([0,1]^d)}^2\leq\underline p^{-1}E[g(\psi)^2]\leq\underline p^{-1}$.
Set $x=(d-d_0)^k\lambda_n\log n$.
After decreasing $c_0$ if necessary, $0<x\leq c_0<1$, and $(d-d_0)^k\geq1$ and $\alpha_k\leq1$ give $\lambda_n\leq x\leq x^{\alpha_k}$.
Hence the nonpriority profile absorbs $\lambda_n a^2$.
The componentwise bound and intercept calculation establish the required witness bound.
Taking the infimum over $f\in\mc H_{K_{n,k}}$ proves the claim.
\end{proof}

\begin{proof}[Proof of Theorem~\ref{theorem:proof-general-low-order-gsw}]
Define $a_n=\log n/n$ and $b_n=(d-d_0)^k\log n/n$.
We use the convention that the priority exponent and every priority term are omitted when $J$ is empty.
For part (i), put $\delta_n=1-\varphi_n=(1/2)\wedge b_n^{1/2}$.
Let $c_0$ and $n_0$ be as in Lemma~\ref{lemma:proof-low-order-matern-approximation}, decrease $c_0$ so that $c_0\leq1/2$, and set $b_0=(1/4)\wedge c_0^2$.
If $n<n_0$ or $b_n>b_0$, Proposition~\ref{proposition:kernel-oracle} evaluated at zero gives $\mc V_n(\sigma_n,P)\leq4E[m(\psi)^2]/\varphi_n\leq8$.
In this case, $b_n$ is bounded below by a positive constant depending only on the lemma constants: this follows from $b_n>b_0$ in the second case and from $b_n\geq\log n/n$ over the finite set $2\leq n<n_0$ in the first.
Enlarging $C$ therefore gives the claimed bound.
It remains to consider $n\geq n_0$ and $b_n\leq b_0$, for which $\delta_n=b_n^{1/2}\leq1/2$.
Because $0\in\mc G_k$ and $m_{\mc G_k}$ is a projection,
\begin{equation}
\Delta_{\mc G_k}(P)=4E[(m(\psi)-m_{\mc G_k}(\psi))^2]\leq4E[m(\psi)^2]\leq4.
\end{equation}
Projection contraction also gives $E[m_{\mc G_k}(\psi)^2]\leq E[m(\psi)^2]\leq1$, so the approximation lemma applies with $g=m_{\mc G_k}$.
Since $1-\varphi_n=\delta_n$ and $\varphi_n=1-\delta_n\geq1/2$, the part of $\Delta_{\mc G_k}(P)/\varphi_n$ exceeding $\Delta_{\mc G_k}(P)$ is at most $4\delta_n/(1-\delta_n)\leq8\delta_n$.
Proposition~\ref{proposition:proof-restricted-kernel-comparison} and Lemma~\ref{lemma:proof-low-order-matern-approximation} with $\lambda_n=(n\delta_n)^{-1}$ therefore yield
\begin{align}
\mc V_n(\sigma_n,P)&\leq\Delta_{\mc G_k}(P)+8\delta_n+8\inf_{f\in\mc H_{K_{n,k}}}\left\{E[(m_{\mc G_k}(\psi)-f(\psi))^2]+\frac{\norm{f}_{\mc H_{K_{n,k}}}^2}{n\delta_n}\right\}\notag\\
&\leq\Delta_{\mc G_k}(P)+C\left\{b_n^{1/2}+(a_n/\delta_n)^{\alpha_J}+(b_n/\delta_n)^{\alpha_k}\right\}.\label{equation:proof-low-order-approximation-bound}
\end{align}
The lemma's penalty conditions hold because $\delta_n\leq1$ gives $(n\delta_n)^{-1}\geq n^{-1}$, while $(d-d_0)^k\log n(n\delta_n)^{-1}=b_n/\delta_n=b_n^{1/2}\leq c_0$.
Moreover, $b_n\geq a_n$, so $(a_n/\delta_n)^{\alpha_J}\leq a_n^{\alpha_J/2}$.
Because $0<b_n\leq1$ and $\alpha_k\leq1$, we have $b_n^{1/2}=b_n^{\alpha_k/2}b_n^{(1-\alpha_k)/2}\leq b_n^{\alpha_k/2}$, while $(b_n/\delta_n)^{\alpha_k}=b_n^{\alpha_k/2}$.
Thus the remainder after $\Delta_{\mc G_k}(P)$ in Equation~\eqref{equation:proof-low-order-approximation-bound} is at most $C(a_n^{\alpha_J/2}+b_n^{\alpha_k/2})$.
By Cauchy--Schwarz, $a_n^{\alpha_J/2}+b_n^{\alpha_k/2}\leq\sqrt{2}(a_n^{\alpha_J}+b_n^{\alpha_k})^{1/2}$, and absorbing $\sqrt{2}$ into $C$ gives the claimed bound.
The constant is uniform over $P\in\mc P_{d,k}^{S}$, which proves part (i).

For part (ii), fix $\varphi\in(0,1)$ and suppose $m=m_{\mc G_k}$.
Proposition~\ref{proposition:kernel-oracle} gives
\begin{equation}\label{equation:proof-low-order-exact-oracle}
\mc V_n(\sigma_n,P)\leq\frac4\varphi\inf_{f\in\mc H_{K_{n,k}}}\left\{E[(m(\psi)-f(\psi))^2]+\frac{2\varphi}{n(1-\varphi)}\norm{f}_{\mc H_{K_{n,k}}}^2\right\}.
\end{equation}
Let $\lambda_n$ be the maximum of $2\varphi/\{n(1-\varphi)\}$ and $1/n$.
The approximation functional is nondecreasing in its penalty, so the right hand side of Equation~\eqref{equation:proof-low-order-exact-oracle} is at most the same expression evaluated at $\lambda_n$.
Write $L_\varphi=1\vee(2\varphi/(1-\varphi))$, so that $\lambda_n=L_\varphi/n$.
If $n\geq n_0$ and $L_\varphi b_n\leq c_0$, Lemma~\ref{lemma:proof-low-order-matern-approximation} applies and gives the claimed bound because $E[m(\psi)^2]\leq1$.
Otherwise, evaluating Equation~\eqref{equation:proof-low-order-exact-oracle} at zero gives $\mc V_n(\sigma_n,P)\leq4/\varphi$.
As above, $b_n$ is then bounded below by a positive constant depending only on $(n_0,c_0,\varphi)$, so enlarging the constant gives the same bound.
The constant may additionally depend on the fixed $\varphi$, which proves part (ii).
\end{proof}

\begin{proof}[Proof of Theorem~\ref{theorem:low-order-gsw}]
Apply Theorem~\ref{theorem:proof-general-low-order-gsw} with $J=\emptyset$ and $k=2$.
Then $d_0=0$, $K_{n,2}=K_n$, and $\alpha_2=s\wedge1$, so Equation~\eqref{equation:proof-general-low-order-gsw} reduces to Equation~\eqref{equation:low-order-gsw-approximation}.
\end{proof}

We next turn to the paired design.

\begin{proof}[Proof of Theorem~\ref{thm:matched-gsw}]
Let $U_M$ denote any exogenous randomness, independent of the data, used to break ties when constructing the matching, and set $\mc A_M=\sigma(\psi_{1:n},U_M)$.
Order the pairs and the two indices within each pair $\mc A_M$-measurably, and let $D_M$ be the $(n/2)\times n$ matrix whose $p$th row is $e_{i_p}'-e_{j_p}'$.
Write $A_n=[K(\psi_i,\psi_j)]_{i,j=1}^n$. Since $A_n$ is positive semidefinite, $G_n^M=D_MA_nD_M'$ is also positive semidefinite.
If $\widehat\kappa_n>0$, then $\Gamma_n^M\succeq\varphi I_{n/2}$ and $(\Gamma_n^M)_{pp}=\varphi+(1-\varphi)(G_n^M)_{pp}/\widehat\kappa_n\leq1$ for every $p$.
If $\widehat\kappa_n=0$, then $\Gamma_n^M=\varphi I_{n/2}$ has the same properties.
Let $B_M=(\Gamma_n^M)^{1/2}$.
Its $p$th column has squared norm $(\Gamma_n^M)_{pp}\leq1$, so Lemma~6.1 of \citet{harshaw2024gsw} gives $E[T|\mc A_M]=0$, while their Theorem~6.3 gives $\cov(B_MT|\mc A_M)\preceq I_{n/2}$.
Since $B_M$ is invertible,
\begin{equation}\label{equation:proof-matched-gsw-covariance}
E[T|\mc A_M]=0, \qquad E[TT'|\mc A_M]=\cov(T|\mc A_M)\preceq(\Gamma_n^M)^{-1}.
\end{equation}
Since $D_M$ is $\mc A_M$-measurable and $Z=D_M'T$, the tower law gives
\begin{equation*}
E[Z|\psi_{1:n}]=E\big[E[D_M'T|\mc A_M]\big|\psi_{1:n}\big]=E\big[D_M'E[T|\mc A_M]\big|\psi_{1:n}\big]=0.
\end{equation*}
By definition, the auxiliary GSW randomization is independent of the data and $U_M$.
Thus $U_M$ and the GSW randomization are jointly independent of the data.
Since $Z$ is a measurable function of $\psi_{1:n}$ and these random variables, $Z\indep(Y_i(0),Y_i(1))_{i=1}^n|\psi_{1:n}$.
Therefore, $\mathrm{GSW}_M(K,\varphi)$ is admissible.

Let $F_n:\mc H_K\to\mr^n$ be the evaluation operator $(F_nf)_i=f(\psi_i)$ and define $F_n^M=D_MF_n$, so $(F_n^Mf)_p=f(\psi_{i_p})-f(\psi_{j_p})$.
As in the proof of Proposition~\ref{proposition:kernel-oracle}, we first identify the adjoint $(F_n^M)^*:\mr^{n/2}\to\mc H_K$.
Fix $u\in\mr^{n/2}$ and $g\in\mc H_K$.
The reproducing property gives $\langle g,(F_n^M)^*u\rangle_{\mc H_K}=\langle F_n^Mg,u\rangle_2=\sum_{p=1}^{n/2}u_p(g(\psi_{i_p})-g(\psi_{j_p}))=\langle g,\sum_{p=1}^{n/2}u_p(K(\psi_{i_p},\cdot)-K(\psi_{j_p},\cdot))\rangle_{\mc H_K}$.
Since this holds for every $g\in\mc H_K$, $(F_n^M)^*u=\sum_{p=1}^{n/2}u_p(K(\psi_{i_p},\cdot)-K(\psi_{j_p},\cdot))$.
Taking $u=e_q$ gives $(F_n^M(F_n^M)^*)_{pq}=K(\psi_{i_p},\psi_{i_q})-K(\psi_{i_p},\psi_{j_q})-K(\psi_{j_p},\psi_{i_q})+K(\psi_{j_p},\psi_{j_q})=(G_n^M)_{pq}$.
Hence $F_n^M(F_n^M)^*=G_n^M$.

When $\widehat\kappa_n>0$, Equation~\eqref{equation:proof-matched-gsw-covariance} and the definition of $\Gamma_n^M$ give
\begin{equation*}
\cov(T|\mc A_M)\preceq\left(\varphi I_{n/2}+\frac{1-\varphi}{\widehat\kappa_n}F_n^M(F_n^M)^*\right)^{-1}.
\end{equation*}
Apply Lemma~\ref{lemma:proof-gsw-ridge} with $\mathcal H=\mc H_K$, $\mathcal A=\mc A_M$, $F=F_n^M$, $S=T$, $a=\varphi$, and $b=(1-\varphi)/\widehat\kappa_n$.
Evaluating the resulting minimum at any $f\in\mc H_K$ gives, for every $\mc A_M$-measurable $y$,
\begin{equation}\label{equation:proof-matched-ridge-bound}
E[(T'y)^2|\mc A_M]\leq\frac1\varphi\norm{y-F_n^Mf}_2^2+\frac{\widehat\kappa_n}{1-\varphi}\norm{f}_{\mc H_K}^2.
\end{equation}
When $\widehat\kappa_n=0$, the positive semidefinite matrix $G_n^M$ has zero diagonal and hence equals zero.
Therefore, $F_n^M(F_n^M)^*=0$.
For every $u\in\mr^{n/2}$, this gives $\norm{(F_n^M)^*u}_{\mc H_K}^2=u'F_n^M(F_n^M)^*u=0$, so $(F_n^M)^*=0$ and $F_n^M=0$.
Since $\Gamma_n^M=\varphi I_{n/2}$, Equation~\eqref{equation:proof-matched-gsw-covariance} gives
\begin{equation*}
E[(T'y)^2|\mc A_M]\leq\frac1\varphi\norm{y}_2^2=\frac1\varphi\norm{y-F_n^Mf}_2^2+\frac{\widehat\kappa_n}{1-\varphi}\norm{f}_{\mc H_K}^2.
\end{equation*}
Thus, Equation~\eqref{equation:proof-matched-ridge-bound} also holds in this case.
Let $m_{1:n}=(m(\psi_1),\ldots,m(\psi_n))'$.
Proposition~\ref{proposition:variance-decomposition}, the identity $Z=D_M'T$, and the tower law give
\begin{equation}
\mc V_n(\sigma,P)=\frac4nE[(T'D_Mm_{1:n})^2].
\end{equation}
For $f\in\mc H_K$, write $f_{1:n}=(f(\psi_1),\ldots,f(\psi_n))'$ and apply Equation~\eqref{equation:proof-matched-ridge-bound} with $y=D_Mm_{1:n}$.
Taking expectations and then the infimum over deterministic $f$ gives
\begin{align}
\mc V_n(\sigma,P)&\leq\inf_{f\in\mc H_K}E\left[\frac4{\varphi n}\norm{D_M(m_{1:n}-f_{1:n})}_2^2+\frac{4\widehat\kappa_n}{n(1-\varphi)}\norm{f}_{\mc H_K}^2\right]\notag\\
&=\inf_{f\in\mc H_K}\left\{\frac4\varphi Q_n^M(m-f)+\frac{4\kappa_n}{n(1-\varphi)}\norm{f}_{\mc H_K}^2\right\}.\label{equation:proof-matched-population-oracle}
\end{align}
The last equality uses $\kappa_n=E[\widehat\kappa_n]$ and the definition of $Q_n^M$.
This proves the theorem.
\end{proof}

\begin{proof}[Proof of Theorem~\ref{theorem:matched-low-order-gsw}]
Write $g=m_{\mc G_1}$ and $r=m-g$.
For any square-integrable $h$, set $\mu_h=E[h(\psi)]$.
For every pair $(i,j)\in M$, the inequality $(a-b)^2\leq2a^2+2b^2$ gives $(h(\psi_i)-h(\psi_j))^2\leq2(h(\psi_i)-\mu_h)^2+2(h(\psi_j)-\mu_h)^2$.
Since every unit belongs to exactly one pair, summing this inequality and taking expectations gives
\begin{align}
Q_n^M(h)&=\frac1nE\bigg[\sum_{(i,j)\in M}(h(\psi_i)-h(\psi_j))^2\bigg]\notag \leq\frac2nE\bigg[\sum_{i=1}^n(h(\psi_i)-\mu_h)^2\bigg]=2\var(h(\psi)).
\end{align}
Consequently, for every $f\in\mc H_{K_{n,1}}$,
\begin{equation}\label{equation:proof-pair-energy-split}
Q_n^M(m-f)\leq2Q_n^M(r)+2Q_n^M(g-f)\leq2Q_n^M(r)+4\var(g(\psi)-f(\psi)).
\end{equation}
The first inequality applies the same two-term bound to $m-f=r+(g-f)$ inside each pair difference.
The second applies the preceding bound with $h=g-f$.

Apply Theorem~\ref{thm:matched-gsw} with $K=K_{n,1}$ and use Equation~\eqref{equation:proof-pair-energy-split}.
Set $\lambda_n=\varphi\kappa_n/(4n(1-\varphi))$ and $\overline\lambda_n=\max\{\lambda_n,n^{-1}\}$.
We obtain
\begin{equation}\label{equation:proof-matched-low-order-oracle}
\mc V_n(\sigma_n,P)\leq\frac8\varphi Q_n^M(r)+\frac{16}\varphi\inf_{f\in\mc H_{K_{n,1}}}\left\{\var(g(\psi)-f(\psi))+\lambda_n\norm{f}_{\mc H_{K_{n,1}}}^2\right\}.
\end{equation}
Since $0\leq\kappa_n\leq4$ and $\varphi$ is fixed, $n^{-1}\leq\overline\lambda_n\leq C_\varphi n^{-1}$ for a constant $C_\varphi$ depending only on $\varphi$.
Let $c_0$ and $n_0$ be the constants in Lemma~\ref{lemma:proof-low-order-matern-approximation}, and set $b_n=d\log n/n$.
If $n<n_0$ or $C_\varphi b_n>c_0$, Equation~\eqref{equation:proof-matched-low-order-oracle} evaluated at zero and projection contraction give $\mc V_n(\sigma_n,P)\leq8Q_n^M(r)/\varphi+16/\varphi$.
In this case, $b_n$ is bounded below by a positive constant depending only on $(n_0,c_0,\varphi)$, so enlarging $C$ proves Equation~\eqref{equation:matched-low-order-gsw}.
It remains to consider $n\geq n_0$ and $C_\varphi b_n\leq c_0$.
We then have $d\log n\leq c_0n$ and $d\overline\lambda_n\log n\leq c_0$.
Together with $\overline\lambda_n\geq n^{-1}$, these are exactly the lemma's penalty conditions for $k=1$ and $J$ empty.
Projection contraction gives $E[g(\psi)^2]\leq E[m(\psi)^2]\leq1$.
Because $\var(g(\psi)-f(\psi))\leq E[(g(\psi)-f(\psi))^2]$ and the approximation functional is nondecreasing in its penalty, Equation~\eqref{equation:proof-matched-low-order-oracle} and Lemma~\ref{lemma:proof-low-order-matern-approximation}, applied with $k=1$, $J$ empty, $g=m_{\mc G_1}$, and penalty $\overline\lambda_n$, give
\begin{align*}
\mc V_n(\sigma_n,P)&\leq\frac8\varphi Q_n^M(r)+\frac{16}\varphi\inf_{f\in\mc H_{K_{n,1}}}\left\{E[(g(\psi)-f(\psi))^2]+\overline\lambda_n\norm{f}_{\mc H_{K_{n,1}}}^2\right\}\\
&\leq C\bigg(Q_n^M(r)+(d\overline\lambda_n\log n)^{(2s)\wedge1}\bigg).
\end{align*}
Finally, $\overline\lambda_n\leq C_\varphi/n$ bounds the last line by the right hand side of Equation~\eqref{equation:matched-low-order-gsw}.
This proves the theorem.

For the H\"older specialization, additionally suppose $d\geq3$, $M$ is 2-swap-stable for a cost satisfying Assumption~\ref{assumption:matching-cost}, and $r$ satisfies $|r(x)-r(y)|\leq L_r\norm{x-y}_2^\beta$.
Lemma~\ref{lemma:proof-stable-pair-distance} gives
\begin{equation}\label{equation:proof-holder-residual-energy}
Q_n^M(r)\leq\frac{L_r^2}{n}E\left[\sum_{p=1}^{n/2}\norm{\psi_{i_p}-\psi_{j_p}}_2^{2\beta}\right]\leq6L_r^2(44d\Lambda^2)^{2\beta}n^{-2\beta/d}.
\end{equation}
Substitution into Equation~\eqref{equation:matched-low-order-gsw} gives the displayed bound following the theorem.
\end{proof}

\subsection{Proofs for Appendix~\ref{appendix:matched-gsw-inference}}\label{appendix:proofs-matched-gsw-inference}

\begin{proof}[Proof of Theorem~\ref{theorem:matched-gsw-inference}]
Sign symmetry gives $\E[T_p|\mathcal W_{n,J}]=0$.
Moreover, $T_pT_q$ is $\{-1,1\}$-valued with conditional mean $R_{pq}$, so $P(T_pT_q=s|\mathcal W_{n,J})=(1+sR_{pq})/2$ for $s\in\{-1,1\}$.
For $p\neq q$, let $E_s=\{T_pT_q=s\}$.
Both the conditional assignment law and $E_s$ are invariant under the global sign reversal $T\mapsto-T$.
Hence $\E[T_p|E_s,\mathcal W_{n,J}]=\E[-T_p|E_s,\mathcal W_{n,J}]$, so $\E[T_p|E_s,\mathcal W_{n,J}]=0$, and the same argument applies to $T_q$.
Since $C_pC_q=A_pA_q+A_pT_qB_q+A_qT_pB_p+T_pT_qB_pB_q$, it follows that $\E[C_pC_q|E_s,\mathcal W_{n,J}]=A_pA_q+sB_pB_q$.
For fixed $p\neq q$, abbreviate $S=T_pT_q$, $r=R_{pq}$, $\ell=L_{pq}$, $a=A_pA_q$, and $b=B_pB_q$, and let $H_{pq}=(Sr+\ell)C_pC_q/(1+Sr)$ denote the corresponding summand in Equation~\eqref{equation:matched-inference-completion}.
Iterated expectations give
\begin{equation*}
\begin{aligned}
\E[H_{pq}|\mathcal W_{n,J}]&=\sum_{s=\pm1}\frac{1+sr}{2}\frac{sr+\ell}{1+sr}(a+sb)=\frac12\sum_{s=\pm1}(sr+\ell)(a+sb)\\
&=\frac12\bigl((r+\ell)(a+b)+(\ell-r)(a-b)\bigr)=\ell a+rb.
\end{aligned}
\end{equation*}
Thus, the factor $1+sr$ in the conditional probability cancels the denominator in the correction, leaving $\E[H_{pq}|\mathcal W_{n,J}]=L_{pq}A_pA_q+R_{pq}B_pB_q$.
On the diagonal, $\E[C_p^2|\mathcal W_{n,J}]=A_p^2+B_p^2=L_{pp}A_p^2+R_{pp}B_p^2$ because $\E[T_p|\mathcal W_{n,J}]=0$ and $L_{pp}=R_{pp}=1$.
Summing over $p$ and $q$ gives $2(A'LA+B'RB)/N$.
Equation~\eqref{equation:matched-inference-bank} and positive semidefiniteness of $L$ prove the conditional identity in the theorem.
Since $\E[\widehat\theta|\mathcal W_{n,J}]=\SATE_n$ is $\mathcal W_n$-measurable, the conditional law of total variance gives $\var(\widehat\theta|\mathcal W_n)=\E[\var(\widehat\theta|\mathcal W_{n,J})|\mathcal W_n]$.
Taking conditional expectations of that identity given $\mathcal W_n$ and using positive semidefiniteness of $L$ proves Equation~\eqref{equation:matched-inference-expectation}.
\end{proof}

\begin{proof}[Proof of Proposition~\ref{theorem:matched-gsw-ate-calibration}]
Write $c(\psi)=\E[\tau|\psi]$ and $v(\psi)=\var(\tau|\psi)$.
Decompose
\begin{equation}\label{equation:matched-inference-population-slack-decomposition}
\frac2NA'L_{\mathrm{pop}}A=\frac1N\sum_{p=1}^NA_p^2-\frac2N\sum_{\{p,q\}\in\Pi}A_pA_q+\frac1{N-1}\sum_{p=1}^N(A_p-\bar A)^2.
\end{equation}
Let $\mathcal P_{2,n}=\{\{i_p,j_p\}:p\in[N]\}$ be the collection of original pairs and let $\mathcal P_{4,n}=\{\{i_p,j_p,i_q,j_q\}:\{p,q\}\in\Pi\}$ be the collection of four-unit groups induced by the outer matching.
Writing each inner sum below over ordered pairs, the first term in Equation~\eqref{equation:matched-inference-population-slack-decomposition} equals $(2n)^{-1}(\sum_i\tau_i^2+\sum_{G\in\mathcal P_{2,n}}\sum_{k\neq\ell\in G}\tau_k\tau_\ell)$.
The cross-product magnitude $2N^{-1}\sum_{\{p,q\}\in\Pi}A_pA_q$ equals $(2n)^{-1}(\sum_{G\in\mathcal P_{4,n}}\sum_{k\neq\ell\in G}\tau_k\tau_\ell-\sum_{G\in\mathcal P_{2,n}}\sum_{k\neq\ell\in G}\tau_k\tau_\ell)$.

We analyze these sums using the bilinear-form lemma of \citet[Lemma C.11]{cytrynbaum2022local}.
For $\mathcal P_{2,n}$, the lemma's hypothesis is equivalent to the tight-matching condition in Equation~\eqref{equation:tight-matching}.
For $\mathcal P_{4,n}$, the matching-for-unions result of \citet[Lemma D.23]{cytrynbaum2022local} establishes the same hypothesis.
Apply Lemma C.11 first with $K=2$ and $\mathcal P_n=\mathcal P_{2,n}$, then with $K=4$ and $\mathcal P_n=\mathcal P_{4,n}$.
In both cases, set the lemma's generic variables equal to $\tau_i$ and sampling propensity equal to one.
It gives $n^{-1}\sum_{G\in\mathcal P_{2,n}}\sum_{k\neq\ell\in G}\tau_k\tau_\ell\convp\E[c(\psi)^2]$ and $n^{-1}\sum_{G\in\mathcal P_{4,n}}\sum_{k\neq\ell\in G}\tau_k\tau_\ell\convp 3\E[c(\psi)^2]$.

The law of large numbers also implies $n^{-1}\sum_i\tau_i^2\convp\E[\tau^2]$ and $\bar A=n^{-1}\sum_i\tau_i\convp\E[\tau]$.
Consequently, the first term in Equation~\eqref{equation:matched-inference-population-slack-decomposition} and the cross-product magnitude converge to $(\E[\tau^2]+\E[c(\psi)^2])/2$ and $\E[c(\psi)^2]$, respectively.
Their difference is $N^{-1}A'L_\Pi A\convp\E[v(\psi)]/2$.
For the third term, $(N-1)^{-1}\sum_p(A_p-\bar A)^2=N(N-1)^{-1}(N^{-1}\sum_pA_p^2-\bar A^2)\convp\var(c(\psi))+\E[v(\psi)]/2$.
Since $L_{\mathrm{pop}}=(L_\Pi+L_0)/2$, adding the limits gives $2A'L_{\mathrm{pop}}A/N\convp\var(\tau)$ by the law of total variance.
\end{proof}

\subsection{H\"older-to-Sobolev Transfer}\label{appendix:holder-sobolev}

For $0<\beta\leq1$, recall $[f]_\beta=\sup_{x\neq y}|f(x)-f(y)|/\norm{x-y}_2^\beta$ and let $\mc P_{d,\beta}^{\mathrm{Hol}}$ contain the laws for which $\norm{m}_\infty+[m]_\beta\leq1$.
We use the following embedding lemma to transfer Theorem~\ref{theorem:sobolev-gsw} to these classes, which shows that a $\beta$-H\"older outcome model can be treated as a Sobolev model of any order $s<\beta$.

\begin{lem}[H\"older-to-Sobolev Embedding]\label{lemma:holder-sobolev}
Fix $d\geq1$ and $0<\beta<1$.
There is a constant $C_{d,\beta}<\infty$ such that every bounded $f:[0,1]^d\to\mathbb R$ with $[f]_\beta<\infty$ and every $s\in[\beta/2,\beta)$ admit $F_s\in H^s(\mr^d)$ satisfying $F_s=f$ on $[0,1]^d$ and $\norm{F_s}_{H^s(\mr^d)}^2\leq C_{d,\beta}(\beta-s)^{-1}(\norm{f}_\infty+[f]_\beta)^2$.
Every bounded Lipschitz function $f:[0,1]^d\to\mathbb R$ admits $F_1\in H^1(\mr^d)$ satisfying $F_1=f$ on $[0,1]^d$ and $\norm{F_1}_{H^1(\mr^d)}\leq C_d(\norm{f}_\infty+[f]_1)$.
\end{lem}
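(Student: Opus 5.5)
We will construct $F_s$ by a bounded extension followed by a smooth cutoff, and then bound its Sobolev norm using the Gagliardo (Slobodeckij) characterization of $H^s(\mr^d)$ for $0<s<1$:
\begin{equation*}
\norm{F}_{H^s(\mr^d)}^2\asymp_{d,s}\norm{F}_{L^2(\mr^d)}^2+\int_{\mr^d}\int_{\mr^d}\frac{|F(x)-F(y)|^2}{\norm{x-y}_2^{d+2s}}\,dx\,dy.
\end{equation*}
The equivalence constants in this characterization depend on $s$. Since $s\in[\beta/2,\beta)\subset[\beta/2,1)$ stays in a compact subset of $(0,1)$ for fixed $\beta<1$, they are uniform in $s$ and depend only on $(d,\beta)$. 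We will rederive them from the Fourier side via $\int_{\mr^d}|1-e^{i\omega'h}|^2\norm{h}_2^{-d-2s}dh=c_{d,s}\norm{\omega}_2^{2s}$, with $c_{d,s}$ continuous and positive on compact subsets of $(0,1)$.

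\textbf{Extension.} We first extend $f$ to all of $\mr^d$ without increasing its H\"older seminorm or sup norm. Take the McShane-type extension $\tilde f(x)=\inf_{y\in[0,1]^d}\{f(y)+[f]_\beta\norm{x-y}_2^\beta\}$, which works because $t\mapsto t^\beta$ is a metric for $\beta\le1$. Then truncate at $\pm\norm{f}_\infty$. This gives $\norm{\tilde f}_\infty\le\norm{f}_\infty$ and $[\tilde f]_\beta\le[f]_\beta$. Next fix a smooth cutoff $\chi_d$ equal to one on $[0,1]^d$ and supported in $[-1,2]^d$, and set $F_s=\chi_d\tilde f$, which does not depend on $s$. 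By the product rule for differences, $F_s$ is bounded by $\norm{f}_\infty$, is compactly supported, and satisfies
\begin{equation*}
|F_s(x)-F_s(y)|\le C_d(\norm{f}_\infty+[f]_\beta)\min\{\norm{x-y}_2^\beta,1\}.
\end{equation*}
This uses that $\norm{x-y}_2\le\norm{x-y}_2^\beta$ when $\norm{x-y}_2\le1$, together with boundedness for larger differences.

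\textbf{Norm bound.} Write $A=\norm{f}_\infty+[f]_\beta$. The $L^2$ term is at most $C_dA^2$. In the double integral, the $x$- and $y$-integrals are effectively localized: if both points lie outside the support then the integrand vanishes, so by symmetry we may restrict $x$ to the support $[-1,2]^d$, which has volume $3^d$. For fixed $x$, we split the $h=y-x$ integral at $\norm{h}_2=1$.

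For the near part,
\begin{equation*}
\int_{\norm{h}\le1}\norm{h}_2^{2\beta-d-2s}\,dh=|\mathbb S^{d-1}|\int_0^1 r^{2(\beta-s)-1}\,dr=\frac{|\mathbb S^{d-1}|}{2(\beta-s)},
\end{equation*}
which is exactly the source of the $(\beta-s)^{-1}$ factor. For the far part, $\int_{\norm{h}>1}\norm{h}_2^{-d-2s}\,dh=|\mathbb S^{d-1}|/(2s)\le|\mathbb S^{d-1}|/\beta$, since $s\ge\beta/2$. Combining the two parts gives $\norm{F_s}_{H^s}^2\le C_{d,\beta}(\beta-s)^{-1}A^2$, using $(\beta-s)^{-1}\ge1$ to absorb the bounded terms.

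\textbf{Lipschitz case.} For $\beta=1$ we use the classical Lipschitz extension with the same cutoff. Rademacher's theorem gives $\nabla F_1\in L^\infty$ with $\norm{\nabla F_1}_\infty\le C_dA$, and compact support then gives $\norm{F_1}_{H^1}\le C_dA$ through the derivative characterization of $H^1$ in Equation~\eqref{equation:sobolev-whole-space-norm}.

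The main obstacle is the uniformity of the Gagliardo--Fourier norm equivalence over $s\in[\beta/2,\beta)$. In particular, the constant $c_{d,s}$ must not degenerate, which holds because $s$ stays away from both $0$ and $1$. It is also necessary that the blowup of the norm is confined to the explicit near-diagonal integral, which produces precisely $(\beta-s)^{-1}$.
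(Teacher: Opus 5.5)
Your proposal is correct and follows essentially the same route as the paper. Both arguments use the McShane extension in the metric $\|x-y\|_2^\beta$, truncation at $\pm\|f\|_\infty$, and a smooth cutoff. Both then apply the Gagliardo characterization of $H^s(\mathbb R^d)$ with one variable restricted to the support, and split the polar integral at radius one: the near part produces $(\beta-s)^{-1}$ and the far part gives $1/(2s)\le 1/\beta$. Rademacher's theorem handles $\beta=1$ in both.

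The only difference is in justifying uniformity over $s\in[\beta/2,\beta)$ of the Gagliardo--Fourier norm comparison. You derive it from the multiplier identity $\int|1-e^{i\omega'h}|^2\|h\|_2^{-d-2s}dh=c_{d,s}\|\omega\|_2^{2s}$, where only a lower bound on $c_{d,s}$ over that compact range is needed. The paper simply cites this uniformity.
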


Choosing $s$ just below $\beta$ trades the norm inflation $(\beta-s)^{-1}$ against the rate in Theorem~\ref{theorem:sobolev-gsw}.
Taking $s=\beta-1/\log n$ limits this inflation to a logarithmic factor while preserving the exponent $2\beta/d$ up to constants.
When $\beta=1$, the endpoint embedding applies directly without this additional logarithmic loss.
The following corollary records the resulting guarantee.

% PROOF-ID: holder-gsw-upper
\begin{cor}[H\"older Adaptation]\label{corollary:holder-gsw}
Fix $d\geq3$, $0<\beta\leq1$, $c>0$, and $\varphi\in(0,1)$.
Let $\sigma_n$ be the design in Theorem~\ref{theorem:sobolev-gsw}.
There is a constant $C$ depending only on $(d,\beta,c,\varphi,\overline p)$ such that, for all sufficiently large $n$,
\begin{equation}\label{equation:holder-gsw}
\sup_{P\in\mc P_{d,\beta}^{\mathrm{Hol}}} \left\{ n\var_{P,\sigma_n}(\widehat\theta)-V^*(P) \right\} \leq C\begin{cases}
\log n\left(\dfrac n{\log n}\right)^{-2\beta/d},&0<\beta<1,\\[6pt]
\left(\dfrac n{\log n}\right)^{-2/d},&\beta=1.
\end{cases}
\end{equation}
\end{cor}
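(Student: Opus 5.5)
The plan is to combine the H\"older-to-Sobolev embedding in Lemma~\ref{lemma:holder-sobolev} with the penalized approximation machinery behind Theorem~\ref{theorem:sobolev-gsw}. We do not pass through the class $\mc P_d^s$, because the extension $F_s$ has norm larger than one. Instead we work directly from Proposition~\ref{proposition:kernel-oracle}. Fix $P\in\mc P_{d,\beta}^{\mathrm{Hol}}$, so that $\norm{m}_\infty+[m]_\beta\leq1$. Exactly as in the proof of Theorem~\ref{theorem:sobolev-gsw}, set $\lambda_n=\max\{2\varphi/((1-\varphi)n),n^{-1}\}$ and use $\norm{v}_{\mc H_{K_n}}\leq\norm{v}_{\mc H_{W_n}}$. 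This gives
\[
\mc V_n(\sigma_n,P)\leq\frac4\varphi\inf_{v\in\mc H_{W_n}}\Bigl\{E[(m(\psi)-v(\psi))^2]+\lambda_n\norm{v}_{\mc H_{W_n}}^2\Bigr\}.
\]
Since the criterion only evaluates $m$ on $[0,1]^d$, we may replace $m$ by any extension $F$ to $\mr^d$ that agrees with $m$ on the cube.

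\textbf{Case $0<\beta<1$.} Take $s_n=\beta-1/\log n$, which lies in $[\beta/2,\beta)$ for large $n$. Lemma~\ref{lemma:holder-sobolev} then supplies $F_{s_n}$ with $\norm{F_{s_n}}_{H^{s_n}}^2\leq C_{d,\beta}\log n$. For fixed $d$, the averaged norm $H_{\mathrm{av}}^{s}$ is equivalent to $H^{s}$ with constants $d^{s}\leq d$ that are uniform in $s\leq1$. Apply the population version of Lemma~\ref{lemma:proof-penalized-matern-approximation} with $h=F_{s_n}$. Since $d\geq3>2\beta$, we have $2s_n/d<1$, and the bound is $C\log n\,(\lambda_n\log n)^{2s_n/d}$.

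The main obstacle is that $s_n$ varies with $n$, while the lemma's constants are stated for a fixed $s$. Inspecting that proof, however, the constant depends on $s$ only through $e^{4sc}$ and the choice of $c_0$, and both are uniform over $s\in[\beta/2,\beta]$. So I would restate the lemma as uniform over $s$ in a compact interval with $s\leq d/2$; this case is $d$-independent of the $\gamma_s$ branch. Finally, $(\log n/n)^{2s_n/d}=(\log n/n)^{2\beta/d}(n/\log n)^{2/(d\log n)}$, and the last factor is at most $e^{2/d}$, which is bounded. The result is $C\log n\,(n/\log n)^{-2\beta/d}$.

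\textbf{Case $\beta=1$.} Use the Lipschitz endpoint of Lemma~\ref{lemma:holder-sobolev}. This gives $F_1\in H^1$ with norm bounded by $C_d$, hence $\norm{F_1}_{H_{\mathrm{av}}^1}\leq C_d$. Applying Lemma~\ref{lemma:proof-penalized-matern-approximation} at the fixed smoothness $s=1\leq d/2$ yields $C(\log n/n)^{2/d}$, with no extra log factor.

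In both cases the constants depend only on $(d,\beta,c,\varphi,\overline p)$ and not on $P$. The bound holds once $n$ is large enough that the lemma's penalty condition holds and $s_n\geq\beta/2$. Taking the supremum over $P\in\mc P_{d,\beta}^{\mathrm{Hol}}$ proves Equation~\eqref{equation:holder-gsw}.
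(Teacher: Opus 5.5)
Your proposal is correct and follows the paper's proof step for step: you use the same oracle reduction with $\lambda_n=\max\{2\varphi/((1-\varphi)n),n^{-1}\}$, the same choice $s_n=\beta-1/\log n$ with the extension from Lemma~\ref{lemma:holder-sobolev}, the same uniformity of the subcritical constants of Lemma~\ref{lemma:proof-penalized-matern-approximation} over $s\in[\beta/2,\beta]$ via the $e^{4\beta c}$ factor, the same $e^{2/d}$ correction for the moving exponent, and the same direct $H^1$ endpoint at $\beta=1$. The only cosmetic difference is that you invoke a two-sided $H_{\mathrm{av}}^s$--$H^s$ equivalence with constant $d^s$, whereas only the one-sided bound $\norm{F}_{H_{\mathrm{av}}^s(\mr^d)}\leq\norm{F}_{H^s(\mr^d)}$ is needed, and it holds with constant one because $(1+\norm{\omega}_2^2/d)^s\leq(1+\norm{\omega}_2^2)^s$.
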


%% ========================= Secondary Online Appendix ========================
%% Included in the working paper for review. Comment out the input below when
%% compiling the journal-submission version.
\clearpage

\hypersetup{pageanchor=false}
\pagenumbering{arabic}\renewcommand{\thepage}{\arabic{page}}

\begin{center}
{\Large Secondary Online Appendix to ``The Limits of Experimental Design: Covariate Balance Beyond Low Dimension''}
\vskip 24pt
{\large Max Cytrynbaum}
\end{center}

This secondary online appendix is not intended for publication.

\setcounter{thm}{0}
\setcounter{equation}{0}

\section{Additional Results and Proofs}

\subsection{Additional Results for Section~\ref*{section:structure}}

\emph{Mat\'ern RKHS.}
Consider the Mat\'ern kernel.
For $d\geq1$ and $\nu>0$, put $r=d/2+\nu$ and define the spectral density $\rho_{d,\nu}(\omega)=\pi^{-d/2}\Gamma(r)\Gamma(\nu)^{-1}(1+\norm{\omega}_2^2)^{-r}$.
Then
\begin{equation}\label{equation:proof-matern-spectral-convention}
W_d^{\mathrm{Mat},\nu}(x,x')=\int_{\mathbb R^d}e^{i\omega'(x-x')}\rho_{d,\nu}(\omega)d\omega.
\end{equation}
This representation follows, for example, by setting $\ell=\sqrt{2\nu}$ in the Mat\'ern spectral density of \citet[Equation~(4.15)]{rasmussen2006gaussian}, substituting it into their Equation~(4.6), and changing variables.
A calculation shows that $W_d^{\mathrm{Mat},\nu}(x,x)=\int_{\mathbb R^d}\rho_{d,\nu}(\omega)d\omega=1$.
Because $\rho_{d,\nu}$ is even, Equation~\eqref{equation:proof-matern-spectral-convention} also shows that the kernel is real-valued and symmetric.
We next verify that this kernel is positive definite. For distinct $x_1,\ldots,x_N\in\mathbb R^d$ and nonzero $c\in\mathbb R^N$, Equation~\eqref{equation:proof-matern-spectral-convention} gives
\begin{equation}\label{equation:proof-matern-positive-definite}
\sum_{i,j=1}^Nc_ic_jW_d^{\mathrm{Mat},\nu}(x_i,x_j)=\int_{\mathbb R^d}\left|\sum_{i=1}^Nc_ie^{i\omega'x_i}\right|^2\rho_{d,\nu}(\omega)d\omega>0.
\end{equation}
The strict inequality follows because $\rho_{d,\nu}(\omega)>0$ and the distinct complex exponentials are linearly independent. Thus, $W_d^{\mathrm{Mat},\nu}$ is positive definite.

Let $\mc H_W(\mathbb R^d)$ denote the RKHS generated by $W$ on $\mathbb R^d$.
Write $\mathcal F$ and $\mathcal F^{-1}$ for the Fourier transform and inverse Fourier transform under the convention $(\mathcal F h)(\omega)=\widehat h(\omega)=(2\pi)^{-d/2}\int_{\mathbb R^d}e^{-i\omega'x}h(x)dx$.
We invoke Theorem~10.12 of \citet{wendland2005scattered}, which characterizes the RKHS of any real-valued, symmetric, positive definite kernel $W(x,x')=\kappa(x-x')$ with $\kappa\in C(\mathbb R^d)\cap L^1(\mathbb R^d)$ through $\widehat\kappa$. In particular, when $\widehat\kappa>0$, \citet{wendland2005scattered} shows that $\norm{h}_{\mc H_W(\mathbb R^d)}^2=(2\pi)^{-d/2}\int_{\mathbb R^d}|\widehat h(\omega)|^2/\widehat\kappa(\omega)d\omega$ and
\begin{equation}\label{equation:proof-wendland-characterization}
\mc H_W(\mathbb R^d)=\bigg\{h\in C(\mathbb R^d)\cap L^2(\mathbb R^d):\norm{h}_{\mc H_W(\mathbb R^d)}<\infty\bigg\}.
\end{equation}
The next lemma specializes this characterization to the Mat\'ern kernel.

\begin{lem}[Mat\'ern RKHS Norm]\label{lemma:proof-matern-native-norm}
Let $W=W_d^{\mathrm{Mat},\nu}$ and $\Lambda_{r,d}=(4\pi)^{-d/2}\Gamma(\nu)/\Gamma(r)$ for $r=d/2+\nu$.
Define $\Theta_{r,d}=d^r\Lambda_{r,d}$.
Then $\mc H_W(\mathbb R^d)=H^r(\mathbb R^d)$.
For $h\in\mc H_W(\mathbb R^d)$, the RKHS norm is
\begin{equation}\label{equation:proof-matern-native-norm}
\norm{h}_{\mc H_W(\mathbb R^d)}^2=\Lambda_{r,d}\int_{\mathbb R^d}|\widehat h(\omega)|^2(1+\norm{\omega}_2^2)^r d\omega.
\end{equation}
Moreover, for every $\bar\nu>0$, there is $C_{\bar\nu}<\infty$ such that, for all $d\geq1$ and $0<\nu\leq\bar\nu$, $\Theta_{d/2+\nu,d}\leq C_{\bar\nu}/\nu$.
\end{lem}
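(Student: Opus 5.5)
We would prove the lemma in two parts. The first part is the RKHS characterization and norm formula. The second is the dimension-uniform bound on $\Theta_{d/2+\nu,d}$.

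\emph{Characterization and norm.} Write $W(x,x')=\kappa(x-x')$ with $\kappa(z)=\int e^{i\omega'z}\rho_{d,\nu}(\omega)d\omega$. Since $\rho_{d,\nu}\in L^1$ (because $2r>d$), $\kappa$ is continuous. The Mat\'ern function also decays exponentially, so $\kappa\in L^1(\mathbb R^d)$. Fourier inversion under the stated convention then gives $\widehat\kappa(\omega)=(2\pi)^{d/2}\rho_{d,\nu}(\omega)>0$. Positive definiteness was already verified above, so Theorem~10.12 of \citet{wendland2005scattered} applies. It gives
\[
\norm{h}_{\mc H_W}^2=(2\pi)^{-d/2}\int|\widehat h|^2/\widehat\kappa=(2\pi)^{-d}\int|\widehat h(\omega)|^2\rho_{d,\nu}(\omega)^{-1}d\omega .
\]
Substituting $\rho_{d,\nu}^{-1}=\pi^{d/2}\Gamma(\nu)\Gamma(r)^{-1}(1+\norm{\omega}_2^2)^r$ gives the constant $(2\pi)^{-d}\pi^{d/2}\Gamma(\nu)/\Gamma(r)=(4\pi)^{-d/2}\Gamma(\nu)/\Gamma(r)=\Lambda_{r,d}$. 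This proves Equation~\eqref{equation:proof-matern-native-norm}.

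For the set equality, finiteness of the right-hand side is exactly membership in $H^r(\mathbb R^d)$. Conversely, because $r>d/2$, every $h\in H^r$ has $\widehat h\in L^1$ by Cauchy--Schwarz, so $h$ has a continuous representative in $L^2$. Hence the Wendland set \eqref{equation:proof-wendland-characterization} coincides with $H^r(\mathbb R^d)$.

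\emph{Uniform bound.} We have $\Theta_{d/2+\nu,d}=d^{d/2+\nu}(4\pi)^{-d/2}\Gamma(\nu)/\Gamma(d/2+\nu)$. Since $\Gamma(\nu)\leq C_{\bar\nu}/\nu$ for $0<\nu\leq\bar\nu$ (because $\nu\Gamma(\nu)=\Gamma(1+\nu)$ is bounded on $(0,\bar\nu]$), it suffices to show
\[
\sup_{d\geq1,\,0<\nu\leq\bar\nu}\ d^{d/2+\nu}(4\pi)^{-d/2}/\Gamma(d/2+\nu)<\infty .
\]
Set $x=d/2+\nu$. Stirling's lower bound gives $\Gamma(x)\geq\sqrt{2\pi}\,x^{x-1/2}e^{-x}$. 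The quantity is then at most
\[
C\,d^{x}(4\pi)^{-d/2}x^{1/2-x}e^{x}=C\,(d/x)^{x}\,x^{1/2}\,e^{x}(4\pi)^{-d/2}.
\]
Since $x\geq d/2$, we have $(d/x)^x\leq 2^x$. This yields the bound $C\,x^{1/2}(2e)^{x}(4\pi)^{-d/2}=C\,x^{1/2}(2e)^{\nu}\bigl((2e)/(2\sqrt{\pi})\bigr)^{d}$. Here $(2e)^{x}=(2e)^{d/2}(2e)^{\nu}$, and $(2e)^{1/2}/(4\pi)^{1/2}=(e/(2\pi))^{1/2}<1$. So the expression is $C(2e)^{\bar\nu}x^{1/2}\,(e/(2\pi))^{d/2}$, which is bounded over $d$ because geometric decay beats $x^{1/2}$.

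\emph{Main obstacle.} The main difficulty is this Stirling bookkeeping. The factor $d^{r}$ from $\Theta$ is not killed by $\Gamma(r)$ alone. It is killed only in combination with $(4\pi)^{-d/2}$, using $e/(2\pi)<1$, so the constants must be tracked carefully, including the $\nu$-dependent factors $d^{\nu}$ versus $x^{\nu}$.
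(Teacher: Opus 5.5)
Your proposal is correct and follows the paper's proof essentially step for step: Wendland's Theorem~10.12 with $\widehat\kappa=(2\pi)^{d/2}\rho_{d,\nu}$ gives the norm identity and the set equality, and then $\nu\Gamma(\nu)=\Gamma(1+\nu)$ together with Stirling's lower bound produces the decaying factor $(e/(2\pi))^{d/2}$. The only difference is that you apply Stirling directly at $x=d/2+\nu$ and absorb $d^{\nu}$ via $(d/x)^x\le 2^x$, which covers every $d\ge1$ at once, whereas the paper first replaces $\Gamma(d/2+\nu)$ by $\Gamma(d/2)$ using monotonicity for $d\ge3$ and handles $d=1,2$ by continuity in $\nu$; also, your intermediate factor should read $(2e)^{d/2}(4\pi)^{-d/2}$ rather than $((2e)/(2\sqrt{\pi}))^{d}$, which is harmless because your next line uses the correct ratio $(e/(2\pi))^{1/2}$.
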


\begin{proof}
We have $\kappa(u)=W(u,0)$.
Positive definiteness was established in Equation~\eqref{equation:proof-matern-positive-definite}. It remains to verify that $\kappa\in C(\mathbb R^d)\cap L^1(\mathbb R^d)$ and $\widehat\kappa>0$. Since $r>d/2$, $\rho_{d,\nu}\in L^1(\mathbb R^d)$, and Equation~\eqref{equation:proof-matern-spectral-convention} gives $\kappa=\mathcal F^{-1}[(2\pi)^{d/2}\rho_{d,\nu}]$. Thus, $\kappa\in C(\mathbb R^d)$ as the inverse Fourier transform of an $L^1$ function. Equation~(4.14) of \citet{rasmussen2006gaussian} with length scale $\sqrt{2\nu}$ gives $\kappa(u)=2^{1-\nu}\Gamma(\nu)^{-1}\norm{u}_2^\nu K_\nu(\norm{u}_2)$, where $K_\nu$ is the modified Bessel function of the second kind. This expression is bounded near zero and decays exponentially as $\norm{u}_2\to\infty$, so $\kappa\in L^1(\mathbb R^d)$. Since $\rho_{d,\nu}$ is continuous, Fourier inversion gives $\widehat\kappa=\mathcal F\kappa=\mathcal F\left(\mathcal F^{-1}[(2\pi)^{d/2}\rho_{d,\nu}]\right)=(2\pi)^{d/2}\rho_{d,\nu}>0$. Hence Equation~\eqref{equation:proof-wendland-characterization} applies. Substituting $\widehat\kappa$ into the formula above gives for $h\in\mc H_W(\mathbb R^d)$,
\begin{align*}
\norm{h}_{\mc H_W(\mathbb R^d)}^2&=(2\pi)^{-d/2}\int_{\mathbb R^d}\frac{|\widehat h(\omega)|^2}{\widehat\kappa(\omega)}d\omega =(4\pi)^{-d/2}\frac{\Gamma(\nu)}{\Gamma(r)}\int_{\mathbb R^d}|\widehat h(\omega)|^2(1+\norm{\omega}_2^2)^r d\omega.
\end{align*}
This proves Equation~\eqref{equation:proof-matern-native-norm}. It remains to identify $\mc H_W(\mathbb R^d)=H^r(\mathbb R^d)$. By Equation~\eqref{equation:sobolev-spectral-norm}, $H^r(\mathbb R^d)=\{h\in L^2(\mathbb R^d):\int_{\mathbb R^d}|\widehat h(\omega)|^2(1+\norm{\omega}_2^2)^r d\omega<\infty\}$. Moreover, since $r>d/2$, the Sobolev embedding theorem gives $H^r(\mathbb R^d)\subseteq C(\mathbb R^d)$ \citep{taylor2011partial}. Comparing these facts with Equation~\eqref{equation:proof-wendland-characterization} proves the identification and, together with Equation~\eqref{equation:proof-matern-native-norm}, Proposition~\ref{proposition:matern-sobolev-equivalence}.

For the final statement, continuity of $\Gamma$ on $[1,1+\bar\nu]$ gives $M_{\bar\nu}\equiv\sup_{0\leq t\leq\bar\nu}\Gamma(1+t)<\infty$, so $\Gamma(\nu)=\Gamma(1+\nu)/\nu\leq M_{\bar\nu}/\nu$. For $d\geq3$, monotonicity of $\Gamma$ on $[3/2,\infty)$ and Stirling's lower bound give
\begin{equation*}
\Theta_{d/2+\nu,d}\leq\frac{M_{\bar\nu}}\nu d^{\bar\nu}\frac{(d/(4\pi))^{d/2}}{\Gamma(d/2)}\leq\frac C\nu d^{\bar\nu+1/2}\left(\frac e{2\pi}\right)^{d/2}\leq\frac{C_{\bar\nu}}\nu.
\end{equation*}
For $d=1,2$, $\nu\Theta_{d/2+\nu,d}$ extends continuously to $0\leq\nu\leq\bar\nu$ and is therefore bounded. This proves the final statement.
\end{proof}

\begin{proof}[Proof of Proposition~\ref{proposition:kernel-oracle}]
In the notation of Lemma~\ref{lemma:proof-gsw-ridge}, let $\mathcal H=\mc H_K$, $x_i=K(\psi_i,\cdot)$, and $F=F_n$, where $F_n:\mathcal H\to\mathbb R^n$ is the evaluation operator $(F_ng)_i=g(\psi_i)$.
First, we identify the adjoint $F_n^*:\mathbb R^n\to\mathcal H$.
Fix $u\in\mathbb R^n$ and any $g\in\mathcal H$.
By the reproducing property, $\langle g,F_n^*u\rangle_{\mathcal H}=\langle F_ng,u\rangle_2=\sum_{i=1}^nu_ig(\psi_i)=\langle g,\sum_{i=1}^nu_iK(\psi_i,\cdot)\rangle_{\mathcal H}$.
This holds for every $g\in\mathcal H$, so $F_n^*u=\sum_{i=1}^nu_iK(\psi_i,\cdot)$.
Next, we claim that $F_nF_n^*=K_n$.
Taking $u=e_j$ gives $(F_nF_n^*)_{ij}=(F_n^*e_j)(\psi_i)=K(\psi_j,\psi_i)=K(\psi_i,\psi_j)$, proving the claim.

Let $m_{1:n}=(m(\psi_1),\ldots,m(\psi_n))'$.
Together with Equation~\eqref{equation:proof-section4-gsw-covariance}, the identity $F_nF_n^*=K_n$ gives $E[Z|\mathcal F_n]=0$ and $\cov(Z|\mathcal F_n)\preceq(\varphi I_n+(1-\varphi)F_nF_n^*/2)^{-1}$. Applying Lemma~\ref{lemma:proof-gsw-ridge} with $\mathcal A=\mathcal F_n$, $S=Z$, $y=m_{1:n}$, $a=\varphi$, and $b=(1-\varphi)/2$ therefore gives, for every fixed $g\in\mc H_K$,
\begin{align}
E\bigg[\bigg(\sum_{i=1}^nZ_im(\psi_i)\bigg)^2\big|\mathcal F_n\bigg]&\leq\min_{h\in\mc H_K}\bigg\{\frac1\varphi\norm{m_{1:n}-F_nh}_2^2+\frac2{1-\varphi}\norm{h}_{\mc H_K}^2\bigg\}\nonumber\\
&\leq\frac1\varphi\sum_{i=1}^n(m(\psi_i)-g(\psi_i))^2+\frac2{1-\varphi}\norm{g}_{\mc H_K}^2.\label{equation:proof-kernel-conditional-oracle}
\end{align}
The second inequality evaluates the minimum at $h=g$.
Take expectations in Equation~\eqref{equation:proof-kernel-conditional-oracle}, multiply by $4/n$, and apply Proposition~\ref{proposition:variance-decomposition}. The resulting inequality holds for every fixed $g\in\mc H_K$. Taking the infimum over $g$ proves Equation~\eqref{equation:kernel-oracle}.
\end{proof}

\begin{lem}[Sobolev Variance Calibration]\label{lemma:proof-sobolev-variance-calibration}
Fix $s>0$ and let $U_d$ be uniform on $[0,1]^d$.
There is a constant $c_s>0$ depending only on $s$ such that, for every $d\geq1$,
\begin{equation*}
c_s\leq\sup_{\norm{m}_{H_{\mathrm{av}}^s(\mathbb R^d)}\leq1}\var(m(U_d))\leq1.
\end{equation*}
\end{lem}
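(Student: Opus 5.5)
The upper bound is immediate from Plancherel. For any $m$ with $\norm{m}_{H_{\mathrm{av}}^s(\mr^d)}\leq1$ we have
\[
\var(m(U_d))\leq E[m(U_d)^2]=\int_{[0,1]^d}m^2\leq\int_{\mr^d}|\widehat m(\omega)|^2d\omega\leq\int_{\mr^d}|\widehat m(\omega)|^2\Bigl(1+\frac{\norm{\omega}_2^2}{d}\Bigr)^sd\omega\leq1,
\]
because the weight is at least one.

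For the lower bound I will build a product-form witness whose Sobolev norm stays bounded uniformly in $d$. A pure product $\prod_\ell\phi(x_\ell)$ does not work: its variance decays like $\rho^d$ unless $\phi$ is essentially constant on $[0,1]$. A function depending on a single coordinate also fails, since it is not in $L^2(\mr^d)$. I will therefore combine the two. Take $\chi\in C_c^\infty(\mr)$ with $\chi=1$ on $[0,1]$, and take $\eta\in C_c^\infty(\mr)$ supported in $(0,1)$ with $\int_0^1\eta^2=1$. Define
\[
g(x)=\eta(x_1)\prod_{\ell=2}^d\chi(x_\ell)=\eta(x_1)\quad\text{on }[0,1]^d.
\]
On the cube, $\var(g(U_d))=\var(\eta(U_1))=1-(\int_0^1\eta)^2$. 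Choosing $\eta$ with $\int_0^1\eta=0$ (the same normalization as in the proof of Theorem~\ref{theorem:sobolev-lower}) gives variance exactly one, and this does not depend on $d$.

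The main obstacle is controlling $\norm{g}_{H_{\mathrm{av}}^s}$ uniformly in $d$, because $\norm{\chi}_{L^2(\mr)}^{2(d-1)}$ grows geometrically once $\norm{\chi}_{L^2}>1$, which is forced since $\chi=1$ on $[0,1]$. I would first try to choose $\chi$ so that $\int\chi^2$ is only slightly above one. That alone is not enough, since any excess still compounds exponentially in $d$. Instead I will rescale the whole witness: set $m=g/\norm{g}_{H_{\mathrm{av}}^s}$, so that $\var(m(U_d))=\norm{g}_{H_{\mathrm{av}}^s}^{-2}$. This still needs a $d$-free bound, so the product structure has to be abandoned in the other coordinates.

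The fix is to use a function that is constant in $x_2,\ldots,x_d$ only on the cube while being globally square-integrable without a product. Set $g(x)=\eta(x_1)\,\Phi(x_{2:d})$, where $\Phi$ is a radial cutoff with $\Phi=1$ on $[0,1]^{d-1}$ and of small $L^2$ mass. This is impossible too, because the volume of any set containing the cube is at least one, so $\int\Phi^2\geq1$, and the smoothness cost then enters through the averaged frequency $\norm{\omega}_2^2/d$. The honest route is therefore the product $\prod_{\ell\geq2}\chi$ with the norm computed exactly. By Fourier factorization, the norm equals $E[(1+d^{-1}\sum_\ell U_\ell^2)^s]\cdot\norm{\eta}_2^2\norm{\chi}_2^{2(d-1)}$ with $U_\ell$ drawn from the normalized spectral laws, as in the proof of Theorem~\ref{theorem:sobolev-lower}. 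The moment factor is bounded uniformly in $d$ by the same Jensen argument. The remaining issue is the factor $\norm{\chi}_2^{2(d-1)}$, and I expect this to be the crux. I would handle it by letting the cutoff depend on $d$: take $\chi_d=1$ on $[0,1]$ with transition layers of width $1/d$, so that $\norm{\chi_d}_2^2\leq1+C/d$ and $\norm{\chi_d}_2^{2(d-1)}\leq e^C$. The price is a derivative of size about $d$ in each coordinate. That cost is averaged by the $1/d$ in the $H_{\mathrm{av}}$ weight, but only partially, so $E[U_\ell^2]$ is of order $d$ and $d^{-1}\sum_\ell U_\ell^2$ is of order $d$ rather than $O(1)$. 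To repair this, I would spread the transition over width $d^{-1/2}$ and verify that $\norm{\chi_d}_2^{2(d-1)}\leq e^{C\sqrt d}$ is then balanced. If that balance fails, I would fall back to proving the bound with $c_s$ allowed to decay and then revisit the construction.
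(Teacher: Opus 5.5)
Your upper bound is correct and is the paper's argument. The lower bound has a genuine gap. The witness you settle on, $\eta(x_1)\prod_{\ell\geq2}\chi_d(x_\ell)$, cannot give a $d$-free constant, and your proposal stops at ``if that balance fails''. It does fail, and not because of a bad choice of $\chi_d$: no cutoff in $x_{2:d}$ works.

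Suppose $\Phi=1$ on $[0,1]^{d-1}$. Consider the $2(d-1)$ disjoint slabs $\{y:\ y_\ell>1,\ y_k\in[0,1]\text{ for }k\neq\ell\}$ and their mirror images $\{y_\ell<0,\ldots\}$. For $y$ on the face $\{y_\ell=1\}$ of the cube, $f(t)=\Phi(y+te_\ell)$ satisfies $f(0)=1$. Hence $1=f(0)^2\leq2\norm{f}_{L^2(0,\infty)}\norm{f'}_{L^2(0,\infty)}$, which gives $\int_0^\infty f^2+d^{-1}\int_0^\infty f'^2\geq d^{-1/2}$. Integrating over the unit face and summing over slabs yields $\norm{\Phi}_2^2+d^{-1}\norm{\nabla\Phi}_2^2\geq1+2(d-1)d^{-1/2}$. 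Since $\norm{\eta}_{L^2(\mr)}=1$ and $(1+t)^s\geq1+t$ for $s\geq1$, every witness $\eta(x_1)\Phi(x_{2:d})$ has squared $H_{\mathrm{av}}^s(\mr^d)$ norm at least $1+2(d-1)d^{-1/2}$. After normalization its variance is therefore $O(d^{-1/2})$. Your own bookkeeping shows the same tension: width $1/d$ makes the frequency factor grow with $d$, and width $d^{-1/2}$ makes $\norm{\chi_d}_2^{2(d-1)}$ of order $e^{c\sqrt d}$, which nothing balances.

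The missing idea is that the pure product you discarded is the right witness. Your claim that $\prod_\ell\phi(x_\ell)$ has variance decaying like $\rho^d$ ``unless $\phi$ is essentially constant on $[0,1]$'' is wrong. The loss factor $\rho=\int_0^1\phi^2/\int_{\mr}\phi^2$ comes only from mass outside the unit interval, which is exactly what forces $\norm{\chi}_2>1$. So the relevant condition is that $\phi$ vanish outside $[0,1]$, not that it be constant there; a constant would give zero variance.

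Take $\eta$ as in the proof of Theorem~\ref{theorem:sobolev-lower} and set $g_d(x)=\prod_{\ell=1}^d\eta(x_\ell)$. Independence of the coordinates of $U_d$ gives $E[g_d(U_d)]=(\int_0^1\eta)^d=0$ and $E[g_d(U_d)^2]=(\int_0^1\eta^2)^d=1$. So $\var(g_d(U_d))=1$ for every $d$, and $\norm{\eta}_{L^2(\mr)}=1$ means no geometric factor appears. The Fourier factorization you already cite gives $\norm{g_d}_{H_{\mathrm{av}}^s(\mr^d)}^2=E[(1+d^{-1}\sum_{\ell=1}^dU_\ell^2)^s]$, with $U_\ell$ iid with density $|\widehat\eta|^2$. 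This is at most $(1+E[U_1^2])^s$ for $s\leq1$ by concavity, and at most $E[(1+U_1^2)^s]$ for $s\geq1$ by convexity, uniformly in $d$. Setting $m_d=C_s^{-1/2}g_d$ then gives $c_s=C_s^{-1}$. This is exactly the paper's proof: the computation in the proof of Theorem~\ref{theorem:sobolev-lower} specialized to $h=1$.
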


\begin{proof}
For the upper bound, every $m$ in the unit ball satisfies
\begin{equation*}
\var(m(U_d))\leq\int_{[0,1]^d}m(x)^2dx\leq\norm{m}_{L^2(\mathbb R^d)}^2\leq\norm{m}_{H_{\mathrm{av}}^s(\mathbb R^d)}^2\leq1.
\end{equation*}
For the lower bound, use the function $\eta$ from the proof of Theorem~\ref{theorem:sobolev-lower} and set $g_d(x)=\Pi_{\ell=1}^d\eta(x_\ell)$.
Because $\int_0^1\eta(u)du=0$ and $\int_0^1\eta(u)^2du=1$, we have $E[g_d(U_d)]=0$ and $\var(g_d(U_d))=1$.
The calculation in Equations~\eqref{equation:proof-sobolev-bump-scaling}--\eqref{equation:proof-sobolev-bump-moment}, specialized to $h=1$, gives $\norm{g_d}_{H_{\mathrm{av}}^s(\mathbb R^d)}^2\leq C_s$ uniformly over $d\geq1$.
Therefore, $m_d=C_s^{-1/2}g_d$ belongs to the unit ball and satisfies $\var(m_d(U_d))=C_s^{-1}$.
The result follows with $c_s=C_s^{-1}$.
\end{proof}

\subsection{Proofs for Appendix~\ref*{appendix:holder-sobolev}}

\begin{proof}[Proof of Lemma~\ref{lemma:holder-sobolev}]
Write $Q=[0,1]^d$, $M=\norm{f}_\infty$, and $L=[f]_\beta$.
Because $0<\beta\leq1$, $\rho_\beta(x,y)=\norm{x-y}_2^\beta$ is a metric on $\mr^d$.
Define the McShane extension
\begin{equation}\label{equation:holder-extension}
\tilde f(x)=\inf_{z\in Q}\bigl(f(z)+L\norm{x-z}_2^\beta\bigr).
\end{equation}
For $x\in Q$, choosing $z=x$ gives $\tilde f(x)\leq f(x)$, while H\"older continuity gives $f(z)+L\norm{x-z}_2^\beta\geq f(x)$ for every $z\in Q$.
Thus $\tilde f=f$ on $Q$.
The triangle inequality for $\rho_\beta$ also gives $[\tilde f]_\beta\leq L$.
Set $\bar f(x)=(-M)\vee(\tilde f(x)\wedge M)$.
Because truncation is nonexpansive, $\bar f=f$ on $Q$, $\norm{\bar f}_\infty\leq M$, and $[\bar f]_\beta\leq L$.

Fix a smooth function $\chi$ with compact support such that $0\leq\chi\leq1$ and $\chi=1$ on $Q$, and define $F=\chi\bar f$.
Then $F=f$ on $Q$ and $F$ has support in a fixed bounded set $D$.
If $\norm{x-y}_2\leq1$, then $|F(x)-F(y)|\leq|\bar f(x)-\bar f(y)|+M|\chi(x)-\chi(y)|\leq(L+M\norm{\nabla\chi}_\infty)\norm{x-y}_2^\beta$.
If $\norm{x-y}_2>1$, then $|F(x)-F(y)|\leq2M\leq2M\norm{x-y}_2^\beta$.
Consequently, $\norm{F}_\infty+[F]_\beta\leq C_d(M+L)$.
Let $A=M+L$ and fix $s\in[\beta/2,\beta)$.

The whole-space increment characterization of $H^s(\mr^d)$ \citep{adams2003sobolev} applies uniformly over this range of $s$.
If $F(x)-F(y)\neq0$, at least one of $x$ and $y$ belongs to $D$.
Since the integrand below is symmetric in $x$ and $y$, we may therefore restrict one integral to $D$ at the cost of a factor of two:
\begin{align*}
\norm{F}_{H^s(\mr^d)}^2&\leq C_{d,\beta}\left(\norm{F}_{L^2(\mr^d)}^2+\iint_{\mr^d\times\mr^d}\frac{|F(x)-F(y)|^2}{\norm{x-y}_2^{d+2s}}dxdy\right)\\
&\leq C_{d,\beta}\left(\norm{F}_{L^2(\mr^d)}^2+2\int_D\int_{\mr^d}\frac{|F(x)-F(y)|^2}{\norm{x-y}_2^{d+2s}}dydx\right).
\end{align*}
A calculation using the increment bounds above and polar coordinates gives
\begin{align*}
\norm{F}_{H^s(\mr^d)}^2&\leq C_{d,\beta}A^2\left(1+\int_0^1r^{2(\beta-s)-1}dr+\int_1^\infty r^{-2s-1}dr\right)\\
&=C_{d,\beta}A^2\left(1+\frac1{2(\beta-s)}+\frac1{2s}\right)\leq\frac{C_{d,\beta}}{\beta-s}A^2.
\end{align*}
Setting $F_s=F$ proves the stated bound.

For a bounded Lipschitz $f$, repeat the same construction with $\beta=1$ to obtain a compactly supported Lipschitz extension $F_1$ satisfying $\norm{F_1}_\infty+[F_1]_1\leq C_d(\norm{f}_\infty+[f]_1)$.
Rademacher's theorem bounds $\norm{\nabla F_1}_\infty$ by the Lipschitz coefficient of $F_1$.
Because $F_1$ has support in the fixed bounded set $D$, Equation~\eqref{equation:sobolev-whole-space-norm} gives $\norm{F_1}_{H^1(\mr^d)}\leq C_d(\norm{f}_\infty+[f]_1)$, proving the final claim.
\end{proof}

\begin{proof}[Proof of Corollary~\ref{corollary:holder-gsw}]
Suppose first that $0<\beta<1$ and set $s_n=\beta-1/\log n$.
For all sufficiently large $n$, $s_n\in[\beta/2,\beta)$.
Because $d\geq3$ and $\beta<1$, we have $s_n<1<d/2$, so only the subcritical calculation in Lemma~\ref{lemma:proof-penalized-matern-approximation} is required.
Its constants are uniform over $s_n\in[\beta/2,\beta)$ because $s_n$ is bounded away from zero and the $s_n$-dependent exponential factor in its proof is bounded by $e^{4\beta c}$.
Set $\lambda_n=\max\{2\varphi/((1-\varphi)n),n^{-1}\}$.
For fixed $\varphi$, $n^{-1}\leq\lambda_n\leq C_\varphi/n$, so $\lambda_n\leq c_0/\log n$ for all sufficiently large $n$ and the penalty condition in Lemma~\ref{lemma:proof-penalized-matern-approximation} holds.
Fix $P\in\mc P_{d,\beta}^{\mathrm{Hol}}$ and let $F_{s_n}$ be the extension of $m$ supplied by Lemma~\ref{lemma:holder-sobolev}.
Since $F_{s_n}=m$ on the unit cube and the dimension-normalized spectral weight is no larger than the ordinary Sobolev weight, the oracle calculation in the proof of Theorem~\ref{theorem:sobolev-gsw} and Lemma~\ref{lemma:proof-penalized-matern-approximation} give
\begin{align*}
\norm{F_{s_n}}_{H_{\mathrm{av}}^{s_n}(\mr^d)}^2&\leq\norm{F_{s_n}}_{H^{s_n}(\mr^d)}^2\leq\frac{C_{d,\beta}}{\beta-s_n}(\norm{m}_\infty+[m]_\beta)^2\leq C_{d,\beta}\log n,\\
\mc V_n(P)&\leq C\norm{F_{s_n}}_{H_{\mathrm{av}}^{s_n}(\mr^d)}^2(\lambda_n\log n)^{2s_n/d}\leq C\log n\left(\frac{\log n}{n}\right)^{2s_n/d}.
\end{align*}
Moreover,
\begin{equation}\label{equation:moving-holder-exponent}
\left(\frac n{\log n}\right)^{-2s_n/d} \leq e^{2/d}\left(\frac n{\log n}\right)^{-2\beta/d}.
\end{equation}
This proves the first branch uniformly over $P\in\mc P_{d,\beta}^{\mathrm{Hol}}$.
When $\beta=1$, the Lipschitz conclusion of Lemma~\ref{lemma:holder-sobolev} supplies an extension $F_1$ whose $H^1(\mr^d)$ norm is bounded by a constant depending only on $d$.
The spectral comparison above also bounds its $H_{\mathrm{av}}^1(\mr^d)$ norm.
Since $1<d/2$, the same oracle and approximation calculation at $s=1$ gives $\mc V_n(P)\leq C(\log n/n)^{2/d}$, proving the second branch.
\end{proof}

\subsection{Simulation Designs}\label{appendix:simulation-designs}

This subsection records the models and Monte Carlo protocols for Figures~\ref{figure:matching-reversal-simulation}--\ref{figure:structured-balance-simulation}.
The calibrated empirical populations in Figure~\ref{figure:empirical-application} are described in Section~\ref{section:empirical-application}.
At every parameter configuration, we average over 100 independent covariate samples $\psi_{1:n}^{(r)}$, $r\in[100]$, drawn from the uniform distribution on the relevant unit cube.
Within each sample, design expectations that are not available in closed form are approximated using 100 assignment vectors.
The matched designs use minimum-total squared-Euclidean matching constructed separately within each covariate sample.

\paragraph{Matching reversal.}
For Figure~\ref{figure:matching-reversal-simulation}, let $D=20$, define $v(x)=\sqrt{12}(x-1/2)$ and $w(x)=\sqrt{2}\cos(2\pi x)$, and set $a_j=0.8^{j-1}/(\sum_{\ell=1}^{D}0.8^{2(\ell-1)})^{1/2}$.
The linear and nonlinear outcome models are $m_{\mathrm{lin}}(\psi)=\sum_{j=1}^{D}a_jv(\psi_j)$ and $m_{\mathrm{add}}(\psi)=\sum_{j=1}^{D}a_jw(\psi_j)$, respectively.
Both have unit variance and the nonlinear main effect is orthogonal to a linear function of its coordinate.
Each design observes only the first $d$ coordinates, although its variance is evaluated under the corresponding full outcome model.
The dotted line is the efficiency bound using the observed coordinates, $V_d=1+4\sum_{j>d}a_j^2$.

Let $b_d(Z)=n^{-1}\sum_{i=1}^nZ_i\psi_{i,1:d}$ and let $\widehat\Sigma_d$ be the sample covariance matrix of the observed covariates.
Linear GSW applies the walk to their Mahalanobis-standardized values and uses $\varphi=0.1$.
Best-of-$M$ rerandomization selects, from $M=10{,}000$ independent randomizations, the allocation minimizing $b_d(Z)'\widehat\Sigma_d^{-1}b_d(Z)$.
Nonparametric GSW uses the additive kernel $K_1$ from Example~\ref{example:structured-priority} with $\nu=1$ and $\varphi=0.03$.
All five designs are evaluated under both outcome models.

\paragraph{Smooth outcomes.}
For Figure~\ref{figure:smoothness-simulation}, let $W_\nu$ be the unit-length-scale Mat\'ern kernel with parameter $\nu$ and let $\psi$ and $\psi'$ be independent and uniform on $[0,1]^d$.
Define $c_\nu=1-\E[W_\nu(\psi,\psi')]$ and let $m_\nu\sim\mathrm{GP}(0,W_\nu/c_\nu)$.
This normalization gives $\E[\var(m_\nu(\psi)|m_\nu)]=1$.
The figure reports prior-averaged excess variance relative to matched pairs.
Equivalently, the plotted value for a design $\sigma$ is $\E_\sigma[Z'W_\nu Z]/\E_{\mathrm{MP}}[Z'W_\nu Z]$.
In the left panel, $n=240$, $d=8$, and $\nu$ ranges over $\{0.1,0.2,0.5,1,2,4,8,16,32\}$.
Fixed Mat\'ern GSW uses $W_1$, while Oracle Mat\'ern GSW and kernel rerandomization use $W_\nu$.
In the right panel, $d=8$, fixed Mat\'ern GSW uses $W_1$, and $n$ ranges over $\{80,120,180,240,360,480,720\}$ for $\nu\in\{0.5,1,4,16\}$.
All GSW designs use $\varphi=0.03$.

\paragraph{Structured balance.}
For Figure~\ref{figure:structured-balance-simulation}, $u_j(x)=\sqrt{2}\cos(\pi x)$, $g_d(\psi)=d^{-1/2}\sum_{j=1}^du_j(\psi_j)$, and $r(\psi)=u_1(\psi_1)u_2(\psi_2)$.
Under the uniform covariate law, $g_d$ and $r$ have unit variance and are orthogonal.
For $\rho\geq0$, the outcome model is
\begin{equation}\label{equation:simulation-structured-model}
m_\rho(\psi)=\frac{g_d(\psi)+\rho r(\psi)}{\sqrt{1+\rho^2}}.
\end{equation}
The left panel uses $\rho=0$, so $m=m_{\mc G_1}$.
The right panel uses $\rho=0.5$, so the omitted interaction contributes one fifth of the predictable variance and $V^*(P)+\Delta_{\mc G_1}(P)=V^*(P)+0.8$.
Full-dimensional GSW uses the Mat\'ern kernel on $[0,1]^d$, main-effects GSW uses the additive kernel $K_1$, and matched main-effects GSW applies the same kernel to pair orientations.
All kernels use $\nu=1$, and all GSW designs use $\varphi=0.03$.

\subsection{The Gram-Schmidt Walk}\label{appendix:gsw}

This subsection gives the formal description of the assignment algorithm used in Section~\ref{subsection:unit-gsw}.
The construction is the Gram-Schmidt walk of \citet{bansal2019gram}, with the zero initialization and randomized pivot order of \citet{harshaw2024gsw} and the Gram-matrix formulation in \citet[Section 2.7]{harshaw2021dissertation}.

Let $\Gamma$ be an $n\times n$ positive semidefinite matrix with diagonal entries at most one.
The coordinates of the walk index units.
Starting from $z^{(0)}=0$, define the active set
\begin{equation}\label{equation:gsw-active-set}
\mathcal A_t = \left\{ i\in[n]:|z_i^{(t)}|<1 \right\}.
\end{equation}
Its elements are the units whose assignments remain fractional.
The algorithm selects a pivot $p_t\in\mathcal A_t$ uniformly at random and retains that pivot until it reaches $-1$ or $1$.
When the current pivot leaves $\mathcal A_t$, the algorithm selects a new pivot uniformly from the remaining active units.
Given $\mathcal A_t$ and $p_t$, choose
\begin{equation}\label{equation:gsw-direction}
u^{(t)} \in \argmin\bigl\{u'\Gamma u:u\in\mathbb R^n, u_{p_t}=1, u_i=0\text{ for every }i\notin\mathcal A_t\bigr\}.
\end{equation}
Let $\delta_t^+$ and $\delta_t^-$ be the largest nonnegative values such that $z^{(t)}+\delta_t^+u^{(t)}$ and $z^{(t)}-\delta_t^-u^{(t)}$ remain in $[-1,1]^n$.
The next assignment is
\begin{equation}\label{equation:gsw-update}
z^{(t+1)} = \begin{cases}
z^{(t)}+\delta_t^+u^{(t)}, & \text{with probability } \dfrac{\delta_t^-}{\delta_t^++\delta_t^-},\\
z^{(t)}-\delta_t^-u^{(t)}, & \text{with probability } \dfrac{\delta_t^+}{\delta_t^++\delta_t^-}.
\end{cases}
\end{equation}
The probabilities in Equation~\eqref{equation:gsw-update} make $E[z^{(t+1)}|z^{(t)},\mathcal A_t,p_t]=z^{(t)}$.
At least one active coordinate reaches the boundary after every update.
The algorithm therefore terminates at an assignment $Z\in\{-1,1\}^n$ after at most $n$ updates.

To connect this formulation to the ordinary vector algorithm, take any factor $C$ satisfying $C'C=\Gamma$ and let $c_i$ be its $i$th column.
Then $\norm{\sum_{i=1}^nu_ic_i}_2^2=u'\Gamma u$, so Equation~\eqref{equation:gsw-direction} is exactly the direction chosen by ordinary GSW applied to the columns of $C$.
The direction can also be computed directly from $\Gamma$, so an explicit factorization and an explicit feature map are not part of the definition.

For the kernelized design, Equation~\eqref{equation:kernel-gsw-gram} supplies the input matrix.
The covariance guarantee $\cov(Z|\psi_{1:n})\preceq\Gamma_n^{-1}\preceq\varphi^{-1}I_n$ prevents balance of the selected kernel features from creating arbitrarily large variance in directions not described by the kernel.
See \citet{harshaw2024gsw} for a detailed design-based interpretation.

\end{document}